\numberwithin{equation}{section}
\newtheorem{lemma}{Lemma}[section]
\newtheorem{theorem}[lemma]{Theorem}
\newtheorem{corollary}[lemma]{Corollary}
\newtheorem{remark}[lemma]{Remark}
\newtheorem{definition}[lemma]{Definition}
\def\sleq{\lesssim}
\newcommand{\azuno}{{a_r}}
\newcommand{\azdue}{{L}}
\newcommand{\ta}{{\tt a}}
\def\vf{\vphi}
\def\r{m}
\def\im{{\rm i}}
\def\es{{\rm e}}
\def\fke{{\frak e}}
\def\cV{{\mathcal{V}}}
\def\cT{{\mathcal{T}}}
\def\cI{{\mathcal{I}}}
\def\QS{{\mathcal{Q}\kern-0.3pt\mathcal{S}}}
\def\cC{{\mathcal{C}}}
\def\Piz{\mathop\Pi^{\circ}}
\def\tk{{\tt k}}
\def\tN{{\tt N}}
\def\tC{{\tt C}}
\def\cO{{\mathcal{O}}}
\def\cA{{\mathcal{A}}}
\def\cU{{\mathcal{U}}}
\def\cR{{\mathcal{R}}}
\def\cB{{\mathcal{B}}}
\def\ops#1{{OPS}^{#1}}
\def\moyal#1#2{\left\{#1;#2\right\}_M}
\def\poisson#1#2{\left\{#1;#2\right\}}
\def\sm#1{S^{#1,\varsigma}_{C}}
\def\shr#1{S^{#1}_{HR}}
\def\adm#1{ad^M_{#1}}
\newcommand{\ep}{\epsilon}
\newcommand{\R}{\mathbb R}
\newcommand{\C}{\mathbb C}
\newcommand{\Z}{\mathbb Z}
\newcommand{\N}{\mathbb N}
\newcommand{\T}{\mathbb T}
\newcommand{\ii }{{\rm i} }
\newcommand{\vphi}{\varphi}
\newcommand{\csi}{\xi}
\newcommand{\Op}{{Op}^w\,}
\def\op#1{Op^w(#1)}
\def\tk{{\tt k}}
\newcommand{\Snoi}{S_{AN,\delta}}
\newcommand{\OPSnoi}{OPS_{AN,\delta}}
\newcommand{\OPSFnorma}{{\cal OPSF}_{N}}
\newcommand{\Shr}{S_{HR}}
\def\cH{{\mathcal{H}}}
\newcommand{\OPS}{\textrm{OPS}\,}
\newcommand{\varep}{\varepsilon}
\newcommand{\norfou}[3]{ \|#1\|^{(#2)}_{#3} }
\newcommand{\Snorma}{\mathcal{SF}_N}
\newcommand{\pth}{L}
\def\russi{Par08, PS10, PS12}
\definecolor{darkgr}{rgb}{0.0, 0.62, 0.42}
\definecolor{cyan}{rgb}{0.0, 0.72, 0.92}
\begin{document}

\title{{\bf  On the stable eigenvalues of perturbed anharmonic
    oscillators in dimension two}}

\date{}

%\author{}

\author{ Dario Bambusi\footnote{Dipartimento di Matematica, Universit\`a degli Studi di Milano, Via Saldini 50, I-20133
Milano. 
 \textit{Email: } \texttt{dario.bambusi@unimi.it}}, Beatrice Langella\footnote{Dipartimento di Matematica, Universit\`a degli Studi di Milano, Via Saldini 50, I-20133
Milano.
\textit{Email: } \texttt{beatrice.langella@unimi.it}}, Marc Rouveyrol\footnote{
DER de math\'ematiques,
ENS Paris-Saclay,
4 Avenue des Sciences,
91190 Gif-sur-Yvette. \textit{Email: } \texttt{marc.rouveyrol@ens-paris-saclay.fr}}.
 }

\maketitle

\begin{abstract}
We study the {asymptotic behavior of the} spectrum of a quantum system which is a perturbation of a
spherically symmetric anharmonic oscillator in dimension 2. We prove
that a large part of its eigenvalues can be obtained by Bohr-Sommerfeld quantization rule applied to the normal form Hamiltonian
and also admit an asymptotic expansion {at infinity}.
The proof is based on the generalization to the present context of the
normal form approach developed in \cite{BLMnr} (see also \cite{PS10})
for the particular case of $\T^d$.
\end{abstract}
\noindent

{\em Keywords:} Schr\"odinger operator, normal form, pseudo differential operators

\medskip

\noindent
{\em MSC 2010:} 37K10, 35Q55

%%%%%%%%%%%%%%%%%%%%%%%%%%%%%%%%%%%%%%%%%%%%%%%%%%%%%%%%%%%%%%%%%%%%%%%%%%

\tableofcontents
\section{Introduction}\label{intro}

In this paper we study the asymptotic behavior of the spectrum of the
quantum  system 
\begin{align}
  \label{H}
&  H:=H_0+V(x,-\im \nabla)\ ,\quad x\in\R^2\ ,
  \\
\label{H0}
&H_0:=-\frac{1}{2}\Delta+\frac{\|x\|^{2\ell}}{2\ell}\ ,\quad
\ell\in\N^*\ ,\quad \ell\geq 2\ ,
\\
\nonumber
&{\|x\|^{2\ell}:=(x_1^2+x_2^2)^\ell}\,,
\end{align}
where $V(x,-\im \nabla)$ is a self-adjoint pseudo-differential operator
of order smaller than $H_0$. The idea is to consider the system \eqref{H} as a perturbation
of the quantum integrable system \eqref{H0} and to apply the
quantization of normal form theory as developed in \cite{BLMnr} (see
also \cite{\russi}, \cite{BLMres}) in order to obtain an asymptotic
expansion of a large part of the eigenvalues of \eqref{H}.

To describe our result, consider first the integrable system $H_0$; it
was proved by Charbonnel \cite{chaspe,Cha83T} that a large part of its
spectrum can be constructed through Bohr-Sommerfeld
rule. Precisely there exists a function $\tilde h_0$ with the property
that $ \tilde h_0(\ta)$ is asymptotic to an eigenvalue of $H_0$ for all $\ta$'s in a subset
$\tC$ of the lattice $\Z^2+{{\kappa}}$ with
${{\kappa}}\in\R^2$ fixed. The function $\tilde h_0$ is a
perturbation of the classical Hamiltonian $h_0$ written in terms of
action angle variables.

Here we prove that there exists a subset $\Omega\subset\tC$
of density 1 in $\tC$ {and a function $\tilde h$ s.t.}
for all $\ta\in\Omega$ there exists an
 eigenvalue $\lambda_{\ta}$ of \eqref{H} {with}
$$
\lambda_{\ta}\sim \tilde h(\ta)\ .
$$
Furthermore, $\tilde h$ is a perturbation of $\tilde h_0$ and admits an asymptotic expansion in powers of
$\|a\|^{-1}$. 

{Results of this kind are completely standard in the case of quantum
oscillators in dimension 1, but we are not aware of
a single result of this kind in dimension at least 2 (see however
\cite{roy,HSV} for related results).}

The present paper can be considered as a continuation of the works
\cite{BLMnr,BLMres,BLMunbdd}, in particular of \cite{BLMnr}, in which
we studied the stable eigenvalues of Schr\"odinger operators on
$\T^d$. The idea of \cite{BLMnr} consists in working on the symbol of
the quantum operator in order to conjugate $H$ to a system in
``quantum normal form''. Now, in classical Hamiltonian mechanics, it
is well known that the normal form of a system is particularly simple
only in the regions of the phase space where the frequencies are non
resonant. A similar property holds also for the quantum system and it turns
out that the eigenvalues which correspond to the classically nonresonant region
can be obtained by Bohr-Sommerfeld quantization rule, and furthermore
admit an asymptotic expansion in powers of $\lambda^{-1}$.

The starting point
of the present paper is the idea that all the techniques developed
in \cite{BLMnr,BLMres,BLMunbdd,PS10} should apply to any quantum
system which is a perturbation of an integrable quantum
system. However there are several technical difficulties to overcome
in order to actually transform such a heuristic statement into a
theorem and, in order to avoid the risk of getting empty
results, we decided to start our investigation from a concrete model
which has some interest in itself, namely the quantum anharmonic
oscillator. The main aim of this paper is to develop the tools needed
to apply the ideas of \cite{BLMnr,BLMres,BLMunbdd} to general
perturbations of quantum integrable systems, and to start to deduce
some consequences.

The main technical difficulties with respect to \cite{BLMnr} are
of four kinds:
\begin{itemize}
\item[(1)] find a class of symbols suitable for the normal form
construction and  deal with it
\item[(2)] generalize
the construction of \cite{BLMnr} to the case where the frequencies do
not coincide with the actions, but are just a function of the actions
which is poorly known
\item[(3)] verify that all the needed nonresonance
properties are {satisfied} in the concrete case of the quantum
anharmonic oscillator
\item[(4)] use a suitable version of functional
calculus in order to obtain spectral properties of operators in
quantum normal form.
\end{itemize}
Concerning (1), we use a class of symbols which is a small modification
of the class used by Helffer and Robert in \cite{HR82,hero2}. However, it
turns out that such a
class of symbols behaves badly under Fourier expansion, so we
use a trick from classical normal form theory in order to avoid to
re-expand symbols in Fourier series at each step of the
iteration. Concerning (2), we use here the remarkable fact that the
actions turn out to be quasi-homogeneous functions of the phase space
variables. Furthermore the Hamiltonian and the frequencies turn out to
be homogeneous
functions of the actions. Concerning (3), we first prove that the actions
are analytic functions of the phase space variables which are globally
defined. We do it by a
direct computation, but the computation is enlightened by the general
theory of action angle variables in the form of \cite{BF}. Then we
have to verify that the subset of the phase space in which the
frequencies are nonresonant has large measure. This is done using
tools from degenerate KAM theory (see \cite{Rus01,BBM11}) and homogeneity
of the frequency map. We also use some results from \cite{Fej04,BF18} (see
also \cite{BFS18}). The point (4) is solved using the same
ideas developed by \cite{chaspe}.

As we mentioned above, our technique is a generalization of a
technique introduced on $\T^d$, so we conclude this introduction by
recalling the results which are known in that situation and that we
hope to extend to more general situations in the future.

In the case of the Schr\"odinger operator on $\T^d$ it was shown by
\cite{FKT,Fri} (see also \cite{Vel,Kar96,PS10}) that {most of the
eigenvalues $\lambda$ of the Laplacian are stable under perturbation, in the sense that all eigenvalues bifurcating from them admit a full asymptotic
expansion in powers of $\lambda^{-1}$.} On the contrary (see
\cite{FKT2}) some eigenvalues are unstable (in the terminology of
\cite{FKT2}), and in particular {they do not admit such an asymptotic
expansion.} In \cite{BLMnr} the stable eigenvalues were recognized to
correspond to the nonresonant regions of the classical phase space,
and in \cite{BLMres} we proved that the unstable eigenvalues can be
obtained as eigenvalues of a Schr\"odinger operator on a lower
dimensional torus. Finally in \cite{BLMunbdd} we used the results of
\cite{BLMres} to study the case of time dependent potentials and to
prove a $\langle t\rangle^\epsilon$ estimate on the growth of Sobolev
norms of the solutions of the time dependent Schr\"odinger
equation. {We plan to investigate analogous problems {for the
    anharmonic oscillator} in the future.}

\noindent{\it Acknowledgments} First of all we thank Didier Robert for
pointing to our attention the papers by Charbonnel, then we thank
Francesco Fass\`o for pointing to our attention the paper \cite{BF}
and San V\~{u} Ng\d{o}c for several suggestions on quantum action angle
coordinates. During the preparation of this work we also had several
discussions with Alberto Maspero, that we warmly thank. 

We thank the Italian Gruppo Nazionale di Fisica Matematica of INDAM
for the support.

\section{Main result}\label{main}

\subsection{Symbols} \label{syb.1}

As usual, we define a scale of Sobolev like spaces adapted to our situation.
For all $s \geq 0$ we define $\cH^s = D(H_0^{s\frac{\ell+1}{2\ell}})$
(where $D(\cdot)$ is the domain of an operator), while if $s<0$ we
define $\cH^s = \left(\cH^{-s}\right)^\prime$, where $V^\prime$ is the
dual space to $V$ with respect to $\cH^0 =L^2(\R^2)$. We consider such
spaces endowed with the natural norms $\| \psi \|_s := \|
H_0^{s\frac{\ell+1}{2\ell}} \psi\|_{L^2}\,$.

We will denote by $\cB \left(\cH^s, \cH^{s'}\right)$ the space of
bounded linear operators from $\cH^s$ to $\cH^{s'}$, and, given an operator
$A\in \cB \left(\cH^s, \cH^{s'}\right)$, we will denote by 
$\| A\|_{s, s'}$ its norm in $\cB \left(\cH^s, \cH^{s'}\right) $.

\begin{definition}\label{smooth.def}
	Given {$N \in \R^+$} we say that an operator $R$ is smoothing of order
        $N$ if $R \in \bigcap_{s \in \R} \cB \left(\cH^s, \cH^{s +
          N}\right)$.
        \\ If $R$ is smoothing of order
        $N$ for any {$N \in \R^+,$} we say $R$ is an (infinitely)
        smoothing operator.
\end{definition}

Following \cite{HR82}, we define now a first class of symbols. Denote
$$\tk_0(x,\xi) :=
(1+\|x\|^{2\ell}+\|\xi\|^{2})^{\frac{\ell+1}{2\ell}} \ ,\quad
\|x\|:=\sqrt{x_1^2+x_2^2}\ , $$
\begin{definition}
	\label{symbol.ao}
	A function $f=f(x,\xi)$ will be called a symbol of order  $m\in\R$ if  $f \in C^\infty(\R^4)$ and 
	$\forall \alpha, \beta \in \N^2$, there exists $C_{\alpha, \beta} >0$ s.t. 
	\begin{equation}
	\label{es.6}
	\vert \partial_\xi^\alpha \, \partial_x^\beta f(
	x,\xi)\vert \leq C_{\alpha,\beta}
	\ \tk_0(x,\xi)^{m-\frac{|\beta| +\ell |\alpha|}{\ell+1}}  \ ,
	\end{equation}
 {where $\left|\alpha\right|:=|\alpha_1|+|\alpha_2|$, and similarly for $|\beta|$.} In this case we will write $f \in S^\r_{HR}$.
\end{definition}

\begin{definition}\label{pseudo.an}
	To any symbol $f \in \Shr^m$, $m \in \R$, we associate an
        operator $\Op(f)$ acting on the scale $\{\cH^s\}_{s \in \R}$
        as the Weyl quantization of $f$, namely:
	\begin{equation}
	[\Op(f) \psi](x) = \frac{1}{4\pi^2}\int_{\R^2}\int_{\R^2}
        f\left(\frac{x + y}{2}, \xi\right) \psi(y) e^{\im (x -
          y)\xi}\ d y\ d \csi \ .
	\end{equation}
	Conversely, if there exist a symbol $f \in \Shr^{m}$ s.t. 
	$$
	F = \Op(f)\,,
	$$
	we say that $F$ is a pseudo-differential operator of order
        $m,$ and write $F \in \OPS_{HR}^{m}\,.$
\end{definition}

We also need to deal with symbols which are function of the actions
only.

\begin{definition}\label{simboli S m delta}
Given $\varsigma > 0$, $m \in \R$, we define
the class of classical symbols $\sm m$ as the set of all the functions
$f \in \C^\infty(\R^2)$ such that for any $\alpha \in \N^2$,
there exists $C_{\alpha}$ s.t.
\begin{equation}
  \label{semi}
|\partial_a^\alpha g(a)| \leq C_{\alpha}
\langle a \rangle^{m - \varsigma |\alpha|}\,,
\end{equation}
where, as usual, $\langle a\rangle:=\sqrt{1+\|a\|^2}$.
\end{definition}

\begin{definition}
\label{sim.1}
Given a sequence of symbols $\left\{f_j\right\}_{j\geq 0}$ with
$f_j\in \sm{m-\rho j}$ for some $m\in\R$ and $\rho, \varsigma>0$, and a function
$f(a)$,  we
write 
\begin{equation}
\label{sim.eq}
f\sim \sum_{j}f_j\ ,
\end{equation} 
if for any $N\in \N$ there exists $C_N$ s.t.
\begin{equation}
\label{sim.eq.2}
\left|f(a)-\sum_{j=0}^{N}f_j(a)\right|\leq{
  C_N}{\langle a\rangle^{m-(N+1)\rho}}\ .
\end{equation}
\end{definition}

We will use similar notations for all different classes of symbols we
will meet in the following. 
{We finally fix some further notation: given two quantities $a,
  b \in \R$, we write $a \lesssim b$ if there exists a positive
  constant $C$, independent of all the relevant quantities, such that
  $a\leq C b$. {We will occasionally write $a \lesssim_{s} b$ if the constant $C$ depends on the parameter $s$.}} {We will also write $a\simeq
b$ if $a\sleq b$ and $b\sleq a$.}

\subsection{The integrable case}\label{integrabile}

We present here the results of Charbonnel \cite{chaspe} on the
spectrum of $H_0$, in a form suitable for our developments.

Consider the classical Hamiltonian system
\begin{equation}
  \label{h0}
{h_0(x,\xi):=\frac{\left\|\xi\right\|^2}{2}+\frac{\|x\|^{2\ell}}{2\ell}}\ .
\end{equation}
whose quantization is  $H_0$. 
We introduce now action variables $a_1,a_2$
for $h_0$; it turns out (see Lemma \ref{azioni} below) that their
range is the cone
\begin{equation}
  \label{pigreco}
  \Pi:=\left\{a\in\R^2\ ;\ a_1\geq 0\ \textrm{ if } a_2 \geq0\,,\quad  a_1 \geq |a_2|\ \textrm{ if } \ a_2< 0 \right\}\,;
\end{equation}
{\bf we fix once for all an open cone $\cC$} such that $\overline{\cC}\setminus \{0\}$
is contained in the interior of $\Pi$. 

Consider now the operators $A_1,$ $A_2$ obtained by Weyl
quantization of the actions $a_1,$ $a_2$. Since $[A_1;A_2]=0$ (as it immediately
follows from the fact that $a_2$ is the angular momentum, which is a quadratic polynomial in
$x,\xi$), one can consider their joint spectrum. 
Precisely
there exist two diverging sequences
$\lambda_{\ta_1}^{(1)}$ and $\lambda_{\ta_2}^{(2)}$ and a basis of $L^2$
formed by 
joint eigenfunctions $\psi_{\ta}$, $\ta\equiv(\ta_1,\ta_2)$:
\begin{align}
  \label{eige}
A_1\psi_{\ta}=\lambda^{(1)}_{\ta_1}\psi_{\ta}\ ,\quad
A_2\psi_{\ta}=\lambda^{(2)}_{\ta_2}\psi_{\ta}\ .
\end{align}
Then
\begin{equation}
  \label{spet}
\Lambda^{A}:=\left\{
(\lambda_{\ta_1}^{(1)},\lambda_{\ta_2}^{(2)} )\right\} 
\end{equation}
 is called the \emph{joint spectrum of $A_1$ and
  $A_2$}. The following theorem is essentially Theorem 2.4 of
\cite{chaspe}
\begin{theorem}
  \label{cha.1}
There exist ${{\kappa}}\in\R^2$ and $C_0$ with the following properties:
  \begin{equation}
    \label{cha.1.1}
\Lambda^A\cap\cC\subset
\bigcup_{\ta\in\Z^2+{{\kappa}}} B_{\left(\frac{C_0}{\|\ta\|}\right)}(\ta)\ , 
  \end{equation}
  where $B_{R}(\ta)$ is the closed ball in $\R^2$ of radius $R$ and center
  $\ta$.

  Furthermore, for $\ta\in(\Z^2+\kappa)\cap\cC $ large enough
  \begin{equation}
    \label{cha.1.2}
\sharp \left(\Lambda^A\cap B_{\left(\frac{C_0}{\|\ta\|}\right)}\left(\ta\right)\right)=1\ .
  \end{equation}
\end{theorem}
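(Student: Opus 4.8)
The plan is to establish Theorem~\ref{cha.1} by combining Charbonnel's semiclassical analysis of the joint spectrum with the homogeneity structure of the anharmonic oscillator. First I would recall that, as noted above, $A_2$ is the Weyl quantization of the angular momentum $a_2=x_1\xi_2-x_2\xi_1$, hence $A_2$ is (up to normalization) the generator of rotations and its spectrum is exactly $\Z+\kappa_2$ for a suitable $\kappa_2$ (coming from the Weyl ordering / Maslov correction); this already pins down the second component of the lattice exactly. For the first component one uses that $A_1$, restricted to each eigenspace of $A_2$, is essentially a one-dimensional anharmonic oscillator, so the classical one-dimensional Bohr--Sommerfeld rule applies: there is a smooth function (the one-dimensional action integral, inverted) whose values on $\Z+\kappa_1$ approximate the eigenvalues of $A_1$ with an error that is $O(\|\ta\|^{-1})$ uniformly. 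Charbonnel's Theorem~2.4 packages precisely this: the joint spectrum in a cone lies in an $O(1/\|\ta\|)$-neighborhood of a shifted lattice.

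The second step is to control the spacing, i.e.\ to prove the multiplicity statement \eqref{cha.1.2}. The key point is that consecutive joint eigenvalues are separated by a distance bounded below by a positive constant times $1$ (not shrinking), while the balls $B_{C_0/\|\ta\|}(\ta)$ have radius tending to $0$. Concretely, I would differentiate the Bohr--Sommerfeld function: since the frequencies $\partial_{a} h_0$ are homogeneous of degree $(\ell-1)/(\ell+1)$ and, crucially, are bounded away from $0$ on the closed cone $\overline{\cC}\setminus\{0\}$ (this nonresonance-free positivity is where the restriction to a cone strictly inside the interior of $\Pi$ is used), the map from the index lattice to the joint spectrum is a bi-Lipschitz-type diffeomorphism in that region with derivative bounded above and below. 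Hence distinct lattice points $\ta,\ta'$ give joint eigenvalues at distance $\gtrsim |\ta-\ta'|\gtrsim 1$, so for $\|\ta\|$ large enough the ball of radius $C_0/\|\ta\|$ around $\ta$ can contain at most one point of $\Lambda^A$; existence of one point is exactly \eqref{cha.1.1}.

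A slightly delicate point, and the one I expect to be the main obstacle, is \emph{uniformity}: one must ensure that the constant $C_0$ in \eqref{cha.1.1} and the lower bound on eigenvalue spacing hold uniformly across the whole cone $\cC$ and not just along rays, and that the error terms in the two-dimensional Bohr--Sommerfeld expansion are genuinely $O(\|\ta\|^{-1})$ with a constant uniform in the angular variable. This is handled by exploiting the exact quasi-homogeneity: $h_0$ is quasi-homogeneous in $(x,\xi)$, the actions $a_j$ inherit a homogeneity (Lemma~\ref{azioni}), and $h_0$ and the frequencies are homogeneous functions of the actions, so all estimates on a compact angular slice of $\cC$ at fixed large radius can be rescaled to all radii, turning a local statement into the global one. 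Assembling Charbonnel's semiclassical result, the exact spectrum of $A_2$, the positivity of the frequency map on $\overline{\cC}\setminus\{0\}$, and the homogeneous rescaling then yields both \eqref{cha.1.1} and \eqref{cha.1.2}, which is the content of the theorem; I would remark that this is ``essentially Theorem~2.4 of~\cite{chaspe}'' and that the only addition is the reformulation in terms of the shifted lattice $\Z^2+\kappa$ adapted to our later normal-form construction.
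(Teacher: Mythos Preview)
The paper does not prove this theorem: it is stated without proof and attributed to Charbonnel, with the sentence ``The following theorem is essentially Theorem 2.4 of \cite{chaspe}'' immediately preceding it. So there is no proof in the paper to compare against.

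Your sketch is a reasonable outline of the mechanism behind Charbonnel's result specialized to this model (exact integer spectrum for $A_2$ as rotation generator, one-dimensional Bohr--Sommerfeld for $A_1$ on each angular sector, uniformity via quasi-homogeneity, separation via nondegeneracy of the frequency map on $\overline{\cC}\setminus\{0\}$), and you correctly flag at the end that the statement is essentially a citation. For the purposes of this paper, the expected ``proof'' is simply the reference to \cite{chaspe}; your additional heuristic discussion is not needed and could be dropped or relegated to a remark.
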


Actually, from functional calculus (see \cite{Cha83}) one can deduce
an asymptotic expansion of the eigenvalues. 

\begin{theorem}
  \label{cha.2}
  Let
  \begin{equation}\label{reticolo.cono}
  \tC:=(\Z^2+{{\kappa}})\cap\cC\,,
  \end{equation}
  with $\kappa$ as in Theorem \ref{cha.1}. There exists a sequence
  $\tilde h_{0,j}\in S_{C}^{\frac{2\ell}{\ell+1}-j, 1}$, $j\geq 0$, 
  of
  symbols with the following property: for any
  ${\ta\in\tC}$ large enough, there exists a unique
  eigenvalue $\lambda^{(0)}_{\ta}$ of $H_0$ fulfilling
\begin{equation}
  \label{cha.2.1}
\lambda_{\ta}^{(0)}{=} \tilde h_0(\ta)\sim\sum_{j\geq 0}\tilde
h_{0,j}(\ta)\ .
\end{equation}
Furthermore $\tilde h_{0,0}$ is the Hamiltonian $h_0$ written in action angle
variables. 
\end{theorem}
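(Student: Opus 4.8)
The plan is to deduce the expansion from Charbonnel's functional calculus for the commuting pair $A_1,A_2$ (the Weyl quantizations of the actions), combined with the localization of the joint spectrum provided by Theorem \ref{cha.1} and an elementary composition/Taylor argument. The first and main point is to write $H_0$, up to a smoothing operator, as a pseudodifferential function of $A_1$ and $A_2$. Classically, since $h_0$ is a first integral of its own (integrable) flow, in action--angle variables it equals a function $\mathsf h_0^{\mathrm{cl}}=\mathsf h_0^{\mathrm{cl}}(a_1,a_2)$ of the actions alone; moreover, because $a_1,a_2$ are quasi-homogeneous in $(x,\xi)$ and $h_0$ is homogeneous of degree $1$ under the associated anisotropic dilations, $\mathsf h_0^{\mathrm{cl}}$ is homogeneous of degree $\tfrac{2\ell}{\ell+1}$ in $(a_1,a_2)$ on $\cC$, hence $\mathsf h_0^{\mathrm{cl}}\in S_{C}^{\frac{2\ell}{\ell+1},1}$. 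The functional calculus of \cite{Cha83,chaspe} for the system $(A_1,A_2)$ then lifts this to the quantum level: there are symbols $\mathsf h_{0,j}\in S_{C}^{\frac{2\ell}{\ell+1}-j,1}$, $j\ge0$, with $\mathsf h_{0,0}=\mathsf h_0^{\mathrm{cl}}$, such that, writing $\mathsf h_0$ for a classical symbol with $\mathsf h_0\sim\sum_{j\ge0}\mathsf h_{0,j}$, the operator $\mathsf h_0(A_1,A_2)$ built by this calculus satisfies $H_0-\mathsf h_0(A_1,A_2)=R$ with $R$ (infinitely) smoothing in the sense of Definition \ref{smooth.def}. Heuristically, $H_0$ commutes with $A_1,A_2$, so its symbol is invariant under their Hamiltonian flows and hence, modulo lower order, a function of $a_1,a_2$; quantizing and iterating the symbolic correction builds the series, and the anisotropic homogeneity pins the orders down.

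Next I would test this identity on the joint eigenfunctions. For $\ta\in\tC$ large enough, \eqref{cha.1.2} (together with the simplicity of the joint spectrum established in \cite{chaspe}) ensures that $\psi_\ta$ is, up to a scalar, the only joint eigenfunction of \eqref{eige} with joint eigenvalue near $\ta$; being unique up to a scalar it is therefore also an eigenfunction of $H_0$, say $H_0\psi_\ta=\lambda^{(0)}_\ta\psi_\ta$, and this is the eigenvalue appearing in the statement, its uniqueness for large $\ta$ being exactly \eqref{cha.1.2}. Pairing $H_0\psi_\ta=\mathsf h_0(A_1,A_2)\psi_\ta+R\psi_\ta$ with $\psi_\ta$ gives
\[
\lambda^{(0)}_\ta=\mathsf h_0\!\big(\lambda^{(1)}_{\ta_1},\lambda^{(2)}_{\ta_2}\big)+\langle R\psi_\ta,\psi_\ta\rangle ,
\]
where the first term is read in the asymptotic sense $\sim\sum_{j}\mathsf h_{0,j}(\lambda^{(1)}_{\ta_1},\lambda^{(2)}_{\ta_2})$. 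Since the leading term forces $\lambda^{(0)}_\ta\simeq h_0(\ta)\simeq\langle\ta\rangle^{\frac{2\ell}{\ell+1}}$, the eigenfunction satisfies $\|\psi_\ta\|_{-N}\simeq\langle\ta\rangle^{-N}$ for every $N$, whence $|\langle R\psi_\ta,\psi_\ta\rangle|\lesssim_N\langle\ta\rangle^{-N}$: the remainder is $O(\langle\ta\rangle^{-\infty})$.

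It then remains to rewrite the right-hand side as an asymptotic series in $\langle\ta\rangle^{-1}$ evaluated at the lattice point $\ta$ itself. The same functional calculus, applied to $A_1$ and $A_2$ individually, yields expansions $\lambda^{(i)}_{\ta_i}\sim\ta_i+\sum_{k\ge1}\rho^{(i)}_k(\ta_i)$ with $\rho^{(i)}_k$ symbols of strictly decreasing order, in accordance with the bound $|\lambda^{(i)}_{\ta_i}-\ta_i|\le C_0/\|\ta\|\lesssim\langle\ta\rangle^{-1}$ from \eqref{cha.1.1}. Substituting these into the expansion above and Taylor-expanding each $\mathsf h_{0,j}$ around $\ta$ — using that $\partial_a^\alpha\mathsf h_{0,j}\in S_{C}^{\frac{2\ell}{\ell+1}-j-|\alpha|,1}$, so that each derivative loses one power of $\langle a\rangle$, matching the $\langle\ta\rangle^{-1}$ decay of the displacement — produces a locally finite collection of symbols of strictly decreasing order; regrouping them by total order gives symbols $\tilde h_{0,j}\in S_{C}^{\frac{2\ell}{\ell+1}-j,1}$ with $\tilde h_{0,0}=\mathsf h_{0,0}=h_0$ in action--angle variables, and $\lambda^{(0)}_\ta\sim\sum_{j\ge0}\tilde h_{0,j}(\ta)$ in the sense of Definition \ref{sim.1} (with $\rho=1$).

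The substantive obstacle is the first step: carrying out the joint functional calculus of $(A_1,A_2)$ within the Helffer--Robert--type symbol class of Definition \ref{symbol.ao} while tracking the anisotropic homogeneity precisely enough to land in the exact classes $S_{C}^{\frac{2\ell}{\ell+1}-j,1}$ and to guarantee that the successive symbolic remainders genuinely drop in order rather than merely remaining bounded. This is essentially the content of \cite{Cha83,chaspe}, so once those results are invoked the proof reduces to the bookkeeping carried out in the last two paragraphs.
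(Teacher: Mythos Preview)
Your approach is correct and is essentially the paper's own: Theorem~\ref{cha.2} is not proved from scratch in the paper but attributed to Charbonnel's functional calculus \cite{Cha83,chaspe}, and the machinery you sketch is precisely what the paper later spells out in Lemmas~\ref{cha.lem.1}, \ref{cha.lem.2} and~\ref{spettri} for the general normal-form setting (compare your first step with Lemma~\ref{cha.lem.2}, and your evaluation on $\psi_{\ta}$ with the proof of Lemma~\ref{spettri}).

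One correction is worth flagging. You assert that $\psi_{\ta}$, being the unique joint eigenvector of $(A_1,A_2)$ near $\ta$, is automatically an \emph{exact} eigenfunction of $H_0$, and then obtain $\lambda^{(0)}_{\ta}$ by pairing. This inference requires $[H_0,A_1]=0$, which is not immediate: unlike $a_2$, the symbol $a_1$ is not a polynomial, so the Moyal bracket $\{h_0,a_1\}_M$ has no obvious reason to vanish beyond its leading (Poisson) term. The fix is already in your hands: the identity $H_0\psi_{\ta}=\mathsf h_0(A_1,A_2)\psi_{\ta}+R\psi_{\ta}$ exhibits $\psi_{\ta}$ as a \emph{quasimode} for $H_0$ with quasi-energy $\mathsf h_0(\lambda^{(1)}_{\ta_1},\lambda^{(2)}_{\ta_2})$ and error $O(\langle\ta\rangle^{-\infty})$, so a genuine eigenvalue lies within $O(\langle\ta\rangle^{-\infty})$; uniqueness then comes from the spectral gap of $H_0$ in each $A_2$-eigenspace (a one-dimensional Sturm--Liouville problem), not from the pairing. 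This is exactly how the paper proceeds in Lemma~\ref{spettri}.
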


\subsection{Main result}

We are now ready to state our main result, which is the following
Theorem. 

\begin{theorem}
\label{maint}
Consider the operator
\begin{equation}
  \label{ope1}
H:=H_0+V \ ,
\end{equation}
with $V\in\ops m_{HR}$ and
\begin{equation}
  \label{m}
m<\frac{2\ell}{\ell+1}\,;
\end{equation}
 define
\begin{equation}\label{def.M}
	M:= \frac{\ell-1}{\ell + 1}\,,\quad
        \fke:=\frac{2\ell}{\ell+1}-m\ ,\quad \delta_0:=M - \min
        \left\{\frac{1}{\ell
          +1},\,\frac{\fke}{3},\,\frac{2}{7}\right\}  ,
\end{equation}
then there exists $\mu_0>0$ such that for any choice
of the parameters $\delta$ and $\ep$ satisfying 
\begin{equation}\label{def.d.ep}
\delta_0 < \delta<M\,, \quad 0< \ep <\frac{M-\delta}{2 \mu_0} \,,
\end{equation}
define
\begin{equation}\label{def.varsigma}
\varsigma := 1 - (M-\delta)
\ ,\quad 
\rho:=\min\left\{\fke-3(M-\delta),2-7(M-\delta)\right\}\,,
\end{equation}
then the following holds.

There exists a sequence of symbols $\{\tilde z_j\}_{j \in \N}$ with
$\tilde z_j\in \sm{m-j\rho
}$ and
a set $\Omega \subset \tC$ such that
\begin{enumerate}
\item $\Omega$ has density one {at infinity} in $\tC$, more precisely, {denoting $\forall R>0$ $B_R:= B_R(0)$,} one has
\begin{equation}
\label{density}
1-\frac{\sharp(\Omega      \cap B_R)}{\sharp(B_R \cap   \tC 
	)} = \cO\left(R^{-\frac{(M-\delta)}{\mu_0} + 2\ep}\right)\quad \textrm{as } R \rightarrow + \infty\,;
\end{equation}
\item for any
$\ta\in \Omega$ there exists an eigenvalue $\lambda_{\ta}$
of \eqref{ope1}
which admits the asymptotic expansion
\begin{equation}
\label{asym}
\lambda_{\ta}\sim  \tilde h_0(\ta)+\sum_{j\geq0}\tilde z_j(\ta)\ ,\quad \ta\in
\Omega \ ,
\end{equation}
where $\tilde{h}_0(\ta)$ is the function in \eqref{cha.2.1}.
\end{enumerate}
\end{theorem}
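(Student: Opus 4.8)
The plan is to follow the quantum normal form strategy of \cite{BLMnr}, adapted to the anharmonic oscillator, in four main stages corresponding exactly to the four technical difficulties listed in the introduction. First I would set up the symbol calculus: verify that the Weyl quantization of $h_0$ differs from $H_0$ by a lower order symbol, and that $H_0$ itself acts as a first-order elliptic operator on the scale $\{\cH^s\}$; establish the composition, adjoint and commutator rules for $\OPS_{HR}$ (a gain of $M=\frac{\ell-1}{\ell+1}$ in each commutator, reflecting the fact that $\tk_0$ plays the role of the ``order'' weight), and prove that symbols all of whose derivatives decay faster than any power give infinitely smoothing operators. The key subtlety here, flagged as point (1), is that the class $S_{HR}^m$ is not stable under Fourier expansion in the angle variables; so instead of re-expanding at each normal form step I would carry along the symbols globally and use the homological equation in the form of a transport/averaging operator, exactly the ``trick from classical normal form theory'' alluded to in the text.

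Second, I would establish the geometry of the action map. By Lemma \ref{azioni} (to be proved by explicit computation, guided by \cite{BF}) the actions $a_1,a_2$ are globally defined real-analytic, quasi-homogeneous functions of $(x,\xi)$, with range the cone $\Pi$; the classical Hamiltonian $h_0$ and the frequency map $\omega=\nabla_a h_0$ are genuinely homogeneous functions of $a$. This homogeneity is what replaces the trivial identity ``frequencies $=$ actions'' of the torus case (point (2)): it lets one transfer nonresonance information from a compact slice of the cone $\cC$ to the whole cone by scaling, and it is what makes the classes $\sm m$ the right target for the normal form. Then, third, I would carry out the degenerate KAM / measure estimate (point (3)): using the homogeneity of $\omega$ and the nondegeneracy of its derivatives along the frequency slice — verified via \cite{Rus01,BBM11}, with the analyticity input and the results of \cite{Fej04,BF18,BFS18} — one shows that the set of actions $a\in\cC$ for which $|\omega(a)\cdot k|$ is not bounded below by $\gamma\|a\|^{\,?}/\|k\|^{\tau}$ for all $k\neq 0$ has measure $\cO(\gamma^{1/\mu_0})$, and hence that the ``resonant'' lattice points in $\tC$ have density $\cO(R^{-(M-\delta)/\mu_0+2\ep})$. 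This gives the set $\Omega$ and the density bound \eqref{density}. The iterative normal form then runs on the nonresonant region: at each step one solves a homological equation whose small divisors are controlled by the Diophantine lower bound, losing powers of $\|a\|^{M-\delta}$ (whence $\varsigma=1-(M-\delta)$ and the gain $\rho$ in \eqref{def.varsigma}), and one stops after finitely or infinitely many steps, obtaining $H$ conjugated, microlocally on a conic neighborhood of the nonresonant set, to $\Op(\tilde h_0 + \sum_j \tilde z_j) + (\text{smoothing})$ with $\tilde z_j\in\sm{m-j\rho}$.

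Fourth and last, I would extract the spectrum via functional calculus, following \cite{chaspe,Cha83,Cha83T}: the operator in normal form is, up to a smoothing remainder and a cutoff supported where the normal form is valid, a function of the commuting operators $A_1,A_2$ built from the actions, so one applies the Helffer--Sjöstrand / Charbonnel functional calculus to produce, for each $\ta\in\Omega$ (i.e. each nonresonant joint eigenvalue $(\lambda^{(1)}_{\ta_1},\lambda^{(2)}_{\ta_2})\in\Lambda^A$ as in Theorem \ref{cha.1}), an approximate eigenfunction of $H$ with approximate eigenvalue $\tilde h_0(\ta)+\sum_{j\le N}\tilde z_j(\ta)$ up to an error $\cO(\langle\ta\rangle^{m-(N+1)\rho})$. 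A standard min-max / spectral perturbation argument then promotes this to a genuine eigenvalue $\lambda_\ta$ of $H$ with the full asymptotic expansion \eqref{asym}, and Theorem \ref{cha.1} guarantees the points of $\Lambda^A$ in $\cC$ are separated enough that these eigenvalues are distinct. I expect the main obstacle to be point (3): controlling the measure of the resonant set uniformly along the whole cone requires the full strength of degenerate KAM theory together with a genuinely non-trivial verification that the frequency map of the anharmonic oscillator satisfies the requisite non-planarity/non-degeneracy conditions — unlike on $\T^d$, where the frequencies are linear and the Diophantine estimates are classical, here $\omega(a)$ is only implicitly known, and establishing its non-degeneracy is where the homogeneity structure and the cited results on the geometry of $h_0$ must be combined carefully. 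A secondary difficulty is bookkeeping the interplay of the two symbol classes $S_{HR}$ and $S_C$ through the iteration without letting the remainders' orders creep up, which is exactly the role of the Fourier-re-expansion-avoiding trick.
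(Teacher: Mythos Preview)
Your proposal is correct and follows essentially the same approach as the paper: the four stages you outline correspond precisely to Sections 4--5 (symbol calculus and action geometry), Section 6 (the iterative normal form lemma), Section 8 (the degenerate-KAM cardinality estimate yielding $\Omega$ and \eqref{density}), and Section 7 (the quasi-mode argument via Charbonnel's functional calculus). One small correction: the paper's normal form is \emph{global}, not microlocal---the conjugated operator carries an explicit resonant remainder $z^{(res)}$ supported where $|\omega(a)\cdot k|\le\|k\|\|a\|^\delta$, and it is only at the quasi-mode stage that this remainder is shown to act as a smoothing operator on $\psi_{\ta}$ for $\ta\in\Omega$ (via a second cutoff $\eta_k$ and the support condition); also, the paper does not claim or need distinctness of the $\lambda_{\ta}$.
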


\begin{remark}
  \label{esempio}
An example of a perturbation fulfilling the assumptions of Theorem
\ref{maint} is the Weyl quantization of
\begin{equation}
  \label{esem.1}
  v(x,\xi):=\sum_{|\alpha|+\ell
    |\beta|<2\ell}c_{\alpha,\beta}x^{\alpha}\xi^\beta \ ,
\end{equation}
where
$$
x^{\alpha}:=x_1^{\alpha_1}x_2^{\alpha_2}\ ,\quad
\xi^{\beta}:=\xi_1^{\beta_1}\xi_2^{\beta_2} \ . %\quad \left|\alpha\right|:=|\alpha_1|+|\alpha_2|\ ,
$$
%and similarly for $\left|\beta\right|$.
\end{remark}

The rest of the paper is devoted to the proof of Theorem \ref{maint}.

\section{Scheme of the proof}\label{scheme}

The idea of the proof is exactly the same of \cite{BLMnr} (see also
\cite{PS10}). We now recall it in order to give a road map to the
reader. The idea is to perform a ``semiclassical normal form''.

Consider first
the classical Hamiltonian
\begin{equation}
  \label{sturm.cl}
h:=h_0+V\ ,
\end{equation}
with $h_0$ integrable. As already anticipated we denote by
$a=(a_1,a_2)$ the classical actions of $h_0$. Suppose for a while to introduce
action angle variables, so that $h_0$ turns out to be a function of
the actions only. 

We are interested in studying the system in the region of large $a$.  In such a
region, $V$ can be considered as a perturbation of $h_0$.  Thus we develop a
perturbation theory in which the order of perturbation is given by
inverse powers of $\left\|a\right\|$.

The classical normal form procedure consists in looking for an
auxiliary Hamiltonian function $g$ s.t. the corresponding time 1 flow
$\phi_g^1$ (namely the time one flow of the corresponding Hamiltonian
vector field) conjugates $h$ to a new Hamiltonian $h\circ\phi^1_g$ which,
up to lower orders, is a function of the actions only.

By a formal computation one has (see below)
$$ h\circ\phi_g=h_0+\{g,h_0\}+V+{\rm lower\ order\ terms}\ ,
$$ where $ \left\{\cdot,\cdot\right\} $ are the Poisson brackets. Then, the
main point is to determine $g$ in such a way that
\begin{equation}
  \label{hom}
\{g,h_0\}+V= {\rm function \ of} \ a\ {\rm only}\ .
\end{equation}
However, this can be done \emph{only in the nonresonant regions}. To
explain the situation {pass to action angle variables $(a,\vf)$ and expand $V$
  in Fourier series in the angles}: one
has
$$
V(a,\vphi)=\sum_{k\in\Z^2}V_k(a)e^{\im k\cdot \vphi}\ ,
$$
and, defining  the frequencies as
$$
\omega_i(a):=\frac{\partial h_0}{\partial a_i}(a)\ ,
$$
one gets
$$
\left\{ h_0; \cdot\right\}=-\omega\cdot \frac{\partial}{\partial \vf}\ ,
$$
so that one is led to try to define the function $g$ as
$$
\sum_{k\not=0}\frac{V_k(a)}{\im \omega(a)\cdot k}e^{\im k\cdot \vphi}\,,
$$ which of course is ill defined in the resonant region where the
denominators vanish. To overcome this problem we introduce a cutoff to
localize outside the resonant regions. It turns out that a suitable
cutoff can be defined as follows: fix once for all a function $\chi\in
C^\infty(\R)$ which is equal to 1 in $B_{1/2}(0)$ and vanishes outside
$B_1(0)$, define
\begin{equation}
  \label{solhomo.0}
\chi_k\left(a\right):=\chi\left(\frac{\omega\cdot
  k}{\|k\|\|a\|^{\delta}}\right)\ ,\quad \widetilde
\chi_k\left(a\right):=\chi \left(\frac{\|k\|}{\left\|a\right\|^\epsilon}
\right)\ , 
\end{equation}
and put
\begin{equation}
  \label{gquasivera}
g(a,\vphi):=\sum_{k\not=0}\frac{V_k(a)}{\im \omega(a)\cdot k}\left(1-\chi_k(a)\right)\widetilde\chi_k(a)e^{\im
  k\cdot \vphi}\ ,
\end{equation}
which can be used to play the game it was designed for. By doing this
and iterating the construction one conjugates the system to a normal
form part, which in the nonresonant regions depends only on the
actions plus a remainder which decreases at infinity as fast as one
wants.

Now the point is to quantize this procedure.  This was done in
\cite{BLMnr} for the case of the Schr\"odinger operator on the torus
$\T^d$. The key remark is that, by Egorov theorem, one can simply
quantize the transformation given by the Hamiltonian flow of $g$ and
this gives a unitary transformation that conjugates $H$ to a ``quantum
normal form''\null
\footnote{Of course there are many technical
  details to verify, but this will be done in the forthcoming sections}. However
the situation of \cite{BLMnr} (and also of \cite{\russi}) was quite
simplified by the fact that the Laplacian on $\T^d$ is the
quantization of $h_0:=\sum \xi_j^2$, which is already in action angle
variables and has the remarkable property that the frequency map is
very simple.

Here, in order to keep at a minimum level the technicalities, we work
without using explicitly action angle variables and we make all
the developments in the original Cartesian coordinates $(x,\xi)$. A
priori, the main difficulty in doing this consists in solving the
homological equation \eqref{hom}: to this end we use here a method developed
in \cite{Bam96} which consists in making Fourier developments based on
the Hamiltonian flow of the actions. This requires a careful study of
the properties of the flow of the actions variables and of the
behavior of symbols under such a flow: it will be done in
Subsects. \ref{foa} and \ref{expans}.

As anticipated in the Introduction, the other difficulties are mainly
related to the study of the structure of the actions and of the frequency
map.

\section{Properties of $h_0$}\label{action1}

In this section we study the properties of the actions variables of
$h_0$. In this study, a relevant role is played by quasi-homogeneous functions. 
\begin{definition}
  \label{quasihom}
  A function $f\in C^\infty(\R^{4}\setminus \left\{0\right\})$ is said
  to be quasi-homogeneous of degree $m$ if for any
  $(x,\xi)\in\R^4\setminus\left\{0\right\}$ one has
   $$
f(\lambda x,\lambda^\ell \xi)=\lambda^m f(x,\xi)\ ,\quad \forall
\lambda>0\ .
$$
\end{definition}
In the following we will also use functions (of the actions) which are
homogeneous in the standard sense.

 We remark that if $f$ is quasi-homogeneous of degree $m$ then
$\partial_x^\alpha\partial_\xi^\beta f$ is quasi-homogeneous of degree
{$m-|\alpha|-|\beta|\ell$. } 

\begin{remark}
  \label{qosmooth}
If $f$ is quasi-homogeneous of some degree $m$ and smooth \emph{over the whole $\R^{4}$},
then it is also a symbol of class $\shr {\frac{m}{\ell+1}}$. 
\end{remark}

The main limitation of the above remark is that a quasi-homogeneous
function is $C^\infty$ until the origin only if it is a polynomial.
Nevertheless, since we are interested in the behavior at infinity of
the symbols, this is not a problem. 
To make this precise, {we \emph{fix once for all a cutoff function $\chi$} which is {even and}
$C^\infty(\R,\R)$, and is supported in $[-1,1]$ and equal to 1 in
$[-\frac{1}{2},\frac{1}{2}]$.}  
\begin{definition}
  \label{cutoffed}
Given a function $f:\R^4\to\R$ we define  
\begin{equation}
  \label{cutoofed}
f_{\chi}(x,\xi):= (1-\chi(\|a(x,\xi)\|))f(x,\xi)\ .
\end{equation}
\end{definition}

\begin{remark}
  \label{cuto}
If $f$ is quasi-homogeneous of degree $m$ then $f_\chi\in
\shr{\frac{m}{\ell+1}}$.
\end{remark}

With a slight abuse of notation, {\bf in this Section \ref{action1}
   we will say
  that $f\in\Shr ^m$ if $f_{\chi}\in \Shr^ m$.} Analogously, {\bf in Section \ref{psidoo}} we will define a new class of symbols, $\Snoi^m$, and {\bf we will say that $f \in \Snoi^m$ if $f_\chi \in \Snoi^m$.}

During the rest of the paper however we will work carefully because
the functions to be quantized are everywhere defined, so we have to
take into account their behavior on the whole of $\R^4$. So starting
from  Section \ref{nfl} we will come back to the correct terminology.

We start now the study of the action variables for $h_0$. Their
properties can be deduced from the general theory of integrable
systems (we follow here the ideas of {\cite{BF,CdV2})}. Here we give
a direct proof of all the properties working on the Hamiltonian
$h_0$. We recall that the actions, by their definition have the
property that the flow they generate is periodic in time with period
$2\pi$. 

The action variables can be
defined to be the angular momentum
\begin{equation}
  \label{L}
a_2(x,\xi):=x_1\xi_2-x_2\xi_1\ 
\end{equation}
and the radial action, namely the action of the effective Hamiltonian
\begin{equation}
  \label{efficace}
h_0^*(r,p_r,L):=\frac{p_r^2}{2}+V_{L}^*(r)\ ,\quad
V_{L}^*(r):=\frac{L^2}{2r^2}+\frac{r^{2\ell}}{2\ell}\ ,
\end{equation}
where $r={\|x\|}$ and $p_r$ is the conjugated
momentum. By $L$ we mean here the value taken by the angular momentum,
namely we mean that we are on a level surface {$a_2(x,\xi)=L$}.

In order to define $a_1$, \emph{for $L\not=0$} we preliminary define
\begin{equation}
  \label{ar}
a_r=a_r(E,L):=\frac{\sqrt2}{\pi} \int_{r_m}^{r_M}\sqrt{ E- V^*_L(r)}dr\ ,
\end{equation}
where $0<r_m <r_M$ are the two solutions of the equation
$$
E-V^*_L(r)=0\ .
$$

\begin{lemma}
  \label{azioni}
  For $L>0$, consider the function
  \begin{equation}
    \label{a1}
a_1(E,L):=a_r(E,L)\ ,\quad L>0\ .
  \end{equation}
Such function has the following properties:
  \begin{itemize}
  \item[(1)]  it extends to a complex analytic function (still denoted
    by $a_1$) of $L$ and $E$
    in the region
    \begin{equation}
      \label{defia1}
\left|L\right|<\left(\frac{2\ell}{\ell+1}E\right)^{\frac{\ell+1}{2\ell}}\ ,\quad
E>0\ ;
    \end{equation}
  \item[(2)] for $L<0$ one has
    \begin{equation}
      \label{lpiuomeno}
a_r(E,L)=a_1(E,L)+L\ ;
    \end{equation}
  \item[(3)] the function $a_1(x,\xi):=a_1(h_0(x,\xi),a_2(x,\xi))$ is
    quasi-homogeneous of degree $\ell+1$, so in particular it is
   of class $C^1(\R^4)$;

\item[(4)] the map $E\mapsto a_1(E,a_2)$ admits an inverse
  $E=h_0(a_1,a_2)$ which is analytic in the interior of $\Pi$. Furthermore it is homogeneous of degree
  $\frac{2\ell}{\ell+1}$, namely one has
  $$
h_0(\lambda a_1,\lambda a_2)=\lambda^{\frac{2\ell}{\ell+1} }h_0(a_1,
a_2)\ , \quad \forall \lambda>0\,.
  $$
  \end{itemize}
\end{lemma}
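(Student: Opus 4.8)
The plan is to establish the four properties of Lemma~\ref{azioni} by an essentially explicit computation on the radial action integral \eqref{ar}, relying on the homogeneity structure that is already visible in $V_L^*$ and $h_0$. First I would record the basic scaling: under $(r,E,L)\mapsto(\lambda r,\lambda^{\frac{2\ell}{\ell+1}}E,\lambda^{\frac{\ell+1}{\ell}\cdot\frac{\ell}{\ell+1}}L)$ — more precisely with $E$ scaling as $r^{2\ell}$ and $L$ scaling as $r^{\ell+1}$ — the effective potential satisfies $V_{\lambda L}^*(\lambda r)=\lambda^{2\ell}V_L^*(r)$, and changing variables $r\mapsto\lambda r$ in \eqref{ar} gives $a_r(\lambda^{2\ell}E,\lambda^{\ell+1}L)=\lambda^{\ell+1}a_r(E,L)$. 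Since $a_2(x,\xi)$ is quadratic and $h_0(x,\xi)$ is quasi-homogeneous of degree $2\ell$ in the sense of Definition~\ref{quasihom} (check: $h_0(\lambda x,\lambda^\ell\xi)=\lambda^{2\ell}h_0(x,\xi)$), the composition $a_1(x,\xi)=a_1(h_0(x,\xi),a_2(x,\xi))$ inherits quasi-homogeneity of degree $\ell+1$, which is point (3); the $C^1$ regularity across the origin then follows because a quasi-homogeneous function of positive degree $\ell+1>1$ whose only potential singularity is at $0$ is automatically $C^1$ there (the first derivatives are quasi-homogeneous of degree $\ell+1-1>0$ or $\ell+1-\ell>0$, hence continuous and vanishing at the origin).

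For point (1), the strategy is to exhibit $a_r(E,L)$ as an explicit convergent integral that manifestly extends analytically. I would first normalize by the scaling above to reduce to a one-parameter family, e.g.\ fix $E=1$ and let $s:=L$ range over $|s|<\left(\frac{2\ell}{\ell+1}\right)^{\frac{\ell+1}{2\ell}}$, the condition \eqref{defia1} being exactly the requirement that the equation $1-V_s^*(r)=0$ have two positive roots $r_m(s)<r_M(s)$ (a single-well analysis: $V_s^*$ is convex-like with a unique interior minimum, and \eqref{defia1} says the minimum value is $<1$; the boundary case is the degenerate double root, i.e.\ the circular orbit). The cleanest route is the standard substitution that removes the square-root branch points: write $r^2 = r_m^2 + (r_M^2-r_m^2)\sin^2\theta$ or an Abel-type change of variable turning $\sqrt{E-V_L^*(r)}\,dr$ into an integrand analytic in $(E,L)$ with endpoints depending analytically on $(E,L)$; then analyticity of $a_r$ on the region \eqref{defia1} follows from analytic dependence of a proper integral on parameters. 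Alternatively, and perhaps more robustly, I would invoke the general action-angle framework of~\cite{BF,CdV2} (cited in the excerpt) which guarantees analyticity of actions away from singular fibers, the singular fibers here being exactly $L=0$ and the circular-orbit locus $|L|=(\tfrac{2\ell}{\ell+1}E)^{\frac{\ell+1}{2\ell}}$.

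For point (2), the identity $a_r(E,L)=a_1(E,|L|)+L$ for $L<0$ reflects the fact that $V_L^*$ depends only on $L^2$, so the integral \eqref{ar} is literally an even function of $L$, i.e.\ $a_r(E,L)=a_r(E,|L|)$; the shift by $L$ comes from the different choice of branch/normalization of the angle variable for negative angular momentum (the $2\pi$-periodicity normalization of the action forces this additive $L$, exactly as in the description of the range cone $\Pi$ in \eqref{pigreco}, whose two defining regimes $a_2\geq 0$ and $a_2<0$ mirror precisely \eqref{a1} versus \eqref{lpiuomeno}). Concretely I would check that $a_1(E,L)$ defined by \eqref{a1} and its continuation, together with \eqref{lpiuomeno}, produces a globally $C^1$ (indeed quasi-homogeneous of degree $\ell+1$) function on all of $\R^4$, matching across $L=0$.

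Finally, point (4): since $\frac{\partial a_r}{\partial E}=\frac{1}{\sqrt2\,\pi}\int_{r_m}^{r_M}\frac{dr}{\sqrt{E-V_L^*(r)}}>0$ (the period of the radial motion, strictly positive and finite on the admissible region), the map $E\mapsto a_1(E,a_2)$ is strictly increasing, hence invertible, and its inverse $E=h_0(a_1,a_2)$ is analytic by the analytic inverse function theorem using point (1). Homogeneity of degree $\frac{2\ell}{\ell+1}$ is then forced by the scaling established at the outset: from $a_1(\lambda^{2\ell}E,\lambda^{\ell+1}L)=\lambda^{\ell+1}a_1(E,L)$ one reads off, after relabeling $a_i\mapsto\lambda a_i$, that $h_0(\lambda a_1,\lambda a_2)=\lambda^{\frac{2\ell}{\ell+1}}h_0(a_1,a_2)$. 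I expect the main obstacle to be point (1), specifically making the analytic continuation of $a_r$ across $L=0$ and up to (but not including) the circular-orbit boundary fully rigorous: one must control the endpoints $r_m(E,L),r_M(E,L)$ as analytic functions — including at $L=0$, where $r_m\to 0$ and the integrand develops an integrable $1/r$-type issue from the centrifugal term that must be shown to cancel in the action — and verify that the branch-point-removing substitution genuinely produces a parameter-analytic proper integral. The homogeneity bookkeeping and points (2)--(4) are then comparatively routine once (1) is in hand.
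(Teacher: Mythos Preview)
Your treatment of points (3) and (4) is essentially the paper's own argument: the scaling identity $a_r(\lambda^{2\ell}E,\lambda^{\ell+1}L)=\lambda^{\ell+1}a_r(E,L)$ together with the quasi-homogeneity of $h_0$ and $a_2$ gives (3), and strict monotonicity in $E$ plus the analytic inverse function theorem gives (4). The difference lies entirely in how (1) and (2) are handled, and here there is a genuine gap in your proposal.

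The paper does \emph{not} use an Abel-type substitution or invoke general action--angle theory to cross $L=0$. Instead it passes to $s=r^2$, writes the integrand as $\tfrac{1}{z}\sqrt{-p(z,E,L)}$ with $p(s,E,L)=\tfrac{s^{\ell+1}}{2\ell}-Es+\tfrac{L^2}{2}$, and applies the residue theorem to a contour enclosing the cut $[s_m,s_M]$ and the pole at $z=0$. The residue at $0$ is exactly $-\tfrac{1}{2}|L|$, and the remaining rectangular contour integral defines a function $f(E,L)$ that is analytic in $L$ through $L=0$ (since $p$, the roots other than $s_m$, and the contour depend only on $L^2$). This yields the key formula
\[
a_r(E,L)=-\tfrac{1}{2}|L|+f(E,L),\qquad f\ \text{analytic, even in }L,
\]
from which both (1) and (2) are immediate: $a_1:=-\tfrac{L}{2}+f(E,L)$ is the analytic extension, and for $L<0$ one reads off $a_r=a_1+L$ directly.

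Your proposed routes do not reach this. The Abel substitution fails precisely at $L=0$: after $s=r^2$ the smaller root behaves like $s_m\sim L^2/(2E)$, and the factor $1/s$ in the integrand produces a $1/(s_m+(s_M-s_m)\sin^2\theta)$ singularity at $\theta=0$ that is not uniformly integrable as $L\to0$; the non-analytic $|L|$ contribution is hiding exactly there, and you have no mechanism to isolate it. Invoking \cite{BF,CdV2} is also not enough as stated, since those results concern regular fibers and the fiber $L=0$ passes through the origin $r=0$. More seriously, your argument for (2) is circular: knowing that $a_r$ is even in $L$ and that $a_1$ extends analytically does \emph{not} by itself determine the odd part of $a_1$, so the specific shift by $L$ (rather than by $cL$ for some other constant) cannot be deduced from ``normalization'' considerations alone. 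It is the residue computation that pins down the coefficient $-\tfrac{1}{2}$ in front of $|L|$, and hence the exact form of \eqref{lpiuomeno}.
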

The proof of this Lemma is postponed to Appendix \ref{sec.residui}. 
\begin{corollary}
  \label{azioniefrequenze}
 Denote  $\omega:=(\omega_1,\omega_2)$, with
  $\omega_i(a):=\frac{\partial h_0}{\partial a_i}(a)$, then $\omega_i$
  are homogeneous of degree
  \begin{equation}
    \label{M}
M:=\frac{\ell-1}{\ell+1}\ 
  \end{equation}
  as functions of $a$. Furthermore
  $\omega_i(x,\xi):=\omega_i(a_1(x,\xi),a_2(x,\xi))$ is quasi-homogeneous
  of degree $\ell-1$.
\end{corollary}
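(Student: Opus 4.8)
The plan is to derive everything from Lemma~\ref{azioni}, part (4), which gives that $h_0(a_1,a_2)$ is analytic in the interior of $\Pi$ and homogeneous of degree $\frac{2\ell}{\ell+1}$. The homogeneity of the frequencies is then a direct application of Euler's theorem on homogeneous functions in its differentiated form: if a smooth function $F$ on a cone is homogeneous of degree $d$, then each first partial derivative $\partial_{a_i}F$ is homogeneous of degree $d-1$. Applying this with $F=h_0$ and $d=\frac{2\ell}{\ell+1}$ gives that $\omega_i=\partial_{a_i}h_0$ is homogeneous of degree $\frac{2\ell}{\ell+1}-1=\frac{\ell-1}{\ell+1}=M$, which is the first claim. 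Smoothness of $\omega_i$ on the interior of $\Pi$ is inherited from the analyticity of $h_0$ there.

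For the second claim, the point is to track how the quasi-homogeneity in the $(x,\xi)$ variables interacts with the ordinary homogeneity in the $a$ variables. By Lemma~\ref{azioni}(3) the map $(x,\xi)\mapsto a_1(x,\xi)$ is quasi-homogeneous of degree $\ell+1$, i.e. $a_1(\lambda x,\lambda^\ell\xi)=\lambda^{\ell+1}a_1(x,\xi)$, and by \eqref{L} the angular momentum $a_2(x,\xi)=x_1\xi_2-x_2\xi_1$ is quasi-homogeneous of degree $\ell+1$ as well (each monomial is of the form $x_i\xi_j$, contributing $1+\ell=\ell+1$ to the quasi-degree). Hence the vector $a(\lambda x,\lambda^\ell\xi)=\lambda^{\ell+1}a(x,\xi)$. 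Now I would compute directly: for $\lambda>0$,
\begin{equation}
\omega_i(a(\lambda x,\lambda^\ell \xi))=\omega_i\bigl(\lambda^{\ell+1}a(x,\xi)\bigr)=\bigl(\lambda^{\ell+1}\bigr)^{M}\omega_i(a(x,\xi))=\lambda^{(\ell+1)M}\omega_i(a(x,\xi))\,,
\end{equation}
using the homogeneity of degree $M$ of $\omega_i$ in $a$ established in the first step. Since $(\ell+1)M=(\ell+1)\cdot\frac{\ell-1}{\ell+1}=\ell-1$, this shows precisely that $(x,\xi)\mapsto\omega_i(a_1(x,\xi),a_2(x,\xi))$ is quasi-homogeneous of degree $\ell-1$, as asserted. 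Smoothness away from the origin follows because $\overline{\cC}\setminus\{0\}$ lies in the interior of $\Pi$, so on the relevant cone the composition of the analytic $h_0$-derivatives with the $C^1$ (indeed, away from $0$, analytic) action maps is smooth.

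There is no serious obstacle here: the statement is a bookkeeping corollary of the structural results already proved for $h_0$ and the actions. The only point requiring a little care is to make sure the exponents compose correctly --- that quasi-degree $\ell+1$ in phase space combined with ordinary homogeneity degree $M$ in the actions yields quasi-degree $(\ell+1)M=\ell-1$ --- and to note that the differentiation in $a_i$ lowers the homogeneity degree by exactly one (Euler's relation), which is what produces $M=\frac{2\ell}{\ell+1}-1$. One should also confirm the regularity domain matches: the frequencies inherit analyticity from $h_0$ on the interior of $\Pi$, and in particular they are smooth on $\cC$, which is all that is needed downstream.
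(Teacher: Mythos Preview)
Your proposal is correct and matches the paper's approach: the paper states this as an immediate corollary of Lemma~\ref{azioni} without giving a separate proof, and your argument---differentiating the homogeneity relation for $h_0$ to get degree $M=\tfrac{2\ell}{\ell+1}-1$, then composing with the quasi-homogeneous actions to get quasi-degree $(\ell+1)M=\ell-1$---is exactly the intended computation.
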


\begin{remark}
  \label{equia}
{One has that} $a_i$ are quasi-homogeneous functions of order
$\ell+1$, and furthermore $a=0$ implies $x=\xi=0$. 
It follows that 
$$
\langle a(x,\xi)\rangle \simeq \tk_0(x,\xi) \ .
$$
\end{remark}

\section{Symbols, Fourier expansions and Pseudo-differential calculus}\label{psidoo}

\subsection{A class of symbols}
In the following we will need a class of symbols slightly more general than those
of Definition \ref{symbol.ao}. 

First of all we fix positive parameters $\ep, \delta$ fulfilling
\begin{equation}
\label{delta}
\frac{\ell-2}{\ell+1}<\delta<\frac{\ell-1}{\ell+1}=M\,, \quad 0<\ep < M -\delta;
\end{equation}
further requirements on $\delta$ and $\ep$ will be specified later on
{(see equation \eqref{legami.tra.parametri.1} and Remark \eqref{mu.legame} below).} 
We also define
\begin{equation}\label{delte}
\delta_1:=\delta- \frac{\ell-2}{\ell+1} \ ,\quad \delta_2:=\delta_1+
\frac{\ell-1}{\ell+1}\ , 
\end{equation}
and remark that
\begin{equation}
\label{ridelte}
\delta_1+\delta_2=2\delta-\frac{\ell-3}{\ell+1}\,\ \Longrightarrow\quad \frac{\ell-1}{\ell+1}<\delta_1+\delta_2<1 \ .
\end{equation}

\begin{definition}
	\label{symbol.a1}
	Given $f \in C^\infty(\R^4)$, we will write $f \in \Snoi^{m}$ if $\forall \alpha, \beta \in \N^2$, there exists $C_{\alpha, \beta} >0$
	s.t.
	\begin{equation}
	\label{es.7}
	\vert \partial_x^\alpha \, \partial_\xi^\beta f(
	x,\xi)\vert \leq C_{\alpha,\beta}
	\ \left(\tk_0(x,\xi)\right)^{m-\delta_1{|\alpha|}-\delta_2
		{|\beta}|}  \ ,
	\end{equation}
	with $\delta_1, \delta_2$ given by \eqref{delte}.
{We will say that an operator $F$ is a
pseudo-differential operator of class $\OPSnoi^{m}$ if there exists symbol $f \in \Snoi^{m}$ s.t. $F=\Op (f)$.}
\end{definition}

The smallest constants $C_{\alpha,\beta}$ s.t. Eq. \eqref{es.7} holds
	form a family of seminorms for the symbols of this class. However
        in order to get the standard algebra properties it is more
        convenient to use a different definition.
        \begin{definition}
          \label{seminorme}
          Let $f\in\Snoi^m$, then, $\forall \alpha,\beta\in\N^d$, we put
          \begin{equation}
            \label{norma.k}
	\norfou{{f}}{m}{\alpha, \beta}:= \sup_{x, \xi}
        \sup_{|\alpha^\prime| \leq |\alpha|, |\beta^\prime| \leq
          |\beta|} \left| \partial_{x}^{{\alpha^\prime}} \partial_\xi^{{\beta^{\prime}}}
            {f}(x, \csi)\right| (\tk_0(x, \csi))^{-(m-|{\alpha^\prime}|\delta_1
              - |{\beta^\prime}|\delta_2)}\,.
          \end{equation}
        \end{definition}
    \begin{remark}
    	\label{prod 2}
    	With this definition we have
    	$$ \norfou{fg}{m+m'}{\alpha,\beta}\leq
    	\norfou{f}{m}{\alpha,\beta}\norfou{g}{m'}{\alpha,\beta}
    	$$
    	for any couple of symbols $f\in\Snoi^m$, $g\in\Snoi^{m'}$. 
    \end{remark}

    \begin{lemma}\label{lemma composizione 2}
    	Let $m, m' \in \R$,  $F = \op f \in \OPSnoi^{m}$, $G =
    	\op g \in \OPSnoi^{m'} $. Then  $FG \in \OPSnoi^{m +
    		m'}$. Denote by $f\sharp g$ its symbol, then it admits the asymptotic expansion 
    	\begin{align}
    	\label{sigma1 2}
    	& f\sharp g\sim \sum_{j \geq 0} (f\sharp g)_j\ , \\
    	\label{sigma2 2}
    	& (f\sharp g)_j:= \frac{1}{\ii^j }\sum_{|\alpha| + |\beta| = j} \Big(
    	\frac12 \Big)^{|\alpha|} \Big(- \frac12 \Big)^{|\beta|}
    	(\partial_x^\beta \partial_\xi^\alpha f )( \partial_x^\alpha
    	\partial_\xi^\beta g ) \in \Snoi^{m + m' - (\delta_1+\delta_2) j}, \quad j \geq 0\,.
    	\end{align}
    	Furthermore, $\forall n, \alpha,\beta$, $\exists \alpha',\beta'$ and $C>0$ s.t.
    	\begin{equation}
    	\label{stime2}
    	\left\|f\sharp g-\sum_{j=1}^{n-1}(f\sharp
    	g)_j\right\|^{(m-m'-n(\delta_1+\delta_2))}
    	_{\alpha,\beta}\leq C\left\| f\right\|^{m}_{{\alpha',\beta'} }\left\|
    	g\right\|^{m'}_{{\alpha',\beta'}}\ .
    	\end{equation}
    \end{lemma}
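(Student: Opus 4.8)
The statement is the standard symbolic calculus lemma for the new class $\OPSnoi$, so the plan is to reduce it to the already-known calculus for the Helffer--Robert class $\OPS_{HR}$ and to the classical Weyl calculus, exploiting the fact that the only difference is the weight attached to $x$- versus $\xi$-derivatives. First I would record the elementary inequalities on the order function $\tk_0$: since $\tk_0(x,\xi)^{1/(\ell+1)}$ controls $|x|^{\ell}$ and $|\xi|$ up to constants, one has $\tk_0^{-\delta_1|\alpha|-\delta_2|\beta|}$ bounded in terms of $\tk_0^{-(|\alpha|+\ell|\beta|)/(\ell+1)}$ whenever $\delta_1,\delta_2$ are chosen as in \eqref{delte} (this is exactly why \eqref{ridelte} is stated), so $\Snoi^m$ sits between two Helffer--Robert classes and in particular the oscillatory integral defining $\Op(f)$ makes sense and the formal Weyl symbol $f\sharp g$ is a tempered distribution. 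The second preliminary step is Remark \ref{prod 2}, i.e.\ the submultiplicativity of the seminorms $\norfou{\cdot}{m}{\alpha,\beta}$ under pointwise product together with the Leibniz rule: since $\delta_1|\alpha|+\delta_2|\beta|$ is additive, each term $(\partial_x^\beta\partial_\xi^\alpha f)(\partial_x^\alpha\partial_\xi^\beta g)$ manifestly lies in $\Snoi^{m+m'-(\delta_1+\delta_2)j}$ with a seminorm bound of the product type, which gives \eqref{sigma2 2} and, via Definition \ref{sim.1}-type estimates, the asymptotic series \eqref{sigma1 2} once the remainder is controlled.

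The heart of the matter is the remainder bound \eqref{stime2}. I would use the standard integral (Taylor) representation of the Weyl composition: $f\sharp g$ is obtained from $f\otimes g$ by applying $\exp\big(\tfrac{\ii}{2}\sigma(D_{x_1},D_{\xi_1};D_{x_2},D_{\xi_2})\big)$ and restricting to the diagonal, and truncating the exponential at order $n$ leaves an error that is an oscillatory integral of the $n$-th remainder $\int_0^1 \frac{(1-t)^{n-1}}{(n-1)!}(\cdots)\,dt$ of that exponential. The task is then to show this oscillatory integral, as a function of $(x,\xi)$, satisfies the $\Snoi$ bound of order $m+m'-n(\delta_1+\delta_2)$. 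The technique is the usual one: insert cutoffs separating the region where the integration variables are comparable to $(x,\xi)$ (where one integrates by parts in the dual variables, each integration by parts gaining a factor $\tk_0^{-\delta_1}$ or $\tk_0^{-\delta_2}$ according to whether it falls on an $x$- or $\xi$-derivative, which is exactly the gain recorded in the symbol estimate) from the region where they are large (where nonstationary phase / repeated integration by parts in the phase gives rapid decay, absorbed trivially into the weight). One must track which derivatives hit $f$ and which hit $g$, and check that the total homogeneity balance works out to $-n(\delta_1+\delta_2)$; this is where the inequalities \eqref{ridelte}, ensuring $\tfrac{\ell-1}{\ell+1}<\delta_1+\delta_2<1$, are used to guarantee both that the series is genuinely asymptotic (each step gains a positive power) and that the weight $\tk_0$ still dominates the polynomial growth produced by differentiating the phase. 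Finitely many seminorms $\norfou{f}{m}{\alpha',\beta'}$, $\norfou{g}{m'}{\alpha',\beta'}$ with $(\alpha',\beta')$ depending on $(n,\alpha,\beta)$ suffice, because only finitely many derivatives of $f$ and $g$ enter the truncated expansion and its remainder.

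The main obstacle I anticipate is purely bookkeeping rather than conceptual: in the classical Hörmander calculus the gain per order is a \emph{uniform} factor $h$ or $\langle\xi\rangle^{-1}$, whereas here the gain is anisotropic ($\delta_1$ in $x$, $\delta_2$ in $\xi$, with $\delta_1\neq\delta_2$), so one has to be careful that every integration by parts is correctly attributed and that the stationary-phase remainder estimates are carried out with respect to the correct anisotropic scaling of the phase-space variables dictated by $\tk_0$. A clean way to handle this is to first perform the anisotropic rescaling $x\mapsto \tk_0^{1/(2(\ell+1))}x$, $\xi\mapsto \tk_0^{\ell/(2(\ell+1))}\xi$ (so that the problem becomes semiclassical with parameter $h\sim\tk_0^{-1/(\ell+1)}$, as in \cite{HR82}), reducing \eqref{stime2} to a standard semiclassical remainder estimate, and then to restore the original variables. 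Alternatively, since $\Snoi^m\subset\OPS_{HR}^{m'}$ for a suitable $m'$ and the composition formula for $\OPS_{HR}$ is already available from \cite{HR82,hero2}, one can obtain the qualitative statement $FG\in\OPSnoi^{m+m'}$ and the form of $(f\sharp g)_j$ from the $HR$-calculus and only needs the present, finer, remainder estimate \eqref{stime2} in the $\Snoi$ seminorms, which is what the argument above supplies. I would present the proof in the rescaled semiclassical form, as it makes the gain-counting transparent and mirrors the reference \cite{HR82}.
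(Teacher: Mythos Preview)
Your plan is correct and is essentially the standard argument the paper has in mind. Note, however, that the paper does \emph{not} actually give a proof of this lemma: it is stated without proof, and a few lines later (before Lemma~\ref{lemma composizione}) the authors simply remark that such results ``can be proved by simple variations on the standard arguments''. So there is no paper-proof to compare with beyond this.

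Your outline matches the intended ``standard argument'': verify directly from Remark~\ref{prod 2} that each $(f\sharp g)_j$ lies in $\Snoi^{m+m'-(\delta_1+\delta_2)j}$ with the product-type seminorm bound, and then control the Taylor remainder of the Weyl composition via the usual oscillatory-integral estimate, keeping track of the anisotropic gain ($\delta_1$ per $x$-derivative, $\delta_2$ per $\xi$-derivative). The anisotropic rescaling reducing to a semiclassical estimate (\`a la \cite{HR82,hero2}) is exactly the clean way to do the bookkeeping, and the inclusion $\Snoi^m\subset S_{HR}^{m'}$ provides the qualitative statement for free. One small point: in the remainder argument you should use $\delta_1+\delta_2<1$ (from \eqref{ridelte}) to ensure that the operator $\exp\big(\tfrac{\ii}{2}\sigma(D)\big)$ still acts as a genuinely smoothing expansion on $\Snoi$ symbols (i.e.\ that the gain per step is positive and the remainder integral converges after finitely many integrations by parts); you mention this but it is the only place where the specific constraint on $\delta$ enters, so it is worth making explicit in the written proof.
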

   \begin{lemma}
   	\label{calderon}(Calderon Vaillancourt theorem)
   	Let $f\in\Snoi^m$, then $\forall s\in\R$, $\Op(f)\in
        B(\cH^{s+m};\cH^{s})$ and there exist $ C,\alpha,\beta$,
        s.t.
   	$$
   	\left\| \Op(f)\right\|_{{s+m}, {s}}\leq C
   	\left\|f\right\|^{(m)}_{\alpha,\beta}\ .
   	$$
   \end{lemma}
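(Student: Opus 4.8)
\textbf{Plan of proof for the Calderón--Vaillancourt estimate (Lemma \ref{calderon}).}

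The strategy is to reduce the statement for the class $\Snoi^m$ to the classical Calderón--Vaillancourt theorem for symbols of order zero in the standard H\"ormander class $S^0_{0,0}$ (more precisely, the version giving boundedness on $L^2=\cH^0$ with a bound depending only on finitely many seminorms), by a suitable conjugation and rescaling. First I would observe that, by Remark \ref{equia}, $\tk_0(x,\xi)\simeq\langle a(x,\xi)\rangle$, so that $\tk_0$ behaves like the order function of a metric adapted to the anharmonic scaling; the powers $\delta_1,\delta_2$ in \eqref{es.7} and \eqref{ridelte} are precisely tuned so that $\delta_1+\delta_2<1$, i.e. the symbols gain a genuine (small) power of $\tk_0$ under each derivative, which is what makes a pseudodifferential calculus and an $L^2$-bound possible.

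The concrete plan is as follows. Step 1: treat the case $m=0$. Given $f\in\Snoi^0$, the estimate \eqref{es.7} says exactly that $f$ is a symbol in the H\"ormander class associated to the (slowly varying, temperate) metric
\[
g_{(x,\xi)} = \tk_0(x,\xi)^{-2\delta_1}\,|dx|^2 + \tk_0(x,\xi)^{-2\delta_2}\,|d\xi|^2\,,
\]
with weight $1$; since $\delta_1,\delta_2\in(0,1)$ one has $g\le g^\sigma$ (the Planck function $h_g=\tk_0^{-(\delta_1+\delta_2)}\le 1$ is bounded), and slow variation/temperance of $g$ follow from the fact that $\tk_0$ is itself a $g$-weight (a routine check using that $\tk_0$ is, up to constants, a power of $1+\|x\|^{2\ell}+\|\xi\|^2$ and that $\delta_1<\frac{2\ell}{\ell+1}$, $\delta_2<\frac{2\ell}{\ell+1}$ hold under \eqref{ridelte}). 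Then the Calderón--Vaillancourt theorem in the Weyl--H\"ormander calculus (see H\"ormander, \emph{The Analysis of Linear PDO} III, Thm. 18.6.3, or Lerner's book) gives $\Op(f)\in B(L^2)$ with
\[
\|\Op(f)\|_{0,0}\ \le\ C\,\sup_{|\alpha|+|\beta|\le N_0}\ \sup_{x,\xi}\,\bigl|\partial_x^\alpha\partial_\xi^\beta f(x,\xi)\bigr|\,\tk_0(x,\xi)^{\delta_1|\alpha|+\delta_2|\beta|}\ =\ C\,\|f\|^{(0)}_{\alpha_0,\beta_0}
\]
for some fixed finite $N_0$ (hence fixed multi-indices $\alpha_0,\beta_0$) depending only on $\ell,\delta$.

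Step 2: reduce general $m$ to $m=0$. Here I would use the operator $\Op(\tk_0^{-m})$, or rather an elliptic parametrix for it, as a ``reference'' operator of order $m$. One needs that $\tk_0^{m}\in\Snoi^{m}$ (immediate from the explicit form of $\tk_0$ and the chain rule, using $0<\delta_1,\delta_2<\tfrac{2\ell}{\ell+1}$) and that $\Op(\tk_0^{m})$ is invertible $\cH^{s+m}\to\cH^s$ with inverse a pseudodifferential operator of order $-m$; this invertibility for large values of $\tk_0$ is standard elliptic parametrix construction inside the calculus of Lemma \ref{lemma composizione 2}, and the low-frequency part is a smoothing (hence bounded) correction by Definition \ref{smooth.def}. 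Alternatively — and this is probably cleaner for the write-up — one composes: given $f\in\Snoi^m$, write $\Op(f)=\Op(f)\Op(\tk_0^{-m})\,\Op(\tk_0^{m})$ up to a smoothing remainder; by Lemma \ref{lemma composizione 2} and Remark \ref{prod 2}, $\Op(f)\Op(\tk_0^{-m})\in\OPSnoi^{0}$ with symbol seminorms controlled by $\|f\|^{(m)}_{\alpha',\beta'}$ for suitable $\alpha',\beta'$, so Step 1 bounds it on $L^2$; and $\Op(\tk_0^{m})\colon\cH^{s+m}\to\cH^{s}$ together with $\cH^s=\Op(\tk_0^{-s})\cdot(\text{iso})$... — in fact the cleanest formulation is: the map $\psi\mapsto H_0^{s(\ell+1)/(2\ell)}\psi$ realizes $\cH^s\cong L^2$, and $H_0^{t(\ell+1)/(2\ell)}$ is, for any $t$, a pseudodifferential operator of class $\OPSnoi^{t}$ whose symbol is, to leading order, $\tk_0^{t}$ (by functional calculus \`a la Helffer--Robert, \cite{HR82}); composing on both sides with these and invoking Step 1 gives the claim, with $\alpha,\beta$ and $C$ depending only on $m,s,\ell,\delta$.

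\textbf{Main obstacle.} The genuinely delicate point is \emph{not} the abstract Calderón--Vaillancourt input but the verification that the metric $g$ above is admissible (slowly varying, temperate, $\sigma$-temperate) in the Weyl--H\"ormander sense, uniformly in $(x,\xi)$ including near $\{x=\xi=0\}$ where $\tk_0$ is smooth and bounded below by $1$ — so in fact the origin causes no trouble, and the real content is the temperance estimate $\tk_0(x,\xi)\lesssim \tk_0(y,\eta)\,(1+g^\sigma_{(y,\eta)}(x-y,\xi-\eta))^{C}$, which must be checked by hand from the explicit expression $\tk_0=(1+\|x\|^{2\ell}+\|\xi\|^2)^{(\ell+1)/(2\ell)}$ together with $\delta_1<\frac{2\ell}{\ell+1}$ and $\delta_2<\frac{2\ell}{\ell+1}$ (both consequences of \eqref{ridelte} and $\ell\ge2$). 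If one prefers to avoid the full Weyl--H\"ormander machinery, an alternative is to run the classical Cotlar--Stein almost-orthogonality argument directly: decompose phase space into boxes of size $\tk_0^{\delta_1}\times\tk_0^{\delta_2}$ on which $\tk_0$ is essentially constant, rescale each box to unit size (turning $\Snoi^m$ into $S^0_{0,0}$ with uniform seminorms after dividing by $\tk_0^m$), apply the unit-scale Calderón--Vaillancourt bound on each box, and sum the almost-orthogonal pieces; the bookkeeping of the overlaps and the $\tk_0^{-m}$ normalization is the laborious but routine part. Either way, the dependence of $C,\alpha,\beta$ on only $m,s$ (and the fixed data $\ell,\delta,\ep$) falls out because only finitely many derivatives enter at unit scale.
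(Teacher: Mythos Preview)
The paper does not actually give a proof of this lemma: it is stated without proof, as a standard result in the pseudodifferential calculus for symbol classes that are minor variants of the Helffer--Robert classes of \cite{HR82,hero2}. Your plan is precisely the standard route the paper is tacitly invoking: embed $\Snoi^m$ into the Weyl--H\"ormander calculus with the metric $g=\tk_0^{-2\delta_1}|dx|^2+\tk_0^{-2\delta_2}|d\xi|^2$ and weight $\tk_0^m$, apply the abstract $L^2$-boundedness theorem for $m=0$, and then shift orders via the isomorphisms $H_0^{s(\ell+1)/(2\ell)}:\cH^s\to L^2$.

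Two small remarks that would streamline your write-up. First, since $\delta_1<\tfrac{1}{\ell+1}$ and $\delta_2<\tfrac{\ell}{\ell+1}$ (both follow from $\delta<M$ in \eqref{delta}), one has $\Shr^m\subset\Snoi^m$; hence the functional-calculus representation of $H_0^{t(\ell+1)/(2\ell)}$ as an element of $\OPS_{HR}^t$ (which is what \cite{HR82} gives directly) already lands in $\OPSnoi^t$, so no separate argument is needed there. Second, the admissibility of the metric (slow variation, $\sigma$-temperance of $\tk_0$) is checked in \cite{HR82} for the exponents $\tfrac{1}{\ell+1},\tfrac{\ell}{\ell+1}$, and your exponents $\delta_1,\delta_2$ are strictly smaller, so the temperance inequality you flag as the ``main obstacle'' is in fact easier here than in the Helffer--Robert case: smaller $\delta_i$ means larger boxes in the $g$-geometry, hence milder variation conditions on $\tk_0$. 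So this step is genuinely routine once one cites \cite{HR82}.
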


\subsection{Flow of the actions}\label{foa}

In order to develop perturbation theory, we will make use of a Fourier
development based on the flow of the actions (following the ideas of
\cite{Bam96}). For this reason such a flow has to be studied quite in detail.

Let $a_i$ be the $i$-th action. Consider the corresponding
Hamilton equations, namely
\begin{equation}
  \label{az.eq}
\dot\xi=-\frac{\partial a_i}{\partial x}\ ,\quad \dot x=\frac{\partial a_i}{\partial \xi}\ 
\end{equation}
and denote by $\phi_{a_i}^{\vf}$ the corresponding time $\vf$
flow.

We are now going to study $\phi^\vf_{a_i}$ and, for $f\in\Snoi^m$,
$f\circ\phi_{a_i}^\vf$. We concentrate on the non trivial action
$a_1$, but all what follows is only based on the property that $a_1$
is quasi-homogeneous of degree $\ell+1$, so it trivially holds also for $a_2$.
We will also distinguish between the $x$
and the $\xi$ components of the flow, so we will write
\begin{equation}
\label{xxi}
\phi^\vf_{a_1}(x,\xi)=(X^\vf(x,\xi),\Xi^\vf(x,\xi))\ .
\end{equation}

\begin{lemma}
	\label{homoa}
	$\forall \vf\in\R$ the function $X^\vf$ is quasi-homogeneous of degree
	1 and $\Xi^\vf$ is quasi-homogeneous of degree $\ell$.
\end{lemma}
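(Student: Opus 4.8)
The plan is to exploit the quasi-homogeneity of $a_1$ and the structure of the Hamiltonian vector field it generates. First I would write down the Hamilton equations \eqref{az.eq} for $a_1$ explicitly in the form $\dot x = \partial_\xi a_1$, $\dot\xi = -\partial_x a_1$, and recall from the remark following Definition \ref{quasihom} that, since $a_1$ is quasi-homogeneous of degree $\ell+1$, the function $\partial_\xi a_1$ is quasi-homogeneous of degree $\ell+1-\ell = 1$ and $\partial_x a_1$ is quasi-homogeneous of degree $\ell+1-1 = \ell$. The key idea is to compare the true flow with a rescaled one: for $\lambda>0$ introduce the dilation $D_\lambda(x,\xi):=(\lambda x,\lambda^\ell \xi)$ and check that $D_\lambda$ conjugates the vector field $X_{a_1}$ to $\lambda^{\ell-1} X_{a_1}$. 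Indeed, pushing forward by $D_\lambda$ multiplies the $x$-component of the vector field (which has quasi-degree $1$) by $\lambda^{1} \cdot \lambda^{-1}\cdot\lambda^{\ell-1}$... more cleanly: one verifies directly that if $t\mapsto (x(t),\xi(t))$ solves the system then $t\mapsto D_\lambda(x(\lambda^{\ell-1}t),\xi(\lambda^{\ell-1}t))$ also solves it, using the quasi-homogeneity degrees of $\partial_\xi a_1$ and $\partial_x a_1$ computed above.

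From this conjugation the result follows by uniqueness of solutions of ODEs. Concretely, fix $\vf\in\R$ and $(x,\xi)\neq 0$ and $\lambda>0$. Apply the scaling identity with the flow started at $D_\lambda(x,\xi)$: on one hand $\phi^\vf_{a_1}(D_\lambda(x,\xi))$ is the time-$\vf$ solution from $D_\lambda(x,\xi)$; on the other hand, by the scaling relation (with a suitable choice matching the time reparametrization, taking $\vf$ in place of $\lambda^{\ell-1}t$ carefully — i.e. I would actually set up the scaling so that the time variable is unchanged, which forces the factor $\lambda^{\ell-1}$ to be absorbed; alternatively phrase it as: the flow at time $\vf$ of the field $\lambda^{\ell-1}X_{a_1}$ equals the flow at time $\lambda^{\ell-1}\vf$ of $X_{a_1}$), one gets $\phi^\vf_{a_1}(D_\lambda(x,\xi)) = D_\lambda\big(\phi^\vf_{a_1}(x,\xi)\big)$. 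Writing this out componentwise gives exactly $X^\vf(\lambda x,\lambda^\ell\xi) = \lambda X^\vf(x,\xi)$ and $\Xi^\vf(\lambda x,\lambda^\ell\xi)=\lambda^\ell \Xi^\vf(x,\xi)$, which is the claim.

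The cleanest way to make the time-reparametrization bookkeeping transparent is to observe that $a_1$ being quasi-homogeneous of degree $\ell+1$ forces $X_{a_1}$ to scale so that its flow commutes with $D_\lambda$ \emph{without} a time rescaling: explicitly, $X_{a_1}\circ D_\lambda = DD_\lambda \cdot X_{a_1}$ would be the statement if the degree were exactly right, and here a short computation shows $\partial_\xi a_1(\lambda x,\lambda^\ell\xi) = \lambda \,\partial_\xi a_1(x,\xi)$ while the $x$-component of the tangent map $DD_\lambda$ is multiplication by $\lambda$ — so the $\dot x$ equation is consistent with no time change; similarly for $\dot\xi$ with $\lambda^\ell$. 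Hence $D_\lambda$ genuinely conjugates the flow to itself and $\phi^\vf_{a_1}\circ D_\lambda = D_\lambda\circ\phi^\vf_{a_1}$ for all $\vf$, giving the two quasi-homogeneity statements at once.

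The main obstacle — really the only one — is the regularity/global existence issue: the flow $\phi^\vf_{a_1}$ must be defined for all $\vf\in\R$ on $\R^4\setminus\{0\}$ for the statement to make sense. Here I would invoke that $a_1(x,\xi)=a_1(h_0,a_2)$ is by Lemma \ref{azioni}(3) of class $C^1(\R^4)$ and, being an action, generates a $2\pi$-periodic flow (as recalled before \eqref{L}), so global existence is not in question; and since $a_1$ is actually smooth on $\R^4\setminus\{0\}$ (being smooth and quasi-homogeneous away from the origin, or by analyticity from Lemma \ref{azioni}), $X^\vf$ and $\Xi^\vf$ are smooth there, so ``quasi-homogeneous'' in the sense of Definition \ref{quasihom} is meaningful. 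One should also note $D_\lambda$ preserves $\R^4\setminus\{0\}$, so the scaling argument stays in the region where everything is smooth. Beyond that the proof is a two-line ODE uniqueness argument once the scaling identity for the vector field is checked.
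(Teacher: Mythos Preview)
Your argument is correct and is essentially the same as the paper's: both show that the rescaled curve $t\mapsto D_\lambda\big(\phi^t_{a_1}(x,\xi)\big)$ solves the same Cauchy problem as $t\mapsto \phi^t_{a_1}(D_\lambda(x,\xi))$, using that $\partial_\xi a_1$ and $\partial_x a_1$ are quasi-homogeneous of degrees $1$ and $\ell$ respectively, and conclude by uniqueness. The only blemish is the false start with the putative $\lambda^{\ell-1}$ time rescaling, which you rightly discard in your final paragraph --- the degree $\ell+1$ of $a_1$ is exactly what makes $D_\lambda$ commute with the flow with \emph{no} time change, so in a clean write-up that detour should simply be deleted.
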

\proof For $\lambda>0$, consider
$$
X_\lambda^\vf(x,\xi):=\frac{1}{\lambda}X^\vf(\lambda x,\lambda^\ell
\xi)\ ,\quad
\Xi_\lambda^\vf(x,\xi):=\frac{1}{\lambda^\ell}X^\vf(\lambda
x,\lambda^\ell \xi)\ . 
$$
One has
$$
X_\lambda^\vf(x,\xi)\big|_{\vf=0}=x\ ,\quad
\Xi_\lambda^\vf(x,\xi)\big|_{\vf=0}=\xi\ . 
$$
Furthermore, since $\frac{\partial a_1}{\partial \xi}$ is quasi
homogeneous of degree 1, one gets
\begin{align}
\label{qo.q}
\frac{d X_\lambda^\vf}{d\vf}(x,\xi)=\frac{1}{\lambda}\frac{d
	X^\vf}{d\vf}(\lambda x,\lambda^\ell\xi)=\frac{1}{\lambda }
\frac{\partial a_1}{\partial \xi}(X^\vf(\lambda x,\lambda^\ell
\xi),\Xi^\vf(\lambda x,\lambda^\ell\xi))
\\
= \frac{\partial a_1}{\partial \xi}\left(\frac{1}{\lambda} X^\vf(\lambda x,\lambda^\ell
\xi),\frac{1}{\lambda^\ell}\Xi^\vf(\lambda x,\lambda^\ell\xi)\right)=
\frac{\partial a_1}{\partial \xi}(X^\vf_\lambda(x,
\xi),\Xi_\lambda^\vf(x,\xi))
\ ,
\end{align}
which shows that $(X^\vf,\Xi^\vf)$ and
$(X_\lambda^\vf,\Xi_\lambda^\vf)$ satisfy the same Cauchy problem, so
they coincide. \qed

It follows that if $f$ is quasi-homogeneous of some degree, then
$f\circ\phi^\vf_{a_1}$ is quasi-homogeneous of the same degree. It is
also easy to see that if $f\in S^m_{HR}$ then the same is true for
$f\circ \phi^\vf_{a_1}$ (more precisely, according to our abuse of
notation, for $(f\circ \phi^\vf_{a_1})_\chi$). We are now going to show that
also the class $\Snoi^m$ is left invariant by the composition
with $\phi$ (up to a cutoff close to the origin).

\begin{lemma}
	\label{egor}
	If $f\in \Snoi^m$ then $f\circ \phi^{\vf}_{a_1} \in \Snoi^m $. 
\end{lemma}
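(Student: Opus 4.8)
The statement is Lemma \ref{egor}: the class $\Snoi^m$ is invariant under composition with the flow $\phi^\vf_{a_1}$ (up to a cutoff near the origin). The plan is to estimate the derivatives $\partial_x^\al\partial_\xi^\be(f\circ\phi^\vf_{a_1})$ by the chain rule and Fa\`a di Bruno's formula, reducing everything to (a) bounds on the derivatives of $f$ evaluated at $\phi^\vf_{a_1}(x,\xi)$ and (b) bounds on the derivatives of the flow map $(X^\vf,\Xi^\vf)$. For (a), the key point is that $\phi^\vf_{a_1}$ (being generated by a quasi-homogeneous Hamiltonian, hence preserving the quasi-homogeneous scaling by Lemma \ref{homoa}, and being the flow of a periodic vector field) preserves the weight function up to constants: one should show $\tk_0(\phi^\vf_{a_1}(x,\xi))\simeq\tk_0(x,\xi)$ uniformly in $\vf$ in a compact interval. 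Since $a_1$ is preserved along its own flow and $a_2$ (the angular momentum) commutes with $a_1$, hence is also preserved, and since $h_0=h_0(a_1,a_2)$ is preserved too, the full vector $a(x,\xi)$ is invariant under $\phi^\vf_{a_1}$; then Remark \ref{equia} gives $\tk_0(\phi^\vf_{a_1}(x,\xi))\simeq\langle a(\phi^\vf_{a_1}(x,\xi))\rangle=\langle a(x,\xi)\rangle\simeq\tk_0(x,\xi)$. This already handles the zeroth-order bound: $|f(\phi^\vf_{a_1}(x,\xi))|\leq\norfou{f}{m}{0,0}\tk_0(\phi^\vf_{a_1})^m\simeq\tk_0(x,\xi)^m$.

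For the derivatives, I would set up estimates on the flow: write the variational equations obtained by differentiating \eqref{az.eq}, and prove by induction on $|\al|+|\be|$ that for $\vf$ in a fixed compact interval,
\begin{equation}
\label{flow-bounds}
|\partial_x^\al\partial_\xi^\be X^\vf(x,\xi)|\lesssim \tk_0(x,\xi)^{\frac{1}{\ell+1}-\delta_1|\al|-\delta_2|\be|}\,,\qquad
|\partial_x^\al\partial_\xi^\be \Xi^\vf(x,\xi)|\lesssim \tk_0(x,\xi)^{\frac{\ell}{\ell+1}-\delta_1|\al|-\delta_2|\be|}\,,
\end{equation}
i.e. $X^\vf\in\Snoi^{1/(\ell+1)}$ and $\Xi^\vf\in\Snoi^{\ell/(\ell+1)}$ componentwise (this is really the statement that $\phi^\vf_{a_1}$ lies in the group generated by $\OPSnoi$-type symbols). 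The seed of the induction is Lemma \ref{homoa}: quasi-homogeneity of degrees $1$ and $\ell$ combined with smoothness away from the origin (and the cutoff near the origin) gives $X^\vf\in\Shr^{1/(\ell+1)}$, $\Xi^\vf\in\Shr^{\ell/(\ell+1)}$, which are the required bounds in the special case $\delta_1=\delta_2=1/(\ell+1)$; the general $\delta_1,\delta_2$ case then follows because $\delta_1,\delta_2>1/(\ell+1)$ would be \emph{worse}, so one actually has to be slightly careful — the correct inductive estimate uses that each extra $x$-derivative gains a factor behaving no worse than $\tk_0^{-\delta_1}$ and each extra $\xi$-derivative no worse than $\tk_0^{-\delta_2}$, which is consistent because $\delta_1,\delta_2\le 1/(\ell+1)$ is false; rather the $\Shr$ bound \emph{implies} the $\Snoi$ bound only when $\delta_i\le 1/(\ell+1)$, which by \eqref{delte} is \emph{not} the case. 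So the honest route is to run the induction directly in the $\Snoi$ scale, differentiating the ODE $\dot z=J\nabla a_1(z)$ and using that $\nabla a_1$, being quasi-homogeneous, satisfies the requisite $\Snoi$-type derivative bounds (with the shift coming from its degree $\ell$ resp. $1$ homogeneity) — this is exactly the mechanism that makes the class $\Snoi^m$ the right one.

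Finally, combine: by Fa\`a di Bruno, $\partial_x^\al\partial_\xi^\be(f\circ\phi^\vf_{a_1})$ is a finite sum of terms $(\partial^\gamma f)(\phi^\vf_{a_1})\cdot\prod(\text{derivatives of }X^\vf,\Xi^\vf)$ with $|\gamma|\le|\al|+|\be|$ and the total order of derivatives on the flow components matching $(\al,\be)$; using \eqref{flow-bounds}, the invariance $\tk_0(\phi^\vf_{a_1})\simeq\tk_0$, and the symbol estimate for $f$, each term is bounded by a constant times $\tk_0(x,\xi)^{m-\delta_1|\al|-\delta_2|\be|}$, with the bookkeeping of exponents working out precisely because the degree shifts $1/(\ell+1)$ and $\ell/(\ell+1)$ of the two components of the flow combine correctly with the weights $\delta_1,\delta_2$. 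The main obstacle is the careful inductive control of the high derivatives of the flow \eqref{flow-bounds} uniformly in $\vf$ on a compact interval — this is where one has to exploit quasi-homogeneity of $a_1$ (Lemma \ref{homoa}) together with Gr\"onwall-type estimates for the variational equations, and where the precise arithmetic relations \eqref{ridelte} among $\delta_1,\delta_2$ are used to keep all exponents in the admissible range.
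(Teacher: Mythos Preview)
Your overall strategy---chain rule/Fa\`a di Bruno, invariance of $\tk_0$ under the flow, bounds on the derivatives of $(X^\vf,\Xi^\vf)$---is the same as the paper's. But there is a concrete gap: the flow bounds you display,
\[
|\partial_x^\al\partial_\xi^\be X^\vf|\lesssim \tk_0^{\frac{1}{\ell+1}-\delta_1|\al|-\delta_2|\be|},\qquad
|\partial_x^\al\partial_\xi^\be \Xi^\vf|\lesssim \tk_0^{\frac{\ell}{\ell+1}-\delta_1|\al|-\delta_2|\be|},
\]
are the $\Snoi$ bounds, and they are \emph{too weak} to close the Fa\`a di Bruno bookkeeping. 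Already at first order: the term $(\partial_xf)\circ\phi^\vf\cdot\partial_{x_1}X^\vf$ is then only bounded by $\tk_0^{m-\delta_1}\cdot\tk_0^{1/(\ell+1)-\delta_1}=\tk_0^{m+1/(\ell+1)-2\delta_1}$, which exceeds $\tk_0^{m-\delta_1}$ since $\delta_1<\tfrac{1}{\ell+1}$. Your confusion about whether $\Shr^m\subset\Snoi^m$ is the source of the error: in fact $\delta_1<\tfrac{1}{\ell+1}$ and $\delta_2<\tfrac{\ell}{\ell+1}$ (immediate from \eqref{delta}--\eqref{delte}), so $\Shr^m\subset\Snoi^m$ \emph{does} hold, and quasi-homogeneity (Lemma \ref{homoa}) already gives the \emph{sharper} $S_{HR}$ bounds
\[
|\partial_x^\al\partial_\xi^\be X^\vf|\lesssim \tk_0^{\frac{1-|\al|-\ell|\be|}{\ell+1}},\qquad
|\partial_x^\al\partial_\xi^\be \Xi^\vf|\lesssim \tk_0^{\frac{\ell-|\al|-\ell|\be|}{\ell+1}}.
\]
No variational equations or Gr\"onwall are needed.

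With these sharp bounds the exponent count does close, and the decisive identity is not \eqref{ridelte} but the definition \eqref{delte} itself, $\delta_2-\delta_1=\tfrac{\ell-1}{\ell+1}$: it converts the ``mixed'' term $(\partial_\xi f)\circ\phi^\vf\cdot\partial_x\Xi^\vf$ of size $\tk_0^{m-\delta_2+(\ell-1)/(\ell+1)}$ into exactly $\tk_0^{m-\delta_1}$. The paper packages this as an induction showing
\[
\partial_x^\alpha(f\circ\phi^\vf_{a_1})=\sum_{|\gamma|+|\beta|\le|\alpha|}\Phi_{\gamma\beta}\,(\partial_x^\gamma\partial_\xi^\beta f)\circ\phi^\vf_{a_1},\qquad \Phi_{\gamma\beta}\in S_{HR}^{|\beta|\delta_2+(|\gamma|-|\alpha|)\delta_1},
\]
where the coefficients $\Phi_{\gamma\beta}$ are products of derivatives of the flow and hence sit in $S_{HR}$, not merely $\Snoi$; the stated order is exactly what makes each term bounded by $\tk_0^{m-|\alpha|\delta_1}$. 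So the fix to your argument is simply to replace your displayed $\Snoi$ flow bounds by the $S_{HR}$ ones coming from quasi-homogeneity, after which your Fa\`a di Bruno computation and the paper's induction are equivalent.
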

\proof We are going to study $\frac{\partial^{|\alpha|}}{\partial
	x^\alpha}f\circ \phi^\vf_{a_1}$; the other derivatives can be studied in
the same way. We are going to prove that
\begin{equation}
\label{ricor}
\frac{\partial^{|\alpha|}}{\partial
	x^\alpha}f\circ
\phi^\vf_{a_1}=\sum_{\beta,\gamma}\Phi_{\gamma\beta}\frac{\partial^{|\gamma+\beta|}f}{\partial
	x^\gamma\partial\xi^\beta}\circ \phi^\vf_{a_1} \ 
\end{equation}
with $\Phi_{\gamma\beta}\in
S_{HR}^{|\beta|\delta_2+(|\gamma|-|\alpha|)\delta_1}$, and
$\left|\gamma\right|+\left|\beta\right|\leq \left|\alpha\right|$. From
this structure one immediately has
$$
\left|\frac{\partial^{|\alpha|}}{\partial
	x^\alpha}f\circ \phi^\vf_{a_1}\right|\sleq \tk_0^{m-|\alpha|\delta_1}\ ,
$$
and, {adding the estimates of the other derivative,} the thesis. We now prove \eqref{ricor}. To shorten the notation
denote  $f_{\gamma\beta}:=\frac{\partial^{|\gamma+\beta|}f}{\partial
	x^\gamma\partial\xi^\beta}$ and compute (to be determined)
\begin{align*}
\frac{\partial}{\partial x_1}\left(f_{\gamma\beta}\circ\phi^\vf_{{a_1}} \Phi_{\gamma\beta}\right) &=
\left(\frac{\partial X^\vf}{\partial x_1}\cdot \frac{\partial
	f_{\gamma\beta}}{\partial x} \circ\phi^\vf_{a_1}+\frac{\partial \Xi^\vf}{\partial x_1}\cdot \frac{\partial
	f_{\gamma\beta}}{\partial \xi}
\circ\phi^\vf_{a_1}\right)\Phi_{\gamma\beta}
\\
&+f_{\gamma\beta}\circ \phi_{a_1}^\vf \frac{\partial
	\Phi_{\gamma\beta}}{\partial x_1}\ .
\end{align*}
The second line has the wanted structure. We compute the structure of
the first line. Consider its first term (namely $\frac{\partial X^\vf}{\partial x_1}\cdot \frac{\partial
	f_{\gamma\beta}}{\partial x} \circ\phi^\vf_{a_1}   $) which is
actually the sum of two terms, corresponding to the different
components of the gradient and of  $X$. The first of these two terms has the
wanted structure with
$$
\Phi_{(\gamma+(1,0))\beta}:=\frac{\partial (X^\vf)_1}{\partial
	x_1}\Phi_{\gamma\beta} \in S^{|\beta|\delta_2+(|\gamma|-|\alpha|)\delta_1}_{HR}
$$
(since $\frac{\partial (X^\vf)_1}{\partial
	x_1}$ is homogeneous of degree zero), which has correct order since
$|\gamma+(1,0)|=|\gamma|+1$ and the new $\alpha$ has modulus
$|\alpha|+1$. Consider now the first of the terms containing $\Xi$. It
has the wanted structure with
$$
\Phi_{\gamma(\beta+(1,0))}:=\frac{\partial (\Xi^\vf)_1}{\partial
	x_1}\Phi_{\gamma\beta} \in
S^{|\beta|\delta_2+(|\gamma|-|\alpha|)\delta_1+\frac{\ell}{\ell+1}-\frac{1}{\ell+1}
}_{HR} 
$$
(since $\Xi^\vf$ is homogeneous of degree $\ell$). So the order is
$$
\left|\beta\right|\delta_2+(\left|\gamma\right|-(\left|\alpha\right|+1))\delta_1
+\delta_1+\frac{\ell-1}{\ell+1}=
(\left|\beta\right|+1)\delta_2+(\left|\gamma\right|-(\left|\alpha\right|+1))\delta_1
$$
as it should be. \qed

From now on we will need to consider the composition of the flows of
$a_1$ and $a_2$. So we denote
\begin{equation}
\label{vectora}
\vf:=(\vf_1,\vf_2) \ ,\quad
\phi_{a}^{\vf}:=\phi_{a_1}^{\vf_1}\circ\phi_{a_2}^{\vf_2}\ .
\end{equation}
and remark that {$\phi_{a}^\vf$} is $2\pi$ periodic in each one of the 
$\vf$'s.
\begin{remark}
	\label{egora}
	The result of Lemma \ref{egor} holds also for the
	joint flow $\phi^\vf_a$. 
\end{remark}

\subsection{Fourier expansion and summable symbols}\label{expans}

Following \cite{Bam96} we consider the Fourier
development defined by the flow $\phi_a^{\vf}$. 

\begin{definition}
	\label{fou.1}
	Let $f\in \Snoi^m$, then, {for $k\in\Z^2$,}
	\begin{equation}
	\label{fou.2}
	\hat
	f_k(x,\xi):=\frac{1}{4\pi^2}\int_{\T^2}f(\phi^\vf_a(x,\xi))e^{-\ii
		k\cdot \vf}d\vf
	\end{equation}
	is called the $k$-th Fourier coefficient of $f$.
\end{definition}

\begin{remark}
	\label{tuttaf}
	From standard theory of Fourier expansions one has, {for fixed
        $(x,\xi)\in\R^4$, }
	$$
	f(\phi_a^{\vf}(x,\xi))=\sum_{k\in\Z^2}\hat f_k(x,\xi)e^{\ii k\cdot
		\vf}\ ,
	$$
	so that, in particular
	\begin{equation}\label{four.somma}
	f(x,\xi)=\sum_{k\in\Z^2}\hat f_k(x,\xi)\ .
	\end{equation}
\end{remark}
\begin{remark}
  \label{inva}
{ Let $g$ be a function invariant under the flow of the actions, then, for any function $f$, one has that the $k$-th
Fourier coefficient of $gf$, namely $\widehat{(gf)}_k$, is $g\hat f_k$.  }
\end{remark}

The $0-$th Fourier coefficient of a symbol $f \in \Snorma^{m}$ is
essentially a function of the actions only. To give a precise
statement, we proceed 
as in \cite{chaspe}; introduce an open cone
$\cV$, s.t. $\overline{\cV}\setminus\left\{0\right\} $ is contained in the
interior of $\Pi$ and $\overline{\cC}\setminus\left\{0\right\} $ is contained in the
interior of $\cV$, then we have the following Lemma.

\begin{lemma}
  \label{lem.a.dependence}
  {Let $f\in\Snoi^m$, with the property that
    $f\circ\phi^\vf_a=f$, $\forall \vf\in\T^2$,
then there exists a symbol $f_c\in \sm {m}$,
with $\varsigma=1-(M-\delta)$, s.t.}
\begin{equation}\label{actions.only}
f(x,\xi)=f_c(a(x,\xi))\ ,\quad {\forall (x,\xi)\ :
  \ a(x,\xi)\in\cV\cap B_1^c}\,,
\end{equation}
{where, given $r>0$, $B_r^c$ is the complement of the ball of radius $r$ in $\R^2$.}
\end{lemma}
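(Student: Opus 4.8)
The plan is to exploit the fact that a flow-invariant symbol $f$ must be a function of the actions alone on the region where the action map $a$ is a submersion with the expected regularity, and then to transfer the symbol estimates from the $(x,\xi)$ variables to the $a$ variables using the quasi-homogeneity established in Section \ref{action1}. First I would observe that the hypothesis $f\circ\phi^\vf_a=f$ for all $\vf\in\T^2$ means $f$ is constant along the joint orbits of the Hamiltonian flows of $a_1$ and $a_2$; since generically these orbits are exactly the level sets $\{a(x,\xi)=\text{const}\}$, there is a well-defined function $f_c$ on the range of $a$ with $f(x,\xi)=f_c(a(x,\xi))$. To make this rigorous on $\cV\cap B_1^c$, I would invoke Lemma \ref{azioni}: on the interior of $\Pi$ (hence on a conical neighborhood of $\overline\cC$, in particular on $\cV$) the map $E\mapsto a_1(E,a_2)$ is an analytic diffeomorphism, so the pair $(a_1,a_2)$ together with the angles $\vf$ forms a genuine analytic coordinate system away from the origin; in these coordinates $f$ depends on $a$ only, which defines $f_c$ and shows it is smooth on the relevant cone.

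Next I would establish the symbol estimates \eqref{semi} for $f_c$ with $\varsigma=1-(M-\delta)$. The key point is the dictionary between derivatives in $(x,\xi)$ and derivatives in $a$. Since $a_1,a_2$ are quasi-homogeneous of degree $\ell+1$ (Remark \ref{equia}), we have $|\partial_x^\beta\partial_\xi^\gamma a_i|\lesssim \tk_0^{\frac{\ell+1-|\beta|-\ell|\gamma|}{\ell+1}}$, i.e. each $x$-derivative of $a$ gains $\tk_0^{-\frac{1}{\ell+1}}$ and each $\xi$-derivative gains $\tk_0^{-\frac{\ell}{\ell+1}}$. Conversely, on $\cV\cap B_1^c$ the inverse map $a\mapsto(x,\xi)$ (together with the angles) is quasi-homogeneous as well, so $\partial_{a}$ applied to a function of $a$, when pulled back, costs a factor that I can read off from the homogeneity degrees: since $\langle a\rangle\simeq\tk_0$ (Remark \ref{equia}) and $a$ has homogeneity $\ell+1$ in $(x,\xi)$, one derivative $\partial_{a_i}$ corresponds, after pull-back, to a vector field whose coefficients are $\mathcal{O}(\tk_0^{-\delta_1})$-type in the worst direction. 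I would run this through the chain rule / Faà di Bruno: writing $\partial_a^\alpha f_c$ in terms of $\partial_x^\beta\partial_\xi^\gamma f$ composed with derivatives of the inverse map, and using $f\in\Snoi^m$ (so $|\partial_x^\beta\partial_\xi^\gamma f|\lesssim\tk_0^{m-\delta_1|\beta|-\delta_2|\gamma|}$), I would collect powers of $\tk_0$ and check that the net gain per $\partial_{a_i}$ is $\tk_0^{-\min(\delta_1,\ldots)}$; the bookkeeping should produce exactly $\langle a\rangle^{m-\varsigma|\alpha|}$ with $\varsigma=1-(M-\delta)=\delta_1+\delta_2-M$... and here one has to check the arithmetic: $\delta_1=\delta-\frac{\ell-2}{\ell+1}$, $\delta_2=\delta_1+\frac{\ell-1}{\ell+1}$, $M=\frac{\ell-1}{\ell+1}$, so indeed $1-(M-\delta)=1-\frac{\ell-1}{\ell+1}+\delta=\frac{2}{\ell+1}+\delta$, and one verifies this is the correct exponent coming out of the homogeneity count.

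The main obstacle I anticipate is precisely the last step: controlling the derivatives of the inverse map $a\mapsto(x,\xi)$ (equivalently, of the action-angle change of variables) uniformly on the cone $\cV\cap B_1^c$, and doing the combinatorics of the chain rule so that the exponent of $\langle a\rangle$ comes out to be exactly $m-\varsigma|\alpha|$ and not something weaker. This is where homogeneity does the heavy lifting: since the change of variables is quasi-homogeneous, its derivatives automatically satisfy clean power-law bounds in $\tk_0$, so it suffices to verify the claim on a compact angular slice $\{|a|=1,\ a\in\cV\}$ and then scale. I would therefore structure the proof as: (i) produce $f_c$ and its smoothness via Lemma \ref{azioni}; (ii) reduce the symbol estimates to a compact slice by quasi-homogeneity; (iii) on the slice, the estimate is trivial by smoothness and compactness; (iv) scale back up, tracking homogeneity degrees, to recover \eqref{semi} with the stated $\varsigma$. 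A subtlety to flag along the way is that $f_c$ a priori is only defined on the image $a(\cV\cap B_1^c)$, so \eqref{actions.only} is asserted only there; extending $f_c$ to all of $\R^2$ as an $\sm m$ symbol (if needed later) would be done by multiplying by a cutoff supported in a slightly larger cone, but the statement as given does not require this.
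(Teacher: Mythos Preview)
Your overall strategy coincides with the paper's: define $f_c$ via action--angle coordinates, then transfer the $\Snoi^m$ estimates to $\sm m$ estimates by the chain rule, using homogeneity of the inverse map $(a,\vf)\mapsto(\check x,\check\xi)$ to control its derivatives. The paper carries this out exactly as you outline, obtaining from $\check x(\lambda^{\ell+1}a,\vf)=\lambda\check x(a,\vf)$ and $\check\xi(\lambda^{\ell+1}a,\vf)=\lambda^\ell\check\xi(a,\vf)$ the bounds $|\partial_a^\beta\check x|\lesssim\langle a\rangle^{\frac{1}{\ell+1}-|\beta|}$, $|\partial_a^\beta\check\xi|\lesssim\langle a\rangle^{\frac{\ell}{\ell+1}-|\beta|}$, and then the chain rule gives $|\partial_a f_c|\lesssim\langle a\rangle^{m-\delta_1-\frac{\ell}{\ell+1}}+\langle a\rangle^{m-\delta_2-\frac{1}{\ell+1}}=\langle a\rangle^{m-\varsigma}$.

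There is, however, one genuine gap. You write ``since the change of variables is quasi-homogeneous'' as if this were automatic, but it is not: the angle variables are only determined up to an additive function of the actions, and a generic choice will \emph{not} be quasi-homogeneous of degree~$0$. The paper isolates this as a separate lemma and proves it by an explicit gauge fixing: starting from arbitrary angles $\vf$, one picks for each $\bar a$ on the unit sphere a point $(x_0,\xi_0)$ with $a(x_0,\xi_0)=\bar a$ and $\vf(x_0,\xi_0)=0$, and then redefines $\tilde\vf(x,\xi):=\vf(x,\xi)-\vf(\lambda x_0,\lambda^\ell\xi_0)$ where $\lambda=\|a(x,\xi)\|^{1/(\ell+1)}$. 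One then checks, using that the Poisson brackets $\{a_i,\vf_{\mu,j}\}$ are invariant under the quasi-homogeneous scaling, that $\tilde\vf$ is quasi-homogeneous of degree~$0$; this is what yields the clean scaling of $(\check x,\check\xi)$ that your argument needs. Your step~(ii)--(iv) (``reduce to a compact slice and scale'') only works once this gauge has been fixed.

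Two minor points: your intermediate identity $\varsigma=\delta_1+\delta_2-M$ is incorrect (compute: $\delta_1+\delta_2-M=2\delta_1\neq\frac{2}{\ell+1}+\delta$ in general), though your final value $\varsigma=\frac{2}{\ell+1}+\delta$ is right. Also, since $\sm m$ is by definition a class of functions on all of $\R^2$, the extension by a conical cutoff \emph{is} part of the statement; the paper sets $f_c(a):=\tilde f(a)\,\Psi(a)\,(1-\chi(\|a\|))$ with $\Psi$ supported in $\cV$ and equal to~$1$ on $\cC$.
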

The proof of this Lemma is postponed to Appendix \ref{tutto.sbagliato}. 

{From now on, if $f:\R^4 \rightarrow \R$ is invariant with respect to
  the flow of the actions, 
 we will simply say that it is a function of the actions only, {and we
   will denote by $f_c\in\sm m$ the symbol such that \eqref{actions.only} holds}.
}

\begin{definition}\label{def.norma}
For $m\in\R$ and $N\in\N$, the set of the symbols $f \in \Snoi^m$
s.t. {$\forall \alpha,\beta\in\N^2$,} 
	\begin{gather}
	\label{norma.tutto}
	\norfou{f}{m}{\alpha, \beta, N}:= \sum_{k \in \Z^2} \langle k
        \rangle^{N} \norfou{\hat{f}_k}{m}{\alpha, \beta}<\infty\,, 
        \end{gather}
will be denoted by $\Snorma^m$.  The seminorms $
\norfou{\cdot}{m}{\alpha, \beta}$ are defined by \eqref{norma.k}.
 \\
We will say that an operator $F$ is a
pseudo-differential operator of class $\OPSFnorma^{m}$ if there exists symbol $f \in \Snorma^{m}$ s.t. $F=\Op (f)$.
\end{definition}

The following result relates {the symbols of class $\Shr^m$ with those
of class $\Snorma^m$:}
\begin{lemma}
	\label{foou.11}
	Let $f \in \shr m$, then for any $\alpha, \beta \in \N^2$ and $N\in \N$ there exists $C>0$, independent of $f,$ such that
		\begin{equation}\label{easy}
		\sup_{x, \xi} \sup_{|\alpha^\prime| \leq |\alpha|, |\beta^\prime| \leq |\beta|} \left| \partial_{x}^\alpha \partial_\xi^\beta \hat{f}_k(x, \csi)\right| (\tk_0(x, \csi))^{-(m-\frac{|\alpha|}{\ell + 1}- \frac{|\beta|\ell}{\ell + 1})} \leq C \langle k \rangle^{-N}\,.
		\end{equation}
		In particular, $f \in \Snorma^m$ for all $N$.
\end{lemma}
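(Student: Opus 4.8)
The plan is to exploit that, for fixed $(x,\xi)$, the map $\vf\mapsto f(\phi^\vf_a(x,\xi))$ is $C^\infty$ and $2\pi$-periodic in each variable, so that the decay of its Fourier coefficients in $k$ is governed, as usual, by integration by parts in $\vf$. The whole content of the lemma is therefore to bound the $\vf$-derivatives of $f\circ\phi^\vf_a$ (together with their $(x,\xi)$-derivatives) in terms of finitely many $\shr m$-seminorms of $f$, uniformly in $\vf$, and with the $\tk_0$-weight appropriate to $\shr m$. Recall that the text has already observed that $f\in\shr m$ implies $(f\circ\phi^\vf_{a_1})_\chi\in\shr m$ (and likewise for $\phi^\vf_a$, Remark \ref{egora}), so all compositions below make sense as $\shr m$-symbols in the sense of the cutoff convention of Definition \ref{cutoffed} and Remark \ref{cuto}.

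First I would record the differentiation identity
\[
\partial_{\vf_1}\big(f\circ\phi^\vf_a\big)=\{f,a_1\}\circ\phi^\vf_a\,,\qquad
\partial_{\vf_2}\big(f\circ\phi^\vf_a\big)=\{f,a_2\}\circ\phi^\vf_a\,,
\]
the second one using that $\phi^{\vf_1}_{a_1}$ is a symplectomorphism and that $a_2$ is invariant under it because $\{a_1,a_2\}=0$. Since $\{a_1,a_2\}=0$ the operators $\mathrm{ad}_{a_i}:=\{\,\cdot\,,a_i\}$ commute, so for every $\gamma\in\N^2$ one has $\partial_\vf^\gamma\big(f\circ\phi^\vf_a\big)=\big(\mathrm{ad}_{a_1}^{\gamma_1}\mathrm{ad}_{a_2}^{\gamma_2}f\big)\circ\phi^\vf_a$. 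Next I would check that each $\mathrm{ad}_{a_i}$ maps $\shr m$ continuously into itself: writing $\{f,a_i\}=\partial_x f\cdot\partial_\xi a_i-\partial_\xi f\cdot\partial_x a_i$ and using that $a_i$ is quasi-homogeneous of degree $\ell+1$ (Lemma \ref{azioni}, Remark \ref{equia}), so that $\partial_\xi a_i$ is quasi-homogeneous of degree $1$, i.e. $\partial_\xi a_i\in\shr{1/(\ell+1)}$, and $\partial_x a_i$ is quasi-homogeneous of degree $\ell$, i.e. $\partial_x a_i\in\shr{\ell/(\ell+1)}$ (modulo the cutoff near the origin, cf. Remark \ref{cuto}); the order bookkeeping of $\shr{\cdot}$ then gives $\{f,a_i\}\in\shr m$ with $\norfouhr{\{f,a_i\}}{m}{\alpha,\beta}$ controlled by $\norfouhr{f}{m}{\alpha',\beta'}$ for suitable $\alpha',\beta'$ with $|\alpha'|\le|\alpha|+1$, $|\beta'|\le|\beta|+1$. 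Iterating, $g_\gamma:=\mathrm{ad}_{a_1}^{\gamma_1}\mathrm{ad}_{a_2}^{\gamma_2}f\in\shr m$ with seminorms bounded by finitely many (depending on $\gamma$) seminorms of $f$. Finally, by the $\shr m$-analogue of Lemma \ref{egor} one has $g_\gamma\circ\phi^\vf_a\in\shr m$, and since the coefficients $\Phi$ produced in that argument are quasi-homogeneous and depend continuously on $\vf$ while $\vf$ runs over the compact torus, the resulting bound is uniform in $\vf$. Putting these together, for all $\alpha',\beta',\gamma$,
\[
\big|\partial_x^{\alpha'}\partial_\xi^{\beta'}\partial_\vf^{\gamma}\big(f\circ\phi^\vf_a\big)(x,\xi)\big|\leq C_{\alpha',\beta',\gamma}\;\tk_0(x,\xi)^{\,m-\frac{|\alpha'|}{\ell+1}-\frac{|\beta'|\ell}{\ell+1}}\,,
\]
uniformly in $\vf\in\T^2$, with $C$ depending only on $\alpha',\beta',\gamma,\ell$ and finitely many $\shr m$-seminorms of $f$.

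Then the conclusion is the standard integration-by-parts argument. For $k\neq 0$ choose $i\in\{1,2\}$ with $|k_i|=\max(|k_1|,|k_2|)\gtrsim\langle k\rangle$, so in particular $k_i\neq0$; differentiating \eqref{fou.2} under the integral in $(x,\xi)$ (justified by smoothness and the local uniform bounds above) and integrating by parts $N$ times in $\vf_i$ (no boundary terms, by $2\pi$-periodicity) gives
\[
\partial_x^{\alpha'}\partial_\xi^{\beta'}\hat f_k(x,\xi)=\frac{1}{4\pi^2(\ii k_i)^{N}}\int_{\T^2}\partial_{\vf_i}^{N}\partial_x^{\alpha'}\partial_\xi^{\beta'}\big(f\circ\phi^\vf_a\big)(x,\xi)\,e^{-\ii k\cdot\vf}\,d\vf\,,
\]
whence $\big|\partial_x^{\alpha'}\partial_\xi^{\beta'}\hat f_k(x,\xi)\big|\lesssim\langle k\rangle^{-N}\tk_0(x,\xi)^{m-\frac{|\alpha'|}{\ell+1}-\frac{|\beta'|\ell}{\ell+1}}$; for $k=0$ this is just $\hat f_0\in\shr m$, which is immediate from the uniform bound above. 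Dividing by the corresponding power of $\tk_0$ and taking the supremum over $(x,\xi)$ and over $|\alpha'|\le|\alpha|$, $|\beta'|\le|\beta|$ yields \eqref{easy}. For the last assertion, note that $\delta_1<\tfrac1{\ell+1}$ and $\delta_2<\tfrac{\ell}{\ell+1}$ by \eqref{delta}--\eqref{delte}, and $\tk_0\geq1$, so $\norfou{\hat f_k}{m}{\alpha,\beta}\leq\norfouhr{\hat f_k}{m}{\alpha,\beta}$; applying \eqref{easy} with $N$ replaced by $N+3$ and summing the convergent series $\sum_{k\in\Z^2}\langle k\rangle^{-3}$ gives $\norfou{f}{m}{\alpha,\beta,N}<\infty$ for every $\alpha,\beta,N$, i.e. $f\in\Snorma^m$.

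The main obstacle is the middle paragraph: showing that $\mathrm{ad}_{a_i}$ preserves $\shr m$ with seminorm bounds that are quantitative in the seminorms of $f$, and that composition with $\phi^\vf_a$ preserves $\shr m$ with bounds uniform over $\vf\in\T^2$. This is precisely where the quasi-homogeneity of the actions (degree $\ell+1$) and the earlier analysis of the action flow (Lemma \ref{homoa} and the $\shr{\cdot}$-version of Lemma \ref{egor}) are used, and it is where one must keep careful track of the cutoff near the origin via Definition \ref{cutoffed} and Remark \ref{cuto}. The Fourier-decay argument proper is entirely routine once these estimates are available.
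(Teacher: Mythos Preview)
Your proof is correct and follows essentially the same approach as the paper: both arguments reduce the Fourier decay to the fact that $\partial_{\vf_i}(f\circ\phi^\vf_a)=(\mathrm{ad}_{a_i}f)\circ\phi^\vf_a$, that $\mathrm{ad}_{a_i}$ preserves $\shr m$ (by the quasi-homogeneity of $a_i$), and that composition with $\phi^\vf_a$ preserves $\shr m$, after which one integrates by parts in $\vf$. The only cosmetic difference is that the paper integrates by parts against the combined operator $\partial_{\vf_1}+\partial_{\vf_2}$ (splitting into cases according to the signs of $k_1,k_2$), whereas you select the dominant component $k_i$ and integrate by parts in $\vf_i$ alone; your version is arguably cleaner, and you are also more explicit than the paper about uniformity in $\vf$ and about deducing the final inclusion $f\in\Snorma^m$ from the comparison $\delta_1<\tfrac{1}{\ell+1}$, $\delta_2<\tfrac{\ell}{\ell+1}$.
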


\proof  In order to prove that \eqref{easy} holds, we proceed as usual,
by integrating by parts in \eqref{fou.2}. To fix ideas consider the
case $k_1>0$, $k_2>0$. Denote \emph{just for this proof},
$\partial:=\partial_{\vf_1}+\partial_{\vf_2} $, then exploiting
$\partial e^{-\ii k\cdot\vf}= -\ii |k| e^{-\ii k\cdot\vf}$, we have
$\forall N\geq 1$
\begin{align*}
\hat f_k=\frac{1}{4\pi^2}\int_{\T^2}(f\circ\phi^\vf_a)\frac{1}{(-\ii
	|k|)^N}\partial^N e^{\ii k\cdot\vf}d\vf\ 
\\
=\frac{1}{(\ii
	|k|)^N}\frac{1}{4\pi^2}  \int_{\T^2}\partial^N(f\circ\phi^\vf_a)
e^{\ii k\cdot\vf}d\vf\ . 
\end{align*}
In order to estimate $\partial^N(f\circ\phi^\vf_a)$, we start by
considering
$$
\partial_{\vf_i} (f\circ\phi^\vf_a)=\left\{f;a_i\right\}\circ
\phi^{\vf}_{a} =(ad_{a_i}f)\circ\phi^{\vf}_{a}\ ,
$$
where
$$
ad_{a_i}f:=\left\{f;a_i\right\}\,.
$$
From this one has
$$
\partial^N(f\circ\phi^\vf_a)=\left[\left(ad_{a_1}+ad_{a_2}\right)^N
f\right]\circ\phi^{\vf}_{a} 
\ .
$$
Now, if $f \in \Shr^m$ one has
$$
ad_{a_i}f=\frac{\partial a_i}{\partial \xi}\cdot \frac{\partial
	f}{\partial x}- \frac{\partial a_i}{\partial x}\cdot \frac{\partial
	f}{\partial \xi} \in \Shr^{m}
$$
due to Lemma \ref{lemma composizione}, thus  $\partial^N(f\circ\phi^\vf_a)\in \Shr^{m}$ for any $N$.
By Lemma \ref{egor}, this implies
$$
\partial^N(f\circ\phi^\vf_a)\in\Shr^{m}\ ,
$$
from which \eqref{easy} immediately follows.
%Suppose now that $f \in \Snoi^m$. One has
%$$ ad_{a_i}f=\frac{\partial a_i}{\partial \xi}\cdot \frac{\partial
%  f}{\partial x}- \frac{\partial a_i}{\partial x}\cdot \frac{\partial
%  f}{\partial \xi} \in S^{1-\frac{\ell}{\ell+1}}_{HR}\Snoi^{m-\delta_1}+
%S^{1-\frac{\ell}{ \ell+1}}_{HR}\Snoi^{m-\delta_2}\ ,
%$$
%which, taking into account that
%$$
%1-\frac{\ell}{\ell+1}-\delta_1=1-\frac{1}{\ell+1}-\delta_2=M-\delta + \varep (\tau + 1)\ ,
%$$
%gives (iterating and using Lemma \ref{egor})
%$$
%\partial^N(f\circ\phi^\vf_a)\in\Snoi^{m+(M-\delta)N}\ ,
%$$
%which implies \eqref{perdo}.
\qed

\begin{remark}
	We point out that $f \in \Snoi^m$ does not imply $f \in \Snorma^{m}$; in particular, arguing as in the proof of Lemma \ref{foou.11}, from $f \in \Snoi^m$ one can only deduce that 
	$$
	\norfou{f}{m + (M-\delta)N}{\alpha, \beta, N} < +\infty \quad \forall \alpha, \beta \in \N^2, \quad \forall N \in \N\,.
	$$
	This is the main reason why we work in the stronger class $\Snorma^m$.
\end{remark}
\begin{lemma}\label{lem.prod.n}
	Let $f \in \Snorma^{m}$ and $g \in \Snorma^{m^\prime}.$ Then
        $fg \in \Snorma^{m+m^\prime}$, and  $\forall N \in \N$ there
        exists $C_N>0$ such that 
	$\forall \alpha, \beta \in \N^2$ 
	\begin{equation}\label{fou.prodotto}
	\norfou{f g}{m + m^\prime}{\alpha, \beta, N} \leq C_N \norfou{f}{m}{\alpha, \beta, N} \norfou{g}{m^\prime}{\alpha, \beta, N}\,.
	\end{equation}
\end{lemma}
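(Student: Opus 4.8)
The plan is to exploit the fact that the Fourier expansion of Definition \ref{fou.1} is \emph{multiplicative}, so that membership in $\Snorma$ is inherited by products coefficient by coefficient, after which \eqref{fou.prodotto} follows from the product inequality of Remark \ref{prod 2} together with a convolution (Young--type) estimate on $\Z^2$. Concretely: since, for each fixed $\vf\in\T^2$, $\phi_{a}^{\vf}$ is a map of $\R^4$ into itself, composition with it distributes over products, $(fg)\circ\phi_{a}^{\vf}=(f\circ\phi_{a}^{\vf})(g\circ\phi_{a}^{\vf})$. Fixing $(x,\xi)\in\R^4$ and regarding both factors as smooth functions of $\vf\in\T^2$ (Remark \ref{tuttaf}), with Fourier coefficients $\hat f_k(x,\xi)$ and $\hat g_k(x,\xi)$, the convolution theorem for Fourier series gives
\[
\widehat{(fg)}_k=\sum_{j\in\Z^2}\hat f_j\,\hat g_{k-j}\ ,\qquad k\in\Z^2\ .
\]
This is legitimate because, for fixed $(x,\xi)$, the two sequences of coefficients are absolutely summable: from $f\in\Snorma^m$ one has $\sum_{k}|\hat f_k(x,\xi)|\le\tk_0(x,\xi)^m\sum_k\norfou{\hat f_k}{m}{0,0}=\tk_0(x,\xi)^m\,\norfou{f}{m}{0,0,0}<\infty$, and likewise for $g$.

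Each Fourier coefficient $\hat f_j$ belongs to $\Snoi^m$ (and $\hat g_l$ to $\Snoi^{m'}$) with all seminorms $\norfou{\hat f_j}{m}{\alpha,\beta}$ finite, this being part of the hypothesis $f\in\Snorma^m$ (see Definition \ref{def.norma}). Hence, by Remark \ref{prod 2}, $\sum_j\norfou{\hat f_j\hat g_{k-j}}{m+m'}{\alpha,\beta}\le\big(\sup_l\norfou{\hat g_l}{m'}{\alpha,\beta}\big)\sum_j\norfou{\hat f_j}{m}{\alpha,\beta}<\infty$, so the convolution series converges in each seminorm $\norfou{\cdot}{m+m'}{\alpha,\beta}$ and may be differentiated term by term; using Remark \ref{prod 2} once more,
\[
\norfou{\widehat{(fg)}_k}{m+m'}{\alpha,\beta}\le\sum_{j\in\Z^2}\norfou{\hat f_j}{m}{\alpha,\beta}\,\norfou{\hat g_{k-j}}{m'}{\alpha,\beta}\ .
\]
Multiplying by $\langle k\rangle^N$, summing over $k\in\Z^2$, inserting the elementary bound $\langle k\rangle^N\le 2^{N/2}\langle j\rangle^N\langle k-j\rangle^N$, and applying Tonelli's theorem (all summands nonnegative) to factor the double sum, one obtains
\[
\norfou{fg}{m+m'}{\alpha,\beta,N}\ \le\ 2^{N/2}\Big(\sum_{j\in\Z^2}\langle j\rangle^N\norfou{\hat f_j}{m}{\alpha,\beta}\Big)\Big(\sum_{l\in\Z^2}\langle l\rangle^N\norfou{\hat g_l}{m'}{\alpha,\beta}\Big)=2^{N/2}\,\norfou{f}{m}{\alpha,\beta,N}\,\norfou{g}{m'}{\alpha,\beta,N}\ ,
\]
which is exactly \eqref{fou.prodotto} with $C_N:=2^{N/2}$, a constant depending only on $N$. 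Since the right-hand side is finite by hypothesis, $\norfou{fg}{m+m'}{\alpha,\beta,N}<\infty$ for all $\alpha,\beta\in\N^2$ and $N\in\N$; as $fg\in\Snoi^{m+m'}$ is already guaranteed by Remark \ref{prod 2} (or directly by the Leibniz rule applied to $\partial_x^\alpha\partial_\xi^\beta(fg)$), Definition \ref{def.norma} yields $fg\in\Snorma^{m+m'}$.

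The computation is essentially bookkeeping, and the only point deserving care — hence the main obstacle — is the justification of the coefficientwise operations: the convolution series must converge not merely pointwise but in each of the seminorms $\norfou{\cdot}{m+m'}{\alpha,\beta}$, so that term-by-term differentiation is permitted, and the two summations over $\Z^2$ must be interchangeable. Both are supplied for free by the strong summability built into the definition of $\Snorma$, which is precisely why this class, rather than $\Snoi$, is the one used throughout the normal form scheme.
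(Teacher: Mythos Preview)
Your proof is correct and follows essentially the same route as the paper: both use the multiplicativity of $(\cdot)\circ\phi_a^\vf$ to identify $\widehat{(fg)}_k$ as a convolution, apply the seminorm product inequality of Remark \ref{prod 2}, and then the Peetre-type bound $\langle k\rangle^N\lesssim\langle j\rangle^N\langle k-j\rangle^N$ to factor the double sum. Your version is simply more explicit (you give $C_N=2^{N/2}$ and justify the term-by-term differentiation carefully), whereas the paper states the estimate for $\alpha=\beta=0$ and then remarks that the derivatives are handled ``in the same way'' via Remark \ref{prod 2}.
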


\begin{proof}
	%We start with proving \eqref{fou.prodotto}.
	First of all we observe that, 
	\begin{gather*}
(fg)\circ \phi^\vphi_a=	\left(f \circ \phi^\vphi_a\right) \left(g \circ \phi^\vphi_a\right)
	= \sum_{k, k^\prime \in \Z^2} \hat{f}_{k} \hat{g}_{k^\prime} e^{\im (k+k') \cdot \vphi}\,,
	\end{gather*}
        thus we have
        \begin{align}
              \label{stim.prod}
\norfou{fg}{m+m'}{0,0,N}\leq \sum_{k,k'}\langle k+k'\rangle^N
\left|\hat f_k\right|\left|\hat g_k\right| \tk_0^{-(m+m')} 
\\
\label{stim.prod.1}
\leq C_N \sum_{k,k'}\langle k\rangle^N\langle k'\rangle^N
\left|\hat f_k\right|\left|\hat g_k\right| \tk_0^{-(m+m')} \leq C_N \norfou{f}{m}{0,0,N}\norfou{g}{m'}{0,0,N}\,.
        \end{align}
Working in the same way with the derivatives and exploiting Remark
\ref{prod 2}, one gets the result.
\end{proof}
The following results {can be proved by simple variations
on the standard arguments}

%%%%%%%%%

	\begin{lemma}\label{lemma composizione}
		Let $m, m' \in \R$,  $F = \op f \in \OPSFnorma^{m}$, $G =
		\op g \in \OPSFnorma^{m'} $. Then  $FG \in \OPSFnorma^{m +
			m'}$. Denote by $f\sharp g$ its symbol, then
                it admits the same asymptotic expansion as in
                \eqref{sigma1 2}, but now
				\begin{align}
                                    \label{sigma2}
		(f\sharp g)_j \in \Snorma^{m + m' -
                                      (\delta_1+\delta_2) j} {\quad \forall j \in \N}\ .
                   \end{align}
%Furthermore, $\forall n, \alpha,\beta$, $\exists \alpha',\beta'$ s.t.
%\begin{equation}
%  \label{stime2}
%\left\|f\sharp g-\sum_{j=1}^{n-1}(f\sharp
%g)_j\right\|^{(m-m'-n(\delta_1+\delta_2))}
%_{\alpha,\beta}\leq C\left\| f\right\|^{m}_{{\alpha',\beta'} }\left\|
%g\right\|^{m'}_{{\alpha',\beta'}}\ .
%\end{equation}
	\end{lemma}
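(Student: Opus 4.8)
The plan is to bootstrap from the $\Snoi$--version of the statement, Lemma~\ref{lemma composizione 2}. Since $\Snorma^{m}\subset\Snoi^{m}$ and $\Snorma^{m'}\subset\Snoi^{m'}$, that lemma already gives $FG=\op{f\sharp g}$ with $f\sharp g\in\Snoi^{m+m'}$, the asymptotic expansion \eqref{sigma1 2}, its terms $(f\sharp g)_j$ as in \eqref{sigma2 2} with $(f\sharp g)_j\in\Snoi^{m+m'-(\delta_1+\delta_2)j}$, and the remainder bound \eqref{stime2} in $\Snoi$--seminorms. So the whole issue is to upgrade ``$\Snoi$'' to ``$\Snorma$'': to prove that $(f\sharp g)_j\in\Snorma^{m+m'-(\delta_1+\delta_2)j}$, and more generally that $r_n:=f\sharp g-\sum_{j<n}(f\sharp g)_j\in\Snorma^{m+m'-(\delta_1+\delta_2)n}$ for all $n$, with the corresponding seminorm estimates dominated by seminorms of $f$ and $g$.

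Two closure properties of $\Snorma$ settle the first part. One is stability under pointwise products, which is precisely Lemma~\ref{lem.prod.n}; the other is stability under the coordinate derivations, $\partial_{x_i}\colon\Snorma^{r}\to\Snorma^{r-\delta_1}$ and $\partial_{\xi_i}\colon\Snorma^{r}\to\Snorma^{r-\delta_2}$, continuously in the seminorms $\norfou{\cdot}{\cdot}{\alpha,\beta,N}$. For the latter, membership in $\Snoi$ is clear from \eqref{es.7}, and the Fourier summability is obtained exactly as in the proof of Lemma~\ref{foou.11}: differentiating $f\circ\phi^{\vf}_a$ by the chain rule gives $(\partial_{x_i}f)\circ\phi^{\vf}_a=\sum_p b_{ip}(\vf,\cdot)\,\partial_p(f\circ\phi^{\vf}_a)$, where $p$ runs over the four coordinates and the $b_{ip}$ are the entries of the (transpose) inverse Jacobian of $\phi^{\vf}_a$; by Lemma~\ref{homoa} these are quasi-homogeneous of integer degrees, hence lie (up to the usual cutoff near the origin) in classes $\Shr^{d_{ip}}$ with $d_{ip}$ chosen so that all the products $b_{ip}\,\partial_p(f\circ\phi^{\vf}_a)$ have order $r-\delta_1$, and, being smooth and $2\pi$--periodic in $\vf$, their $\vf$--Fourier modes decay faster than any power of $\langle l\rangle$ uniformly in $(x,\xi)$ and $\vf$. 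Taking the $k$--th flow--Fourier coefficient and using the group law $\phi^{\vf'}_a\circ\phi^{\vf}_a=\phi^{\vf+\vf'}_a$ turns this into a convolution of a rapidly decaying sequence with the $\langle\cdot\rangle^N$--summable sequence $(\norfou{\widehat f_k}{r}{\alpha',\beta'})_k$, from which the claim follows by a Young-type inequality (iterated derivatives, and the joint flow of Remark~\ref{egora}, being identical). Since by \eqref{sigma2 2} $(f\sharp g)_j$ is a finite combination of products $(\partial_x^\beta\partial_\xi^\alpha f)(\partial_x^\alpha\partial_\xi^\beta g)$ with $|\alpha|+|\beta|=j$, these two facts give $(f\sharp g)_j\in\Snorma^{m+m'-(\delta_1+\delta_2)j}$ with $\norfou{(f\sharp g)_j}{m+m'-(\delta_1+\delta_2)j}{\alpha,\beta,N}\lesssim_N\norfou{f}{m}{\alpha',\beta',N}\norfou{g}{m'}{\alpha',\beta',N}$, which is \eqref{sigma2}.

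The only real point is the remainder. Unlike an $\Snorma$--symbol, an $\Snoi$--symbol need not have summable flow--Fourier coefficients (Remark after Lemma~\ref{foou.11}), so one cannot merely transport the $\Snoi$--argument; the fix is to truncate the expansion far enough. Fix $n$ and $N$. For any $\tilde n\geq n$ one has $r_n=\sum_{j=n}^{\tilde n-1}(f\sharp g)_j+r_{\tilde n}$, where $r_{\tilde n}\in\Snoi^{m+m'-(\delta_1+\delta_2)\tilde n}$ satisfies \eqref{stime2}; the finite sum lies in $\Snorma^{m+m'-(\delta_1+\delta_2)n}$ by the previous paragraph, while, arguing as in the proof of Lemma~\ref{foou.11} (cf.\ the Remark after it), $\norfou{r_{\tilde n}}{m+m'-(\delta_1+\delta_2)\tilde n+(M-\delta)N}{\alpha,\beta,N}$ is finite and bounded, through \eqref{stime2}, by $\norfou{f}{m}{\alpha'',\beta''}\norfou{g}{m'}{\alpha'',\beta''}$. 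Now choose $\tilde n\geq n+(M-\delta)N/(\delta_1+\delta_2)$, so that $m+m'-(\delta_1+\delta_2)\tilde n+(M-\delta)N\leq m+m'-(\delta_1+\delta_2)n$; since the seminorms $\norfou{\cdot}{\cdot}{\alpha,\beta,N}$ are non-increasing in the order index, this yields $\norfou{r_{\tilde n}}{m+m'-(\delta_1+\delta_2)n}{\alpha,\beta,N}<\infty$ with the same bound, hence $r_n\in\Snorma^{m+m'-(\delta_1+\delta_2)n}$ with a seminorm estimate dominated by seminorms of $f$ and $g$. In particular $f\sharp g=r_0\in\Snorma^{m+m'}$, i.e.\ $FG\in\OPSFnorma^{m+m'}$, and the relation $f\sharp g\sim\sum_j(f\sharp g)_j$ (already valid by Lemma~\ref{lemma composizione 2}) now has all its terms in $\Snorma$; this is the assertion of the lemma.

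I expect the step just described to be the real obstacle: reconciling the built-in Fourier summability of $\Snorma$ with the fact that composition of $\Snoi$--symbols produces only $\Snoi$--remainders. The key observation is that the Remark after Lemma~\ref{foou.11} trades Fourier decay for loss of order at the fixed rate $M-\delta$, so the obstruction can be bought off by carrying the asymptotic expansion to an order growing linearly with the desired decay weight $N$; everything else --- the derivation--closure lemma of the second paragraph, the bookkeeping with the seminorms \eqref{norma.k}--\eqref{norma.tutto} --- is routine, resting on the quasi-homogeneity of the flow of the actions (Lemmas~\ref{homoa} and \ref{egor}).
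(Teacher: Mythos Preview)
Your proof is correct. The paper itself does not prove this lemma: it merely asserts that ``the following results can be proved by simple variations on the standard arguments'' and moves on. Your proposal supplies exactly those variations, and the strategy---bootstrap from the $\Snoi$--composition Lemma~\ref{lemma composizione 2}, then upgrade each ingredient to $\Snorma$---is the natural one.

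A few remarks on the details, all of which you handle correctly. The closure of $\Snorma$ under $\partial_{x_i},\partial_{\xi_i}$ is not entirely trivial because $\widehat{(\partial_{x_i}f)}_k\neq\partial_{x_i}\hat f_k$; your inversion of the Jacobian of $\phi^\vf_a$ is the right move, and the degree bookkeeping checks out (the off--diagonal blocks of $(D\phi^\vf_a)^{-1}$ are quasi-homogeneous of degree $\pm(\ell-1)$, and $\delta_2-\delta_1=M=\tfrac{\ell-1}{\ell+1}$ exactly compensates). The convolution estimate then goes through as you say.

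The remainder step is the only place where something beyond Lemma~\ref{lem.prod.n} is needed, and your ``over-expand and absorb the $(M-\delta)N$ loss'' trick is precisely the right idea: since $\delta_1+\delta_2>0$ by \eqref{ridelte}, one can always choose $\tilde n$ large enough. The monotonicity of the seminorms in the order parameter (larger order $\Rightarrow$ smaller seminorm, since $\tk_0\geq 1$) is used correctly.

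In short: the paper omits the proof, and what you wrote is a valid way to carry it out.
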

        \begin{corollary}
          \label{moy}
One has
		$$
		\moyal fg=  \poisson fg+\Snorma^{m+m'-3(\delta_1+\delta_2)}\ .
		$$
        \end{corollary}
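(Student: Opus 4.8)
The plan is to derive the Moyal-bracket expansion directly from the composition expansion of Lemma \ref{lemma composizione}. Recall that the Moyal bracket is defined (in the $M$-twisted Weyl calculus the paper uses) as $\moyal fg = \op{f}\op{g} - \op{g}\op{f}$ read off at the symbol level, i.e. $\moyal fg$ is the symbol of the commutator $[\op f;\op g]$, so $\moyal fg = f\sharp g - g\sharp f$. First I would write out, using \eqref{sigma1 2}--\eqref{sigma2}, the first few terms of both $f\sharp g$ and $g\sharp f$:
\begin{align*}
(f\sharp g)_0 &= fg\ ,\quad (g\sharp f)_0 = gf = fg\ ,\\
(f\sharp g)_1 &= \tfrac{1}{\ii}\Big(\tfrac12\,(\partial_\xi f)(\partial_x g) - \tfrac12\,(\partial_x f)(\partial_\xi g)\Big) = \tfrac{1}{2\ii}\poisson fg\ ,\\
(g\sharp f)_1 &= \tfrac{1}{2\ii}\poisson gf = -\tfrac{1}{2\ii}\poisson fg\ .
\end{align*}
Hence the order-$0$ terms cancel, and $(f\sharp g)_1 - (g\sharp f)_1 = \tfrac{1}{\ii}\poisson fg$. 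Absorbing the harmless constant $\ii^{-1}$ (or noting the paper's normalization of $\poisson{\cdot}{\cdot}$ includes it, as in the identity $\{h_0;\cdot\}=-\omega\cdot\partial_\vf$ used with $\ii$'s elsewhere), the leading surviving term of $\moyal fg$ is exactly $\poisson fg$, which lies in $\Snorma^{m+m'-(\delta_1+\delta_2)}$.

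Next I would control the tail. By \eqref{sigma2}, $(f\sharp g)_j\in\Snorma^{m+m'-(\delta_1+\delta_2)j}$ and likewise for $(g\sharp f)_j$; moreover the $j=2$ terms also cancel by the symmetry $(\partial_x^\beta\partial_\xi^\alpha f)(\partial_x^\alpha\partial_\xi^\beta g)$ versus the same with $f,g$ swapped and $\alpha\leftrightarrow\beta$ relabeled, together with the sign $(\tfrac12)^{|\alpha|}(-\tfrac12)^{|\beta|}$, which is symmetric in $|\alpha|,|\beta|$ when $|\alpha|+|\beta|=2$ is even. So the first genuinely nonzero difference beyond order $1$ is at $j=3$, giving a contribution in $\Snorma^{m+m'-3(\delta_1+\delta_2)}$. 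To make the remainder estimate rigorous I would invoke the remainder bound \eqref{stime2} (the analogue of which holds in the $\Snorma$ class by Lemma \ref{lemma composizione}): writing
$$
\moyal fg - \poisson fg = \big(f\sharp g - \textstyle\sum_{j=0}^{2}(f\sharp g)_j\big) - \big(g\sharp f - \sum_{j=0}^{2}(g\sharp f)_j\big) + \big[\text{cancelled }j=0,2\text{ terms}\big],
$$
where the bracketed algebraic terms vanish identically, and each of the two parenthesized remainders is in $\Snorma^{m+m'-3(\delta_1+\delta_2)}$ with seminorms bounded by products of seminorms of $f$ and $g$.

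The only mildly delicate point — and the one I would be most careful about — is the bookkeeping of which terms cancel: one must check that the parity argument for the $j=2$ terms really does produce exact cancellation (it does, because each summand with multi-indices $(\alpha,\beta)$ in $f\sharp g$ is matched by the summand with $(\beta,\alpha)$ in $g\sharp f$, carrying the same scalar factor since $(\tfrac12)^{|\alpha|}(-\tfrac12)^{|\beta|}=(\tfrac12)^{|\beta|}(-\tfrac12)^{|\alpha|}$ exactly when $|\alpha|+|\beta|$ is even). Everything else is a routine application of Lemma \ref{lemma composizione} and the product estimate \eqref{fou.prodotto}. I expect no serious obstacle; this is, as the paper says, ``simple variations on the standard arguments.''
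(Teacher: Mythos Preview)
Your proposal is correct and is exactly the standard argument the paper has in mind: the corollary is stated without proof, immediately after Lemma~\ref{lemma composizione}, under the blanket remark that ``the following results can be proved by simple variations on the standard arguments.'' Your parity computation showing $(g\sharp f)_j=(-1)^j(f\sharp g)_j$ after the relabeling $\alpha\leftrightarrow\beta$ is precisely what makes the even-$j$ terms cancel in $f\sharp g-g\sharp f$, leaving the Poisson bracket at $j=1$ and a remainder beginning at $j=3$; the only cosmetic point is that the paper's (implicit) normalization of $\moyal\cdot\cdot$ already absorbs the factor $\ii^{-1}$, so no separate ``absorption'' is needed.
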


        \begin{definition}
          \label{moyals}
          In what follows we will denote
$$
\adm ga:=\moyal ag \ ,
$$
which is well defined in anyone of the classes of symbols we are using.
        \end{definition}

Given a self-adjoint pseudo-differential operator $G\in\OPSFnorma^ \eta$, we consider
the unitary group generated by $-\im G$, which is denoted, as
usual, by $\es^{-\im\tau G}$, $\tau\in\R$. \\
The following version of Egorov Theorem holds:
\begin{lemma}\label{teo Egorov}
	Fix $\eta \in \R$, and let $g\in \Snorma^{\eta}$ be a real valued symbol,
	denote $G = \op{g}$, then $\forall \tau \in [-1, 1]$
	\begin{itemize}
		\item[(1)] If $\eta \leq \delta + \dfrac{2}{\ell + 1}$, then $e^{\ii \tau G}
		\in {\cal B}\left({\cH}^{s}; {\cH}^{s}\right)
		\quad \forall\ s \geq 0$
		\item[(2)] {Assume  $\eta < \delta_1 + \delta_2$, and
                  let  $f\in
		  \Snorma^m$,  $F =\op f$, then
		$$
		e^{\ii \tau G} F e^{- \ii \tau G}=:F'\in \OPSFnorma^m\,.
		$$
Furthermore, denoting by $f'$ its symbol, for any $n \in \N$ one has}
		\begin{equation}
		\label{asy.moy}
		f' = \sum_{0\leq j \leq  n} \frac{ \tau^j(\adm g)^jf}{j ! } + \Snorma^{m + (n+1)(\eta-(\delta_1 + \delta_2))}\, .
		\end{equation}
		In particular, one has
		\begin{equation}\label{espansione egorov astratto}
		f' = f  + \moyal{f}{g} + \Snorma^{m + 2(\eta - \delta_1-\delta_2)}\,. 
		\end{equation}
	\end{itemize}
\end{lemma}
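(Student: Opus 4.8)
\textbf{Proof plan for Lemma \ref{teo Egorov} (Egorov theorem).}

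The plan is to treat the two parts in order. For part (1), I would observe that $\frac{d}{d\tau}e^{\ii\tau G}\psi = \ii G e^{\ii\tau G}\psi$, so for $\psi\in\cH^{s}$ we need to control $\|e^{\ii\tau G}\psi\|_s$. The natural approach is a Gronwall-type argument on the scalar quantity $\tau\mapsto\|e^{\ii\tau G}\psi\|_s^2 = \langle H_0^{2s\frac{\ell+1}{2\ell}} e^{\ii\tau G}\psi, e^{\ii\tau G}\psi\rangle$; differentiating in $\tau$ produces the commutator $[H_0^{2s\frac{\ell+1}{2\ell}}, \ii G]$, and the hypothesis $\eta\le\delta+\frac{2}{\ell+1}$ is precisely what is needed so that this commutator, after conjugating back by $e^{-\ii\tau G}$, is of order $\le 2s\frac{\ell+1}{2\ell}+ m$ where $m$ is the relevant order, i.e.\ bounded on $\cH^s$ — here I would invoke Lemma \ref{lemma composizione} for the symbolic calculus and Lemma \ref{calderon} (Calderon--Vaillancourt) for the boundedness. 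Since $\tau$ ranges over a compact interval, Gronwall closes the estimate. (Part (1) is also needed to make sense of the conjugation $e^{\ii\tau G}Fe^{-\ii\tau G}$ as a bounded operator between the appropriate spaces before one even knows it is pseudodifferential.)

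For part (2), the standard device is to set $F(\tau):=e^{\ii\tau G}Fe^{-\ii\tau G}$ and compute the Heisenberg ODE $\frac{d}{d\tau}F(\tau) = \ii e^{\ii\tau G}[G,F]e^{-\ii\tau G} = e^{\ii\tau G}\,\op{\moyal{f}{g}}\,e^{-\ii\tau G}$ using Definition \ref{moyals} / Corollary \ref{moy} to identify $\ii[G,F]=\op{\moyal fg}$ up to the symbol calculus. By Lemma \ref{lemma composizione}, $\moyal{f}{g}\in\Snorma^{m+\eta-(\delta_1+\delta_2)}$, which under the hypothesis $\eta<\delta_1+\delta_2$ has \emph{strictly smaller} order than $m$. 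Iterating this Duhamel identity $n+1$ times — at each stage replacing $F(\tau)$ on the right by its expansion and picking up one more power of $\adm g$ and a gain of $\eta-(\delta_1+\delta_2)<0$ in the order — yields the finite Taylor expansion $\sum_{0\le j\le n}\frac{\tau^j(\adm g)^j f}{j!}$ plus a remainder given by an $(n+1)$-fold time integral of $e^{\ii\tau' G}\op{(\adm g)^{n+1}f}e^{-\ii\tau' G}$. To conclude that this remainder is $\op{}$ of a symbol in $\Snorma^{m+(n+1)(\eta-(\delta_1+\delta_2))}$ I would again use part (1) (or rather its analogue with the roles of the spaces shifted by the gained order) together with the fact that, for each fixed $\tau'$, the conjugate of a $\Snorma$-operator is again in $\OPSFnorma$ of the same order — and here is the one delicate point: knowing that $F'\in\OPSFnorma^m$ is precisely what we are trying to prove, so the argument must be organized so that the $\OPSFnorma$-membership of $F'$ is extracted first (e.g.\ from a fixed-point / Volterra argument solving the symbol ODE $\dot{f}(\tau)=\moyal{f(\tau)}{g}$ directly at the level of symbols in $\Snorma^m$, which is a contraction because the right-hand side has strictly lower order), and only afterwards is the asymptotic expansion \eqref{asy.moy} read off. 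Equation \eqref{espansione egorov astratto} is then the case $n=1$, noting $2(\eta-\delta_1-\delta_2)$ is the order of the first neglected term.

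The main obstacle is the bootstrapping issue just flagged: the symbol calculus lemmas (Lemma \ref{lemma composizione}, Corollary \ref{moy}) only give the structure of $\moyal{f}{g}$ once we already know both factors are honest symbols, so one cannot naively differentiate $F(\tau)$ in the operator sense and hope the result is automatically pseudodifferential. The clean route is to work entirely at the symbol level: define $f(\tau)$ as the solution of the Volterra integral equation $f(\tau)=f+\int_0^\tau \moyal{f(s)}{g}\,ds$ in the Banach space with norms $\norfou{\cdot}{m}{\alpha,\beta,N}$, show the map is a contraction on $[-1,1]$ using the order gain $\eta-(\delta_1+\delta_2)<0$ and the product estimate \eqref{fou.prodotto} / Lemma \ref{lemma composizione 2}-type bounds, then verify via the Heisenberg ODE and part (1) that $\op{f(\tau)}=e^{\ii\tau G}Fe^{-\ii\tau G}$, and finally iterate the integral equation to get the expansion with the stated remainder class. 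The quasi-homogeneity and the careful choice of $\delta_1,\delta_2$ in \eqref{ridelte} are what guarantee the order arithmetic works out so that each Moyal bracket genuinely lowers the order.
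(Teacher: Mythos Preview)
The paper does not prove this lemma: together with Lemma~\ref{lemma composizione} and Corollary~\ref{moy} it is introduced by the sentence ``The following results can be proved by simple variations on the standard arguments,'' and no details are given. Your proposal is therefore a reconstruction of that standard argument, and its architecture---Gronwall on $\|e^{\ii\tau G}\psi\|_s^2$ for part~(1), iterated Duhamel on the Heisenberg ODE for part~(2), with the bootstrapping issue correctly flagged as the delicate point---is the right one.

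One correction. Your proposed resolution of the bootstrap via a \emph{contraction} on the symbol Volterra equation does not work as written: by the estimate in Lemma~\ref{lemma composizione 2} the map $h\mapsto\moyal{h}{g}$ loses seminorm indices (the $\alpha',\beta'$ on the right depend on $\alpha,\beta$), so there is no single Banach norm on $\Snorma^m$ in which the Volterra map contracts; the order gain $\eta-(\delta_1+\delta_2)<0$ does not by itself give smallness in a fixed norm. The usual fix is either to run Picard iteration in the Fr\'echet topology (the factor $\tau^k/k!$ from the $k$-fold time integral, combined with the order gain, forces convergence in each fixed seminorm despite the index loss), or to stay at the operator level: iterate Duhamel to get the finite expansion plus a remainder bounded $\cH^s\to\cH^{s-m-(n+1)(\eta-\delta_1-\delta_2)}$ via part~(1) and Lemma~\ref{calderon}, Borel-sum the expansion, and invoke a Beals-type commutator characterisation to conclude the difference is honestly in $\OPSFnorma$. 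Either route closes the argument; the word ``contraction'' is the only substantive slip in your sketch.
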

\begin{definition}
	\label{ilprimo}
	As a general notation, given two
	symbols $f\in\Snorma^m$ and $g\in\Snorma^\eta$ with
	$\eta<\delta_1+\delta_2$, we will denote by $f'\in\Snorma^m$ (namely with a prime)
	the symbol such that 
	$$
	e^{\im G}\Op(f)e^{-\im
		G}=\Op(f')\,.
	$$
\end{definition}

\section{The normal form lemma}\label{nfl}

From now on we abandon our abuse of notation related to the cutoff at
the origin and a symbol will always be a $C^{\infty} $ function fulfilling the
required estimates everywhere.\\
We give the following definition:
\begin{definition}\label{def.nf}
	We say that a symbol $z \in \Snorma^m$ is in resonant normal form if $\forall k \in \Z^2 \backslash\{0\}$ its Fourier coefficients satisfy the following:
	\begin{equation}
	\textrm{supp}\,\hat{z}_k \subseteq \cR_k := \left\lbrace (x,\xi)\ |\ |\omega(a)\cdot k| \leq \|a\|^\delta \|k\|\,, \quad \|k\| \leq \|a\|^\ep\right\rbrace\,,
	\end{equation}
where $supp$ denotes the support of the function in argument.
\end{definition}
This section is devoted to the proof of the following result:
\begin{lemma}[Normal form]\label{lemma.in.forma}
	Let $m = \frac{2 \ell}{\ell + 1}- \fke,$ $\fke >0.$ 
	Furthermore, %let
	%	\begin{equation}\label{def.om.M}
	%	%	\widetilde\Omega := \left \lbrace a \in \R^2\ \left|\  |\omega(a) \cdot k| \geq \frac{|a|^\delta}{|k|^{\tau}} \quad \forall k \in \Z^2 \backslash \{0\} \quad \textrm{s.t. } |k|\leq |a|^{\varep}\right.\right \rbrace\,,
	%	%\widetilde\Omega := \left \lbrace a \in \R^2\ \left|\  |\omega(a) \cdot k| \geq {|a|^\delta}{|k|} \quad \forall k \in \Z^2 \backslash \{0\} \quad \textrm{s.t. } |k|\leq |a|^{\varep}\right.\right \rbrace\,,
	%	\end{equation}
	suppose that $\ep, \delta$ satisfy \eqref{delta} and that
	\begin{equation}\label{legami.tra.parametri.1}
	M -\delta < \min\left\{\frac{\fke}{3},\,\frac{2}{7}\right\}\,,
	\end{equation}
	and define, {as in Eq. \eqref{def.varsigma},}
	\begin{equation}\label{quanto.guadagno}
	\rho =  \min\{\fke- 3(M-\delta),\,2-7(M-\delta)\}\,.
	\end{equation}
	
	Fix $\tN$ arbitrarily large, take $N$ s.t. {$N\ep\geq\tN \rho $,}
	then there exists a sequence of self-adjoint
	pseudo-differential operators $\{G_n\}_{n=1}^{\tN-1}$, $G_n \in
	\OPSFnorma^{m-n\rho-\delta}$, such that the operator
	$$
	\cU_{n} := e^{\im G_1} \circ \cdots \circ e^{\im G_n}
	$$
	conjugates $H$ to $H_n = \Op(h_n + w_n)$, where $w_n \in {\cal SF}_{0}^{m-\tN \rho}$ is a real symbol and
	$h_n = h_0 + z^{(n)}  + v_n$
	has the following properties:
	\begin{enumerate}
		\item \label{piccolo} $v_n \in \Snorma^{m -n\rho}$ is
                  a real symbol  
		\item \label{normale} $z^{(n)} = \langle z^{(n)} \rangle + z_n^{(res)}$ is in resonant normal form,
		%		and $\langle z^{(n)} \rangle$ depends on the actions $a$ only. Moreover, there exists a sequence $\{z_j\}_{j \in \N}$ of real valued smooth functions $z_j \in S_C^{m- n\rho}$ such that
		%		\begin{equation}
		%		\langle z^{(n)} \rangle = \sum_{j=1}^n z_j \circ a\,. %\quad z_j(a) \lesssim \langle a \rangle^{m-j\fke}\,.
		%		\end{equation}
		and there exists a sequence $\{z_j\}_{j \in \N}$ of
                real valued smooth functions ${z_j \in \Snorma^{m-
                    {(j-1)}\rho},}$ which are functions of the actions
                only, such that
		\begin{equation}
		\langle z^{(n)} \rangle = \sum_{j=1}^n z_j\,. %\quad z_j(a) \lesssim \langle a \rangle^{m-j\fke}\,.
		\end{equation}
	\end{enumerate}
	%Furthermore, the symbols $v_n,$ $z_n$ and $z_n^{(res)}$ are real for any $n \in \N.$
\end{lemma}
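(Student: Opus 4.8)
The plan is to prove the Normal Form Lemma by induction on $n$, following closely the scheme of \cite{BLMnr} but carrying it out in the class $\Snorma^\bullet$ of ``summable'' symbols introduced in Section \ref{psidoo} (so as not to re-expand in Fourier series at each step). The base case $n=0$ is trivial: $\cU_0 = \mathrm{Id}$, $h_0$ is the unperturbed symbol, $z^{(0)} = 0$, $v_0 = V \in \Shr^m \subset \Snorma^m$ (by Lemma \ref{foou.11}), and $w_0 = 0$. Assume the statement holds at step $n-1$, so that $H_{n-1} = \Op(h_{n-1} + w_{n-1})$ with $h_{n-1} = h_0 + z^{(n-1)} + v_{n-1}$, $v_{n-1} \in \Snorma^{m-(n-1)\rho}$, and $w_{n-1} \in \Snorma^{m-\tN\rho}$. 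I would look for $G_n = \Op(g_n)$ with $g_n \in \Snorma^{m-n\rho-\delta}$ real-valued, so that conjugating by $e^{\im G_n}$ removes the non-normal-form part of $v_{n-1}$ at the leading order. By Egorov's theorem (Lemma \ref{teo Egorov}, with $\eta = m-n\rho-\delta$; one checks $\eta < \delta_1+\delta_2$ using \eqref{ridelte} and the smallness \eqref{legami.tra.parametri.1}), one has
\begin{equation}
\label{pf.egorov}
e^{\im G_n}\Op(h_{n-1})e^{-\im G_n} = \Op\!\left(h_{n-1} + \moyal{h_{n-1}}{g_n} + \Snorma^{m-n\rho + 2(\eta-\delta_1-\delta_2)}\right).
\end{equation}
Since $\moyal{h_0}{g_n} = \poisson{h_0}{g_n} + \Snorma^{\cdots}$ by Corollary \ref{moy}, the leading new term is $\poisson{h_0}{g_n} + v_{n-1}$, and the homological equation to solve is $\poisson{h_0}{g_n} + v_{n-1} = (\text{normal form part})$.

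The heart of the matter is solving this homological equation. Here I would use the Fourier expansion along the flow of the actions from Section \ref{expans}: writing $v_{n-1}(\phi_a^\vf(x,\xi)) = \sum_k \widehat{(v_{n-1})}_k e^{\im k\cdot\vf}$, and using that $\poisson{h_0}{\cdot}$ acts as $-\omega\cdot\partial_\vf$ along this flow (so that the $k$-th Fourier coefficient of $\poisson{h_0}{g_n}$ is $-\im(\omega(a)\cdot k)\,\widehat{(g_n)}_k$, using that $\omega$ is a function of the actions only and Remark \ref{inva}), I would set
\begin{equation}
\label{pf.solhom}
\widehat{(g_n)}_k := \frac{\widehat{(v_{n-1})}_k}{\im\,\omega(a)\cdot k}\,\bigl(1-\chi_k(a)\bigr)\,\widetilde\chi_k(a)\,,\quad k\neq 0\,,
\end{equation}
with the cutoffs $\chi_k, \widetilde\chi_k$ of \eqref{solhomo.0}, and $\widehat{(g_n)}_0 := 0$. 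Then $g_n := \sum_{k\neq 0}\widehat{(g_n)}_k$. Two things must be checked: (a) that $g_n \in \Snorma^{m-n\rho-\delta}$ — on the support of $1-\chi_k$ one has $|\omega\cdot k| \geq \tfrac12\|k\|\|a\|^\delta \gtrsim \|a\|^\delta$, giving the loss of $\delta$ orders; the derivatives of the cutoffs produce further factors controlled by $\|k\|^{\text{const}}$ and powers of $\|a\|^{-\delta+M}$ etc., which is where \eqref{legami.tra.parametri.1} is needed, and summability in $k$ comes from the summability norm of $v_{n-1}$ and the truncation $\|k\|\leq\|a\|^\ep$ (which, as in \cite{BLMnr}, gives only polynomial-in-$\|k\|$ losses rather than the catastrophic ones the generic $\Snoi$ class would suffer, cf.\ the Remark after Lemma \ref{foou.11}); and (b) that the resulting normal form error — the terms where $\chi_k\neq 1$ or $\widetilde\chi_k\neq 1$ — is exactly supported in $\cR_k$, so defines the resonant-normal-form increment $z_n$, plus a genuinely smaller remainder from the $\widetilde\chi_k$ cutoff (the part with $\|k\|>\|a\|^\ep$), which is $\Snorma^{m-\tN\rho}$ once $N\ep\geq\tN\rho$ because each factor $\langle k\rangle^{-N}$ beats $\|a\|^{-\ep N}$.

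Finally I would collect the error terms. After \eqref{pf.egorov} and \eqref{pf.solhom}, the conjugated Hamiltonian is $h_0 + z^{(n-1)} + z_n + (\text{terms that are } \Snorma^{m-n\rho})$; I would define $z^{(n)} := z^{(n-1)} + z_n$ (splitting $z_n = \langle z_n\rangle + z_n^{(res)}$ into its flow-average, which is a function of the actions only by Lemma \ref{lem.a.dependence}, and its resonant part), set $v_n$ to be the remaining $\Snorma^{m-n\rho}$ real symbol — collecting the quadratic Egorov remainder $\moyal{v_{n-1} + z^{(n-1)}}{g_n}$ (order $\leq m-(n-1)\rho + (\eta-\delta_1-\delta_2)$, which is $\leq m-n\rho$ by the definition \eqref{quanto.guadagno} of $\rho$ and the bounds on $M-\delta$), the higher Egorov terms, and the contribution of conjugating $w_{n-1}$ (which stays in $\Snorma^{m-\tN\rho}\subset\Snorma^{m-n\rho}$) — and absorb the $\|k\|>\|a\|^\ep$ tail into $w_n \in \Snorma^{m-\tN\rho}$. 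Self-adjointness of $G_n$ follows from $g_n$ being real-valued, which holds because $V$ is real, $\omega$ is real, the $\chi$'s are real and even, and $\phi_a^\vf$ preserves reality, so $\overline{\widehat{(v_{n-1})}_k} = \widehat{(v_{n-1})}_{-k}$ and likewise for $g_n$. The main obstacle, and the place requiring the most care, is step (a): the quantitative symbol estimate \eqref{pf.solhom} $\in \Snorma^{m-n\rho-\delta}$, i.e.\ showing that differentiating the small divisor $\omega(a)\cdot k$ and the two cutoffs — using the homogeneity of $\omega$ (degree $M$, Corollary \ref{azioniefrequenze}) and the relation $\langle a\rangle\simeq\tk_0$ (Remark \ref{equia}) — costs only factors compatible with $\rho$ as defined in \eqref{quanto.guadagno}, and that all the $k$-sums converge in the summability seminorms $\norfou{\cdot}{\cdot}{\alpha,\beta,N}$; this is the genuine analytic content that \eqref{delta} and \eqref{legami.tra.parametri.1} are engineered to make work.
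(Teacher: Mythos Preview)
Your plan is correct and matches the paper's proof: induction on $n$, solving the homological equation (the paper's Lemma \ref{lem.hom}) via the Fourier expansion along the action flow with the cutoffs \eqref{cut-off-piccoli-divisori}, applying Egorov, and using the four-part splitting $v_n = \langle v_n\rangle + v_n^{(nr)} + v_n^{(res)} + v_n^{(S)}$ that corresponds exactly to your average / non-resonant / resonant / large-$k$-tail decomposition. The paper isolates your step (a) into preparatory Lemmas \ref{stima g chi f psi}--\ref{dk}, showing that $(1-\chi(\|a\|))\chi_k$, $\tilde\chi_k$ and $(1-\chi(\|a\|))d_k$ have $\Snoi$-seminorms \emph{uniform in $k$} (so there is in fact no $\|k\|^{\text{const}}$ loss; summability in $k$ is inherited directly from $v_{n-1}\in\Snorma$), and organizes the bookkeeping so that $w_{n-1}'$ and the tail $v_{n-1}^{(S)}$ go into $w_n\in\mathcal{SF}_0^{m-\tN\rho}$ rather than into $v_n$.
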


The proof of Lemma \ref{lemma.in.forma} is obtained following the same approach of \cite{BLMnr, PS10}, which we are now going to adapt.\\
Consider again the cutoff function $\chi$ fixed above.  With its help
we define, for $k\in\Z^2\setminus\left\{0\right\}$

\begin{equation}\label{cut-off-piccoli-divisori}
\begin{gathered}
\chi_k( a) := \chi\Big( \frac{\omega(a) \cdot
	k}{\| a\|^{ \delta}\|k\|} \Big)\ ,
\\
d_k( a) := \frac{1}{\im \omega(a) \cdot k}(1-\chi_k(a))\ ,
\\
\tilde\chi_k( a) := \chi \left(\frac{\|k\|}{\|a\|^{\epsilon}}\right)\ . 
\end{gathered}
\end{equation}
Of course they will be considered as functions of
$(x,\xi)$ (by the substitution $a=a(x,\xi)$).
Furthermore, 
given $f \in \Snorma^m$, we define 
\begin{equation}\label{def a r nr R}
\begin{aligned}
&  \langle f \rangle:=(1-\chi(\left\|a\right\|)) \hat f_0\,, \\
&	f^{(res)}:= \sum_{k \in Z^2 \setminus \{ 0 \}}(1-\chi(\left\|a\right\|))
\chi_k \tilde\chi_k\hat{f}_k\ , \\
& f^{(nr)}:= \sum_{k \in Z^2}(1-\chi(\left\|a\right\|))(1-\chi_k) \tilde\chi_k
\hat{f}_k\ ,\\
& f^{(S)} := \sum_{k \in Z^2 \setminus \{ 0
  \}}(1-\chi(\left\|a\right\|))(1-\tilde\chi_k)\hat{f}_k+\chi(\left\|a\right\|)
f\ ,
\end{aligned}
\end{equation}
so that one has 
\begin{equation}\label{splitting simbolo a}
f = \langle f \rangle + f^{(nr)} + f^{(res)} + f^{(S)}\,.
\end{equation}

In order to show that each term is a symbol, we need a few
preliminaries.

\begin{lemma}\label{stima g chi f psi}
One has ${(1-\chi(\|a\|))\chi_k\in\Snoi^0}$, furthermore {for any
$\alpha,\beta$ there exists $C$ s.t.} $\forall k\in\Z^2\setminus\left\{0\right\}$ one has
$$
\norfou{{(1-\chi(\|a\|))\chi_k}}0{\alpha,\beta}\leq C\ .
$$
\end{lemma}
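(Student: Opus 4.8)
The plan is to prove that $\chi_k$ (suitably cut off near the origin) is a symbol of class $\Snoi^0$ with seminorms bounded uniformly in $k$. The first observation is that $\chi_k(a)$ is supported in the region $|\omega(a)\cdot k|\leq \|a\|^\delta\|k\|$, and since $\omega$ is homogeneous of degree $M=(\ell-1)/(\ell+1)$ in $a$, we have $\omega(a)\cdot k=\|a\|^M\omega(a/\|a\|)\cdot k$, so on the support of $\chi_k$ we get $|\omega(a/\|a\|)\cdot k|\leq \|a\|^{\delta-M}\|k\|$. Since $\delta<M$, this forces $\|a\|$ to be large whenever $k\neq 0$; combined with the factor $(1-\chi(\|a\|))$ this is consistent, and in fact shows the support is contained in $\{\|a\|\geq c\}$ for some $c>0$ (uniform in $k$), using Remark \ref{equia} to translate between $\|a\|$ and $\tk_0$.

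Next I would estimate derivatives. Write $\chi_k(a)=\chi(g_k(a))$ with $g_k(a):=\dfrac{\omega(a)\cdot k}{\|a\|^\delta\|k\|}$, and, recalling (Lemma \ref{homoa}, Corollary \ref{azioniefrequenze}) that $a(x,\xi)$ is quasi-homogeneous of degree $\ell+1$ and $\omega(a(x,\xi))$ quasi-homogeneous of degree $\ell-1$, we see that $g_k$, as a function of $(x,\xi)$, is quasi-homogeneous of degree $0$, hence by the chain rule each derivative $\partial_x^\alpha\partial_\xi^\beta g_k$ is quasi-homogeneous of degree $-\delta_1|\alpha|-\delta_2|\beta|$... more carefully, $\partial_x^\alpha\partial_\xi^\beta (\omega\cdot k/\|a\|^\delta)$ picks up a factor $\tk_0^{-\delta_1|\alpha|-\delta_2|\beta|}$ relative to the zeroth order term, and the zeroth order term is $O(1)$ on the support of $\chi_k$ by definition of the cutoff (it is $\leq 1$ there). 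The key quantitative point is that each such derivative is bounded by $C_{\alpha,\beta}\tk_0^{-\delta_1|\alpha|-\delta_2|\beta|}$ with $C_{\alpha,\beta}$ \emph{independent of $k$}: the $\|k\|$ in the denominator exactly compensates the $\|k\|$ that appears in $|\omega(a)\cdot k|\leq\|\omega(a)\|\,\|k\|$, so that $|g_k|\lesssim \|a\|^{M-\delta}$ off its own support bound but $\lesssim 1$ on it, and the derivatives behave the same way. Then by Faà di Bruno, $\partial_x^\alpha\partial_\xi^\beta\chi_k$ is a sum of products of $\chi^{(j)}(g_k)$ (bounded, with bounded support in the argument) times products of derivatives of $g_k$, each contributing the right power of $\tk_0$, all with $k$-independent constants.

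The step I expect to be the main obstacle is keeping the $k$-dependence uniform while differentiating the quotient $\omega(a)\cdot k/(\|a\|^\delta\|k\|)$: one must check that every time a derivative hits the numerator $\omega(a)\cdot k$ one can still divide by $\|k\|$ and get something $O(1)$ on the support — this uses $|\partial^\gamma(\omega(a)\cdot k)|\leq \|\partial^\gamma\omega(a)\|\|k\|$ and quasi-homogeneity of $\omega(a(x,\xi))$ — while every derivative hitting the denominator $\|a\|^{-\delta}$ simply improves decay in $\tk_0$. The factor $(1-\chi(\|a\|))$ contributes only compactly supported (in the region $1/2\leq\|a\|\leq 1$) bounded smooth terms upon differentiation, and on that region, by Remark \ref{qosmooth}/\ref{cuto} and smoothness, everything is trivially a symbol of order $0$; moreover on that annulus $\chi_k\equiv$ its own cutoff value, but actually there the support condition forces $k=0$ unless the constant $c$ above is $<1$, in which case the overlap region is still compact and harmless. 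Assembling these estimates gives $(1-\chi(\|a\|))\chi_k\in\Snoi^0$ with $\norfou{(1-\chi(\|a\|))\chi_k}{0}{\alpha,\beta}\leq C_{\alpha,\beta}$ uniformly in $k\neq 0$, which is the claim.
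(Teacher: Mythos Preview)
Your approach is the same as the paper's --- write $\chi_k=\chi\circ t_k$ with $t_k(a)=\omega(a)\cdot k/(\|k\|\,\|a\|^{\delta})$, apply Fa\`a di Bruno, and bound each factor $\partial^{\gamma}t_k$ using quasi-homogeneity with $k$-independent constants (the $\|k\|$ coming from $|\partial^{\gamma}(\omega\cdot k)|\le\|\partial^{\gamma}\omega\|\,\|k\|$ cancels the $\|k\|$ in the denominator). Two of your intermediate claims, however, are off and one of them hides the actual crux of the estimate.

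First, $t_k$ is \emph{not} quasi-homogeneous of degree $0$ in $(x,\xi)$: $\omega(a(x,\xi))$ has degree $\ell-1$ and $\|a\|^{\delta}$ has degree $(\ell+1)\delta$, so $t_k$ has degree $(\ell+1)(M-\delta)>0$. Hence quasi-homogeneity only yields
\[
|\partial_x^{\gamma}t_k|\ \lesssim\ \tk_0^{\,M-\delta-\frac{|\gamma|}{\ell+1}},
\]
which is \emph{not} yet the $\Snoi^0$ bound $\tk_0^{-|\gamma|\delta_1}$. The step you are missing --- and this is precisely what the paper does --- is the elementary inequality
\[
M-\delta-\tfrac{|\gamma|}{\ell+1}\ \le\ |\gamma|\bigl(M-\delta-\tfrac{1}{\ell+1}\bigr)\ =\ -|\gamma|\,\delta_1\qquad(|\gamma|\ge 1),
\]
which on $\{\|a\|\gtrsim 1\}$ (guaranteed by the factor $(1-\chi(\|a\|))$, using Remark~\ref{equia}) allows one to weaken the exponent. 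After this, the Fa\`a di Bruno product gives $\prod_i|\partial_x^{\gamma_i}t_k|\lesssim\tk_0^{-|\alpha|\delta_1}$, and the analogous argument for $\xi$-derivatives produces $\tk_0^{-|\beta|\delta_2}$, exactly as you want. Your ``more carefully'' sentence asserts this final bound but does not justify it; the justification is the inequality above, not any ``degree $0$'' property.

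Second, your support observation is backwards: the condition $|\omega(a/\|a\|)\cdot k|\le\|a\|^{\delta-M}\|k\|$ with $\delta<M$ gives an \emph{upper} bound on $\|a\|$ along directions where $\omega(u)\cdot k\neq 0$, so $\mathrm{supp}\,\chi_k$ actually contains a neighbourhood of the origin rather than being bounded away from it. This is harmless, since the factor $(1-\chi(\|a\|))$ already restricts to $\|a\|\ge\tfrac12$, which is all the argument needs.
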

\proof
{First of all, observe that since $(1-\chi(\|a\|))$ is a symbol {of class
  $\Shr^0$} it
is enough to study $\chi_k$ \emph{in the support of $(1-\chi(\|a\|))$.}} Denote
$$
t_k(a):= \frac{\omega(a) \cdot
	k}{\|k\| \|a\|^{ \delta}},
$$
which is a homogeneous function of degree $M-\delta>0$; {then one has $\chi_k = \chi \circ t_k$}, and by the Fa\`{a} di Bruno formula
\begin{equation}
  \label{derchi}
\left|\partial_x^\alpha(\chi\circ t_k)\right|\simeq
\sum_{j=1}^{|\alpha|}
\sum_{\gamma_1+...+\gamma_j=\alpha}\left|\chi^{(j)}\circ
t_k\right|\prod_{i=i}^{j }\left|\partial_x^{\gamma_i}t_k\right|\ .
\end{equation}
Now, by quasi homogeneity, one has 
$$
\left|\partial_x^{\gamma_i}t_k\right|\sleq %\left|k\right|^{\tau+1}
\|a\|^{M-\delta-\frac{|\gamma_i|}{\ell+1}}\sleq %\left|k\right|^{\tau+1}
\|a\|^{|\gamma_i|(M-\delta-\frac{1}{\ell+1})}\sleq %\left|k\right|^{\tau+1}
\|a\|^{-|\gamma_i|\delta_1}\ ,
$$
and therefore 
\begin{equation}
  \label{snnon}
\prod_{i=i}^{j }\left|\partial_x^{\gamma_i}t_k\right|\sleq
\|a\|^{-|\alpha|\delta_1}\ .
\end{equation}
{Remark now, that, in the support of $1 -\chi(\|a\|)$, by Remark \ref{equia} one has ${\langle a \rangle} \simeq \tk_0$, thus one has $
\prod_{i=i}^{j }\left|\partial_x^{\gamma_i}t_k\right|\sleq
\tk_0^{-|\alpha|\delta_1}
$, which in turn implies
$$
\left|\partial_x^\alpha(\chi\circ t_k)\right|\sleq
\tk_0^{-|\alpha|\delta_1}\qquad \forall (x,\xi)\ ,\quad \forall k\not=0 \ . 
$$}
Similar estimates hold for the $\xi$ derivatives and for the mixed
derivatives, and this implies the thesis.  \qed

\begin{lemma}\label{stima tilde}
  One has $\tilde\chi_k\in\shr0$, furthermore all its seminorms are
  bounded uniformly with respect to $k$.
\end{lemma}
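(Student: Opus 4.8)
The statement is the analogue of Lemma~\ref{stima g chi f psi} for the second cutoff, and I would prove it by the same scheme. For $k=0$ it is trivial: $\tilde\chi_0\equiv\chi(0)=1\in\shr 0$, with $k$-independent (indeed vanishing) seminorms of positive order. So assume $k\ne 0$ and write $\tilde\chi_k=\chi\circ s_k$, with $s_k(x,\xi):=\|k\|\,\|a(x,\xi)\|^{-\epsilon}$.

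First I would record three elementary facts. (i) Since $a$ is smooth on $\R^4\setminus\{0\}$ (Lemma~\ref{azioni}) and vanishes only at the origin (Remark~\ref{equia}), $\|a\|^{-\epsilon}$ is $C^\infty$ there; moreover, $\|a\|$ being quasi-homogeneous of degree $\ell+1$ (Remark~\ref{equia}), $\|a\|^{-\epsilon}$ is quasi-homogeneous of degree $-\epsilon(\ell+1)$, and hence, exactly as in the proof of Lemma~\ref{stima g chi f psi}, by quasi-homogeneity
\begin{equation}\label{plan:qh}
\bigl|\partial_x^\beta\partial_\xi^\alpha\bigl(\|a(x,\xi)\|^{-\epsilon}\bigr)\bigr|\;\lesssim\;\|a(x,\xi)\|^{-\epsilon-\frac{|\beta|+\ell|\alpha|}{\ell+1}}\qquad\text{on }\R^4\setminus\{0\}\,,
\end{equation}
so that $|\partial_x^\beta\partial_\xi^\alpha s_k|\lesssim s_k\,\|a\|^{-\frac{|\beta|+\ell|\alpha|}{\ell+1}}$. (ii) Since $\epsilon>0$ and $\|k\|\ge 1$, on $\{\|a\|<\|k\|^{1/\epsilon}\}$ — an open neighbourhood of the origin — one has $s_k>1$, hence $\tilde\chi_k=0$; therefore $\tilde\chi_k\in C^\infty(\R^4)$ and $\mathrm{supp}\,\tilde\chi_k\subseteq\{\|a\|\ge\|k\|^{1/\epsilon}\}\subseteq\{\|a\|\ge 1\}$. (iii) For $j\ge 1$, $\chi^{(j)}$ is supported in $\{1/2\le|t|\le 1\}$, so on $\mathrm{supp}(\chi^{(j)}\circ s_k)$ one has $s_k\le 1$, and hence also $\|a\|\ge\|k\|^{1/\epsilon}\ge 1$.

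The main point is the uniformity in $k$, obtained exactly as in the proof of Lemma~\ref{stima g chi f psi}. By Fa\`a di Bruno, $\partial_x^\beta\partial_\xi^\alpha(\chi\circ s_k)$ is a finite combination of terms $\bigl(\chi^{(j)}\circ s_k\bigr)\prod_{i=1}^{j}\partial_x^{\beta_i}\partial_\xi^{\alpha_i}s_k$ with $1\le j\le|\alpha|+|\beta|$ and $\sum_i(\alpha_i,\beta_i)=(\alpha,\beta)$. Each such term vanishes outside $\mathrm{supp}(\chi^{(j)}\circ s_k)$, and on that set, using (i), (iii) and $\sum_i(|\beta_i|+\ell|\alpha_i|)=|\beta|+\ell|\alpha|$,
\begin{equation}\label{plan:prod}
\Bigl|\bigl(\chi^{(j)}\circ s_k\bigr)\prod_{i=1}^{j}\partial_x^{\beta_i}\partial_\xi^{\alpha_i}s_k\Bigr|\;\lesssim\;\|\chi^{(j)}\|_\infty\,s_k^{\,j}\,\|a\|^{-\frac{|\beta|+\ell|\alpha|}{\ell+1}}\;\lesssim\;\|a\|^{-\frac{|\beta|+\ell|\alpha|}{\ell+1}}\;\lesssim\;\tk_0^{-\frac{|\beta|+\ell|\alpha|}{\ell+1}}\,,
\end{equation}
the last step because on $\{\|a\|\ge 1\}$ one has $\|a\|\gtrsim\langle a\rangle\simeq\tk_0$ by Remark~\ref{equia}. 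Summing over $j$ and over the partitions gives $|\partial_x^\beta\partial_\xi^\alpha\tilde\chi_k|\lesssim\tk_0^{-\frac{|\beta|+\ell|\alpha|}{\ell+1}}$ on all of $\R^4$ whenever $|\alpha|+|\beta|\ge 1$, with constants independent of $k$; for $\alpha=\beta=0$ one simply has $|\tilde\chi_k|\le\|\chi\|_\infty$. This is precisely \eqref{es.6} with $m=0$, so $\tilde\chi_k\in\shr 0$ uniformly in $k$.

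I do not expect any serious obstacle. The only slightly delicate point is that \eqref{plan:qh} concerns derivatives of $\|a\|^{-\epsilon}$, which is not smooth at the origin, so the quasi-homogeneous bound is only available on $\R^4\setminus\{0\}$; this is harmless precisely because $\epsilon>0$ forces $\tilde\chi_k$, and every term in its derivatives of positive order, to be supported in $\{\|a\|\ge\|k\|^{1/\epsilon}\}\subseteq\{\|a\|\ge 1\}$, where in addition $\|a\|\simeq\tk_0$ (Remark~\ref{equia}) and the factor $\|k\|$ in $s_k$ is absorbed since $s_k\le 1$ there. Everything else is the bookkeeping already carried out in Lemma~\ref{stima g chi f psi}.
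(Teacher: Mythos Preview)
Your proof is correct and follows essentially the same route as the paper's: write $\tilde\chi_k=\chi\circ t_k$ with $t_k=\|k\|\,\|a\|^{-\epsilon}$, apply Fa\`a di Bruno, use the quasi-homogeneity of $\|a\|^{-\epsilon}$ to bound the inner derivatives, and then exploit that on the relevant support $t_k\le 1$ and $\|a\|\ge 1$ (hence $\|a\|\simeq\tk_0$) to absorb the factor $\|k\|$ and obtain the $\shr0$ estimate uniformly in $k$. Your treatment is in fact slightly more careful than the paper's in handling the $k=0$ case, the mixed $(x,\xi)$ derivatives at once, and the smoothness of $\tilde\chi_k$ across the origin.
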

\proof We proceed as in the proof of the preceding Lemma, except that
we redefine
$$
t_k:=\frac{\|k\|}{\|a\|^{\epsilon}}\ .
$$
Thus one gets again formula \eqref{derchi}. In this case one has
$$
\left|\partial_x^{\gamma_i}t_k\right|=\|k\|\left|\partial_x^{\gamma_i}\frac{1}{\|a\|^{\epsilon}}\right|\sleq
\|k\|\|a\|^{-(\epsilon+\frac{|\gamma_i|}{\ell+1})}  
$$
and thus,
$$
\prod_{i=1}^{j}\left|\partial_x^{\gamma_i}t_k\right|\sleq 
\left(\frac{\|k\|}{\|a\|^{\epsilon}}\right)^j\frac{1}{\|a\|^{\frac{|\alpha|}{\ell+1}}}\ ,   $$
but, on the support of $\tilde \chi\circ t_k$, one has $
\dfrac{\|k\|}{\|a\|^{\epsilon}}<1 $, which also implies
${\|a\|>(\|k\|)^{1/\epsilon}\geq 1,}$ thus we also have, on the whole of $\R^4$,
$$
\left|\partial^\alpha_x(\chi\circ t_k)\right|\sleq
\frac{1}{\tk_0^{\frac{|\alpha|}{\ell+1}}}\ .
$$
Similar estimates hold for the $\xi$ derivatives, for the mixed
derivatives and this implies the thesis.  \qed

\begin{remark}
	\label{classe}
	The results of this Lemma and of Lemma \ref{dk} below are the main reason for the introduction of
	the class of symbols of Definition \ref{symbol.a1}
\end{remark}
\begin{lemma}\label{dk}
One has $(1-\chi(\|a\|))d_k\in\Snoi^{-\delta}$, furthermore all its seminorms are
  bounded uniformly with respect to $k$.
\end{lemma}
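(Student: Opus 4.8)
The plan is to proceed exactly as in the proof of Lemma \ref{stima g chi f psi}, but now keeping track of the extra factor $1/(\im\omega(a)\cdot k)$. Recall $d_k(a)=\frac{1}{\im\omega(a)\cdot k}(1-\chi_k(a))$, and observe that on the support of $1-\chi_k$ one has, by definition of $\chi_k$, that $|\omega(a)\cdot k|\geq \frac12\|a\|^\delta\|k\|$, so in particular $\omega(a)\cdot k$ does not vanish there and we may freely differentiate the quotient. As before, it suffices to study $d_k$ on the support of $1-\chi(\|a\|)$, where by Remark \ref{equia} one has $\langle a\rangle\simeq\tk_0$, so it is enough to prove the estimates with $\|a\|$ in place of $\tk_0$ and then invoke that $(1-\chi(\|a\|))$ is a symbol of class $\Shr^0$.

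First I would record the base estimate: on $\text{supp}(1-\chi_k)$, $|d_k(a)|\leq \frac{2}{\|a\|^\delta\|k\|}\leq 2\|a\|^{-\delta}$, which is the claimed order $-\delta$ with no derivatives. For the derivatives I would differentiate the product $\frac{1}{\im\omega(a)\cdot k}\cdot(1-\chi_k(a))$ by the Leibniz rule. The factor $(1-\chi_k)$ is handled by Lemma \ref{stima g chi f psi}: each $x$- or $\xi$-derivative of order $|\gamma|$ costs $\|a\|^{-|\gamma|\delta_1}$ in the $x$-variable (and analogously $\|a\|^{-|\gamma|\delta_2}$ in $\xi$, using quasi-homogeneity of $\omega$ of degree $\ell-1$). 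For the factor $g_k(a):=\frac{1}{\im\omega(a)\cdot k}$ I would again use Fa\`{a} di Bruno: write $g_k = \Psi\circ(\omega\cdot k)$ with $\Psi(t)=1/(\im t)$, so that $\Psi^{(j)}(t)\simeq t^{-(j+1)}$, and combine this with the bound $|\partial_x^{\gamma_i}(\omega(a)\cdot k)|\sleq \|k\|\,\|a\|^{M-\frac{|\gamma_i|}{\ell+1}}$ (quasi-homogeneity of $\omega_i$ of degree $\ell-1$, noting $\frac{\ell-1}{\ell+1}=M$). The key book-keeping is that on $\text{supp}(1-\chi_k)$ each power $|\omega(a)\cdot k|^{-1}$ is bounded by $2\|a\|^{-\delta}\|k\|^{-1}$; so a $j$-fold chain-rule term contributes $\|a\|^{-(j+1)\delta}\|k\|^{-(j+1)}$ from the $\Psi^{(j)}$ factor, times $\prod_i \|k\|\|a\|^{M-\frac{|\gamma_i|}{\ell+1}} = \|k\|^{j}\|a\|^{jM - \frac{|\gamma|}{\ell+1}}$ where $\gamma_1+\dots+\gamma_j$ has total length $|\gamma|\leq|\alpha|$. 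Multiplying, the $\|k\|$-powers telescope to $\|k\|^{-1}$ (in fact $\leq 1$, which already suffices for uniform seminorm bounds), and the $\|a\|$-powers combine to $\|a\|^{-\delta}\cdot\|a\|^{-j\delta + jM - \frac{|\gamma|}{\ell+1}} = \|a\|^{-\delta}\cdot\|a\|^{-j(\delta-M)-\frac{|\gamma|}{\ell+1}}$, and since $\delta<M$ and $|\gamma|\leq|\alpha|$ we must check this is $\sleq \|a\|^{-\delta - |\alpha|\delta_1}$. Here one uses $M-\delta-\frac{1}{\ell+1}=-\delta_1$ together with $j\leq|\gamma|\leq|\alpha|$ (since each $\gamma_i\neq 0$): $-j(\delta-M)-\frac{|\gamma|}{\ell+1} = \sum_i\big(M-\delta-\frac{|\gamma_i|}{\ell+1}\cdot\frac{1}{\text{(split)}}\big)$ — more carefully, $jM - j\delta - \frac{|\gamma|}{\ell+1}\leq |\gamma|(M-\delta) - \frac{|\gamma|}{\ell+1} = |\gamma|(M-\delta-\frac1{\ell+1}) = -|\gamma|\delta_1 \leq -|\alpha|\delta_1$ when $M-\delta-\frac1{\ell+1}<0$ (equivalently $\delta>\frac{\ell-2}{\ell+1}$, which is exactly the left inequality in \eqref{delta}), using $M-\delta>0$ to absorb the slack $|\alpha|-|\gamma|\geq 0$. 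When these two factors are multiplied via Leibniz, the total $x$-derivative budget is split between the $(1-\chi_k)$ factor and the $g_k$ factor, and in both cases each derivative costs at least $\|a\|^{-\delta_1}$, so the product estimate $|\partial_x^\alpha d_k|\sleq \|a\|^{-\delta}\|a\|^{-|\alpha|\delta_1}\sleq\tk_0^{-\delta-|\alpha|\delta_1}$ follows. The $\xi$- and mixed derivatives are identical, replacing $\delta_1$ by $\delta_2$ where appropriate, since $\omega$ is quasi-homogeneous of degree $\ell-1$ (so a $\xi$-derivative drops the quasi-homogeneity degree by $\ell$, giving the gain $\|a\|^{-\delta_2}$). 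This proves $(1-\chi(\|a\|))d_k\in\Snoi^{-\delta}$, with all seminorm bounds independent of $k$ since every $\|k\|$ appearing comes with a compensating $\|k\|^{-1}$ (or more) and only $\chi$, $\ell$, $\delta$, $\eps$ enter the constants.

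The main obstacle is the combinatorial book-keeping in the Fa\`{a} di Bruno expansion: one must verify that, in every term of the chain rule for $\Psi\circ(\omega\cdot k)$, the small divisor powers $|\omega(a)\cdot k|^{-(j+1)}$ (each worth $\|a\|^{-\delta}\|k\|^{-1}$ on the support of $1-\chi_k$) are exactly enough to beat the growth $\|k\|^{j}\|a\|^{jM}$ coming from the $j$ differentiated copies of $\omega\cdot k$, and that the remaining negative $\|a\|$-power is at least $-|\alpha|\delta_1$ (resp. $-|\beta|\delta_2$). This is precisely where the hypothesis $\delta>\frac{\ell-2}{\ell+1}$ (i.e. $\delta_1>0$) and $\delta<M$ are both used, and it is the reason — as Remark \ref{classe} points out — that the class $\Snoi^m$ of Definition \ref{symbol.a1} was introduced rather than $\Shr$. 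Everything else is a routine variation on the argument already carried out for Lemma \ref{stima g chi f psi}.
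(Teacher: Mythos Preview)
Your plan is correct and is essentially identical to the paper's own proof: Leibniz on the product $\frac{1}{\im\omega\cdot k}\,(1-\chi_k)$, Fa\`a di Bruno on the reciprocal factor, the small-divisor bound $|\omega\cdot k|^{-(j+1)}\sleq\|k\|^{-(j+1)}\tk_0^{-(j+1)\delta}$ on $\mathrm{supp}(1-\chi_k)$, and the quasi-homogeneity estimate $|\partial_x^{\nu_i}(\omega\cdot k)|\sleq\|k\|\,\tk_0^{M-\frac{|\nu_i|}{\ell+1}}$, combined exactly as you indicate using $j\leq|\gamma|$ and $M-\delta>0$ to reach $|\partial_x^\gamma(\omega\cdot k)^{-1}|\sleq\tk_0^{-\delta-|\gamma|\delta_1}$; then Lemma~\ref{stima g chi f psi} for the $(1-\chi_k)$ factor finishes.

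One small slip to fix: the inequality ``$-|\gamma|\delta_1\leq-|\alpha|\delta_1$'' that you write is false (it goes the other way, since $|\gamma|\leq|\alpha|$ and $\delta_1>0$), and it is not needed --- the correct completion is exactly the Leibniz recombination you state immediately afterwards, namely $\tk_0^{-\delta-|\gamma|\delta_1}\cdot\tk_0^{-|\alpha-\gamma|\delta_1}=\tk_0^{-\delta-|\alpha|\delta_1}$. Just delete the offending inequality and the ``absorb the slack'' remark; the paper's proof stops at $-|\gamma|\delta_1$ for the $g_k$ factor and then invokes Lemma~\ref{stima g chi f psi} for the other factor, which is precisely your final sentence.
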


\proof We study the $x$ derivatives of $d_k$, all the others can be
estimated exactly in the same way. {Arguing as in the proof of Lemma \ref{stima g chi f psi}, again we restrict the study of $d_k$ to the support of $(1-\chi(\|a\|))$.}
From Leibniz formula we have
$$
|\partial_x^\alpha d_k|\simeq \sum_{\gamma\leq
	\alpha}\left|\partial^\gamma_x\frac{1}{\omega\cdot
	k}\right| |\partial^{\alpha-\gamma}_x(1-\chi_k)|\ .
$$
Now, one has by Fa\`{a} di Bruno formula
$$
\left|\partial^\gamma_x \frac{1}{\omega\cdot
	k}
\right| \simeq
\sum_{j=1}^{|\gamma|}\sum_{\nu_1+...+\nu_j=\gamma}\frac{1}{|\omega\cdot
	k|^{j+1}}\prod_{i=1}^{j}\left|\partial^{\nu_i}_x\omega\cdot k\right|\ . 
$$
Now, since we work in the support of $1-\chi_k$ which contains also
the support of $\partial_x^{\alpha-\gamma}\chi_k$, we have, in this
domain intersected with $\|a\|>\frac{1}{2}$,
$$
\frac{1}{|\omega\cdot k|^{j+1}}\sleq \|a\|^{-\delta (j+1)} \|k\|^{-(j+1)}\sleq \tk_0^{-\delta (j+1)} \|k\|^{-(j+1)}\ ,\quad
|\partial^{\nu_i}_x\omega\cdot k| \sleq
\tk_0^{M-\frac{|\nu_i|}{\ell+1}}\|k\| \ ,
$$
and therefore
\begin{equation}
\label{stipd}
\left|\partial^\gamma_x \frac{1}{\omega\cdot
	k}
\right|
\sleq \tk_0^{-\delta+|\gamma|(M-\delta-\frac{1}{\ell+1})}\sleq
\tk_0^{-\delta-|\gamma|\delta_1}\ .
\end{equation}
From this, using Lemma \ref{stima g chi f psi}, one finally gets
$$
|\partial_x^\alpha d_k|\sleq
\tk_0^{-|\alpha|\delta_1-\delta}\ .
$$
Performing the analogous estimates for the $\xi$ derivatives one gets
the result. \qed

We are now ready to prove the following:

\begin{lemma}\label{lemma.f.a.pezzi}
	Let $f \in \Snorma^{m}$;
	then $\langle f \rangle$, $f^{(res)}$ and $f^{(nr)}$ are in
        $\Snorma^{m}$.
\end{lemma}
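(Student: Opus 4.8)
The plan is to exploit the Fourier decomposition \eqref{splitting simbolo a} and handle each of $\langle f\rangle$, $f^{(res)}$, $f^{(nr)}$ separately, reducing everything to the symbol estimate \eqref{norma.tutto} for $f\in\Snorma^m$ together with the uniform-in-$k$ bounds on the cutoffs proved in Lemmas \ref{stima g chi f psi}, \ref{stima tilde}, \ref{dk}. First I would recall that each Fourier coefficient $\hat f_k$, being produced by the flow $\phi^\vf_a$ via \eqref{fou.2}, lies in $\Snoi^m$ with $\norfou{\hat f_k}{m}{\alpha,\beta}$ finite (Remark \ref{egora}/Lemma \ref{egor}), and by hypothesis $\sum_k\langle k\rangle^N\norfou{\hat f_k}{m}{\alpha,\beta}<\infty$ for all $N,\alpha,\beta$. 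The whole argument is then: a convergent sum (in the appropriate seminorms) of symbols in $\Snoi^m$, each built from $\hat f_k$ multiplied by cutoffs whose $\Snoi$-seminorms are $k$-uniformly bounded, is again a symbol in $\Snorma^m$, provided one can control how the Fourier coefficients of the resulting function sit — but in fact it is cleaner to bound the function and its derivatives directly and then separately verify the summability condition \eqref{norma.tutto}.

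The key steps, in order. (1) For $\langle f\rangle=(1-\chi(\|a\|))\hat f_0$: this is the product of the symbol $1-\chi(\|a\|)\in\Shr^0\subset\Snoi^0$ with $\hat f_0\in\Snoi^m$, so by the algebra property of $\Snoi$ (Remark \ref{prod 2}, or directly Leibniz) it lies in $\Snoi^m$; since it has only the $k=0$ Fourier component in the $\phi^\vf_a$-expansion — note $1-\chi(\|a\|)$ is a function of the actions only, hence invariant under the flow, so by Remark \ref{inva} the $k$-th Fourier coefficient of $(1-\chi(\|a\|))\hat f_0$ is $(1-\chi(\|a\|))\,\widehat{(\hat f_0)}_k$, and $\hat f_0$ is itself flow-invariant so its Fourier expansion is trivial — the sum in \eqref{norma.tutto} has a single term and is trivially finite. (2) For $f^{(res)}=\sum_{k\neq0}(1-\chi(\|a\|))\chi_k\tilde\chi_k\hat f_k$: each summand is a product of $(1-\chi(\|a\|))\chi_k\in\Snoi^0$ (Lemma \ref{stima g chi f psi}, uniformly in $k$), $\tilde\chi_k\in\Shr^0\subset\Snoi^0$ (Lemma \ref{stima tilde}, uniformly in $k$), and $\hat f_k\in\Snoi^m$, so each summand is in $\Snoi^m$ with seminorm $\lesssim_{\alpha,\beta}\norfou{\hat f_k}{m}{\alpha',\beta'}$ for suitable $\alpha',\beta'$; summing over $k$ and using $\sum_k\norfou{\hat f_k}{m}{\alpha',\beta'}<\infty$ (take $N=0$ in \eqref{norma.tutto}) gives $f^{(res)}\in\Snoi^m$. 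For the summability \eqref{norma.tutto}, I note that $\chi_k$ and $\tilde\chi_k$ are functions of the actions only hence flow-invariant, so by Remark \ref{inva} the $k'$-th Fourier coefficient of the $k$-th summand is $(1-\chi(\|a\|))\chi_k\tilde\chi_k\,\widehat{(\hat f_k)}_{k'}$, and since $\hat f_k$ is flow-invariant this vanishes unless $k'=0$; hence the $k'$-th Fourier coefficient of $f^{(res)}$ is $\sum_{k\neq0}(1-\chi(\|a\|))\chi_k\tilde\chi_k\hat f_{k'}$ when $k'=0$ — wait, this is not right, and this is exactly the subtlety I flag below. (3) For $f^{(nr)}$: identical structure, with $(1-\chi(\|a\|))(1-\chi_k)=(1-\chi(\|a\|))-(1-\chi(\|a\|))\chi_k\in\Snoi^0$ uniformly in $k$.

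The main obstacle is step (2)–(3) done \emph{correctly}: one must be careful that $f^{(res)}$ and $f^{(nr)}$ are genuinely in the \emph{stronger} class $\Snorma^m$ (with the weighted-$\ell^1$ Fourier condition \eqref{norma.tutto}), not merely in $\Snoi^m$. The resolution is that the $\phi^\vf_a$-Fourier coefficients of $f^{(res)}$ are controlled by those of $f$: since $\chi_k$, $\tilde\chi_k$, and $1-\chi(\|a\|)$ are all functions of the actions only, the decomposition \eqref{def a r nr R} does not mix Fourier modes, i.e. $\widehat{(f^{(res)})}_k=(1-\chi(\|a\|))\chi_k\tilde\chi_k\hat f_k$ for $k\neq0$ and $=0$ for $k=0$, by Remark \ref{inva} applied mode by mode. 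Then $\norfou{\widehat{(f^{(res)})}_k}{m}{\alpha,\beta}\lesssim_{\alpha,\beta}\norfou{\hat f_k}{m}{\alpha',\beta'}$ uniformly in $k$, so $\sum_k\langle k\rangle^N\norfou{\widehat{(f^{(res)})}_k}{m}{\alpha,\beta}\lesssim\sum_k\langle k\rangle^N\norfou{\hat f_k}{m}{\alpha',\beta'}=\norfou{f}{m}{\alpha',\beta',N}<\infty$, which is precisely \eqref{norma.tutto}; and similarly for $f^{(nr)}$ and (trivially) for $\langle f\rangle$. The only genuinely technical point left is the Leibniz/Faà di Bruno bookkeeping showing that multiplying by the uniformly-bounded cutoffs of Lemmas \ref{stima g chi f psi}, \ref{stima tilde} costs only a bounded number of extra derivatives, which is routine given those lemmas.
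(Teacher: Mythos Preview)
Your proposal is correct and follows essentially the same approach as the paper: the key observation is that the cutoffs $1-\chi(\|a\|)$, $\chi_k$, $\tilde\chi_k$ are functions of the actions only, so by Remark~\ref{inva} the Fourier modes are not mixed, giving $\widehat{(f^{(res)})}_k=(1-\chi(\|a\|))\chi_k\tilde\chi_k\hat f_k$, and then the uniform-in-$k$ seminorm bounds of Lemmas~\ref{stima g chi f psi} and~\ref{stima tilde} together with Remark~\ref{prod 2} immediately yield $\norfou{f^{(res)}}{m}{\alpha,\beta,N}\lesssim\norfou{f}{m}{\alpha,\beta,N}$. One minor point: thanks to the definition~\eqref{norma.k} of the seminorms (sup over $|\alpha'|\le|\alpha|$, $|\beta'|\le|\beta|$), Remark~\ref{prod 2} gives submultiplicativity with the \emph{same} indices $\alpha,\beta$, so no ``extra derivatives'' are needed; also Lemma~\ref{dk} on $d_k$ plays no role here (it is used only later, in Lemma~\ref{lem.hom}).
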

\proof First we remark that, since the cutoffs are functions of $a$
only, Remark \ref{inva} allows to compute the Fourier coefficients of
the different parts of $f$. 

Consider $f^{(res)}$. Since the seminorms of the cutoffs are bounded
uniformly with respect to $k$, one has
\begin{align}
  \label{stimafres}
\left\|f^{(res)}\right\|^{(m)}_{\alpha,\beta,N}=\sum_{k}
\left\|\chi_k\tilde \chi_k (1-\chi(a))\hat
f_k\right\|^{(m)}_{\alpha,\beta} \langle k\rangle^N\sleq \sum_{k} \left\|\hat
f_k\right\|^{(m)}_{\alpha,\beta} \langle k\rangle^N\sleq
\left\|f\right\|^{(m)}_{\alpha,\beta,N}\ ,
\end{align}
so that $f^{(res)}\in \Snorma^m$.

The other parts of $f$ can be estimated exactly in the same way.
\qed

        \begin{lemma}
          \label{smoothing}
 Let {$N\ep\geq\tN \rho$,} then one has $f^{(S)} \in {\mathcal{SF}^{m-\tN\rho }_0}$.
        \end{lemma}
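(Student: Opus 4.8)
\textbf{Proof plan for Lemma \ref{smoothing}.}

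The plan is to show that $f^{(S)}$ is smoothing of order $\widetilde N\rho$, i.e.\ that it belongs to $\Snorma^{m-\widetilde N\rho}$; actually, since the goal is membership in ${\mathcal{SF}}^{m-\widetilde N\rho}_0$, it suffices to bound the $N'=0$ seminorms $\|f^{(S)}\|^{(m-\widetilde N\rho)}_{\alpha,\beta,0}=\sum_k\|\widehat{(f^{(S)})}_k\|^{(m-\widetilde N\rho)}_{\alpha,\beta}$. Recall from \eqref{def a r nr R} that
\[
f^{(S)} = \sum_{k \in \Z^2 \setminus \{0\}}(1-\chi(\|a\|))(1-\tilde\chi_k)\hat f_k + \chi(\|a\|)f\,.
\]
The second term $\chi(\|a\|)f$ is compactly supported in $(x,\xi)$ (since $a=0 \Leftrightarrow x=\xi=0$ by Remark \ref{equia}, and $\chi(\|a\|)$ is supported where $\|a\|\leq 1$, hence where $\tk_0\lesssim 1$), so on its support $\tk_0^{-\widetilde N\rho}\simeq 1$ and it lies in $\Snorma^{m'}$ for every $m'$, in particular in $\Snorma^{m-\widetilde N\rho}$ — here I use that $f\in\Snorma^m$ so all the relevant Fourier-weighted seminorms are finite. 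So the whole point is the first sum.

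For the first sum the key observation is the support restriction imposed by $1-\tilde\chi_k$: by definition $\tilde\chi_k(a)=\chi(\|k\|/\|a\|^\ep)$, so $1-\tilde\chi_k$ is supported where $\|k\|/\|a\|^\ep \geq 1/2$, i.e.\ where $\|a\|^\ep \lesssim \|k\|$, equivalently (using $\langle a\rangle \simeq \tk_0$ off a neighborhood of the origin, Remark \ref{equia}) where $\tk_0 \lesssim \langle k\rangle^{1/\ep}$. On this support one therefore has, for any exponent $p>0$,
\[
\tk_0^{-p} \gtrsim \langle k\rangle^{-p/\ep}\,,\qquad\text{equivalently}\qquad \tk_0^{-(m-\widetilde N\rho)}\ =\ \tk_0^{-m}\,\tk_0^{\widetilde N\rho}\ \lesssim\ \tk_0^{-m}\,\langle k\rangle^{\widetilde N\rho/\ep}\,.
\]
Hence, estimating the Fourier coefficient of the $k$-th term of the sum (the cutoffs $(1-\chi(\|a\|))$ and $(1-\tilde\chi_k)$ have seminorms bounded uniformly in $k$ in the class $\Shr^0\subset\Snoi^0$, so they multiply $\hat f_k$ with a bounded loss, by Lemma \ref{stima tilde} and the product rule Remark \ref{prod 2}), I get
\[
\big\|(1-\chi(\|a\|))(1-\tilde\chi_k)\hat f_k\big\|^{(m-\widetilde N\rho)}_{\alpha,\beta}\ \lesssim\ \langle k\rangle^{\widetilde N\rho/\ep}\,\big\|\hat f_k\big\|^{(m)}_{\alpha,\beta}\,.
\]
Summing over $k$ and invoking the hypothesis $N\ep\geq\widetilde N\rho$, i.e.\ $\widetilde N\rho/\ep\leq N$, gives
\[
\big\|f^{(S)}\big\|^{(m-\widetilde N\rho)}_{\alpha,\beta,0}\ \lesssim\ \sum_k \langle k\rangle^{N}\,\big\|\hat f_k\big\|^{(m)}_{\alpha,\beta}\ =\ \big\|f\big\|^{(m)}_{\alpha,\beta,N}\ <\ \infty\,,
\]
which is exactly the bound defining membership in ${\mathcal{SF}}^{m-\widetilde N\rho}_0$. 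One runs the same argument for the mixed and $\xi$-derivatives, using $(\tk_0)^{\widetilde N\rho}\lesssim\langle k\rangle^{\widetilde N\rho/\ep}$ on the relevant support throughout.

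The only mildly delicate point — and the place I would be most careful — is the interplay between the two cutoffs $\chi(\|a\|)$ and $\tilde\chi_k$ near the origin: on the support of $1-\tilde\chi_k$ one has $\|a\|\geq\|k\|^{1/\ep}\geq 1$ anyway (as already noted in the proof of Lemma \ref{stima tilde}), so the factor $(1-\chi(\|a\|))$ is identically $1$ there and the substitution $\tk_0\simeq\langle a\rangle$ is legitimate; so in fact the trade $\tk_0^{\widetilde N\rho}\lesssim\langle k\rangle^{\widetilde N\rho/\ep}$ is valid on the whole support of the $k$-th summand without any further fuss. Everything else is the same bookkeeping with seminorms as in Lemma \ref{lemma.f.a.pezzi}.
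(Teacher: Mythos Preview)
Your proof is correct and follows essentially the same approach as the paper: split off the compactly supported piece $\chi(\|a\|)f$, then use the support condition $\tk_0^{\ep}\lesssim\langle k\rangle$ on $\mathrm{supp}(1-\tilde\chi_k)$ to trade the extra factor $\tk_0^{\tN\rho}$ for $\langle k\rangle^{\tN\rho/\ep}\leq\langle k\rangle^N$, and conclude by summing against $\|f\|^{(m)}_{\alpha,\beta,N}$. The only cosmetic difference is that the paper first estimates at order $m-\ep N$ and then uses $N\ep\geq\tN\rho$ to embed into $\mathcal{SF}^{m-\tN\rho}_0$, whereas you target $m-\tN\rho$ directly; the two are equivalent.
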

\proof First we remark that, since $\chi(\|a\|)f$ has compact support, the
thesis trivially holds for this part of $f^{(S)}$. For the other part, it is enough
to observe that on $\textrm{supp}\,(1-\tilde{\chi}_k(x, \csi))$ we have
	\begin{equation}\label{k.large}
	\|k\| \geq \left(a(x, \csi)\right)^{\ep} \gtrsim \left(\tk_0(x, \xi)\right)^{\ep}\,
\end{equation}
       by Remark \ref{equia}, so that one has
        \begin{align*}
\left\|(1-\chi(\|a\|))(1-\tilde \chi_k)\hat f_k\right\|^{(m-\epsilon
  N)}_{\alpha,\beta} &\sleq
\sup_{|\alpha'|\leq|\alpha|,|\beta'|\leq|\beta| }
\sup_{x,\xi}\frac{\left|\partial_x^{\alpha'}\partial_\xi^{\beta'}
  (1-\chi(\|a\|))(1-\tilde \chi_k)\hat f_k \right|}{\tk_0^{m-\epsilon
    N}}
\\
=\sup_{|\alpha'|\leq|\alpha|,|\beta'|\leq|\beta| }
&\sup_{x,\xi}\frac{\left|\partial_x^{\alpha'}\partial_\xi^{\beta'}
  (1-\chi(\|a\|))(1-\tilde \chi_k)\hat f_k
  \right|}{\tk_0^{m}}\frac{\tk_0^{\epsilon N}}{\langle
  k\rangle^N}\langle k\rangle^N
\\
\sleq 
\sup_{|\alpha'|\leq|\alpha|,|\beta'|\leq|\beta| }
&\sup_{x,\xi}\frac{\left|\partial_x^{\alpha'}\partial_\xi^{\beta'}
  (1-\chi(\|a\|))(1-\tilde \chi_k)\hat f_k
  \right|}{\tk_0^{m}}
\langle k\rangle^N\\
&\sleq \left\|\hat f_k\right\|^{(m)}_{\alpha,\beta}\langle k\rangle^N
    \end{align*}
from which, summing over $k$ one gets the thesis.  
\qed

\begin{lemma}\label{lem.hom}
	Let $f \in \Snorma^{m}$ a real valued symbol. Then the equation
	\begin{equation}\label{solve.me}
	\{ h_0; g\} + f^{(nr)} = 0
	\end{equation}
	has a real valued solution $g \in\Snorma^{m-\delta},$ defined by
	\begin{equation}\label{sol.hom}
	g(x, \xi) :=\sum_{k \in \Z^2 \backslash \{0\}}  d_k(x, \xi) \tilde{\chi}_k(x, \xi)(1-\chi(\|a\|)) \hat{f}_k(x, \csi)\,.
	\end{equation}
\end{lemma}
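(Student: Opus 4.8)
The plan is to verify directly that the explicit formula \eqref{sol.hom} defines a real symbol in $\Snorma^{m-\delta}$ which solves \eqref{solve.me}; everything is carried out in the Fourier expansion attached to the flow $\phi^\vf_a$, the only structural input beyond the preparatory lemmas being that $h_0$ is a function of the actions only. Write $\hat g_k := d_k\,\tilde\chi_k\,(1-\chi(\|a\|))\,\hat f_k$, so that \eqref{sol.hom} reads $g=\sum_{k\neq0}\hat g_k$. A one-line change of variables $\vf\mapsto\vf+\psi$ in \eqref{fou.2} gives $\hat f_k\circ\phi^\psi_a=e^{\im k\cdot\psi}\hat f_k$, i.e.\ $\hat f_k$ is of pure Fourier mode $k$; since $d_k$, $\tilde\chi_k$ and $1-\chi(\|a\|)$ are functions of the actions only, hence invariant under $\phi^\vf_a$, Remark \ref{inva} shows that $\hat g_k$ is again of pure Fourier mode $k$, and therefore it is precisely the $k$-th Fourier coefficient of the series $g$.

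\emph{Membership in $\Snorma^{m-\delta}$.} This is the only delicate point, and it has been engineered so as to reduce to the preparatory lemmas. By Lemma \ref{dk}, $(1-\chi(\|a\|))\,d_k\in\Snoi^{-\delta}$ with seminorms bounded uniformly in $k$; by Lemma \ref{stima tilde}, $\tilde\chi_k\in\shr0\subseteq\Snoi^0$ uniformly in $k$ (the inclusion $\shr m\subseteq\Snoi^m$ holding because $\delta_1<\frac1{\ell+1}$ and $\delta_2<\frac{\ell}{\ell+1}$, a consequence of $\delta<M$). Multiplying and using the multiplicativity of the seminorms (Remark \ref{prod 2}), one obtains $\norfou{\hat g_k}{m-\delta}{\alpha,\beta}\lesssim\norfou{\hat f_k}{m}{\alpha,\beta}$ with constant independent of $k$, for every $\alpha,\beta\in\N^2$. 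Summing against $\langle k\rangle^N$ and using $f\in\Snorma^m$ yields $\norfou{g}{m-\delta}{\alpha,\beta,N}\lesssim\norfou{f}{m}{\alpha,\beta,N}<\infty$ for all $\alpha,\beta,N$; hence the series converges in $\Snorma^{m-\delta}$ and $g\in\Snorma^{m-\delta}$. This is exactly the role of the class $\Snoi$ of Definition \ref{symbol.a1}: the small divisor $1/(\omega\cdot k)$ costs $\delta$ powers of decay in $a$ but only $\delta_1$ per $x$-derivative and $\delta_2$ per $\xi$-derivative, so that $d_k$, and then $g$, stays inside a symbol class.

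\emph{Reality.} Since $f$ is real, $\overline{\hat f_k}=\hat f_{-k}$; since $\chi$ is even, $\chi_{-k}=\chi_k$ and $\tilde\chi_{-k}=\tilde\chi_k$, while $\omega\cdot k$ is real so $\overline{d_k}=-d_k=d_{-k}$. Hence $\overline{\hat g_k}=\hat g_{-k}$ and $\overline g=\sum_{k\neq0}\hat g_{-k}=g$, so $g$ is real valued.

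\emph{The homological equation.} As $\phi^\psi_a$ is a symplectomorphism under which $h_0$ is invariant, $\{h_0;\hat g_k\}$ is again of pure Fourier mode $k$, and, the series for $g$ converging in all seminorms, $\{h_0;g\}=\sum_{k\neq0}\{h_0;\hat g_k\}$ may be computed term by term. On a symbol $u$ of pure Fourier mode $k$ one has, using $\partial_{\psi_i}(u\circ\phi^\psi_a)=\poisson{u}{a_i}\circ\phi^\psi_a$ together with $u\circ\phi^\psi_a=e^{\im k\cdot\psi}u$, that $\poisson{u}{a_i}=\im k_i u$; since $h_0$ is a function of the actions, the chain rule for Poisson brackets gives $\{h_0;u\}=\sum_i\frac{\partial h_0}{\partial a_i}(a)\,\poisson{a_i}{u}=-\sum_i\omega_i(a)\,\poisson{u}{a_i}=-\im\,(\omega(a)\cdot k)\,u$. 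Applying this to $u=\hat g_k$ and using the identity $(-\im\,\omega\cdot k)\,d_k=-(1-\chi_k)$ gives $\{h_0;\hat g_k\}=-(1-\chi_k)\,\tilde\chi_k\,(1-\chi(\|a\|))\,\hat f_k$; summing over $k\neq0$ yields $\{h_0;g\}=-f^{(nr)}$, which is \eqref{solve.me}. There is no $k=0$ contribution, $1-\chi_0$ being understood to vanish, consistently with the $k=0$ part of $f$ being collected in $\langle f\rangle$ rather than in $f^{(nr)}$.
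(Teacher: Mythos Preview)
Your proof is correct and follows essentially the same route as the paper's: verify the homological equation by computing $\{h_0;\cdot\}$ Fourier-mode by Fourier-mode, and establish $g\in\Snorma^{m-\delta}$ by combining Lemma~\ref{dk} and Lemma~\ref{stima tilde} with the multiplicativity of the seminorms, then summing against $\langle k\rangle^N$. You add two small points the paper leaves implicit---the reality of $g$ via $\overline{\hat g_k}=\hat g_{-k}$, and the explicit verification that $\hat g_k$ really is the $k$-th Fourier coefficient of the series---both of which are straightforward but worth recording.
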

\begin{proof}
{First we verify that $g$ solves \eqref{solve.me}}. One has
	$$
	\poisson{h_0}{g} = \left.\frac{d}{d t}\right|_{t=0} g \circ \phi^{\omega t}_a = \left.\frac{d}{d t}\right|_{t=0} \sum_{k \in \Z^2} {\hat{g}_k} e^{\im k \cdot \omega t} = \sum_{k \in \Z^2} \im (\omega \cdot k )\hat{g}_k\,,
	$$
	thus, recalling the definition of $f^{(nr)}$, Equation \eqref{solve.me} reads
	$$
	\im (\omega \cdot k) \hat{g}_k = \tilde{\chi}_k (1-\chi_k)
        (1-\chi(\|a\|))\hat{f}_k \,,
	$$
	which immediately implies that $g$ defined as in
        \eqref{sol.hom} solves \eqref{solve.me}. In order to prove
        that $g \in \Snorma^{m -\delta},$ one argues as in the proof
        of Lemma \ref{lemma.f.a.pezzi}, namely
since the seminorms of the cutoffs and of $d_k$ are bounded
uniformly with respect to $k$, one has
\begin{align}
  \label{stimafres1}
\left\|g\right\|^{(m-\delta)}_{\alpha,\beta,N}=\sum_{k}
\left\|d_k\tilde \chi_k (1-\chi(\|a\|))\hat
f_k\right\|^{(m-\delta)}_{\alpha,\beta} \langle k\rangle^N
\\
\sleq \sum_{k} \left\|
d_k\right\|^{(-\delta)}_{\alpha,\beta} \left\|
\hat f_k\right\|^{(m)}_{\alpha,\beta} \langle k\rangle^N\sleq
\left\|f\right\|^{(m)}_{\alpha,\beta,N}\ . 
\end{align}
\end{proof}

We are finally able to prove Lemma \ref{lemma.in.forma}:

\begin{proof}[Proof of the normal form Lemma \ref{lemma.in.forma}]
	The proof is obtained working inductively. For $n=0$,
        the thesis holds true with $z_n = 0$ and $v_n = v$; recall
        indeed that, by Lemma \ref{foou.11}, $v \in \Shr^{m} \subset
        \Snorma^{m}$.
We now construct a pseudo-differential operator $G_{n+1}$ with symbol
$g_{n+1}\in \Snorma^{m-n\rho-\delta}$ such that $H_{n+1}=e^{\im G_{n+1}}H_ne^{-\im
	G_{n+1}}$. Using the notation \ref{ilprimo}, we have $H_{n+1}=\Op(h'_n + w^\prime_n)$,
where $w^\prime_n \in {\cal SF}_0^{m-\tN \rho}$ by Lemma \ref{teo Egorov}, and $h^\prime_n$ is given by
\begin{align}
  \label{hprimo}
  h'_n&=h_0+z^{(n)}+v_n+\left\{h_0;g_{n+1}\right\}_M
  \\
  &+h'_0-h_0- \left\{h_0;g_{n+1}\right\}_M+z^{(n)\prime}-z^{(n)}+v'_n-v_n
  \\
  \label{nor.for.1}
  &= h_0+z^{(n)}+v_n^{(res)}+\langle v_n\rangle+v_n^{(S)}
  \\
\label{nor.for.2}
  & +v_n^{(nr)}+\left\{h_0;g_{n+1}\right\}
    \\
\label{nor.for.3}
  & + \left\{h_0;g_{n+1}\right\}_M-\left\{h_0;g_{n+1}\right\}
\\
\label{nor.for.4}
  &+h'_0-h_0- \left\{h_0;g_{n+1}\right\}_M
\\
\label{nor.for.5}
  &+z^{(n)\prime}-z^{(n)}
\\
\label{nor.for.6}
  &+v^{\prime}_n-v_n\ .
\end{align}
We use Lemma \ref{lem.hom} to construct $g_{n+1}$ in such a way that
\eqref{nor.for.2} vanishes. Then we define
\begin{align}
  \label{nouvofor}
{z_{n+1}:=\langle v_n\rangle\ ,}\quad  z^{(n+1)}:=z^{(n)}+v_n^{(res)}+z_{n+1}\ ,
  \\
  v_{n+1}:=\eqref{nor.for.3}+\eqref{nor.for.4}+\eqref{nor.for.5}
  +\eqref{nor.for.6}  
\end{align}
and %we include $Op^w(v_n^{(S)})$ in the remainder which is smoothing of order $m-\rho\tN $.
\begin{equation}
w_{n+1} := w'_n + v_n^{(S)}\,.
\end{equation}
We now study the classes of the different lines. We
just compute the order of each term as a symbol in $\Snorma$. The order
of \eqref{nor.for.3} is
$$
\frac{2\ell}{\ell+1}+m-n\rho-\delta-3(\delta_1+\delta_2)=m-n\rho-\rho_1\ ,
$$
with
$$\rho_1=3(\delta_1+\delta_2)+\delta-\frac{2\ell}{\ell+1}=7(\delta-M)+2\geq
\rho\ .
$$
To estimate the order of \eqref{nor.for.4} we remark that, according
to \eqref{asy.moy} it is the same as the order of 
\begin{align}
  \label{nor.for.4.1}
\left\{\left\{h_0;g_{n+1}\right\}_M;g_{n+1}\right\}_M= 
\left\{\left\{h_0;g_{n+1}\right\};g_{n+1}\right\}_M
\\
\label{nor.for.4.2}
+\left\{\left\{h_0;g_{n+1}\right\}_M-\left\{h_0;g_{n+1}\right\};g_{n+1}\right\}_M\ . 
\end{align}
Now, exploiting the definition of $g_{n+1}$, the r.h.s. of
\eqref{nor.for.4.1} is equal to $-\left\{v_n^{(nr)};g_{n+1}\right\}_M
$ whose order is
$$
m-n\rho+m-n\rho-\delta-\delta_1-\delta_2= m-n\rho-\rho_2 
$$
with
$$
\rho_2=\delta+\delta_1+\delta_2-m\geq\rho\ .
$$
Concerning \eqref{nor.for.4.2}, its order is
$$
m-n\rho-\rho_1+m-n\rho-\delta-\delta_1-\delta_2<m-n\rho-\rho_1\ .
$$
Finally the order of \eqref{nor.for.5} is $m-n\rho-\rho_2$ and the
order of \eqref{nor.for.6} is the same as the order of
\eqref{nor.for.4.1}. This concludes the proof. 
\end{proof}

%%%%%%%%%%

\section{Spectral result}\label{sec.qmodi}

In this section we prove the spectral asymptotic \eqref{asym} claimed
in Theorem \ref{maint}.

\begin{definition}
  \label{defomega}
	Let $\varsigma, \delta,\epsilon$ fulfilling \eqref{def.d.ep}
        and \eqref{def.varsigma}, $\tC$ as in \eqref{reticolo.cono}, define
	\begin{equation}\label{def.om.2}
	{\tilde\Omega} := \left\lbrace \ta \in \tC \ |\ |\omega(\ta)\cdot k| \geq 2 \|k\| \|\ta\|^{\delta} \quad \forall\ k\in \Z^2 \textrm{ s.t. } 0<\|k\|< 2\|\ta\|^{\ep} \right\rbrace\,.
	\end{equation}
\end{definition}

The main result of this section is the following 

\begin{theorem}\label{prop.victor.hugo}
	{Given ${\tt R}>0$, let
	\begin{equation}\label{def.om}
	\Omega  := \tilde{\Omega} \cap B_{\tt R}^c\,,
	\end{equation}
	with $\tilde{\Omega}$ as in \eqref{def.om.2}}. There exist a sequence of symbols $\{\tilde{z}_j\}_{j\geq 1}$, $\tilde{z}_j \in \sm{m-\rho j}$, such that {if ${\tt R}>0$ is big enough,}
	for any $\ta \in \Omega$ there exists an eigenvalue $\lambda_{\ta}$ of \eqref{ope1} which admits the asymptotic expansion
	\begin{equation}\label{asym.ripetuta}
	\lambda_\ta \sim \tilde{h}_0(\ta) + \sum_{j \geq 0} \tilde z_j(\ta)\,,
	\end{equation}
	where $\displaystyle{\tilde{h}_0(\ta) \sim \sum_{j \geq0}
          \tilde h_{0, j}(\ta)}$, is the function
        \eqref{cha.2.1}.  
\end{theorem}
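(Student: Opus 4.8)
\textbf{Proof plan for Theorem \ref{prop.victor.hugo}.}

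The plan is to combine the normal form Lemma \ref{lemma.in.forma} with a functional-calculus argument in the spirit of \cite{chaspe}, exploiting the fact that on the set $\tilde\Omega$ the resonant part of the normal form is actually \emph{absent}, so that $H$ is unitarily conjugate (up to an arbitrarily smoothing remainder) to an operator which is a function of the commuting operators $A_1,A_2$ alone. First I would apply Lemma \ref{lemma.in.forma} with $\tN$ as large as we please, obtaining the unitary $\cU_{\tN-1}$ conjugating $H$ to $H_{\tN-1}=\Op(h_{\tN-1}+w_{\tN-1})$ with $h_{\tN-1}=h_0+\langle z^{(\tN-1)}\rangle + z^{(res)}_{\tN-1}+v_{\tN-1}$, where $\langle z^{(\tN-1)}\rangle=\sum_{j=1}^{\tN-1}z_j$ is a function of the actions only, $v_{\tN-1}\in\Snorma^{m-(\tN-1)\rho}$, and $w_{\tN-1}\in\mathcal{SF}_0^{m-\tN\rho}$. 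The key elementary observation is that, by Definition \ref{def.nf}, $\mathrm{supp}\,\widehat{z^{(res)}_{\tN-1}}_k\subseteq\cR_k$, i.e. it is contained in the set where $|\omega(a)\cdot k|\le\|a\|^\delta\|k\|$ and $\|k\|\le\|a\|^\epsilon$; but the definition of $\tilde\Omega$ in \eqref{def.om.2} says exactly that at the lattice points $\ta\in\tilde\Omega$ (and, by continuity of $\omega$ and a standard elementary argument, in a neighborhood of $\ta$ of size $\gtrsim\|\ta\|^{\delta-M}\gg C_0/\|\ta\|$) one has $|\omega(a)\cdot k|\ge\|k\|\|a\|^\delta$ for all relevant $k$, hence $z^{(res)}_{\tN-1}$ vanishes there. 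Therefore, microlocally near the joint spectral point $\ta$, $H_{\tN-1}$ coincides with $\Op\big(h_0+\sum_{j=1}^{\tN-1}z_j+v_{\tN-1}\big)+\Op(w_{\tN-1})$, a function of the actions plus remainders that are smoothing of order $\tN\rho-m$.

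Next I would convert the ``function of the actions'' statement into a statement about the \emph{operators} $A_1,A_2$. Using Lemma \ref{lem.a.dependence}, each $z_j$ equals $z_{j,c}(a(x,\xi))$ on the cone $\cV\cap B_1^c$ with $z_{j,c}\in\sm{m-(j-1)\rho}$; set $\tilde z_j := z_{j,c}$ for $j\ge1$. The point is then that $\Op\big(\sum_j z_{j,c}(a(x,\xi))\big)$ agrees, up to lower order pseudodifferential errors controlled by the symbol calculus (Lemma \ref{lemma composizione 2}) and the fact that $A_i=\Op(a_i)$, with $\big(\sum_j \tilde z_j\big)(A_1,A_2)$ defined by joint functional calculus, exactly as in Charbonnel's treatment of the integrable case (Theorems \ref{cha.1}, \ref{cha.2}). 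Charbonnel's Theorem \ref{cha.1} tells us that the joint spectrum $\Lambda^A$ in $\cC$ is, for large $\ta$, a single point in each ball $B_{C_0/\|\ta\|}(\ta)$, $\ta\in\tC$; on the eigenfunction $\psi_\ta$ (modulo the unitary $\cU_{\tN-1}$) one has $A_i\mapsto\lambda^{(i)}_{\ta_i}$ with $(\lambda^{(1)}_{\ta_1},\lambda^{(2)}_{\ta_2})$ within $C_0/\|\ta\|$ of $\ta$. Evaluating the symbol at this joint eigenvalue, and using Theorem \ref{cha.2} for the contribution $\tilde h_0(\ta)$ coming from $h_0$ (note $\tilde h_{0,0}=h_0$ in action variables, and the higher $\tilde h_{0,j}$ are the quantum corrections already present for $H_0$), one gets that $H$ has an eigenvalue $\lambda_\ta$ with
\begin{equation*}
\lambda_\ta = \tilde h_0(\ta)+\sum_{j\ge1}\tilde z_j(\ta) + O\big(\|\ta\|^{m-\tN\rho}\big)+O\big(\|\ta\|^{-\infty}\text{ from }w\big),
\end{equation*}
where the smoothing remainder $\Op(w_{\tN-1})$ contributes $O(\|\ta\|^{-N'})$ for any $N'$ (since $w_{\tN-1}\in\mathcal{SF}_0^{m-\tN\rho}$ and $\tN$ is arbitrary), and the error from replacing the symbol evaluated at $(x,\xi)$ by its value at the joint eigenvalue $\ta$ is $O(\|\ta\|^{\text{(order)}-1+\delta}\cdot\|\ta\|^{-1})$, which is absorbed. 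Since $\tN$ was arbitrary, letting $\tN\to\infty$ yields the asymptotic expansion \eqref{asym.ripetuta}; the requirement ``${\tt R}$ big enough'' is precisely what makes Charbonnel's asymptotics, Lemma \ref{lem.a.dependence}, and the nonresonance neighborhood argument all simultaneously valid.

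The main obstacle I anticipate is the second step: making rigorous the passage from ``the normal form symbol is a function of the actions microlocally near $\ta$'' to ``$H$ has an \emph{honest eigenvalue} close to the predicted value,'' i.e. the construction of quasimodes and the proof that each quasimode is genuinely close to a single eigenvalue. This requires (i) a cutoff/localization in the joint spectral variable to restrict to the nonresonant region, (ii) control that the cutoff errors are smoothing and do not spoil the expansion, and (iii) a spectral-stability argument — essentially that an approximate eigenvalue of a self-adjoint operator lies within the remainder distance of the true spectrum, combined with Charbonnel's counting estimate \eqref{cha.1.2} to pin it to a \emph{unique} eigenvalue. This is exactly the role played by ``the same ideas developed by \cite{chaspe}'' mentioned in point (4) of the introduction, and it is where the bulk of the technical work (functional calculus for the pair $(A_1,A_2)$, trace/resolvent estimates) will be concentrated. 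The density statement \eqref{density} for $\Omega$ is comparatively routine: it follows from the measure estimate for the complement of $\tilde\Omega$, obtained via degenerate-KAM / homogeneity arguments for the frequency map $\omega$ (Corollary \ref{azioniefrequenze}), counting lattice points $\ta\in\tC\cap B_R$ for which some small $k$ violates the Diophantine-type condition in \eqref{def.om.2}.
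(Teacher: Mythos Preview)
Your proposal is correct and follows essentially the same route as the paper: normal form, kill the resonant part on $\Omega$, convert the action-only part to $f(A)$ via functional calculus, and conclude by a quasimode argument. The obstacle you flag in (i)--(ii) is resolved exactly as you anticipate---one introduces an auxiliary cutoff $\eta_k(a)$ (a function of the actions, equal to $1$ on $\mathrm{supp}\,\hat z^{(res)}_{\tN,k}$), writes $\Op(\hat z^{(res)}_{\tN,k})=\Op(\hat z^{(res)}_{\tN,k})\tilde\eta_k(A)+\text{smoothing}$ via Lemma~\ref{cha.lem.2}, and uses $\tilde\eta_k(A)\psi_{\ta}=0$ for $\ta\in\Omega$; the final step needs only the elementary self-adjoint estimate $\mathrm{dist}(\lambda,\sigma(H))\le\|(H-\lambda)\psi\|/\|\psi\|$ (no uniqueness/counting is required), and note that the $\tilde z_j$ are not literally $z_{j,c}$ but incorporate the lower-order quantum corrections produced by Lemma~\ref{cha.lem.2}.
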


The proof is based on a quasi-mode argument; in particular, our aim is
to prove that the joint eigenfunctions of $A_1$ and $A_2$ defined as
in \eqref{eige} are quasi-modes for the normal form operator $H_{\tN}$. Since $\tN$ is
arbitrary, the result follows. The first property we exploit is the following:
\begin{remark}
  \label{joint}
By the ellipticity of $ {\textrm{Id} + }A_1^2+A_2^2$, the joint
eigenfunctions $\psi_{\ta}$ of $A_1$ and $A_2$ (defined by
\eqref{eige}) satisfy
$$
\left\|\psi_{\ta}\right\|_{s}\lesssim_s {\langle a \rangle}^{s}\ ,\quad \forall s\in\R
$$
and therefore, if $R$ is a smoothing operator,
$$
\left\|R\psi_{\ta}\right\|_{L^2}\lesssim_n \frac{1}{{\langle a \rangle}^{n}}\ ,\quad \forall n\in
\N\ .
$$
\end{remark}

%\begin{remark}\label{rmk.il.resto.e.un.resto}
%	Let $\psi_{\lambda_{\ta}}$ as in Lemma \ref{lem.norme.neg}; for any $n \in \N$, if $R$ is a smoothing operator of order $n$, then
%	$$
%	\| R \psi_{\lambda_{\ta}}\|_0 \leq \| R\|_{0, -n} \| \psi_{\lambda_{\ta}}\|_0 \leq  \| R\|_{0, -n}  |\ta|^{-n}\,.
%	$$
%\end{remark}

The second key property that we exploit for the proof concerns symbols
which are functions of the actions only.

First of all, we give the
following Lemma which is a variant of Theorem 1 of \cite{Cha83}:
\begin{lemma}[Theorem 1 of \cite{Cha83}]\label{cha.lem.1}
	Given $m \in \R$ and $0< \varsigma \leq 1$, let $f \in \sm{m}$, there exists a sequence of symbols
        $\{\check f_j\}_{j \in \N}$ with $\check f_j \in \sm{m-\varsigma j}$
    and $\check f_0 = f$ such that $\forall \tN \in \N$ 
	\begin{equation}
	f(A) = \sum_{0 \leq j < \tN} \Op(\check f_j\circ a) + R_\tN\,,
	\end{equation}
	where $R_\tN$ is a smoothing operator of order {$\varsigma \tN - m$}
        and $f(A)$ is spectrally defined.

        Moreover,
        \begin{itemize}
        \item[(1)]        $\textrm{supp\,}(\check f_j) \subseteq \textrm{supp\,}(f)$
          for all $j$
        \item[(2)] $\forall s$, $\exists \alpha,\beta$ and $C_\tN$, independent of $f$ s.t.
          $$\| R_\tN\|_{s,s - (m-\varsigma \tN)}\leq
          C_\tN\left\| f\right\|^{(m)}_{\alpha,\beta}\,.
          $$
                \end{itemize}
\end{lemma}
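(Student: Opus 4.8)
The plan is to adapt the multivariable functional calculus of Charbonnel (Theorem~1 of \cite{Cha83}), quantifying it for the class $\sm{\cdot}$ and for the weight $\langle a\rangle$, which by Remark~\ref{equia} is equivalent to $\tk_0$. Since $A_1,A_2$ are self-adjoint and commute, their joint functional calculus is well defined, and the starting point is the Dynkin--Helffer--Sj\"ostrand formula in two variables: if $\tilde f\in C^\infty(\C^2)$ is an almost analytic extension of $f$, then
\[
f(A)=c_0\int_{\C^2}\bar\partial_{z_1}\bar\partial_{z_2}\tilde f(z)\,(z_1-A_1)^{-1}(z_2-A_2)^{-1}\,dL(z)\ ,
\]
with $c_0$ a universal constant and $dL$ Lebesgue measure. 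One builds $\tilde f$ in the standard way, by a Borel-type sum cut off at height $|\mathrm{Im}\,z|$; since $f\in\sm m$ this yields, for every $N$,
\[
\big|\bar\partial_{z_1}\bar\partial_{z_2}\tilde f(z)\big|\le C_N\,\langle z\rangle^{\,m-\varsigma N}\,|\mathrm{Im}\,z_1|^{N}\,|\mathrm{Im}\,z_2|^{N}\ ,
\]
with $C_N$ depending only on finitely many seminorms of $f$ and $\bar\partial\tilde f$ supported in an arbitrarily small complex neighbourhood of $\mathrm{supp}\,f$. Since $\|(z_j-A_j)^{-1}\|_{\cH^s,\cH^s}\le|\mathrm{Im}\,z_j|^{-1}$, the integral converges in $\cB(\cH^{s+m},\cH^s)$; the point is to turn it into a pseudodifferential expansion.

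Second, I would construct a parametrix for $(z_1-A_1)(z_2-A_2)$. One uses that $A_j$ has joint symbol $a_j$ (up to lower order terms, and, near the origin where $a_1$ is merely $C^1$, up to a compactly supported error) and that $\langle a\rangle\simeq\tk_0$ is a genuine elliptic weight: the usual symbolic recursion produces $b(\cdot;z)\sim\sum_{j\ge0}b_j(\cdot;z)$ with
\[
b_0(a;z)=\big((a_1-z_1)(a_2-z_2)\big)^{-1}\ ,\qquad b_j(a;z)=q_j(a,z)\,\big((a_1-z_1)(a_2-z_2)\big)^{-(j+1)}
\]
($q_j$ polynomial in $a$), such that $\Op(b(\cdot;z))(z_1-A_1)(z_2-A_2)=\mathrm{Id}-E(z)$ with $E(z)$ smoothing, and with every seminorm of $b_j(\cdot;z)$ and every smoothing norm of $E(z)$ bounded by $C\langle z\rangle^{P}|\mathrm{Im}\,z_1\,\mathrm{Im}\,z_2|^{-Q}$ (exponents depending on the seminorm and on $j$), the constants controlled by the symbols of $A_1,A_2$ and independent of $f$. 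Splitting $b=\sum_{j<\tN}b_j+b_{\ge\tN}$ and inserting this into the Helffer--Sj\"ostrand formula gives the claimed decomposition, with
\[
\check f_j(a):=c_0\int_{\C^2}\bar\partial_{z_1}\bar\partial_{z_2}\tilde f(z)\,b_j(a;z)\,dL(z)\ .
\]

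Third, one verifies the properties. Differentiating $\check f_j$ in $a$ under the integral and using that the infinite vanishing of $\bar\partial\tilde f$ on $\{\mathrm{Im}\,z=0\}$ beats the $|\mathrm{Im}\,z|^{-Q}$ growth of $b_j$, one obtains $|\partial_a^\alpha\check f_j(a)|\le C_\alpha\langle a\rangle^{m-\varsigma j-\varsigma|\alpha|}$, i.e. $\check f_j\in\sm{m-\varsigma j}$, with $C_\alpha$ involving only finitely many seminorms of $f$ — this is exactly the uniform bound~(2). Expanding $q_j$ in powers of $(z_i-a_i)$ and applying, in each complex variable, the Cauchy--Pompeiu identities $\int_\C\bar\partial_w\tilde g(w)(w-\zeta)^{-k-1}dL(w)=c_k\,\partial_\zeta^k g(\zeta)$ and $\int_\C\bar\partial_w\tilde g(w)\,dL(w)=0$, one finds $\check f_0=f$ and, more generally, $\check f_j(a)=\sum_\gamma c_{j,\gamma}(a)\,\partial_a^\gamma f(a)$ with $c_{j,\gamma}$ smooth, which gives $\mathrm{supp}\,\check f_j\subseteq\mathrm{supp}\,f$, i.e.~(1). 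Finally $R_\tN$ is the sum of $\Op(\check r_\tN\circ a)$, $\check r_\tN\in\sm{m-\varsigma\tN}$, produced by $\int\bar\partial_{z_1}\bar\partial_{z_2}\tilde f\,b_{\ge\tN}\,dL$, and of a term coming from $E(z)$ which, by the infinite vanishing of $\bar\partial\tilde f$, is smoothing of every order; hence $R_\tN$ is smoothing of order $\varsigma\tN-m$, with norm bounded by finitely many seminorms of $f$.

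I expect the parametrix step to be the main obstacle: one must build $\Op(b(\cdot;z))$ inside the anisotropic calculus attached to $\tk_0$ (with the two exponents $\delta_1,\delta_2$, and $a_1$ only $C^1$ at the origin) while keeping \emph{sharp} track of the dependence on $|\mathrm{Im}\,z_1\,\mathrm{Im}\,z_2|^{-1}$ and of how many seminorms of $A_1,A_2$ enter — this is precisely what makes the Helffer--Sj\"ostrand integrals converge and produces the uniformity in $f$. Once these estimates are in place the almost analytic extension and the $z$-integration are routine, and the contributions near the origin, where $a_1$ is not $C^\infty$, are absorbed into compactly supported, hence harmless, operators, in keeping with the cutoff conventions used throughout.
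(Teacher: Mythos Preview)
The paper does not give a self-contained proof: it simply records that the statement is a variant of Theorem~1 of \cite{Cha83} and notes that Charbonnel's formula~(3.10) holds in the present setting with $a_{fj}\in\sm{m-j\varsigma}$, and that the remainder estimates of Sect.~4 of that paper carry over. So the paper's argument is a citation together with a one-line check that the symbol classes match.

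Your proposal, by contrast, rebuilds the machinery via a two-variable Helffer--Sj\"ostrand formula and a symbolic parametrix for $(z_1-A_1)(z_2-A_2)$. This is a legitimate route, close in spirit to \cite{Cha83}, and it has the merit of making the support property~(1) and the uniform dependence~(2) transparent through the Cauchy--Pompeiu identities and explicit seminorm bookkeeping. One point, however, needs more care: you write the parametrix terms as $b_j(a;z)=q_j(a,z)\big((a_1-z_1)(a_2-z_2)\big)^{-(j+1)}$ with $q_j$ polynomial in $a$, i.e.\ you assume the whole symbolic recursion stays inside functions of the actions. This is not automatic from the parametrix construction on $\R^4_{x,\xi}$: already the Weyl symbol of $(z_1-A_1)(z_2-A_2)$ carries second-order Moyal corrections involving $\partial_{x,\xi}^2 a_1$, which are not functions of $a$ alone, and the same phenomenon recurs at every step of the recursion. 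What rescues the conclusion is that $f(A)$ commutes exactly with $A_1,A_2$, so by Egorov and Lemma~\ref{lem.a.dependence} its symbol is, order by order, invariant under the flow $\phi^\vf_a$ and hence a function of the actions on the relevant cone. You should make that step explicit; without it the passage from symbols on $\R^4$ to $\check f_j\in\sm{m-\varsigma j}$ (a class of functions on $\R^2$) is a genuine gap in your outline.
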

The proof is easily obtained remarking that in our context formula
(3.10) of \cite {Cha83} holds with $a_{fj}\in\sm{m-j\varsigma}$, which also allows to adapt the
estimate of the remainder done in Sect. 4 of that paper.

As a consequence, one has the following:
\begin{lemma}\label{cha.lem.2}
	Given $0< \varsigma\leq 1$ and $f\in \sm{m}$, consider
        $f\circ a$, then there exists a sequence of symbols
        $\{\tilde{f}_{j}\}_{j \in \N}$ with $\tilde f_j \in
        \sm{m-\varsigma j}$ $\forall j$ and $\tilde{f}_0 = f$ such
        that for any $n \in \N$
	\begin{equation}
	\Op(f\circ a) = \sum_{0\leq j< \tN} \tilde{f}_j(A) + R^\prime_\tN\,,
	\end{equation}
	where $R^\prime_\tN$ is a smoothing operator of order
        {$\varsigma \tN- m$}. Moreover,
        \begin{itemize}
        \item[(1)] 
        $\displaystyle{\textrm{supp\,}(\tilde{f}_j) \subseteq
          \textrm{supp\,}(f)}$ $\forall j$

          \item[(2)] \label{controllo} $\exists \alpha,\beta$ and $C_\tN$, independent of $f$ s.t.
          $$\| R'_\tN\|_{s, s - (m-\varsigma \tN)}\leq
          C_\tN\left\| f\right\|^{(m)}_{\alpha,\beta}\,.
          $$
        \end{itemize}
\end{lemma}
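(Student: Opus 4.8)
The plan is to deduce this from Lemma \ref{cha.lem.1} by reversing the direction of that expansion and then iterating. Applying Lemma \ref{cha.lem.1} to $f\in\sm m$ itself, with $\tN$ terms, and recalling that $\check f_0=f$, one gets
\begin{equation*}
\Op(f\circ a)=f(A)-\sum_{1\le j<\tN}\Op(\check f_j\circ a)-R_\tN\,,
\end{equation*}
where $\check f_j\in\sm{m-\varsigma j}$ and $R_\tN$ is smoothing of order $\varsigma\tN-m$. The leading term is $f(A)=\tilde f_0(A)$ with $\tilde f_0=f$, while each remaining term $\Op(\check f_j\circ a)$, $j\ge1$, is an object of exactly the same form as the left-hand side, but of strictly smaller order $m-\varsigma j$. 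This calls for a recursive scheme.

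Concretely, I would build a finite tree whose root carries $f$: a node carrying a symbol $g\in\sm{m-\varsigma k}$ with $k<\tN$ is expanded by Lemma \ref{cha.lem.1} with $\tN-k$ terms, contributing $(-1)^{\mathrm{depth}}\,g(A)$ to the final expansion and a remainder of order $\varsigma(\tN-k)-(m-\varsigma k)=\varsigma\tN-m$, and it has children carrying the symbols $\check g_i\in\sm{m-\varsigma(k+i)}$, $i\ge1$; a node with $k\ge\tN$ is a leaf, and since $g\circ a\in\Snoi^{m-\varsigma k}$ (by Remark \ref{equia} and the symbol calculus), $\Op(g\circ a)$ is smoothing of order $\varsigma k-m\ge\varsigma\tN-m$ by Lemma \ref{calderon}. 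Since each branching lowers the order by at least $\varsigma$, the tree has depth at most $\tN$ and is finite. I would then set $\tilde f_j$ equal to the finite alternating sum of the node symbols of order exactly $m-\varsigma j$ (so that $\tilde f_0=f$ and $\tilde f_j\in\sm{m-\varsigma j}$), and define $R'_\tN$ as the sum of all the remainders and leaf contributions produced along the tree; by the above $R'_\tN$ is smoothing of order $\varsigma\tN-m$.

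It remains to track the two quantitative claims. Item (2) of Lemma \ref{cha.lem.1} bounds the norm of each remainder by a fixed seminorm of the node symbol $g$; since the map $g\mapsto\{\check g_i\}$ of Lemma \ref{cha.lem.1} sends a bounded set of seminorms of $g$ to a bounded set of seminorms of the $\check g_i$, an induction on depth bounds every node seminorm by a seminorm of $f$, and summing the finitely many contributions gives $\|R'_\tN\|_{s,s-(m-\varsigma\tN)}\le C_\tN\|f\|^{(m)}_{\alpha,\beta}$, which is (2). Property (1) follows by the same induction from item (1) of Lemma \ref{cha.lem.1}, namely $\mathrm{supp}(\check g_i)\subseteq\mathrm{supp}(g)$ at every node, so each $\tilde f_j$ is supported in $\mathrm{supp}(f)$.

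The main obstacle is organizational rather than analytic: one has to arrange the recursion so that it terminates and so that the \emph{total} accumulated remainder has order precisely $\varsigma\tN-m$ (this is what forces the choice of expanding a node of order $m-\varsigma k$ with exactly $\tN-k$ terms), and one has to verify that the combinatorial sum defining each $\tilde f_j$ is a finite sum of symbols of $\sm{m-\varsigma j}$ with seminorms controlled uniformly by those of $f$. Once the bookkeeping is set up, everything reduces to the statements of Lemmas \ref{cha.lem.1} and \ref{calderon} and to Remark \ref{equia}.
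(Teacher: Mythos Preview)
Your argument is correct and rests on the same key input as the paper, namely Lemma~\ref{cha.lem.1}. The difference is purely organizational. The paper runs a \emph{linear} induction: at step $\bar n$ all the yet-unexpanded pieces are bundled into a single symbol $\tau^{(\bar n)}\in\sm{m-\varsigma(\bar n+1)}$, one sets $\tilde f_{\bar n+1}:=\tau^{(\bar n)}$, and a single further application of Lemma~\ref{cha.lem.1} to $\tau^{(\bar n)}$ produces $\tau^{(\bar n+1)}$ and a remainder of order $\ge\varsigma\tN-m$. Your scheme instead applies Lemma~\ref{cha.lem.1} separately to every descendant and then regroups by order, which requires the tree bookkeeping you describe. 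Both routes yield a valid family $\{\tilde f_j\}$ (not the same one: the paper's $\tilde f_j$ is a sum of symbols of orders $\le m-\varsigma j$, whereas yours is a sum of symbols of order exactly $m-\varsigma j$); the paper's version is shorter, yours makes the grading more transparent.

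One small redundancy: with your rule of expanding a node of index $k$ using $\tN-k$ terms, the children have indices $k+1,\dots,\tN-1$, so the ``leaf'' case $k\ge\tN$ you discuss never actually occurs---termination comes simply from the fact that a node with $k=\tN-1$ has no children. You can drop that paragraph (and the appeal to Remark~\ref{equia} and Lemma~\ref{calderon}) without loss.
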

In what follows we will denote $\tilde f\sim\sum_{j\geq 0} \tilde
f_j$. 

\begin{proof}
	The proof is obtained arguing inductively. In particular, fix
        $\tN \in \N$: we prove that for any $\bar{n} \leq \tN$ there
        exist $\tilde{f}_0, \dots, \tilde{f}_{\bar n}$, with $\tilde{f}_j \in \sm{m-\varsigma j}$ $\forall j,$ such that 
	\begin{equation}\label{fj}
{\Op(f \circ a)} - \sum_{0\leq j < \bar{n}} \tilde{f}_j(A) = \Op(\tau^{(\bar{n})}{\circ a}) + R^\prime_{\bar{n}}\,,
	\end{equation}
	where $\tau^{(\bar{n})} \in \sm{m-\varsigma (\bar{n}+1)}$ and $R^\prime_{\bar{n}}$ is a smoothing operator of order {$\varsigma \tN-m$}.
	If $\bar{n}=0$, Lemma \ref{cha.lem.1} implies that \eqref{fj} is satisfied with $\tilde{f}_{0} = f$. Indeed, one has
	$$
	{\Op(f \circ a)} - \tilde{f}_0(A) = {\Op(f \circ a)} - f(A) = - \sum_{1\leq j < \tN} \Op(f_j {\circ a}) - R_\tN\,,
	$$
	which implies \eqref{fj} with $\tau^{(0)} := - \sum_{1\leq j < \tN} f_j \in \sm{m-\varsigma}$ and $R^\prime_0 := R_\tN$.\\
	Suppose now that \eqref{fj} is satisfied for some $\bar{n}\geq 0$; then one chooses $\displaystyle{\tilde{f}^{(\bar{n}+1)} = - \tau^{(\bar{n})}}$ and, again by Lemma \ref{cha.lem.1}, obtains
	\begin{align*}
{\Op(f \circ a)} - \sum_{0\leq j < \bar{n} + 1} \tilde{f}_j(A) &= {\Op(f \circ a)} - \sum_{0\leq j < \bar{n}} \tilde{f}_j(A) + \tau^{(j)}(A)\\
	& = \Op(\tau^{(\bar{n})}{\circ a}) + R^\prime_{\bar{n}} - \sum_{ 0 \leq j \leq \tN} \Op(\tau^{(\bar{n})}_{j}{\circ a}) + R_{\bar{n}}\,,
	\end{align*}
	with $R_{\bar{n}}$ a smoothing operator of order {$\varsigma  \tN-m$}. Thus \eqref{fj} is satisfied at the step $\bar{n}+1$, with $R^\prime_{\bar{n}+1} = R^\prime_{\bar{n}} - R_{\bar{n}}$ and
	$\tau^{(\bar{n}+1)} = - \sum_{ 1 \leq j \leq \tN} \tau^{(\bar{n})}_{j}$.
\end{proof}

By the above Lemma $\psi_{\ta}$ is a quasimode for $\Op (f\circ a)$,
so one immediately gets the following
Lemma

\begin{lemma}
  \label{spettri}
Fix $m \in \R$ and $0< \varsigma \leq 1$. Suppose $F = \Op(f)$ is a
self-adjoint operator whose symbol $f\in \Snoi^{m}$ is a function
of the actions only. Then there exists a sequence $\tilde f_{j} \in
\sm{m-\varsigma j}$ of functions with
$\tilde f_0 = {f_c}$ such that, for any $\ta \in \left(\Z^2 + \kappa\right) \cap
\cC$ {sufficiently large}, there exists an eigenvalue $ \lambda_\ta$ of $F$ fulfilling
\begin{equation}\label{7.6}
\lambda_{\ta} \sim \tilde f(\ta) =\sum_{j\geq 0}\tilde f_{j} (\ta)\,.
\end{equation}
\end{lemma}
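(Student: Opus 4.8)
\textbf{Proof proposal for Lemma \ref{spettri}.}

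The plan is to reduce the statement about the operator $F=\Op(f)$ to the previous Lemma \ref{cha.lem.2} applied to $f_c$, using the two structural facts already at our disposal: first, that $f$ agrees with $f_c\circ a$ on the relevant region of phase space (Lemma \ref{lem.a.dependence}), and second, that the joint eigenfunctions $\psi_\ta$ of $A_1,A_2$ are concentrated, modulo smoothing operators, near the point $\ta$ of the lattice $\tC$ (Remark \ref{joint}). The combination of these tells us that $F\psi_\ta$ is, up to a negligible error, the same as $\Op(f_c\circ a)\psi_\ta$, and then Lemma \ref{cha.lem.2} diagonalizes the latter in terms of the spectral family $\tilde f_j(A)$.

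In more detail, I would proceed as follows. First, write $f = f_c\circ a + r$, where by Lemma \ref{lem.a.dependence} the remainder $r$ is supported in the region where $a(x,\xi)\notin\cV\cap B_1^c$, i.e.\ in a conical neighborhood of the complement of $\cV$ together with a bounded set; since $\psi_\ta$ is, for $\ta$ in the cone $\cC$ (whose closure lies in the interior of $\cV$) and $\|\ta\|$ large, microlocally concentrated on $a\approx\ta$, the operator $\Op(r)$ is smoothing when tested against $\psi_\ta$, so by Remark \ref{joint} $\|\Op(r)\psi_\ta\|_{L^2}\lesssim_n\langle a\rangle^{-n}$ for every $n$. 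Second, apply Lemma \ref{cha.lem.2} to $f_c\in\sm m$: this yields symbols $\tilde f_j\in\sm{m-\varsigma j}$ with $\tilde f_0=f_c$ and $\Op(f_c\circ a)=\sum_{0\le j<\tN}\tilde f_j(A)+R'_\tN$ with $R'_\tN$ smoothing of order $\varsigma\tN-m$; again by Remark \ref{joint}, $\|R'_\tN\psi_\ta\|_{L^2}\lesssim\langle a\rangle^{-(\varsigma\tN-m)}$. Third, since each $\tilde f_j(A)$ is a function of the commuting operators $A_1,A_2$ defined spectrally, we have $\tilde f_j(A)\psi_\ta=\tilde f_j(\lambda^{(1)}_{\ta_1},\lambda^{(2)}_{\ta_2})\psi_\ta$; by Theorem \ref{cha.1} the joint eigenvalue $(\lambda^{(1)}_{\ta_1},\lambda^{(2)}_{\ta_2})$ lies within $C_0/\|\ta\|$ of $\ta$, and by the symbol estimate \eqref{semi} the difference $\tilde f_j(\lambda^{(1)}_{\ta_1},\lambda^{(2)}_{\ta_2})-\tilde f_j(\ta)$ is $O(\langle\ta\rangle^{m-\varsigma j-1}\|\ta\|^{-1})$, which is absorbed into the asymptotic series. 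Collecting, $F\psi_\ta = \big(\sum_{0\le j<\tN}\tilde f_j(\ta)\big)\psi_\ta + O_{L^2}(\langle\ta\rangle^{m-\varsigma\tN})\|\psi_\ta\|_{L^2}$, so $\psi_\ta$ is a quasimode for $F$ with quasi-eigenvalue $\sum_{0\le j<\tN}\tilde f_j(\ta)$ and error decaying faster than any power as $\tN$ grows.

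Finally, since $F$ is self-adjoint, a standard quasimode-to-eigenvalue argument (the spectral theorem: if $\|(F-\mu)\psi\|\le\eta\|\psi\|$ then $\mathrm{dist}(\mu,\mathrm{spec}(F))\le\eta$) produces, for each $\tN$, an eigenvalue of $F$ within $O(\langle\ta\rangle^{m-\varsigma\tN})$ of $\sum_{0\le j<\tN}\tilde f_j(\ta)$; one then needs to check that for different $\tN$ these eigenvalues can be taken to be the \emph{same} one $\lambda_\ta$ — this is done as in \cite{chaspe}, exploiting that the quasimodes $\psi_\ta$ for distinct $\ta$ are orthogonal and that the spacing of the lattice points is bounded below, so the balls do not overlap and the assignment $\ta\mapsto\lambda_\ta$ is unambiguous for $\|\ta\|$ large. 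The main obstacle I anticipate is precisely this last bookkeeping step: ensuring the eigenvalue selected is independent of the truncation order $\tN$ and genuinely distinct for distinct $\ta$, which requires a counting/orthogonality argument rather than a soft estimate; everything before it is a routine concatenation of the cited lemmas. $\blacksquare$
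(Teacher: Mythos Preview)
Your overall strategy coincides with the paper's: reduce to $f_c\circ a$, invoke Lemma~\ref{cha.lem.2}, and conclude by a quasimode argument. The gap lies in the localization step.

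You decompose $f=f_c\circ a+r$ with $r$ supported where $a(x,\xi)\notin\cV\cap B_1^c$, and then assert that $\|\Op(r)\psi_\ta\|_{L^2}\lesssim_n\langle\ta\rangle^{-n}$ ``by Remark~\ref{joint}''. But Remark~\ref{joint} applies only to \emph{smoothing} operators, and $\Op(r)$ is not one: $r\in\Snoi^m$ is a genuine order-$m$ symbol, merely supported away from the cone. Your intuition that $\psi_\ta$ is ``microlocally concentrated on $a\approx\ta$'' is correct, but nothing in Remark~\ref{joint} (which only bounds $\|\psi_\ta\|_s$) supplies it. As written, this step is circular.

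The paper closes precisely this gap by working at the operator level. It introduces a cutoff $\Psi\in\sm0$ equal to $1$ on $\cC$ and supported in $\cV$, so that for $\ta\in\tC$ large one has $\psi_\ta=(1-\chi(\|A\|))\Psi(A)\psi_\ta$ \emph{exactly}. Lemma~\ref{cha.lem.1} converts $(1-\chi(\|A\|))\Psi(A)$ into a pseudodifferential operator whose full symbol is supported in $\cV\cap\{\|a\|\ge\tfrac12\}$; composing with $\Op(f)$ via the symbolic calculus and invoking Lemma~\ref{lem.a.dependence}, one may replace $f$ by $f_c\circ a$ on that support, and the remainders are now genuinely smoothing so that Remark~\ref{joint} applies legitimately. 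The functional-calculus cutoff is what turns your heuristic microlocal concentration into an honest operator identity; your symbol-level splitting $f=f_c\circ a+r$ is the same idea, but without inserting $\Psi(A)$ you cannot conclude.

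Your handling of the passage from $\tilde f_j(A)\psi_\ta$ to $\tilde f_j(\ta)\psi_\ta$ via Theorem~\ref{cha.1} is in fact more explicit than the paper's (which silently absorbs this into the running remainder $R_N$), and your final worry about selecting a single $\lambda_\ta$ independent of the truncation level $\tN$ is legitimate but equally glossed over in the paper; neither is a defect of your argument.
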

\proof {  Let $\Psi\in\sm 0$ be a cutoff function equal
  to 1 on $\cC$, with support contained in $\cV$ which is homogeneous
  of degree zero. {By Lemmas \ref{cha.lem.1} \ref{lem.a.dependence},  and \ref{cha.lem.2}, for $\ta\in\tC$ sufficiently large one has}
  \begin{equation}\label{opf.sp}
  \begin{aligned}
    \Op (f) \psi_{\ta}
    &=    \Op (f)(1-\chi(\|A\|)) \Psi(A)\psi_{\ta}
\\    &= \Op\left({f\sharp
 (1-\check \chi(\|a\|))\sharp     \check \Psi(a)}\right)\psi_{\ta}+R_N\psi_{\ta}
    \\
    &= \Op\left({ f_c\sharp
      (1-\check \chi(\|a\|))\sharp     \check\Psi\circ a}\right)\psi_{\ta}+R_N\psi_{\ta}
    \\
&=\Op(f_c {(a)})
(
      (1-\chi(\|A\|)))\Psi(A)\psi_{\ta}+R_N\psi_{\ta}
\\ &=    \Op(f_{c} {(a)}) \psi_{\ta}+R_N\psi_{\ta} = {\widetilde{f_c}}(\ta)\psi_{\ta}+R_N\psi_{\ta}\ ,
  \end{aligned}
  \end{equation}
 with $R_N$ a regularizing operator which changes from line to line.
 From this equation, by a quasimode argument the thesis immediately
 follows (in the statement, we just omitted the index $c$ from
 {$\tilde f$}). \qed

We come to the proof of Theorem \ref{prop.victor.hugo}:
\begin{proof}[Proof of Theorem \ref{prop.victor.hugo}] We apply
  again a quasimode argument to $H_\tN$. {We start with observing that, up to a pseudo-differential operator of order $m-\tN\rho$, $H_\tN$ has symbol given by 
	$$
	f + z^{(res)}_\tN\,, \quad f := {h}_0({a}) + \sum_{j=1}^{\tN} {z}_j({a})\,.
	$$ With the notations of Lemma \ref{spettri},} we aim at proving that, for
  $\ta\in\Omega$ sufficiently large,
	\begin{equation}\label{hN}
	\left\| H_\tN \psi_{{\ta}} - \left(\tilde{h}_0({\ta}) + \sum_{j=1}^{\tN} \tilde{z}_j({\ta})\right) \psi_{{\ta}} \right\| \lesssim_{\tN} \|\ta\|^{m-\tN\rho}\,,
	\end{equation}
so that there exists an eigenvalue $\lambda_{\ta}$ of $H_\tN$ fulfilling
	$$
	\left| \lambda_{\ta} - \tilde{h}_0(\lambda_{\ta}) - \sum_{j=1}^{\tN} \tilde{z}_j(\lambda_{\ta})\right| \lesssim_{\tN} \|\ta\|^{m-\tN\rho}\,.
	$$
	{First we focus on the normal form term $z^{(res)}_\tN$.} Define 
	\begin{equation}\label{psik}
	\eta_k(a) := \chi \left(\frac{\omega(a) \cdot k}{2 \|k\| \|a\|^\delta}\right) \left(1-\chi\left(\frac{\|k\|}{2 \|a\|^{\ep}}\right)\right)\,;
	\end{equation}
	then, arguing as in Lemmas \ref{stima g chi f psi} and \ref{stima tilde} we have that $\eta_k(a) \in \Snoi^{0},$ with seminorms that are uniformly bounded in $k$, and
	\begin{equation}\label{d.1}
	\begin{gathered}
	z^{(res)}_\tN = \sum_{k \neq 0} \hat{z}^{(res)}_{\tN, k} =  \sum_{k \neq 0} \hat{z}^{(res)}_{\tN, k} \eta_k = \sum_{k \neq 0} \left(\hat{z}^{(res)}_{\tN, k}\sharp \eta_k + r^{(2)}_{\tN, k}\right) \,,\\
	r^{(2)}_{\tN, k} \in \Snoi^{-\bar{n}} \quad \forall \bar{n} \in \N\,,
	\end{gathered}
	\end{equation}
	where the second equality in \eqref{d.1} is due to the fact that
	$\eta=1$ on the support of $\hat z^{(res)}_{\tN,k}$, and the {third} equality is due to Lemma \ref{lemma composizione 2}
	and to the fact that also the derivatives of $\eta$ vanish on the support
	of $\hat z^{(res)}_{\tN,k}$.
	
	It follows that %for any $\bar{n} \in \N$
	\begin{align*}
	%Op(z^{(res)}_n)=\sum_{k}Op(\hat z_k)Op(\eta_k)+R_{1,k} =\sum_{k}Op(\hat z_k)\left(\tilde\eta_k(A)+R_{2,k}\right)+R_{1,k} =
	\Op(z^{(res)}_\tN)=\sum_{k}\left(\Op(\hat
	z_k)\Op(\eta_k)+R^{(2)}_{\tN,k}\right)
	\\
	=\sum_{k} \left(\Op(\hat z_k)\left(\tilde\eta_k(A)+R^{(3)}_{\tN,k}\right)+ R^{(2)}_{\tN,k}\right)\,,
	\end{align*}
	with $\tilde\eta_k$ a function with the same support of $\eta_k$ and
	with uniformly bounded norms and $R^{(2)}_{\tN,k}$, $R^{(3)}_{\tN,k}$ smoothing
	operator with norms which are respectively summable and uniformly
	bounded in $k$, due to estimates \eqref{stime2} and \eqref{controllo} respectively.
	{By the very definition of the set $\Omega$, one has
	$\eta(A) \psi_\ta = 0 $
	for any $\ta \in \Omega$, thus
	\begin{equation}\label{reg.z}
	z^{(res)}_\tN\psi_{\ta} = R^\prime_\tN \psi_{\ta}
	\end{equation}
	for some smoothing operator $R^\prime_\tN$. Then combining \eqref{reg.z} and equation \eqref{opf.sp}, with $f = h_0(a) + \sum_{j=1}^{\tN} z_j(a),$ we obtain that there exists $R_\tN \in \Snoi^{m-\tN\rho}$ such that
	$$
	H_\tN \psi_{\ta} =  \left(\tilde{h}_0({\ta}) + \sum_{j=1}^{\tN} \tilde{z}_j({\ta})\right) \psi_{{\ta}} + R_{\tN} \psi_{\ta}\,,
	$$
	which implies \eqref{hN}.
	}
\end{proof}

\section{Cardinality estimates}\label{misura} 

\subsection{Nondegenerate homogeneous frequency
  maps}
In this section we prove that the non resonant set $\Omega$ {defined
in \eqref{def.om}} is of density one in $\tC$, {thus concluding the proof of Theorem \ref{maint}.} More precisely, we
prove that the complementary of $\Omega$ has density zero.

The strategy consists in reducing the estimate of the cardinality of
sets to measure estimates. Then we have to estimate the measure of
resonant sets. To this end we exploit the homogeneity of the
nonresonance condition with respect to $a$ in order to reduce the estimate of
their measure to a
measure estimate on the intersection of $\cC$ with the unit
sphere. Finally, the estimate on the unit sphere is done exploiting
the tools developed in the context of degenerate KAM theory, in
particular by R\"ussmann.\\
First of all, define
$$
\Gamma := \Z^2 + \kappa\,.
$$
We start by defining the resonant sets and the ``cutoff sets''.
\begin{align}
  \label{sigma}
\Sigma_k(\gamma):=\left\{ a\in\cC\ :\ \left|\omega(a)\cdot
k\right|\leq \gamma \|k\|\|a\|^{\delta}\right\}
\\
\label{tks}
\cT_k(\gamma):=\left\{a\in\cC \ :\ \|k\|< {\gamma}\|a\|^{\epsilon}\right\}
\ ,
\\
\label{sigma2}
\Sigma(\gamma):=\bigcup_{k\in\Z^2\setminus\left\{0\right\}}
\left(\Sigma_k(2)\cap\cT_k(2)\right) 
\ .
\end{align}
The main result of this section is the following Theorem. 
\begin{theorem}
  \label{main.mea}
  Assume $\delta>M-1$, then $\exists C>0$, $\mu_0\in\N$ s.t. for $R$
  large enough, one has 
  $$
\#\left(\Gamma\cap\Sigma(\gamma)\cap B_R\right)\leq
C\gamma^{1/\mu_0}\frac{R^2}{R^{\frac{M-\delta}{\mu_0}-2\epsilon}} \ .
  $$
\end{theorem}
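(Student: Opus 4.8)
The plan is to reduce the lattice-point count to a measure estimate, and then to estimate the measure by exploiting the homogeneity of $\omega$ together with a quantitative non-degeneracy (transversality) statement on the unit sphere. First I would recall that $\omega = (\omega_1, \omega_2)$ is homogeneous of degree $M$ in $a$ (Corollary \ref{azioniefrequenze}). Fix $k \in \Z^2 \setminus \{0\}$ and write $\hat{k} := k/\|k\|$, so that the resonance condition defining $\Sigma_k(2)$ reads $|\omega(a)\cdot \hat{k}| \leq 2 \|a\|^{\delta}$. If I put $a = r\theta$ with $r = \|a\| > 0$ and $\theta \in \mathcal{C} \cap S^1$, then $|\omega(a)\cdot\hat k| = r^M |\omega(\theta)\cdot \hat k|$, so the condition becomes $|\omega(\theta)\cdot \hat k| \leq 2 r^{\delta - M}$. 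Since $\delta < M$ this is a genuine smallness condition on $\theta$ for $r$ large, and the angular ``bad set'' shrinks like $r^{\delta-M}$.

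The key analytic input is a quantitative transversality estimate: there exist $\mu_0 \in \N$ and $c_0 > 0$ such that for every unit vector $\hat k$,
\begin{equation}
\label{russi.est}
\big|\{\theta \in \mathcal{C}\cap S^1 \ :\ |\omega(\theta)\cdot \hat k| \leq \sigma\}\big| \leq c_0\, \sigma^{1/\mu_0}\quad \forall\, \sigma>0\,.
\end{equation}
This is exactly the kind of statement furnished by degenerate KAM theory à la R\"ussmann (\cite{Rus01}, see also \cite{BBM11}): one must check that the map $\theta \mapsto \omega(\theta)$ on $\overline{\mathcal C}\setminus\{0\}$ is not ``flat'' along any direction, i.e. that for each $\hat k$ the function $\theta \mapsto \omega(\theta)\cdot \hat k$ does not vanish identically together with all its derivatives up to some finite order $\mu_0$; since $\omega$ is real analytic on the interior of $\Pi$ (Lemma \ref{azioni}, Corollary \ref{azioniefrequenze}) and $\overline{\mathcal C}\setminus\{0\}$ is a compact set in that interior, it suffices to rule out that $\omega$ takes values in a hyperplane through the origin on a full-dimensional set — this is where I would invoke that $\omega$ is a non-constant homogeneous map and use the explicit/analytic structure of $h_0$. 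I expect this transversality verification to be the main obstacle, and the place where one actually has to work with the concrete anharmonic oscillator rather than soft arguments.

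Granting \eqref{russi.est}, the rest is bookkeeping. For a fixed $k$, the set $\Sigma_k(2)\cap \mathcal{T}_k(2)\cap B_R$ is nonempty only if $\|k\| < 2 R^{\epsilon}$ (from $\mathcal{T}_k(2)$ and $\|a\| \le R$), so only $\mathcal{O}(R^{2\epsilon})$ values of $k$ contribute. For each such $k$, by \eqref{russi.est} with $\sigma = 2 r^{\delta - M}$ and integration in $r \in (0,R]$ in polar coordinates, I get
$$
\big|\Sigma_k(2)\cap \mathcal{T}_k(2)\cap B_R\big| \lesssim \int_0^R r\cdot r^{(\delta-M)/\mu_0}\,dr \lesssim R^{2 + (\delta - M)/\mu_0} = \frac{R^2}{R^{(M-\delta)/\mu_0}}\,.
$$
Summing over the $\mathcal{O}(R^{2\epsilon})$ relevant $k$ gives $|\Sigma(2)\cap B_R| \lesssim R^{2+2\epsilon}/R^{(M-\delta)/\mu_0}$. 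Finally, to pass from measure to lattice-point count for $\Gamma = \Z^2 + \kappa$: since each $\Sigma_k(2)$ is defined by a smooth constraint with controlled derivatives, I would fatten the set by a fixed distance $O(1)$ (which changes the measure bound only by a constant, as long as $R$ is large), observe that $\Gamma$ has bounded density, and conclude $\#(\Gamma\cap\Sigma(2)\cap B_R) \lesssim |\Sigma(2)\cap B_{R+O(1)}| \lesssim R^{2+2\epsilon}/R^{(M-\delta)/\mu_0}$. Tracking the $\gamma$-dependence through \eqref{russi.est} (which contributes $\gamma^{1/\mu_0}$) and through $\mathcal{T}_k(\gamma)$ (absorbed into constants) yields the stated bound with the factor $\gamma^{1/\mu_0}$. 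The hypothesis $\delta > M - 1$ enters to guarantee $\varsigma = 1 - (M-\delta) > 0$ and that the various exponents are in the admissible range; I would double-check it is used precisely where the symbol classes and the range of $\delta$ in \eqref{delta}, \eqref{def.d.ep} require it.
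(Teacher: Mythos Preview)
Your proposal is correct and follows essentially the same route as the paper: reduce cardinality to measure by fattening (the paper's Remark \ref{palleenumeri} and Lemmas \ref{tkss}, \ref{sigmar}), use homogeneity of $\omega$ to reduce to the unit sphere (Lemma \ref{il nostro}), invoke R\"ussmann's transversality estimate after checking weak nondegeneracy (Lemma \ref{rus2}, Theorem \ref{rus}, Corollary \ref{suSfere}, Lemma \ref{ferqnondeg}), and integrate in polar coordinates summing over the $O(R^{2\epsilon})$ relevant $k$'s (Lemma \ref{misura.1}).

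One point to correct: you misattribute the role of the hypothesis $\delta > M-1$. It has nothing to do with symbol classes here; it enters precisely in the fattening step you glossed over with ``which changes the measure bound only by a constant, as long as $R$ is large''. Since $d\omega$ is homogeneous of degree $M-1$, moving $a$ by $O(1)$ perturbs $\omega(a)\cdot \hat k$ by $O(\|a\|^{M-1})$, and for the $O(1)$-fattening $\Sigma_k(\gamma)^{(r)}$ to remain inside a set of the form $\Sigma_k(C\gamma)$ one needs $\|a\|^{M-1} \lesssim \|a\|^{\delta}$, i.e.\ $\delta > M-1$. This is exactly the content of Lemma \ref{sigmar} in the paper, and it is the only place in this section where that hypothesis is used.
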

The rest of the section is devoted to the proof of this theorem.

First, we fix a large $R_0$ and we will work in the ball in the action
space {centered at the origin and} with radius $R$ larger than $R_0$: $R\geq R_0$. In the following
we will assume that $R_0$ is as large as needed. Following \cite{BLMnr}, given $r \in\R^+$ and a set $\cA$, we define
$$
\cA^{(r)}:=\bigcup_{{\ta}\in\cA}B_r({\ta})\ ,
$$
so that we have the following remark
\begin{remark}
  \label{palleenumeri}[Remark 5.12 of \cite{BLMnr}] Let $\cA$ be a
  set and let $r<1/2$, then
$$
\#(\cA\cap\Gamma) \leq \frac{\left|\cA^{(r)}\right|}{\left|{B_r}\right|}\ .
$$
\end{remark}

\begin{lemma}
  \label{tkss}
  $$
\cT^{(r)}_k(2)\subset \cT_k\left(1\right)\ .
  $$
\end{lemma}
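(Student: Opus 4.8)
The plan is to prove the inclusion by simply unfolding the two definitions and reducing everything to an elementary estimate on norms; no nonresonance or measure‑theoretic input is needed here. By the definition of the fattening $\cA^{(r)}=\bigcup_{\ta\in\cA}B_r(\ta)$, a point $a$ belongs to $\cT^{(r)}_k(2)$ exactly when there is $\ta\in\cT_k(2)$ with $|a-\ta|\leq r$; in particular $\|a\|\geq\|\ta\|-r$ (and $a$ stays in the cone, which is clear and in any case immaterial for the sequel). Thus the whole content is that the lower bound on $\|\ta\|$ provided by $\ta\in\cT_k(2)$, after subtracting the loss $r$, still forces the (weaker) defining inequality of $\cT_k(1)$ at $a$.

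To carry this out, I would first rewrite the membership conditions of \eqref{tks} as explicit lower bounds on the norm: $\ta\in\cT_k(2)$ amounts to $\|\ta\|>2^{1/\epsilon}\|k\|^{1/\epsilon}$, while $a\in\cT_k(1)$ amounts to $\|a\|>\|k\|^{1/\epsilon}$. Since $k\in\Z^2\setminus\{0\}$ we have $\|k\|\geq 1$, hence $\|k\|^{1/\epsilon}\geq 1$; and since $0<\epsilon<1$ (recall \eqref{delta}) we have $1/\epsilon>1$, hence $2^{1/\epsilon}\geq 2$, so that $(2^{1/\epsilon}-1)\|k\|^{1/\epsilon}\geq 1$. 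Using moreover that $r<\tfrac12$ is in force throughout this subsection (as in Remark~\ref{palleenumeri}), we obtain
\[
\|a\|\ \geq\ \|\ta\|-r\ >\ 2^{1/\epsilon}\|k\|^{1/\epsilon}-r\ \geq\ \|k\|^{1/\epsilon}+1-r\ >\ \|k\|^{1/\epsilon}\,,
\]
which is precisely the defining inequality of $\cT_k(1)$ at the point $a$. As $a\in\cT^{(r)}_k(2)$ was arbitrary, this gives $\cT^{(r)}_k(2)\subset\cT_k(1)$.

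I do not expect any genuine obstacle. The single point deserving a moment's care is keeping track of the exponent $1/\epsilon$: it is $>1$ precisely because $\epsilon<1$, and this is exactly what makes the threshold defining $\cT_k(2)$ larger than the one defining $\cT_k(1)$ by a factor bounded below by an absolute constant ($\geq 2$), so that the loss $r<\tfrac12$ produced by passing to the $r$‑neighbourhood is absorbed. This lemma is the bookkeeping device that, in the proof of Theorem~\ref{main.mea}, lets one replace the fattened cutoff set $(\cT_k(2)\cap B_R)^{(r)}$ by the clean set $\cT_k(1)\cap B_{R+r}$, and hence confine the frequencies $k$ that can contribute to the range $\|k\|\lesssim R^{\epsilon}$.
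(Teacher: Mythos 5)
Your proof is correct and is essentially the paper's own argument: triangle inequality, $\|k\|\geq 1$ together with $1/\epsilon>1$, and $r<1/2$ to absorb the loss. One remark: like the paper's own proof, you unfold $\ta\in\cT_k(2)$ as $2\|k\|\leq\|\ta\|^{\epsilon}$ (hence $\|\ta\|\geq 2^{1/\epsilon}\|k\|^{1/\epsilon}$), which is what the lemma needs but is not the literal condition $\|k\|<2\|\ta\|^{\epsilon}$ of \eqref{tks} — under the literal reading one has $\cT_k(1)\subset\cT_k(2)$ and the claimed inclusion could not hold, so the definition (not your argument) is what needs adjusting.
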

\proof By definition
$$
\cT^{(r)}_k(2)=\left\{a\ : \exists \tilde a\ :\ \|a-\tilde a\|\leq
r\ ;\ \|\tilde a\|^\epsilon\geq 2 \|k\| \right\}\ .
$$
We study $\|a\|$. One has
$$
\left\|a\right\|\geq \left\|\tilde a\right\|-\|a-\tilde a\|\geq
2 \|k\|^{1/\epsilon}-r=\|k\|^{1/\epsilon}\left(2-\frac{r}{\|k\|^{1/\epsilon}}\right)
\ ,
$$
but the parenthesis is larger than $1$, as it is easy
to verify using $r< 1/2$.\qed

\begin{lemma}
  \label{sigmar}
  Define
  $$
C:=\sup_{\|a\|=1,a\in\cC}\left\|{d\omega(a)}\right\|\ ,
  $$
assume 
\begin{equation}
  \label{sua}
\|a\|\geq
 \frac{1}{2}\left(\frac{2rC}{\gamma}\right)^{\frac{1}{\delta+1-M}}
\end{equation}
then
$$
\Sigma_k^{(r)}(\gamma)\subset
\Sigma_k\left(\tilde{\gamma}\right)\,, \quad \tilde \gamma := \frac{\gamma}{2^{\delta+1}}\ .
$$
\end{lemma}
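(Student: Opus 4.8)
The plan is to derive the inclusion from a routine perturbative (Lipschitz-type) estimate on the frequency map, whose only structural input is the homogeneity of $\omega$. First I would record that, by Corollary \ref{azioniefrequenze}, $\omega$ is homogeneous of degree $M$, so its differential $d\omega$ is homogeneous of degree $M-1$; rescaling an arbitrary point of the cone to the unit sphere gives
\[
\|d\omega(b)\|\leq C\,\|b\|^{M-1}\qquad\text{for all }b\in\cC,
\]
with $C$ exactly the constant in the statement. Then I would fix $k\in\Z^2\setminus\{0\}$ and take $a\in\Sigma_k^{(r)}(\gamma)$; by definition of the $r$-neighbourhood there is $\tilde a$ with $\|a-\tilde a\|\leq r$ and $|\omega(\tilde a)\cdot k|\leq\gamma\,\|k\|\,\|\tilde a\|^{\delta}$.

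The core step is the mean value estimate along the segment $[\tilde a,a]$:
\[
\bigl|\omega(a)\cdot k-\omega(\tilde a)\cdot k\bigr|\leq\Bigl(\sup_{b\in[\tilde a,a]}\|d\omega(b)\|\Bigr)\,\|a-\tilde a\|\,\|k\|\leq C\,(\|a\|-r)^{M-1}\,r\,\|k\|,
\]
where the last inequality uses that $t\mapsto t^{M-1}$ is decreasing, since $M-1<0$. Because the size restriction \eqref{sua} places us in the regime $\|a\|\gg r$, one has $\|a\|-r\simeq\|a\|$ and $\|\tilde a\|\simeq\|a\|$, so both $|\omega(\tilde a)\cdot k|$ and the error term can be rewritten against $\|k\|\,\|a\|^{\delta}$: the first contributes a bounded multiple of $\gamma\,\|k\|\,\|a\|^{\delta}$ (here $\delta>0$ is used, to pass from $\|\tilde a\|^{\delta}$ to $\|a\|^{\delta}$), while the error contributes a multiple of $r\,\|a\|^{M-1-\delta}\,\|k\|\,\|a\|^{\delta}$, whose exponent $M-1-\delta$ is negative thanks to $\delta>M-1$. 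Hence the error term decays as $\|a\|\to\infty$, and the lower bound \eqref{sua} on $\|a\|$ is chosen precisely so that it is dominated by the main term, up to the explicit power of $2$ recorded in $\tilde\gamma=\gamma/2^{\delta+1}$; putting the two contributions together gives $|\omega(a)\cdot k|\leq\tilde\gamma\,\|k\|\,\|a\|^{\delta}$, i.e.\ $a\in\Sigma_k(\tilde\gamma)$, which is the asserted inclusion.

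There is no deep obstacle here; the two points that genuinely require attention are both minor. First, the mean value segment $[\tilde a,a]$ must remain in a cone on which the homogeneity bound for $d\omega$ holds: since $\cC$ is open one slightly enlarges it if necessary, which is costless because $r$ is small relative to $\|a\|$ in the region where the lemma is applied. Second, one must be careful with the bookkeeping of the multiplicative constants (the powers of $2$, and the fact that $M-1-\delta<0$ only because of the hypothesis $\delta>M-1$): this is exactly what the explicit form of the threshold \eqref{sua} and of $\tilde\gamma$ are tailored for, and it is the only step of the proof that is not purely mechanical. Everything else — the homogeneity of $d\omega$, the mean value inequality, and the comparisons $\|a\|-r\simeq\|a\|\simeq\|\tilde a\|$ — is routine.
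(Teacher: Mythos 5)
Your setup --- homogeneity of degree $M-1$ for $d\omega$, the mean value inequality along the segment $[\tilde a,a]$, and the observation that \eqref{sua} is exactly the condition making $C\,r\,\|a\|^{M-1}$ small compared with $\gamma\,\|a\|^{\delta}$ --- is the same machinery the paper uses. The problem is the final assembly. From $|\omega(\tilde a)\cdot k|\leq\gamma\|k\|\|\tilde a\|^{\delta}$ and $|\omega(a)\cdot k-\omega(\tilde a)\cdot k|\leq E$ with $E\geq 0$, the best you can conclude is $|\omega(a)\cdot k|\leq\gamma\|k\|\|\tilde a\|^{\delta}+E$. Since $\|\tilde a\|\geq\|a\|-r$ and $r$ is tiny compared with $\|a\|$ in the regime \eqref{sua}, the first summand alone is already at least of order $\gamma\|k\|\|a\|^{\delta}/2^{\delta}$, which strictly exceeds the target $\tilde\gamma\|k\|\|a\|^{\delta}=\gamma\|k\|\|a\|^{\delta}/2^{\delta+1}$; adding a nonnegative error only makes the bound larger. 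An upper bound cannot improve under perturbation, so the step ``putting the two contributions together gives $|\omega(a)\cdot k|\leq\tilde\gamma\|k\|\|a\|^{\delta}$'' is not valid. Indeed, with the definition \eqref{sigma} as printed the asserted inclusion is simply false: any $a$ with $\tilde\gamma\|k\|\|a\|^{\delta}<|\omega(a)\cdot k|\leq\gamma\|k\|\|a\|^{\delta}$ lies in $\Sigma_k(\gamma)\subset\Sigma_k^{(r)}(\gamma)$ but not in $\Sigma_k(\tilde\gamma)$. What your computation actually establishes is $\Sigma_k^{(r)}(\gamma)\subset\Sigma_k(2^{\delta+1}\gamma)$, i.e.\ a \emph{larger} constant --- which, incidentally, is the inclusion that the counting argument of Theorem \ref{main.mea} really requires.

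The paper's own proof runs the estimate in the opposite direction: it takes $\bar a$ satisfying the \emph{lower} bound $|\omega(\bar a)\cdot k|\geq\gamma\|k\|\|\bar a\|^{\delta}$ (i.e.\ it silently reads membership in $\Sigma_k(\gamma)$ with the inequality reversed relative to \eqref{sigma}) and shows that under \eqref{sua} the increment $|d(\omega\cdot k)(a-\bar a)|/\|k\|$ is at most half of $\gamma\|a\|^{\delta}/2^{\delta}$, whence $|\omega(a)\cdot k|\geq\gamma\|k\|\|a\|^{\delta}/2^{\delta+1}$; in that reading the degradation of the constant by $2^{\delta+1}$ is coherent. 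So the gap in your write-up is not the analysis but the direction of the concluding inequality: to make the argument sound you must either replace the conclusion by $\Sigma_k(2^{\delta+1}\gamma)$, or reverse the inequality in the definition of $\Sigma_k$ and propagate the lower bound, as the paper does.
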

\proof We denote $a=\lambda u$, with $\lambda=\|a\|$ and $u\in\cC$, and
similarly $\tilde a=\tilde \lambda \tilde u$ and so on. Let $a\in
\Sigma_k^{(r)}(\gamma)$, then there exists $\bar a\in
\Sigma_k(\gamma)$ s.t. $\left\|a-\bar a\right\|\leq r$, thus one has
$$
\left|\frac{\omega(a)\cdot k}{\|k\|}\right|\geq
\left|\frac{\omega(\bar a)\cdot k}{\|k\|}\right| -
\left|\frac{d(\omega(\tilde a)\cdot k)(a-\bar a)}{\|k\|}\right| \ , 
$$
with some $\tilde a$, fulfilling $\frac{\|a\|}{2}\leq \|\tilde a \|\leq
2\|a\|$. Of course the same inequality is true if one replaces $\bar a$
to $\tilde a$. So we have  
\begin{equation}
  \label{sua.2}
\left|\frac{d(\omega(\tilde a)\cdot k)(a-\bar a)}{\|k\|}\right|\leq
C\tilde\lambda^{M-1} r\leq C 2^{M-1}\lambda^{M-1} r \ ,
\end{equation}
and also 
\begin{equation}
  \label{sua.1}
\left|\frac{\omega(\tilde a)\cdot k}{\|k\|}\right| \geq
\tilde \lambda^{\delta}\gamma\geq \frac{\lambda^{\delta}\gamma
}{2^\delta}\ .
\end{equation}
If \eqref{sua} is satisfied then \eqref{sua.2} is smaller than a half
of \eqref{sua.1} and implies
$$
\left|\frac{\omega(a)\cdot k}{\|k\|}\right|\geq \frac{\lambda^{\delta}\gamma
}{2^{\delta+1}}\ ,
$$
which is the thesis. \qed

We are now going to estimate the measure of
$\Sigma_k(\gamma/2^{\delta+1})\cap\cT_k(1)\cap B_R$. To this end we
exploit the homogeneity of the frequencies. We denote
\begin{equation}
  \label{slambda}
S_\lambda:=\left\{a\in\R^2\ : \ \|a\|=\lambda\right\}\ ,
\end{equation}
and we will exploit the following  
\begin{remark}
  \label{integro}
  $$
\left|\Sigma_k(\tilde \gamma)\cap\cT_k\left(1\right)\cap
B_R \cap B_{R_0}^{c}\right| =\int_{R_0}^R \left|\Sigma_k(\tilde \gamma)\cap\cT_k\left(1\right)\cap
S_\lambda\right|d\lambda\ .
$$
\end{remark}
In order to estimate the above quantity we establish the following
Lemma
\begin{lemma}
  \label{il nostro}
  $$
\Sigma_k(\tilde \gamma)\cap\cT_k\left(1\right)\cap
S_\lambda=\left\{\begin{matrix}
\emptyset & {\rm if}\ \lambda^\epsilon\leq\|k\|
\\
\lambda\left(\Sigma_k\left(\frac{\tilde
  \gamma}{\lambda^M-\delta}\right)\cap S_1\right)& {\rm
  if}\ \lambda^\epsilon> \|k\|
\end{matrix}\right.\,,
$$
where the multiplication of a set by $\lambda$ means multiplication of
each one of its elements. 
\end{lemma}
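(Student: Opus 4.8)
The plan is simply to unwind the three defining conditions and use homogeneity of the frequency map, namely that $\omega$ is homogeneous of degree $M$ in $a$ (Corollary \ref{azioniefrequenze}), together with the facts that $\cC$ is a cone and $S_\lambda=\lambda\,S_1$.

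First I would treat the degenerate case $\lambda^\epsilon\leq\|k\|$. Every $a\in S_\lambda$ has $\|a\|^\epsilon=\lambda^\epsilon\leq\|k\|$, which contradicts the strict inequality $\|k\|<\|a\|^\epsilon$ defining $\cT_k(1)$; hence $\cT_k(1)\cap S_\lambda=\emptyset$, and a fortiori the whole triple intersection is empty.

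Next, for $\lambda^\epsilon>\|k\|$, the inequality $\|k\|<\|a\|^\epsilon=\lambda^\epsilon$ holds at every point of $S_\lambda$, so $S_\lambda\subset\cT_k(1)$ and thus $\Sigma_k(\tilde\gamma)\cap\cT_k(1)\cap S_\lambda=\Sigma_k(\tilde\gamma)\cap S_\lambda$. I would then write a generic point of $S_\lambda$ as $a=\lambda u$ with $u\in S_1$; since $\cC$ is a cone, $a\in\cC$ if and only if $u\in\cC$. Using $\omega(\lambda u)=\lambda^M\omega(u)$ and $\|a\|^\delta=\lambda^\delta$, the condition $|\omega(a)\cdot k|\leq\tilde\gamma\|k\|\|a\|^\delta$ becomes $\lambda^M|\omega(u)\cdot k|\leq\tilde\gamma\,\lambda^\delta\|k\|$, i.e., after dividing by $\lambda^M>0$ and inserting the harmless factor $\|u\|^\delta=1$, the inequality $|\omega(u)\cdot k|\leq\tfrac{\tilde\gamma}{\lambda^{M-\delta}}\|k\|\|u\|^\delta$, which is exactly $u\in\Sigma_k\!\left(\tfrac{\tilde\gamma}{\lambda^{M-\delta}}\right)$. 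Hence $a\in\Sigma_k(\tilde\gamma)\cap S_\lambda$ if and only if $a=\lambda u$ with $u\in\Sigma_k\!\left(\tfrac{\tilde\gamma}{\lambda^{M-\delta}}\right)\cap S_1$, which is the asserted identity.

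There is no genuine obstacle: this is a bookkeeping exercise in quasi-homogeneity. The only points needing (minimal) attention are that the defining inequality of $\cT_k$ is strict, so the boundary case $\lambda^\epsilon=\|k\|$ is assigned correctly, and that the scaling exponent $M-\delta$ comes out with the right sign in the final expression --- but the equivalence above is purely algebraic and does not depend on the sign of $M-\delta$.
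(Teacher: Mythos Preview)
Your proof is correct and follows essentially the same approach as the paper: both write $a=\lambda u$ with $u\in S_1$, use the homogeneity $\omega(\lambda u)=\lambda^M\omega(u)$ to rescale the resonance inequality, and observe that the cutoff condition in $\cT_k(1)$ is either vacuous or empty on the whole sphere $S_\lambda$. Your version is slightly more careful about the cone condition $a\in\cC$ and the strictness of the inequality in $\cT_k$, but the argument is the same.
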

\proof Just remark that (with $u=a/\lambda$), we have
$$
\Sigma_k(\tilde \gamma)\cap S_\lambda=\left\{a=\lambda u \ : u\in S_1
\ and\ \left|\frac{\omega(\lambda u)\cdot k}{\|k\|}\right|\geq
                  {\lambda^{\delta}\tilde \gamma
}   \right\}\ ,
$$
but the nonresonance condition can be rewritten using the homogeneity
of $\omega$ as 
$$
\left|\frac{\omega(u)\cdot k}{\|k\|}\right|\geq
           {\lambda^{\delta-M}\tilde \gamma
} \ .
$$
In order to conclude the proof just remark that the intersection with
$\cT_k\left(1\right)$ is empty or full according to the
conditions in the Lemma. \qed

We are now in the position of using degenerate KAM theory in order to
estimate $\Sigma_k\left(\frac{\tilde
  \gamma}{\lambda^M-\delta}\right)\cap S_1$. First we recall a couple
of lemmas and definitions from \cite{Rus01} (see also \cite{BBM11}).

First we adapt the notation. Thus, consider the functions
$(\omega_1(a),\omega_2(a))$, and restrict them to the intersection of
{the set $\Pi$} with the unit sphere. Precisely, 
consider
\begin{align}
  \label{omphi}
\omega(\phi)\equiv(\omega_1(\phi),\omega_2(\phi)) 
\\
\label{onphi.2}
\omega_j(\phi):=\omega_j(\cos\phi,\sin\phi)\ ,\quad
{\phi\in\left[0,\frac{3}{4}\pi\right]}\ .
\end{align}
\begin{definition}
  \label{nondeg}
The function $(\omega_1(\phi),\omega_2(\phi))$ is said to be
{weakly} nondegenerate if $\forall (c_1,c_2) \not=(0,0)$ the function
$$
c_1\omega_1(\phi)+c_2\omega_2(\phi)
$$
is not identically zero. 
\end{definition}

We will prove below (see Lemma \ref{ferqnondeg}) that $\omega(\phi)$
is analytic on a complex neighborhood of the interval ${[0,3/4\pi]}$
and that it is weakly nondegenerate.

\begin{lemma}
  \label{rus2}
Assume that $\omega$ is weakly nondegenerate, then there exist
$\beta>0$ and $1\leq \mu_0\in\N$ s.t.
\begin{equation}
  \label{basso}
\max_{0\leq\mu\leq\mu_0}\left|\frac{d^\mu}{d\phi^\mu}\frac{k\cdot
  \omega(\phi)}{\|k\|}\right|\geq\beta\ ,\quad \forall
\phi\in{[0,3/4\pi]}\ ,\quad \forall k\in\Z^2\setminus\left\{0\right\}\ .
\end{equation}
\end{lemma}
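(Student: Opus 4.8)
The plan is to deduce Lemma~\ref{rus2} from weak nondegeneracy by a compactness argument of R\"ussmann type, exploiting the analyticity of $\omega$ that is asserted here and proved in Lemma~\ref{ferqnondeg}: namely that $\omega(\phi)=(\omega_1(\phi),\omega_2(\phi))$ extends analytically to a complex neighborhood $U$ of the compact interval $I:=[0,\frac34\pi]$. The first observation is that only the \emph{direction} of $k$ enters the claim, so it suffices to prove that there exist $\mu_0\in\N$, $\mu_0\ge 1$, and $\beta>0$ such that
\begin{equation}
\label{goal.rus}
\max_{0\le\mu\le\mu_0}\left|\frac{d^\mu}{d\phi^\mu}\bigl(v\cdot\omega(\phi)\bigr)\right|\ge\beta\qquad\forall\,(v,\phi)\in S^1\times I\,,
\end{equation}
where $S^1:=\{v\in\R^2:\|v\|=1\}$; applying \eqref{goal.rus} with $v=k/\|k\|$ then gives the statement. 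Note that $F(v,\phi):=v_1\omega_1(\phi)+v_2\omega_2(\phi)$ and each of its $\phi$-derivatives $\partial_\phi^\mu F(v,\phi)=v_1\omega_1^{(\mu)}(\phi)+v_2\omega_2^{(\mu)}(\phi)$ are jointly continuous on $S^1\times U$.

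Then I would argue by contradiction. If \eqref{goal.rus} fails for every choice of $\mu_0$ and $\beta$, then for each $n\in\N$ there is $(v_n,\phi_n)\in S^1\times I$ with $\max_{0\le\mu\le n}|\partial_\phi^\mu F(v_n,\phi_n)|<1/n$. Since $S^1\times I$ is compact, pass to a subsequence with $(v_n,\phi_n)\to(v_*,\phi_*)\in S^1\times I$. Fix $\mu\in\N$; for all $n\ge\mu$ one has $|\partial_\phi^\mu F(v_n,\phi_n)|<1/n\to 0$, and by continuity $\partial_\phi^\mu F(v_n,\phi_n)\to\partial_\phi^\mu F(v_*,\phi_*)$, whence $\partial_\phi^\mu F(v_*,\phi_*)=0$. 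As $\mu$ was arbitrary, the analytic function $\phi\mapsto F(v_*,\phi)=v_*\cdot\omega(\phi)$ has all its derivatives vanishing at the point $\phi_*\in I\subset U$, so its Taylor expansion at $\phi_*$ is identically zero and hence $v_*\cdot\omega(\phi)\equiv 0$ on $I$. Since $v_*\in S^1$, in particular $v_*\ne(0,0)$, this contradicts weak nondegeneracy (Definition~\ref{nondeg}). Thus \eqref{goal.rus} holds for suitable $\mu_0,\beta$, and enlarging $\mu_0$ if necessary we may take $\mu_0\ge1$.

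As for difficulty: the statement is essentially the content of \cite{Rus01} (see also \cite{BBM11}), so once the hypotheses are in place one could simply invoke it. The only genuine point is the \emph{uniformity} of $\mu_0$ and $\beta$ over \emph{all} $k$. For a single direction $v$ the bound is immediate — the zeros on $U$ of the nonzero analytic function $v\cdot\omega$ are isolated and of finite order, so $\max_{0\le\mu\le\mu(v)}|\partial_\phi^\mu(v\cdot\omega)|$ is continuous and strictly positive on the compact $I$, hence bounded below — but allowing $v$ to vary requires both the compactness of $S^1$ and the fact that in the contradiction argument the limiting direction $v_*$ again produces an \emph{analytic} function, to which the identity theorem applies. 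The preparatory input that must be secured beforehand, and which is the real work (done in Lemma~\ref{ferqnondeg}), is precisely that $\phi\mapsto\omega(\phi)$ admits an analytic extension to a complex neighborhood of $I$ and is weakly nondegenerate; granted that, the argument above is routine.
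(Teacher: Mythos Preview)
Your proof is correct and follows essentially the same route as the paper: a contradiction argument using compactness of $S^1\times I$ to extract a limit direction $v_*$ and point $\phi_*$ at which all derivatives of $v_*\cdot\omega$ vanish, then invoking analyticity to conclude $v_*\cdot\omega\equiv 0$ against weak nondegeneracy. The only cosmetic difference is that you reduce to $v\in S^1$ at the outset, whereas the paper carries the sequence $k_\sigma$ and passes to the limit of $k_\sigma/\|k_\sigma\|$ directly.
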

\proof By contradiction: assume that $\forall \mu_0$ and %$\forall k\in\Z^2\setminus\left\{0\right\} $
{$\forall \beta >0$} $\exists \phi_{\mu_0,\beta}$,
$k_{\mu_0,\beta}$ s.t.
$$
\max_{0\leq\mu\leq\mu_0}\left|\frac{d^\mu}{d\phi^\mu}\frac{k_{\mu_0,\beta}\cdot
  \omega(\phi_{\mu_0,\beta})}{\|k_{\mu_0,\beta}\|}\right|<\beta\ .
$$
Take $\sigma:=\mu_0$, $\beta:=(\sigma+1)^{-1}$, then $\exists
\phi_\sigma,k_\sigma$ s.t.
$$
\max_{0\leq\mu\leq\mu_0}\left|\frac{d^\mu}{d\phi^\mu}\frac{k_\sigma\cdot
  \omega(\phi_{\sigma})}{\|k_\sigma\|}\right|<\frac{1}{\sigma+1}\ .
$$
But this means that $\forall \mu$, $\exists \sigma\geq\mu$ s.t.
\begin{equation}
  \label{lasigma}
\left|\frac{d^\mu}{d\phi^\mu}\frac{k_\sigma\cdot
  \omega(\phi_{\sigma})}{\|k_\sigma\|}\right|<\frac{1}{\sigma+1}\ .
\end{equation}
Take the limit $\sigma\to\infty$. By compactness $\phi_\sigma\to\bar
\phi$ and $\frac{k_\sigma}{\|k_\sigma\|}\to \bar c= (\bar c_1,\bar c_2)$. Thus taking the
limit of \eqref{lasigma}, one gets
$$
\left|\frac{d^\mu}{d\phi^\mu}{\bar c\cdot
  \omega(\bar \phi)}\right|=0\ .
$$
But, by analyticity, this means $\bar c\cdot
  \omega(\bar \phi)\equiv 0$, against the assumption of weakly
  nondegeneracy. \qed

  We now recall the following theorem which is a simplification of a
  theorem by R\"ussmann. For the very
  technical proof we make reference to the original paper
  \begin{theorem}
    \label{rus}[Theorem 17.1 of \cite{Rus01}]
    Let $\cI\subset\R$ be compact and denote by $|\cI|$ its
    length. Denote (as above)
    $$
\cI^{(r)}:=\bigcup_{\phi\in\cI}B_r(\phi)\ .
$$
Let $g\in C^{\mu_0+1}(\cI^{(r)})$ be s.t.
$$
\min_{\phi\in\cI}\max_{0\leq\mu\leq\mu_0}\left|\frac{d^\mu
  g}{d\phi^\mu}(\phi)\right|\geq \beta\ .
$$
Then {$\forall \ep>0$}
\begin{equation}
  \label{stimie}
\left|\left\{\phi\in\cI\ :\ |g(\phi)|\leq \epsilon\right\}\right| \leq
C
|\cI|\left(\frac{\epsilon}{\beta}\right)^{\frac{1}{\mu_0}}\frac{1}{\beta}\left|
g\right|_{C^{\mu_0+1}(\cI^{(r)})}\ .
\end{equation}
  \end{theorem}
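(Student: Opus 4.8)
The plan is to prove the sublevel-set estimate \eqref{stimie} by the standard R\"ussmann argument, reducing it to a one-dimensional lemma for a \emph{single} derivative together with a covering argument. The one-dimensional lemma I would establish first is: if $J\subset\R$ is an interval, $\mu\ge1$, $h\in C^\mu(J)$ and $|h^{(\mu)}(\phi)|\ge c>0$ for all $\phi\in J$, then $|\{\phi\in J:\ |h(\phi)|\le\epsilon\}|\le C_\mu(\epsilon/c)^{1/\mu}$ with $C_\mu$ depending only on $\mu$. For $\mu=1$ this is immediate: $h'$ is continuous and never zero, hence of constant sign, so $h$ is strictly monotone on $J$ and $\{|h|\le\epsilon\}$ is a subinterval on which $h$ varies within a window of size $2\epsilon$, whence its length is at most $2\epsilon/c$. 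For the inductive step I would split $J$ according to whether $|h^{(\mu-1)}|$ is $\le$ or $\ge$ a threshold $c\sigma$: by the $\mu=1$ case applied to $h^{(\mu-1)}$ the ``small'' set has length at most $2\sigma$, while the ``large'' set is a union of at most two subintervals on which $|h^{(\mu-1)}|\ge c\sigma$, so the inductive hypothesis (order $\mu-1$, constant $c\sigma$) bounds the measure of $\{|h|\le\epsilon\}$ there by $2C_{\mu-1}(\epsilon/(c\sigma))^{1/(\mu-1)}$; choosing $\sigma\simeq(\epsilon/c)^{1/\mu}$ to balance the two contributions gives the claim.

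Next I would produce a finite cover of $\cI$ by intervals on each of which one fixed derivative stays bounded away from zero. Set $L:=|g|_{C^{\mu_0+1}(\cI^{(r)})}$. For each $\phi_0\in\cI$ the hypothesis yields some $\mu=\mu(\phi_0)\le\mu_0$ with $|g^{(\mu)}(\phi_0)|\ge\beta$, and since $\|g^{(\mu+1)}\|_\infty\le L$ on $\cI^{(r)}$ we get $|g^{(\mu)}(\phi)|\ge\beta/2$ on the interval $I_{\phi_0}$ of radius $\beta/(2L)$ about $\phi_0$, which lies in $\cI^{(r)}$ provided $r\gtrsim\beta/L$ (this is precisely why the norm is taken over the enlargement $\cI^{(r)}$). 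Covering $\cI$ by the $I_{\phi_0}$ and extracting a finite subcover with bounded overlap, one needs $N\lesssim 1+|\cI|L/\beta$ of them.

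Finally I would combine the two ingredients. On each cover interval $J_i$ one has $|g^{(\mu_i)}|\ge\beta/2$ with $\mu_i\le\mu_0$, so the one-dimensional lemma gives $|\{|g|\le\epsilon\}\cap J_i|\le C_{\mu_i}(2\epsilon/\beta)^{1/\mu_i}\le C_{\mu_0}(2\epsilon/\beta)^{1/\mu_0}$, using that $x^{1/\mu}$ is nondecreasing in $\mu$ for $0<x\le1$ in the regime $\epsilon\le\beta$ (for $\epsilon\ge\beta$ the right side of \eqref{stimie} already dominates $|\cI|$, since $|g|_{C^{\mu_0+1}}\ge\beta$, so there is nothing to prove). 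Summing over the $N\lesssim|\cI|L/\beta$ intervals yields $|\{\phi\in\cI:\ |g(\phi)|\le\epsilon\}|\lesssim|\cI|\,L\,\beta^{-1}(\epsilon/\beta)^{1/\mu_0}$, which is exactly the right-hand side of \eqref{stimie}.

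The main difficulty is bookkeeping rather than conceptual — which is why the authors defer to R\"ussmann's original treatment: one must make the induction constants $C_\mu$ genuinely uniform in all parameters, track how $L=|g|_{C^{\mu_0+1}(\cI^{(r)})}$ enters both the count $N$ of cover intervals and, through the radius of those intervals, the condition $r\gtrsim\beta/L$ on the enlargement, and deal with the degenerate regimes (e.g.\ $\cI$ so short that a single interval suffices, or $L|\cI|\lesssim\beta$) so that the final bound comes out in precisely the stated homogeneous form, linear in $|g|_{C^{\mu_0+1}}$ and carrying the factor $(\epsilon/\beta)^{1/\mu_0}$.
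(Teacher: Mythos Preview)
Your sketch is essentially correct and follows the standard R\"ussmann/Pyartli strategy: the single-derivative sublevel-set lemma by induction on $\mu$, a covering of $\cI$ by intervals of radius $\sim\beta/L$ on which one fixed derivative stays $\ge\beta/2$ (this is where the $C^{\mu_0+1}$ norm on the enlargement $\cI^{(r)}$ enters), and then summing over the $N\lesssim |\cI|L/\beta$ intervals. You have also correctly flagged the bookkeeping issues (the regime $\epsilon\gtrsim\beta$, the degenerate case $|\cI|L\lesssim\beta$, uniformity of the $C_\mu$).

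The paper, however, gives \emph{no} proof of this statement: it explicitly says ``For the very technical proof we make reference to the original paper'' and cites R\"ussmann \cite{Rus01}. So there is nothing to compare your argument against in the paper itself; your outline is precisely the kind of argument one finds in \cite{Rus01} (and in later streamlined versions such as \cite{BBM11}).
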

In the original version the constant $C$ is explicitly computed, but
here we do not need its value. 

\begin{corollary}
  \label{suSfere}
  $\exists \mu_0$, $C>0$ s.t.
  \begin{equation}
    \label{suSfere.1}
\left|\Sigma_k(\gamma)\cap S_1\right|\leq C\gamma^{1/\mu_0}\ , \forall
k\in\Z^2\setminus\left\{0\right\}\ . 
  \end{equation}
\end{corollary}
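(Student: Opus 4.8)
The plan is to reduce the bound \eqref{suSfere.1} to a one-dimensional sublevel-set estimate on the angular interval of \eqref{onphi.2}, and then to apply R\"ussmann's Theorem \ref{rus}. First I would parametrize $S_1$ by arc length via $\phi\mapsto(\cos\phi,\sin\phi)$; since $\overline{\cC}\setminus\{0\}$ is contained in the interior of $\Pi$, the arc $\overline{\cC}\cap S_1$ corresponds to a compact interval $\cI$ lying in the interior of the interval $[0,\frac34\pi]$ appearing in \eqref{onphi.2}. On $S_1$ one has $\|a\|=1$, so by \eqref{sigma} the set $\Sigma_k(\gamma)\cap S_1$ corresponds, under this parametrization, to a subset of $\{\phi\in\cI\ :\ |g_k(\phi)|\leq\gamma\}$, where $g_k(\phi):=\omega(\phi)\cdot k/\|k\|$ with $\omega(\phi)$ the frequency map of \eqref{omphi}. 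Hence it suffices to bound the Lebesgue measure of this sublevel set by $C\gamma^{1/\mu_0}$, with $C$ independent of $k$.

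Next I would check that $g_k$ satisfies, uniformly in $k$, the hypotheses of Theorem \ref{rus}. By Lemma \ref{ferqnondeg} (proved later) the map $\omega(\phi)$ extends to an analytic function on a complex neighborhood of $[0,\frac34\pi]$ and is weakly nondegenerate; therefore Lemma \ref{rus2} produces $\beta>0$ and $\mu_0\in\N$ such that $\max_{0\leq\mu\leq\mu_0}|g_k^{(\mu)}(\phi)|\geq\beta$ for every $\phi\in\cI$ and every $k\in\Z^2\setminus\{0\}$, which is exactly the nondegeneracy hypothesis of Theorem \ref{rus}. Fixing $r>0$ small enough that the fattened interval $\cI^{(r)}$ stays inside the analyticity neighborhood, the key observation is that $k$ enters $g_k$ only through the \emph{unit} vector $k/\|k\|$, so $|g_k^{(\mu)}(\phi)|\leq\|\omega^{(\mu)}(\phi)\|$ by Cauchy--Schwarz; consequently $|g_k|_{C^{\mu_0+1}(\cI^{(r)})}\leq C_0$ with $C_0$ independent of $k$.

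Finally I would apply Theorem \ref{rus} to $g=g_k$ with $\epsilon=\gamma$: estimate \eqref{stimie} gives
\[
\bigl|\{\phi\in\cI\ :\ |g_k(\phi)|\leq\gamma\}\bigr| \leq C\,|\cI|\,\bigl(\gamma/\beta\bigr)^{1/\mu_0}\,C_0/\beta \leq C_1\,\gamma^{1/\mu_0},
\]
with $C_1$ depending only on $\cI$, $\beta$, $C_0$ and the absolute constant of Theorem \ref{rus}; by the first paragraph this proves \eqref{suSfere.1}. (For $\gamma$ bounded away from $0$ the estimate is trivial since $|\Sigma_k(\gamma)\cap S_1|\leq|\cI|$, so one may enlarge $C_1$ to cover all $\gamma>0$.) I do not expect a genuine obstacle here: the content is entirely in the cited Lemma \ref{rus2}, Theorem \ref{rus} and Lemma \ref{ferqnondeg}, and the only points needing care are the uniformity in $k$ of the $C^{\mu_0+1}$ bound on $g_k$ --- immediate from the unit-vector observation --- and the verification that $\cI^{(r)}$ remains within the domain of analyticity.
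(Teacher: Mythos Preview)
Your proposal is correct and follows exactly the argument the paper intends: the corollary is stated without proof immediately after Lemma \ref{rus2} and Theorem \ref{rus}, and your write-up simply makes explicit the routine reduction to the sublevel-set estimate \eqref{stimie} together with the uniform-in-$k$ bound on $|g_k|_{C^{\mu_0+1}}$ coming from the unit-vector observation. There is nothing to add.
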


Using this corollary we prove now the following Lemma.
  
\begin{lemma}
  \label{misura.1}
  There exist  $C>0$ such that, if $R>R_0$,
  \begin{align}
    \label{mes}
\left|\bigcup_{k\not=0}\left(\Sigma_k(\tilde \gamma)\cap\cT_k\left(1\right)\right)\cap
B_R\right| \leq C\frac{R^2}{R^{\frac{M-\delta}{\mu_0}-2\epsilon}}
  \end{align}
\end{lemma}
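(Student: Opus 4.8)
The plan is to reduce this two–dimensional measure estimate to the one–dimensional bound coming from R\"ussmann's theorem (Corollary \ref{suSfere}), using the homogeneity of $\omega$ to relate the resonant sets at radius $\lambda$ to those on the unit sphere; this is exactly the chain of intermediate results already set up. First I would peel off the bounded region: since
\[
\bigcup_{k\not=0}\bigl(\Sigma_k(\tilde\gamma)\cap\cT_k(1)\bigr)\cap B_R\subseteq B_{R_0}\cup\bigcup_{k\not=0}\Bigl(\Sigma_k(\tilde\gamma)\cap\cT_k(1)\cap B_R\cap B_{R_0}^c\Bigr),
\]
subadditivity of Lebesgue measure reduces the problem to bounding $\sum_{k\not=0}\bigl|\Sigma_k(\tilde\gamma)\cap\cT_k(1)\cap B_R\cap B_{R_0}^c\bigr|$, the fixed volume $|B_{R_0}|$ being negligible with respect to $R^{\,2-(M-\delta)/\mu_0+2\epsilon}$ once $R$ is large, because $(M-\delta)/\mu_0<M<1$ forces the exponent to be positive.

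Next, for a fixed $k\not=0$ I would invoke Remark \ref{integro} to slice along spheres and then Lemma \ref{il nostro} to identify the slices: $\Sigma_k(\tilde\gamma)\cap\cT_k(1)\cap S_\lambda$ is empty unless $\lambda^\epsilon>\|k\|$, in which case it equals $\lambda\bigl(\Sigma_k(\tilde\gamma/\lambda^{M-\delta})\cap S_1\bigr)$. Since a dilation by $\lambda$ multiplies the one–dimensional Hausdorff measure of a subset of $S_1$ by $\lambda$, and since Corollary \ref{suSfere} gives $|\Sigma_k(\gamma')\cap S_1|\le C(\gamma')^{1/\mu_0}$ uniformly in $k$ (for large $\gamma'$ this is just the trivial bound by the arclength of $\cC\cap S_1$, which is why the estimate holds for every value of the argument), I get
\[
\bigl|\Sigma_k(\tilde\gamma)\cap\cT_k(1)\cap B_R\cap B_{R_0}^c\bigr|\le \int_{\max\{R_0,\,\|k\|^{1/\epsilon}\}}^{R}\lambda\,C\Bigl(\frac{\tilde\gamma}{\lambda^{M-\delta}}\Bigr)^{1/\mu_0}\,d\lambda\;\le\; C\,\tilde\gamma^{1/\mu_0}\,R^{\,2-\frac{M-\delta}{\mu_0}},
\]
where the last step uses $1-(M-\delta)/\mu_0\in(0,1)$ so that the integrand is integrable and the integral is controlled by its value at the upper endpoint $\lambda=R$.

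Finally I would sum over $k$. The integral above vanishes unless $\|k\|<R^{\epsilon}$, and the number of $k\in\Z^2$ with $0<\|k\|<R^{\epsilon}$ is $\lesssim R^{2\epsilon}$; hence
\[
\sum_{k\not=0}\bigl|\Sigma_k(\tilde\gamma)\cap\cT_k(1)\cap B_R\cap B_{R_0}^c\bigr|\;\lesssim\; R^{2\epsilon}\,\tilde\gamma^{1/\mu_0}\,R^{\,2-\frac{M-\delta}{\mu_0}}\;=\;\tilde\gamma^{1/\mu_0}\,\frac{R^2}{R^{\frac{M-\delta}{\mu_0}-2\epsilon}},
\]
and absorbing the fixed constant $\tilde\gamma^{1/\mu_0}$ together with the term $|B_{R_0}|$ into $C$ gives the stated bound (and, keeping track of the $\tilde\gamma$ dependence, the analogous statement for general $\gamma$ needed in Theorem \ref{main.mea}). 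I do not expect a serious obstacle here: everything hard has been front-loaded into Lemma \ref{il nostro} and Corollary \ref{suSfere}, and what remains is the slicing/scaling bookkeeping and checking that the number of contributing frequencies is $O(R^{2\epsilon})$ — the only point requiring a little care is that the constant in Corollary \ref{suSfere} is genuinely uniform in $k$ and valid for arbitrarily small arguments $\tilde\gamma/\lambda^{M-\delta}$, which it is.
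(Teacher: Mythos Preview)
Your proposal is correct and follows essentially the same route as the paper: peel off $B_{R_0}$, slice along spheres via Remark~\ref{integro}, rescale using Lemma~\ref{il nostro}, apply Corollary~\ref{suSfere} on $S_1$, and count the contributing frequencies via the cutoff $\cT_k$. The only cosmetic difference is that you integrate in $\lambda$ first and then sum over the $O(R^{2\epsilon})$ relevant $k$'s, whereas the paper bounds the union at each fixed $\lambda$ by $\lambda^{2\epsilon}\sup_k|\Sigma_k(\tilde\gamma/\lambda^{M-\delta})\cap S_1|$ and then integrates; both orderings yield the same final exponent.
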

\proof Define $B_{R, R_0} := B_R \cap B_{R_0}^c$ and just observe that
$$
\begin{aligned}
\left|\bigcup_{k\not=0}\left(\Sigma_k(\tilde \gamma)\cap\cT_k\left(1\right)\right)\cap
B_R\right|
&=\left|\bigcup_{k\not=0}\left(\Sigma_k(\tilde \gamma)\cap\cT_k\left(1\right)\right)\cap
B_{R_0}\right| \\
&+ \left|\bigcup_{k\not=0}\left(\Sigma_k(\tilde \gamma)\cap\cT_k\left(1\right)\right)\cap
B_{R, R_0}\right|\\
& \leq 2  \left|\bigcup_{k\not=0}\left(\Sigma_k(\tilde \gamma)\cap\cT_k\left(1\right)\right)\cap
B_{R, R_0}\right|
\end{aligned}
$$
if $R$ is big enough, thus
\begin{align}
  \label{mea.1}
\left|\bigcup_{k\not=0}\left(\Sigma_k(\tilde \gamma)\cap\cT_k\left(1\right)\right)\cap
B_{R}\right|
&\leq 2\int_{R_0}^R
\left|\bigcup_k\left(\Sigma_k(\tilde\gamma)\cap\cT_k\left(1\right)\cap
S_\lambda\right)\right|d\lambda\\
&= 2 \int_{R_0}^R
\left|\bigcup_{|k|\leq 2\lambda^{\epsilon}}\left(\Sigma_k(\tilde\gamma)\cap
S_\lambda\right)\right|d\lambda
\\
&= 2
\int_{R_0}^R\lambda
\left|\bigcup_{|k|\leq \lambda^{\epsilon}}
\left(\Sigma_k(\frac{\tilde\gamma}{\lambda^{M-\delta}})\cap 
S_1\right)\right|d\lambda
\\
&\leq 8 \int_{R_0}^R \lambda^{1+2\epsilon}
\sup_{k}\left|
\left(\Sigma_k(\frac{\tilde\gamma}{\lambda^{M-\delta}})\cap 
S_1\right)\right|d\lambda
\\
&\leq C\int_{R_0}^R
\lambda^{1+2\epsilon}\left(\frac{\tilde
  \gamma}{\lambda^{M-\delta}}\right)^{1/\mu_0} d\lambda
\\
&\leq C\frac{R^2}{R^{\frac{M-\delta}{\mu_0}-2\epsilon}}\,.
\end{align}
\qed
{\begin{remark}\label{mu.legame}
By Lemma \ref{misura.1} one immediately deduces that the set $\tilde \Omega$ defined in \eqref{def.om.2} has density one at infinity, provided $\ep$ and $\delta$ are such that $2\ep \mu_0< {M-\delta}$. Thus the same holds true also for the set $\Omega = \tilde{\Omega} \cap B_{\tt R}^c$ defined in \eqref{def.om}.
\end{remark}
}
\subsection{Nondegeneracy of the frequency map of the anharmonic
  oscillator} \label{nondeg}

In this section we prove that the restriction of the frequency map to
the unit sphere is
weakly nondegenerate.

Actually the proof is a simple consequence of
the techniques developed in \cite{Fej04,BF18}), here we give it for
the sake of completeness.

\begin{lemma}
  \label{ferqnondeg}
The function $(\omega_1,\omega_2)$ is analytic in a neighborhood
  of $[0,3/4\pi]$ and is nondegenerate.
\end{lemma}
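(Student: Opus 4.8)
The plan is to read off analyticity from Lemma~\ref{azioni} and to prove weak nondegeneracy (i.e.\ that $c_1\omega_1+c_2\omega_2\not\equiv 0$ for every $(c_1,c_2)\neq(0,0)$) by exhibiting two distinct limiting values of the frequency ratio $\omega_2/\omega_1$ along the cone; the hypothesis $\ell\geq 2$ is used precisely at this last step.

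\emph{Analyticity.} By Lemma~\ref{azioni}(4), $h_0(a_1,a_2)$ is real analytic in the interior of $\Pi$, hence so are $\omega_i=\partial_{a_i}h_0$, and by Corollary~\ref{azioniefrequenze} they are homogeneous of degree $M$; thus $\phi\mapsto\omega_i$ is real analytic on the open parametrizing interval $(0,3\pi/4)$. Its two endpoints correspond to the boundary rays $a_1=0$ and $a_1=|a_2|$ of $\Pi$, which by Lemma~\ref{azioni}(1)--(2) are the loci $a_r=0$, i.e.\ the circular orbits of the effective one-dimensional Hamiltonian $h_0^*$. Across such an elliptic critical circle $r=r_c$ the action--angle variables of $h_0$, and hence the frequency map, extend analytically; this is exactly the type of statement furnished by the techniques of \cite{Fej04,BF18}, which we invoke. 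Therefore $(\omega_1,\omega_2)$ is analytic on a complex neighbourhood of $[0,3\pi/4]$.

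\emph{Nondegeneracy.} Suppose $c_1\omega_1(\phi)+c_2\omega_2(\phi)\equiv 0$ with $(c_1,c_2)\neq 0$. Since $\omega_1=(\partial_E a_1)^{-1}>0$ everywhere (the radial period is positive), $c_2\neq 0$, so $\omega_2/\omega_1\equiv\kappa:=-c_1/c_2$ on the arc; as $\omega_1,\omega_2$ are homogeneous of the same degree, $\omega_2/\omega_1$ depends on $\phi$ only, so this identity holds on the whole interior of $\Pi$. For $L=a_2>0$ one has $a_1=a_r$ and the classical identity
\[
\frac{\omega_2}{\omega_1}=-\partial_L a_r=\frac{1}{\pi}\int_{r_m}^{r_M}\frac{L\,dr}{r^2\sqrt{2\,(E-V_L^*(r))}}=\frac{\psi}{\pi}\ ,
\]
where $\psi$ is the apsidal angle (the variation of the azimuthal angle between two consecutive turning points $r_m<r_M$). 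It thus suffices to show that $\psi/\pi$ is not constant along $\Pi$. In the circular limit $r_m,r_M\to r_c$ one has $r_c^{2\ell+2}=L^2$ and $V_L^{*\prime\prime}(r_c)=(2\ell+2)L^2/r_c^4$, and approximating $E-V_L^*\approx\tfrac12 V_L^{*\prime\prime}(r_c)(r-r_c)^2$ gives $\psi\to\pi\,(L/r_c^2)/\sqrt{V_L^{*\prime\prime}(r_c)}=\pi/\sqrt{2(\ell+1)}$, so $\kappa=1/\sqrt{2(\ell+1)}$. In the limit $L\to 0^+$ at fixed $E>0$, instead, $r_m\sim L/\sqrt{2E}\to 0$ and the substitution $r=r_m\rho$ shows that the integral concentrates near $r_m$, with
\[
\psi\ \longrightarrow\ \int_1^\infty\frac{d\rho}{\rho\sqrt{\rho^2-1}}=\frac{\pi}{2}\ ,
\]
so $\kappa=1/2$. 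Since $\ell\geq 2$ forces $1/\sqrt{2(\ell+1)}\leq 1/\sqrt 6<1/2$, these two values of the allegedly constant ratio are distinct, a contradiction; hence $(\omega_1,\omega_2)$ is nondegenerate. (For $\ell=1$ both limits equal $1/2$, matching the degeneracy of the isotropic oscillator, which shows that $\ell\geq 2$ is used in an essential way.)

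\emph{Main obstacle.} The genuinely delicate ingredient is the analytic extension of the frequency map up to the endpoints of the arc --- across the circular orbits, where the action--angle coordinates of $h_0$ degenerate --- which is where \cite{Fej04,BF18} are really needed. Granting that, the nondegeneracy reduces to the elementary computation above, the only technical care being the justification of the two limits $r\to r_c$ and $L\to 0$ of the apsidal integral.
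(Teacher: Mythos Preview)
Your proof is correct, and the analyticity argument is essentially the paper's own (both defer the endpoint extension across the circular orbits to \cite{Fej04,BF18}). The nondegeneracy argument, however, is genuinely different. The paper works \emph{locally} at the circular orbit: it expands $h_0$ to second order in $a_1$ via Birkhoff normal form (formula \eqref{HamExp}), computes the coefficient of $a_1/a_2$ in $\omega_2/\omega_1$ explicitly, and checks that this coefficient is nonzero unless $\ell\in\{-1,-\tfrac12,1\}$. You instead work \emph{globally}: you evaluate the ratio $\omega_2/\omega_1$ at two distinct points of the arc---the circular orbit, where it equals $1/\sqrt{2(\ell+1)}$, and the ray $a_2=0$, where it equals $1/2$---and observe that these differ for $\ell\ge 2$. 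Your route avoids the laborious second-order Birkhoff computation entirely and uses only the leading (linearized) frequency at the circular orbit, which is a real simplification; the price is that you need the value at the interior point $a_2=0$, but this is actually given for free by the paper's own Corollary~\ref{olomorfa.ovunque}, since $a_1=-L/2+f(E,L)$ with $f$ even in $L$ yields $\partial_L a_1|_{L=0}=-1/2$ and hence $\omega_2/\omega_1=1/2$ directly. Citing that corollary would replace your somewhat informal dominated-convergence step (``the integral concentrates near $r_m$'') by a one-line exact computation.
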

\proof Analyticity in the open set {$(0,3/4\pi)$} follows from the fact
that the Hamiltonian is an analytic function of the actions on the
interior of $\Pi$. Then remark that the points {$0,3/4\pi$ correspond to
$a_r=0$,} which is the circular orbit. {We consider just the case
  of $a_2>0$, the other one follows easily.}
$\omega(0)$ {is the
limit as $\phi\to 0$} of $\omega(\phi)$. Now, $\omega_2$ just
converges to the frequency of the circular orbit,
while $\omega_1$ converges to the frequency of small oscillation of
the effective system with Hamiltonian
\begin{equation}
  \label{effham}
h_0(r,p_r,a_2):=\frac{{p_r^2}}{2
}+\frac{a_2^2}{2r^2}+\frac{r^{2\ell}}{2\ell} \ .
\end{equation}
Now it is well known that the Birkhoff normal form at the minimum of
the effective potential (namely the circular orbit) is convergent, and
this allows to compute the development of the Hamiltonian, as a
function of the actions, at the circular orbit. In particular it turns
out that also the frequencies can be extended to complex analytic
functions close to the circular orbit. 

It was proved in \cite{Fej04} that close to the
circular orbit the Hamiltonian admits the expansion (for $a_1\ll |a_2|$) 
\begin{equation}\label{HamExp}
h_0(a_1,a_2)=V^*(a_2)+\sqrt{A(a_2)}a_1+\frac{-5B(a_2)^2+3C(a_2)A(a_2)}{48A(a_2)^2}a_1^2+o(a_1^2)\; ,
\end{equation}
where
$$\begin{aligned}
&V^*(a_2)=\frac{a_2^2}{2r_0^2}+V(r_0) 	\; , &A(a_2)=\frac{3 a_2^2}{r_0^4}+V''(a_0) \; ,\\
&B(a_2)=-\frac{12 a_2^2}{r_0^5}+V'''(r_0) \; , & C({a_2})=\frac{60
    a_2^2}{r_0^6}+V^{(4)}(r_0)\; ,\\
  &\quad r_0:= a_2^{\frac{1}{\ell+1}}\ .
\end{aligned}$$
We recall that, to obtain this formula, one has to think of $a_1$ as $\frac{\tilde p_r^2+\tilde
  r^2}{2}$ where $\tilde p_r$ and $\tilde r$ are rescaled, translated
variables. An explicit computation gives
\begin{equation}
  \label{hinazang}
h_0=\frac{\ell+1}{2\ell}a_2^{\frac{2\ell}{\ell+1}}+\sqrt{2(\ell+1)}a_2^{M}a_1+\frac{1}{2}\frac{3ac-5b^2}{24a^2}a_2^{-\frac{2}{\ell+1}}a_1^2+h.o.t. 
\end{equation}
where
$$
a=2(\ell+1)\ ,\quad b:=2(2\ell^2-2\ell-5)\ ,\quad
c:=2(4\ell^3-12\ell^2+11\ell+27)\ .
$$
(We emphasize that these letters will be used to denote such quantities
\emph{only in this proof}).
From \eqref{hinazang}, one can compute
$\omega_j=\frac{\partial h}{\partial a_j}$ close to any point of the
form $(a_1,a_2)=(0,\bar a_2)$, from which one sees that it extends to
a complex analytic function in a neighborhood of such point. One can
also compute
$\omega_2/\omega_1$, getting (with some work)
\begin{align}
\label{fra0}
\omega_1&\simeq \sqrt{2(\ell+1)}a_2^M\left(1+\frac{3ac-5b^2}{24a^2\sqrt{2(\ell+1)}  }\frac{a_1}{a_2}\right)
\ ,
\\
\label{fra00}
\omega_2&\simeq a_2^M\left(1+M\sqrt{2(\ell+1)}\frac{a_1}{a_2}\right)\ ,
\\
  \label{fra1}
\frac{\omega_2}{\omega_1}&\simeq
\frac{1}{\sqrt{2(\ell+1)}}\left(1+\left(M\sqrt{2(\ell+1)}-\frac{3ac-5b^2}{24a^2\sqrt{2(\ell+1)}}
\right)\frac{a_1}{a_2}\right) \ .
\end{align}
In particular, from \eqref{fra0} and \eqref{fra00}, we notice that
both $\omega_1$ and $\omega_2$ do not vanish
for $a_2\not=0$ and $0<a_1\ll|a_2|$.

A further computation gives the value of the coefficient of $a_1/a_2$
in \eqref{fra1},
which turns our to be
$$
d:=\frac{16(\ell+1)^2(\ell-1)(2\ell+1)}{24a^2\sqrt{2(\ell+1)}}\ .
$$
So, such a coefficient vanishes only if $$
\ell=-1/2,1,-1\ .
$$
By the way, we remark that they correspond to the Kepler and the
Harmonic cases, and also to the degenerate case of potential
proportional to $r^{-2}$.

So, if $c_2\not=0$, $c_1\omega_1+c_2\omega_2$ vanishes only if
$$
c_1+c_2\left(\frac{1}{\sqrt{2(\ell+1)}  }\left(
1+d\frac{a_1}{a_2}\right)  \right)  =0\ ,
$$
but this is a nontrivial analytic function of $a_1/a_2$, therefore it
is also a nontrivial function of $\phi$. If $c_2=0$ then $c_1\not=0$,
and therefore the same is true, since $\omega_1$ is different from
zero.
\qed

\appendix

\section{Proof of Lemma \ref{azioni}}\label{sec.residui}

We start by proving the following easy lemma

\begin{lemma}
  \label{Lnonzero}
  The function $a_r$ defined by \eqref{ar} is analytic on the domain
    \begin{equation}
      \label{defia2}
0<\left|L\right|<\left(\frac{2\ell}{\ell+1}E\right)^{\frac{\ell+1}{2\ell}}\ ,\quad
E>0\ ;
    \end{equation}
\end{lemma}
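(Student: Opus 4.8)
The plan is to remove the square‑root singularities at the turning points $r_m<r_M$ by rewriting the real integral \eqref{ar} as a contour integral over a \emph{fixed} loop in the complex $r$‑plane, after which analyticity in the parameters follows by differentiation under the integral sign. First I would multiply $E-V_L^*(r)$ by $-2\ell r^2$ and set $Q(r;E,L):=r^{2\ell+2}-2\ell E r^2+\ell L^2$, so that $\sqrt{E-V_L^*(r)}=\sqrt{-Q(r;E,L)}/(\sqrt{2\ell}\,r)$ on $(r_m,r_M)$. A direct study of $V_L^*$ on $(0,\infty)$ shows it is strictly decreasing on $(0,r_0)$ and strictly increasing on $(r_0,\infty)$, with $r_0=|L|^{1/(\ell+1)}$ and $\min V_L^*=V_L^*(r_0)=\frac{\ell+1}{2\ell}|L|^{2\ell/(\ell+1)}$; the condition \eqref{defia2} is precisely $E>V_L^*(r_0)$ with \emph{strict} inequality, so $V_L^*(r)=E$ has exactly two positive solutions $0<r_m<r_0<r_M$, both \emph{simple} zeros of $E-V_L^*$ (equivalently of $Q$) because $(V_L^*)'$ vanishes only at $r_0$; moreover $E-V_L^*>0$ on $(r_m,r_M)$, and since $L\neq0$ one has $Q(0;E,L)=\ell L^2\neq0$, so $0$ is not a zero of $Q$.

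Next, fixing a base point $(E_0,L_0)$ in the domain \eqref{defia2}, I would use that (by the previous step) the remaining zeros of $Q(\cdot\,;E_0,L_0)$ and the point $0$ lie at positive distance from $[r_m,r_M]$ to choose a bounded, simply connected open set $U\supset[r_m,r_M]$ with $0\notin\overline U$ containing no zero of $Q(\cdot\,;E_0,L_0)$ other than $r_m,r_M$. Since encircling two \emph{simple} zeros produces trivial monodromy, $\sqrt{-Q(r;E_0,L_0)}$ has a single‑valued holomorphic branch on $U\setminus[r_m,r_M]$ whose boundary value on the upper edge of the cut is the nonnegative real root. Picking a positively oriented simple loop $\gamma\subset U\setminus[r_m,r_M]$ encircling $[r_m,r_M]$ once and collapsing it onto the cut (the little circles about $r_m,r_M$ contribute nothing in the limit, since $\sqrt{-Q(r)}$ vanishes like $|r-r_m|^{1/2}$ at $r_m$ and like $|r-r_M|^{1/2}$ at $r_M$, while $r^{-1}$ is holomorphic and bounded on $\overline U$), I get
\[
a_r(E_0,L_0)=\frac{\sqrt2}{\pi}\int_{r_m}^{r_M}\frac{\sqrt{-Q(r;E_0,L_0)}}{\sqrt{2\ell}\,r}\,dr=c\oint_{\gamma}\frac{\sqrt{-Q(r;E_0,L_0)}}{r}\,dr
\]
with $c\neq0$ a constant depending only on $\ell$ and the orientation of $\gamma$.

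Then I would let $(E,L)$ vary. The coefficients of $Q(\cdot\,;E,L)$ are polynomials in $(E,L)$, so its zeros move continuously; hence there is $\varrho>0$ such that for all $(E,L)\in\C^2$ with $|E-E_0|+|L-L_0|<\varrho$ there are no zeros of $Q(\cdot\,;E,L)$ on $\gamma$, exactly two zeros inside $\gamma$ (the turning points of \eqref{ar} when $(E,L)$ is real), and still $0\notin\overline U$. On this polydisc the branch fixed at $(E_0,L_0)$ continues jointly holomorphically in $(r,E,L)$ for $r$ near $\gamma$, with no monodromy obstruction because $\gamma$ keeps surrounding exactly two zeros and the polydisc is simply connected. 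For real $(E,L)$ in \eqref{defia2}, repeating the collapsing argument shows $c\oint_\gamma r^{-1}\sqrt{-Q(r;E,L)}\,dr=a_r(E,L)$; and since this integrand is continuous in $r\in\gamma$ and holomorphic in $(E,L)$ on the polydisc, differentiation under the integral sign makes $(E,L)\mapsto c\oint_\gamma r^{-1}\sqrt{-Q}\,dr$ holomorphic there. Letting $(E_0,L_0)$ range over \eqref{defia2} and matching the overlapping local representations by the identity theorem, $a_r$ extends to a holomorphic function on a complex neighborhood of \eqref{defia2}; in particular it is analytic on \eqref{defia2}.

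The main obstacle is really just the bookkeeping behind these steps: one must check that the turning points are \emph{simple} (exactly where strictness in \eqref{defia2} is used — on the boundary $r_m=r_M$ is a double zero and the argument fails) and that $\gamma$ can be drawn to enclose precisely $r_m,r_M$ while avoiding the integrand's pole at $0$ (where the hypothesis $L\neq0$ enters: as $L\to0$ one has $r_m\to0$ and $\gamma$ gets pinched against $0$). With those two points secured, holomorphy in $(E,L)$ is routine. An equivalent, cut‑free variant would be the substitution $r=\tfrac{r_m+r_M}{2}+\tfrac{r_M-r_m}{2}\cos\theta$, $\theta\in[0,\pi]$: since $r_m,r_M$ are simple zeros they depend analytically on $(E,L)$ by the implicit function theorem, the transformed integrand is jointly analytic in $(\theta,E,L)$ for $\theta\in[0,\pi]$ and $(E,L)$ in \eqref{defia2}, and integrating over the fixed interval $[0,\pi]$ preserves analyticity.
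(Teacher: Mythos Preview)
Your proof is correct. It takes a genuinely different route from the paper's.

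The paper gives a brief geometric argument: the effective Hamiltonian $h_0^*(r,p_r,L)$ is analytic away from $r=0$, and for $L\neq0$ it is a submersion on every level set except the critical one (the boundary of \eqref{defia2}); hence the level curves form an analytic family in $(E,L)$, and $a_r$, being the normalized enclosed area, inherits analyticity. No contour integrals, no explicit study of the zeros of a polynomial.

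Your approach instead is the classical complex-analytic one: clear the denominator to get the polynomial $Q$, observe that the turning points are \emph{simple} zeros (this is exactly where strictness in \eqref{defia2} enters), that $0$ is not a zero (this is where $L\neq0$ enters), and then freeze the contour $\gamma$ so that holomorphy in $(E,L)$ follows by differentiation under the integral. This is longer but entirely self-contained; it does not appeal to any general statement about analytic families of level sets. It is also closer in spirit to what the paper does in the \emph{next} lemma (Lemma~\ref{lemma.residui}), where the substitution $s=r^2$ and a contour/residue computation are used to analyze the behavior as $L\to0$; in effect you have anticipated that machinery and used it already here. The alternative $\cos\theta$ substitution you mention at the end is also fine and is the standard ``action-integral is analytic between simple turning points'' argument.

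Either way, the two key hypotheses---strict inequality in \eqref{defia2} and $L\neq0$---are used in exactly the same places and for the same reasons in both proofs.
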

\proof The effective Hamiltonian $h_0^*(r,p_r,L)$ defined by
\eqref{efficace} is an analytic function except at $r=0$. Furthermore,
for $L\not=0$, $h^*$, as a function of $r,p_r$ is a submersion, except
on the level surface of level $E=|L|^{\frac{\ell+1}{2\ell}}$. So,
outside this domain the level surface depend analytically on both $E$
and $L$. Since $a_r$ is just the normalized area contained in the
level surface, it also depends analytically on $E,L$ in the considered
domain. \qed

We now study the behavior of $a_r$ as $L\to 0$. To this end we apply the
method of the residues to the integral defining it. Precisely, we
prove the following Lemma

%fin qui

\begin{lemma}\label{lemma.residui}
Define 
\begin{equation}
  \label{rho}
\rho=: \frac{\azdue^2}{E^{\frac{(\ell+1)}{\ell}}}\ ,
\end{equation}
then there exists $\rho_*$ and a function $f(E,L)$, analytic in the
domain
\begin{equation}
  \label{doma}
E>0, \quad 0\leq \rho<\rho_*\ ,
\end{equation}
s.t. 
  \begin{equation}\label{holomor}
		a_r = -\frac{1}{2} |\pth| + f(E, \pth)\,.
	\end{equation}
\end{lemma}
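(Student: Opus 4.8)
The plan is to turn $a_r$ into a period integral of an algebraic function, use the quasi‑homogeneity to reduce it to a one–parameter family, and then evaluate the period by the residue calculus, isolating exactly one non‑analytic contribution.

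On $(r_m,r_M)$ one has $E-V_L^*(r)=-P(r)/(2\ell r^2)$ with $P(r):=r^{2\ell+2}-2\ell E r^2+\ell L^2$, hence
$$a_r=\frac{1}{\pi\sqrt\ell}\int_{r_m}^{r_M}\frac{\sqrt{-P(r)}}{r}\,dr ,$$
$r_m<r_M$ being the two positive roots of $P$ with $P<0$ in between. The substitution $r=E^{1/(2\ell)}\tau$ gives $a_r=E^{(\ell+1)/(2\ell)}\Psi(\rho)$ with $\Psi(\rho):=\frac{1}{\pi\sqrt\ell}\int_{\tau_m}^{\tau_M}\frac{\sqrt{-\tilde P(\tau)}}{\tau}\,d\tau$, $\tilde P(\tau):=\tau^{2\ell+2}-2\ell\tau^2+\ell\rho$. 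Since $E^{(\ell+1)/(2\ell)}$ is analytic on $\{E>0\}$ and $E^{(\ell+1)/(2\ell)}\sqrt\rho=|L|$, it is enough to prove that $\Psi(\rho)=-\tfrac12\sqrt\rho+g(\rho)$ for some $g$ analytic on a neighborhood of $[0,\rho_*)$ (including complex $\rho$ near $0$), for a suitably small $\rho_*>0$; then $f(E,L):=E^{(\ell+1)/(2\ell)}g(L^2E^{-(\ell+1)/\ell})$ is analytic on $\{E>0,\ 0\le\rho<\rho_*\}$ and $a_r=-\tfrac12|L|+f(E,L)$.

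For $\rho\in(0,\rho_*)$ with $\rho_*$ small, $\tilde P$ has simple roots $\pm\tau_m$ near $\pm\sqrt{\rho/2}$ (the perturbed double root at the origin) and $2\ell$ roots near the roots of $\tau^{2\ell}=2\ell$, in particular $\tau_M$ near $(2\ell)^{1/(2\ell)}$; all roots stay simple, pairwise distinct, and bounded away from $0$ except $\pm\tau_m$, and depend analytically on $\rho$, with $\tilde P<0$ on $(\tau_m,\tau_M)$. I would introduce $y=\sqrt{\tilde P}$ with branch cuts $[\tau_m,\tau_M]$, $[-\tau_M,-\tau_m]$ and $\ell-1$ further cuts pairing the remaining roots inside fixed compacta away from $0$ and from one another, normalized so that $y\sim\tau^{\ell+1}$ at $\infty$ (no branch point there since $\tilde P$ has even degree). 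Then $y(\tau)/\tau$ is holomorphic off the cuts and off $0$, and collapsing a loop around $[\tau_m,\tau_M]$ gives $\Psi(\rho)=\frac{i}{2\pi\sqrt\ell}\oint_{\gamma_+}\frac{y(\tau)}{\tau}\,d\tau$ (up to a branch‑dependent sign, fixed below). Deforming $\gamma_+$ to a large circle and applying the residue theorem: the circle contributes $0$ (the coefficient of $\tau^{-1}$ in $y/\tau$ at $\infty$ vanishes for $\ell\ge2$), the loops around the other cuts contribute quantities analytic in $\rho$ on $[0,\rho_*)$ (fixed contours, $\tilde P$ analytic in $\rho$ and nonvanishing there), and the only remaining term is $2\pi i\,\mathrm{Res}_{\tau=0}(y/\tau)=2\pi i\,y(0)=\pm 2\pi i\sqrt{\ell\rho}$. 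For $\ell$ even the reflection $\tau\mapsto-\tau$ gives $\oint_{\gamma_-}=\oint_{\gamma_+}$, which lets one solve for $\oint_{\gamma_+}$; for $\ell$ odd it is cleaner to first substitute $s=r^2$, replacing $\tilde P$ by the even‑degree, non‑symmetric polynomial $\sigma^{\ell+1}-2\ell\sigma+\ell\rho$ and running the same argument with $\gamma_-$ absent. In every case one obtains $\Psi(\rho)=g(\rho)+c\sqrt\rho$, $g$ analytic, $c=\pm\tfrac12$.

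Finally I would pin down $c=-\tfrac12$. This follows from tracking the branch of $y$ from the side of the cut where $\tilde P<0$ around to $\tau=0$, where $\tilde P(0)=\ell\rho>0$; or, more cheaply, from the fact that $\Psi(\rho)=\frac{1}{\pi\sqrt\ell}\int_{\tau_m}^{\tau_M}\sqrt{2\ell-\tau^{2\ell}-\ell\rho/\tau^2}\,d\tau$ is strictly decreasing in $\rho$, so $\Psi(\rho)-\Psi(0)=c\sqrt\rho+O(\rho)<0$ for small $\rho>0$ forces $c=-\tfrac12$ (the value $c=+\tfrac12$ is excluded). The same sign also drops out of the elementary endpoint expansion: near $\tau=\tau_m$, $-\tilde P(\tau)=(\tau^2-\tau_m^2)R(\tau^2)$ with $R$ analytic and $R(0)\,\tau_m^2=\ell\rho$, and $\int_{\tau_m}^{c}\frac{\sqrt{\tau^2-\tau_m^2}}{\tau}\sqrt{R(\tau^2)}\,d\tau=(\text{analytic in }\rho)-\tfrac\pi2\sqrt{R(0)}\,\tau_m$, while $\sqrt{R(0)}\,\tau_m=\sqrt{\ell\rho}$. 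The main obstacle is the bookkeeping: making sure the ``other'' contributions are genuinely analytic in $\rho$ across $\rho=0$ (which is why the cuts are fixed in compacta away from the origin), handling the parity of $\ell$ so that $\sqrt{\tilde P}$ has no branch point at infinity and the reflection symmetry does not annihilate the term one wants, and carefully tracking the square‑root branch to get the sign of $c$.
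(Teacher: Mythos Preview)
Your approach is essentially the same as the paper's: both reduce $a_r$ to a period integral of an algebraic function and extract the non-analytic part as the residue at the origin, with the remaining contour integrals furnishing the analytic $f$. The one substantive difference is the order of substitutions. The paper performs $s=r^2$ \emph{first}, obtaining
\[
a_r=\frac{\sqrt2}{2\pi}\int_{s_m}^{s_M}\frac{\sqrt{-p(s,E,L)}}{s}\,ds,\qquad p(s,E,L)=\frac{s^{\ell+1}}{2\ell}-Es+\frac{L^2}{2},
\]
and then rescales $t=s\,E^{-1/\ell}$. This makes the polynomial of degree $\ell+1$ regardless of parity, kills the reflection symmetry you have to manage, and leaves exactly one small root near the origin instead of the pair $\pm\tau_m$; the contour is then a single fixed rectangle around the segment $[s_m,s_M]$ and the origin, and $2\pi i\,\mathrm{Res}_{s=0}F=-|L|$ drops out directly. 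Your route through $\tilde P(\tau)$ of degree $2\ell+2$ is correct but forces the even/odd case split you describe; you yourself note that for odd $\ell$ the $s=r^2$ substitution is cleaner, and in fact it is cleaner uniformly.

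Your monotonicity argument to pin down $c=-\tfrac12$ is a pleasant addition; the paper determines the sign by direct evaluation of the residue (with the branch of $\sqrt{-p}$ fixed by positivity on $(s_m,s_M)$), which is shorter but less robust to sign errors. Your remark that the large circle contributes nothing for $\ell\ge2$ is correct (the Laurent expansion of $y/\tau$ at infinity has exponents $\ell,-\ell,-\ell-2,\dots$), and the paper sidesteps this entirely by using a bounded rectangle rather than deforming to infinity.
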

\begin{proof}
Performing the change of variables $r^2 =s,$ in the integral defining
$a_r$ one has
\begin{align} 
\azuno = \frac{\sqrt{2}}{2 \pi} \int_{s_m}^{s_M} \frac{1}{s}\sqrt{-p(s, E, \azdue)}\ d s\,, \label{cdv}
\end{align}
with
\begin{equation}\label{def.p}
{p(s, E, \azdue) = \frac{s^{\ell+1}}{2\ell} - Es + \frac{\azdue^2}{2}}
\end{equation}
and $s_m = s_m(E, \azdue),$ $s_M=s_M(E, \azdue)$ the two positive solutions of the equation ${p(s, E, \azdue) = 0\,.}$\\

We now study the zeroes of $p(s, E, \azdue)$. To this aim we observe
that, since $p$ is a polynomial in $s$ with coefficients depending on
$E$ and $\azdue$, it has $\ell+1$ complex roots which of course depend
continuously on $E$ and $L$. Actually there is more structure: indeed
$s_*(E,L)$ is a root of $p$ if and only if $t_*=s_*E^{-1/\ell}$ is a
root of
\begin{equation}\label{def.rho}
\tilde p(t) = \frac{t^{\ell+1}}{2\ell} - t + \frac{1}{2}\rho\,, \quad \,,
\end{equation}
which is a function of $\rho$ only. 

Since we are interested in a neighborhood of $\rho=0$, we start by
remarking that, for $\rho=0$ the roots of $\tilde p$ are 
\begin{equation}\label{ell.zero}
t_0(0) = 0\,, \quad \left(t_j(0)\right)^{\ell} = 2 \ell \quad \forall j = 1, \dots, \ell\,.
\end{equation}
To be determined we take $t_1(0):=(2\ell)^{1/\ell}$ to be the
positive real root of $2\ell$ and the other roots in counterclockwise
order. Correspondingly we will have
$$
E^{1/\ell}t_0(\rho)=s_m(E,L)\ ,\quad E^{1/\ell}t_1(\rho)=s_M(E,L)\ .
$$
Denote
$$
d:=\min_{j\not=l}\left|t_j(0)-t_l(0)\right|\ , 
$$
then there exists $\rho_*$ s.t., for $\rho<\rho_*$ one has
$$
\min_{j\not=l}\left|t_j(\rho)-t_l(\rho)\right|\geq d/2
$$
and correspondingly
\begin{equation}\label{dist.zero.3}
\min_{j\not=l}\left|s_j(E, \azdue) - s_{l}(E, \azdue)\right| \geq
\frac{d}{2} E^{\frac{1}{\ell}}
\end{equation}
The function to be integrated in \eqref{cdv} is 
\begin{equation}\label{non.sono.olomorfa}
F(z) := \frac{\sqrt{2}}{2\pi}\frac{1}{z} \sqrt{- p(z, E, \azdue)}\,,
\end{equation}
In order to make it holomorphic we cut $\C$ along the segments $b_j$
joining $s_{2(j-1)}(E, \azdue)$ with $s_{2 j - 1}(E, \azdue)$,
$j=1,...,\lceil \frac{\ell}{2} \rceil$; if $\ell$ is odd, there is a
last cut $b_{\lceil \frac{\ell}{2} \rceil}$, which is the half-line
parallel to the real axis joining $s_{\ell}(E, \azdue)$ with
$\infty$. Remark that $b_1$ is the interval of integration in which we
are interested. 

We are now ready to choose the curve over we integrate to apply the
method of the residue. To this end we define  
\begin{equation}\label{def.max}
{\tt M} := \max_{j=1,..,\ell, \rho<\rho_*}\textrm{ Re}
\left(t_j(\rho)\right)\ ,
\end{equation}

\begin{figure}
	\centering \includegraphics[scale=0.5]{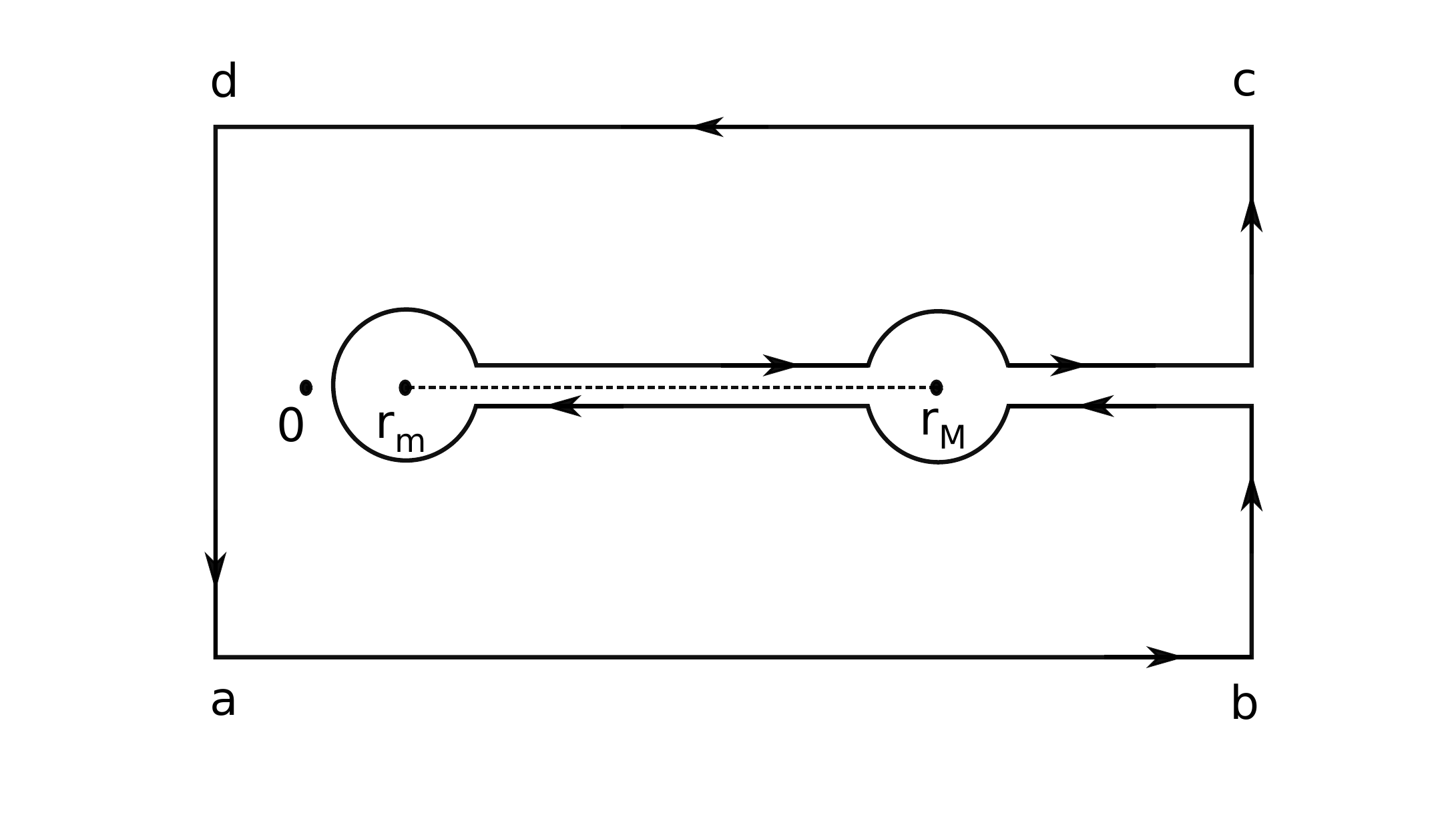}
	\caption{ \small Representation of the path $\Gamma_\varep$. The radius of the circles is $\varep$, and the points $a, b, c, d$ are defined as in \eqref{endpoints}.}
\label{cammino}
\end{figure}

and take the curve $\Gamma_\varepsilon$ described in Figure 1, with
\begin{equation}\label{endpoints}
\begin{gathered}
a = \left(-\frac{d}{4}- \im \frac{d}{4}\right) E^{\frac{1}{\ell}}\,, \quad b= \left(2{\tt M} - \im \frac{d}{4} \right)E^{\frac{1}{\ell}}\,,\\
c =\left(2 {\tt M} + \im \frac{d}{4}\right) E^{\frac{1}{\ell}}\,, \quad d = \left(-\frac{d}{4} + \im \frac{d}{4}\right) E^{\frac{1}{\ell}}\,.
\end{gathered}
\end{equation}
We also denote $\gamma_R$ the boundary of the rectangle $abcd$. With
this notation we have 
	\begin{equation}\label{lim.ep}
\begin{aligned}
2 a_r &= \lim_{\varep \rightarrow 0} \int_{\Gamma_\varep} F(z)\ d z  -  \int_{\gamma_R} F(z)\ d z\\
&= 2 \pi \im \textrm{Res}(F, 0) -  \int_{\gamma_R} F(z)\ d z=-|\azdue|  - \int_{\gamma_R} F(z)\ d z\,.
\end{aligned}
\end{equation}
        To get the thesis just define
        $$
f(E,L):= -  \int_{\gamma_R} F(z)\ d z\ ,
        $$
        and remark that this is analytic in the considered
        domain.
\end{proof}

Joining the results of these two lemmas one gets

\begin{corollary}\label{olomorfa.ovunque}
  There exists function $f=f(E,L)$ analytic in the domain
  \begin{equation}
    \label{domo}
\left\{(E, \pth)\ \left|\ E>0,\ |\pth| < \left(\frac{2\ell}{\ell +
    1} E\right)^{\frac{\ell+1}{2 \ell}}\right.\right \}\ ,
  \end{equation}
  such that
 \begin{equation}\label{ovunque}
  a_r(E, \pth) = -\frac{1}{2}|\pth| + f(E, \pth)\ .
 \end{equation}
\end{corollary}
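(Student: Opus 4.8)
The plan is to glue the two descriptions of $a_r$ provided by Lemmas \ref{Lnonzero} and \ref{lemma.residui}: the former gives analyticity of $a_r$ itself away from $L=0$, the latter isolates the singular part $-\tfrac12|L|$ near the circular orbit $L=0$, and together they produce a single analytic $f$ on the whole domain \eqref{domo}.

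Write $\Om_0$ for the domain \eqref{domo}, and set $\Om_0^{\pm}:=\Om_0\cap\{\pm L>0\}$, so that $\Om_0\setminus\{L=0\}=\Om_0^{+}\sqcup\Om_0^{-}$. Let $U:=\Om_0\cap\{\rho<\rho_*\}$, with $\rho,\rho_*$ as in Lemma \ref{lemma.residui}; since $\rho=0$ when $L=0$, we have $\{L=0\}\cap\Om_0\subseteq U$, and on $U$ Lemma \ref{lemma.residui} supplies an analytic function $f$ with $a_r=-\tfrac12|L|+f$.

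On $\Om_0^{+}$ I would define $f_+:=a_r+\tfrac12 L$, which is analytic there since $a_r$ is analytic on $\Om_0^{+}$ by Lemma \ref{Lnonzero} (and $L\mapsto\tfrac12 L$ is analytic); similarly set $f_-:=a_r-\tfrac12 L$ on $\Om_0^{-}$. On the overlap $U\cap\Om_0^{+}$ one has $|L|=L$, hence Lemma \ref{lemma.residui} gives $f=a_r+\tfrac12 L=f_+$ there, and analogously $f=f_-$ on $U\cap\Om_0^{-}$; the remaining overlap $\Om_0^{+}\cap\Om_0^{-}$ is empty. Since $U$, $\Om_0^{+}$, $\Om_0^{-}$ cover $\Om_0$ and the corresponding functions agree on all pairwise intersections, they patch together to a single analytic function $f$ on $\Om_0$. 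The identity $a_r=-\tfrac12|L|+f$ then holds on $\Om_0^{\pm}$ by construction and on $\{L=0\}\cap\Om_0$ directly from Lemma \ref{lemma.residui}, hence on all of $\Om_0$.

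There is no genuine difficulty here beyond the bookkeeping of domains; the one point worth stressing is that analyticity of $f$ across $\{L=0\}$ cannot be read off from Lemma \ref{Lnonzero} alone, because $L\mapsto|L|$ is not analytic there, and this is precisely what the residue computation in Lemma \ref{lemma.residui} is for.
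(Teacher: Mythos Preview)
Your argument is correct and is precisely the gluing the paper has in mind: the text proves the corollary simply by stating ``Joining the results of these two lemmas one gets'' the claim, and your proof spells out this patching of the analytic function from Lemma~\ref{lemma.residui} near $L=0$ with the functions $a_r\pm\tfrac12 L$ obtained from Lemma~\ref{Lnonzero} on $\{\pm L>0\}$. Nothing further is needed.
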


We finally prove Lemma \ref{azioni}:
\begin{proof}[Proof of Lemma \ref{azioni}]
 Let
 \begin{equation}\label{a1.quella.giusta}
 a_1 := \begin{cases}
 a_r & \textrm{if } \pth \geq 0\\
 a_r - L &\textrm{if } \pth < 0\,.
 \end{cases}
 \end{equation}
then, 
by Corollary \ref{ovunque}, there exists a function $f$ analytic in
the domain \eqref{domo} s.t.
$$
a_1=-\frac{L}{2}+f(E,L)\ ,
$$
so that $a_1$
itself is analytic in such a
domain. This concludes the proof of Items $(1)$ and $(2)$.

We come to the homogeneity properties. Take $L\not=0$, then by
\eqref{ar} and performing the change of variables $r =
|\pth|^{\frac{1}{\ell+1}} x$ in \eqref{ar}, one sees that
 $$
 a_r = \frac{\sqrt{2 E}}{\pi} |\pth|^{\frac{1}{\ell+1}} F(\rho)\,, \quad F(\rho) = \int_{x_m}^{x_M} \sqrt{1-\rho\left( \frac{1}{2x^2} + \frac{x^{2\ell}}{2\ell}\right)}\,d x\,,
 $$
 with $\rho$ as in \eqref{def.rho} and $x_m, x_M$ solving the equation $
 \displaystyle{\rho \frac{x^{2\ell}}{2\ell} + \rho\frac{1}{2x^2} -1 = 0.}$
 As a consequence, for any $\lambda >0$ $a_r$ satisfies
 $$
 a_r(\lambda^{\frac{2\ell}{\ell+1}}E, \lambda \pth) = \lambda^{\frac{\ell}{\ell+1}} \sqrt{2 E} \lambda^{\frac{1}{\ell+1}} |\pth| F(\rho) =\lambda a_r(E, \pth)\,,
 $$
also $a_1$ defined as in \eqref{a1.quella.giusta} satisfies
 \begin{equation}\label{a1.omog}
  a_1(\lambda^{\frac{2\ell}{\ell+1}}E, \lambda \pth) = \lambda a_r(E, \pth)\,.
 \end{equation}
 Since $h_0$ and $a_2$ as in \eqref{L} are quasi-homogeneous functions
 of $x, \xi$, one immediately deduces the quasi-homogeneity of $a_1$.
 The homogeneity of $h_0$ as a function of $a$ also immediately follows
 from \eqref{a1.omog}. We still have to consider the case $L=0$. In
 this case the result immediately follows by continuity, by taking the
 limit of the functions to this set.
 \end{proof}

\section{Proof of Lemma \ref{lem.a.dependence}}\label{tutto.sbagliato}

First we prove the following
\begin{lemma}
  \label{aaa.1}
Let {$\Pi^*:=a^{-1}(\mathop{\Pi}^\circ)$ and suppose} $f\in
C^{\infty}(\R^2)$ is such that $f\circ \phi^{\vf}_a=f$, $\forall \vf$
then there exists $\tilde f\in C^{\infty }(\Piz)$ s.t. $f=\tilde
f\circ a$ { on $\Pi^*$}. 
\end{lemma}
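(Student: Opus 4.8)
The plan is to recognise the action map $a=(a_1,a_2)$, restricted to $\Pi^*$, as a proper submersion with connected fibres onto $\Piz$, and then to quotient. First I would check the submersion property: $da_2$ is nonzero on $\R^4\setminus\{0\}$, while by Lemma~\ref{azioni}(4) the map $E\mapsto a_1(E,a_2)$ has an analytic inverse on $\Piz$, so $\partial_E a_1\neq 0$ there and $da_1=(\partial_E a_1)\,dh_0+(\partial_L a_1)\,da_2$; hence $\{da_1,da_2\}$ is independent exactly where $\{dh_0,da_2\}$ is, and the latter fails only along circular orbits (where the $h_0$-trajectory coincides with a rotation orbit) or at the origin, i.e.\ where $a_r=0$ or $a=0$, a locus contained in $\partial\Pi$ which therefore never meets $\Pi^*$. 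Properness of $a|_{\Pi^*}$ is immediate from Remark~\ref{equia}: since $\langle a(x,\xi)\rangle\simeq\tk_0(x,\xi)$, the preimage of a compact subset of $\Piz$ is closed and bounded in $\R^4$, hence compact.

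By Ehresmann's theorem $a|_{\Pi^*}$ is then a locally trivial fibration over the connected set $\Piz$, so all its fibres are diffeomorphic; to see they are connected it suffices to exhibit one connected fibre. Fixing a point of $\Piz$ with $a_2=L\neq 0$ and passing to the effective radial Hamiltonian $h_0^*(r,p_r,L)=\tfrac{p_r^2}{2}+V_L^*(r)$, whose level set at the corresponding energy is a single closed curve in the half-plane $r>0$ (since $V_L^*$ has a unique well), one reconstructs the associated joint level set of $(a_1,a_2)$ as that curve times the circle of angular positions --- a $2$-torus, hence connected. Therefore every fibre of $a|_{\Pi^*}$ is connected. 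Now the joint flow $\phi^\vf_a=\phi^{\vf_1}_{a_1}\circ\phi^{\vf_2}_{a_2}$ is $2\pi$-periodic in each $\vf_i$, so through any point it sweeps out the image of an immersion $\T^2\to\Pi^*$ (an immersion because $\{da_1,da_2\}$ is independent), i.e.\ a compact, $2$-dimensional immersed submanifold of the two-dimensional fibre through that point; being open (by invariance of domain) and closed (by compactness) in the fibre, and the fibre being connected, this orbit equals the whole fibre. Thus the hypothesis $f\circ\phi^\vf_a=f$ for all $\vf$ says precisely that $f$ is constant on the fibres of $a|_{\Pi^*}$.

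I would then conclude by the standard quotient principle for submersions: a smooth function constant on the fibres of a surjective submersion factors through a smooth function on the base. Concretely, around each point of $\Piz$ one picks a smooth local section $\sigma$ of $a$ and sets $\tilde f:=f\circ\sigma$; this is smooth, is independent of the choice of section because $f$ is fibrewise constant, hence is globally well defined on $\Piz$, and satisfies $f=\tilde f\circ a$ on $\Pi^*$, uniqueness following from surjectivity of $a$ onto $\Piz$. The only points requiring real work, as opposed to soft differential topology, are the absence of circular orbits in the interior $\Piz$ --- which is what makes $a$ a submersion there --- and the explicit identification of a single joint level set as a torus; I expect the former to be the main obstacle, since it rests on the fine structure of the effective potential $V_L^*$ and of the radial action $a_r$ established in Lemma~\ref{azioni}.
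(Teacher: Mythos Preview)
Your argument is correct, but it is considerably more elaborate than what the paper does. The paper's proof is a two-line appeal to the Liouville--Arnold/Duistermaat theory: it asserts that smooth action--angle coordinates $(a,\vf)$ exist globally on $\Pi^*$ (citing \cite{Dui80}), and then observes that in such coordinates the invariance $f\circ\phi^\vf_a=f$ simply means $f$ is independent of $\vf$, so $\tilde f(a):=f(a,\vf)$ does the job.

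Your route is genuinely different: instead of invoking the existence of global angle coordinates, you establish directly that $a|_{\Pi^*}$ is a proper submersion (via Lemma~\ref{azioni}(4) and Remark~\ref{equia}), use Ehresmann to get a locally trivial fibration, identify one fibre as a $2$-torus to get connectedness of all fibres, and then show the $\T^2$-action is transitive on fibres so that invariance under the flow is exactly fibrewise constancy. This is more self-contained --- in particular you never need to know that the monodromy and Chern-class obstructions to global angle coordinates vanish, which the paper's one-liner implicitly assumes (and only substantiates later, in Lemma~\ref{angolo}, by exhibiting quasi-homogeneous angles). The price is length: you reprove by hand the part of Liouville--Arnold that is actually needed here. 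Your identification of the only delicate point --- that circular orbits, where $dh_0$ and $da_2$ become dependent, lie on $\partial\Pi$ and not in $\Piz$ --- is exactly right and is the substantive content either approach must use.
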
 
\proof Introducing action angle coordinates
$(a,\vf)$, which by the standard theory (see e.g. \cite{Dui80}) are
smooth and globally defined on the set
%$\Pi^*:=a^{-1}(\mathop\Pi^\circ)$,
{$\Pi^*$,} the function $f$ is a $C^\infty$
function of $(a,\vf)$ which however does not depend on $\vf$. This
is the wanted function $\tilde f$. \qed

\begin{lemma}
	\label{angolo}
	There exist angle variables $\vphi$ that are quasi-homogeneous of degree $0$ as functions on $\Pi^*$, namely they satisfy
	$$
	\vphi(\lambda x, \lambda^\ell\xi) = \vphi(x, \xi) \quad \forall (x, \xi) \in \Pi^*,\ \forall \lambda \in \R^+\,.
	$$ 
\end{lemma}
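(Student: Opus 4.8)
The plan is to obtain the desired angle variables from a section of the action map $a:\Pi^*\to\Piz$ that is \emph{equivariant} under the anisotropic dilations $\delta_\lambda(x,\xi):=(\lambda x,\lambda^\ell\xi)$, $\lambda>0$. The first ingredient is the compatibility of these dilations with the integrable structure. Since $a_1$ and $a_2$ are quasi-homogeneous of degree $\ell+1$ and $\Pi$ is a (convex) cone, one has $a\circ\delta_\lambda=\lambda^{\ell+1}a$, so $\Pi^*=a^{-1}(\Piz)$ is invariant under every $\delta_\lambda$; moreover, Lemma \ref{homoa} applied to $a_1$ (and to $a_2$ by the same argument, $a_2$ being quasi-homogeneous of degree $\ell+1$ as well) gives $\phi^{\vf_i}_{a_i}\circ\delta_\lambda=\delta_\lambda\circ\phi^{\vf_i}_{a_i}$, whence
\begin{equation}\label{eq.comm.flow.dil}
\phi^{\vf}_a\circ\delta_\lambda=\delta_\lambda\circ\phi^{\vf}_a\ ,\qquad\forall\,\vf\in\T^2,\ \lambda>0\ .
\end{equation}
The crucial point in \eqref{eq.comm.flow.dil} is that the dilation does \emph{not} rescale the flow time of the actions, precisely because the actions are normalized so that their flows have period $2\pi$.

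Next, by the standard theory of action--angle coordinates already used in the proof of Lemma \ref{aaa.1} (with $\Piz$ convex, hence simply connected, so there is no monodromy obstruction), fix a smooth globally defined $\psi:\Pi^*\to\T^2$ such that $(a,\psi):\Pi^*\to\Piz\times\T^2$ is a diffeomorphism and $\psi\circ\phi^{\vf}_a=\psi+\vf$ for all $\vf\in\T^2$. Put $S:=\{u\in\Piz:\|u\|=1\}$, let $q:S\to\Pi^*$ be the restriction to $S$ of $b\mapsto(a,\psi)^{-1}(b,0)$ (a smooth map with $a\circ q=\mathrm{id}_S$), and define
\begin{equation}\label{eq.sezione.equiv}
p(b):=\delta_{\|b\|^{1/(\ell+1)}}\bigl(q(b/\|b\|)\bigr)\ ,\qquad b\in\Piz\ .
\end{equation}
Since $0\notin\Piz$ and $\lambda\mapsto\delta_\lambda$, $t\mapsto t^{1/(\ell+1)}$ are smooth on $\R^+$, the map $p$ is smooth; using $a\circ\delta_\lambda=\lambda^{\ell+1}a$ one gets $a\circ p=\mathrm{id}_{\Piz}$, and directly from \eqref{eq.sezione.equiv} one gets the equivariance $p(\lambda^{\ell+1}b)=\delta_\lambda(p(b))$ for all $\lambda>0$.

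Finally, define for $z\in\Pi^*$
\begin{equation}\label{eq.def.angolo}
\vf(z):=\psi(z)-\psi\bigl(p(a(z))\bigr)\in\T^2\ ,
\end{equation}
which is smooth and, by the characterizing property of $\psi$ together with $a\circ p=\mathrm{id}$, satisfies $z=\phi^{\vf(z)}_a\bigl(p(a(z))\bigr)$. Since $z\mapsto\psi(p(a(z)))$ depends on $z$ only through $a(z)$, the relations $\{a_i;\vf_j\}=\delta_{ij}$ are inherited from those of $\psi$, so $(a,\vf)$ are angle variables conjugate to the actions (if one also wants the fully canonical normalization $\{\vf_i;\vf_j\}=0$ it suffices to subtract from $\vf$ a suitable closed, degree-$0$ homogeneous function of $a$, which does not affect the homogeneity below; but this refinement is not needed in the sequel). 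To check quasi-homogeneity of degree $0$, apply $\delta_\lambda$ to $z=\phi^{\vf(z)}_a(p(a(z)))$ and use \eqref{eq.comm.flow.dil}, the equivariance of $p$ and $a\circ\delta_\lambda=\lambda^{\ell+1}a$:
\begin{equation}\label{eq.check.qh}
\delta_\lambda z=\phi^{\vf(z)}_a\bigl(\delta_\lambda p(a(z))\bigr)=\phi^{\vf(z)}_a\bigl(p(\lambda^{\ell+1}a(z))\bigr)=\phi^{\vf(z)}_a\bigl(p(a(\delta_\lambda z))\bigr)\ .
\end{equation}
Applying $\psi$ to \eqref{eq.check.qh} and using $\psi\circ\phi^{\vf}_a=\psi+\vf$ yields $\psi(\delta_\lambda z)=\psi\bigl(p(a(\delta_\lambda z))\bigr)+\vf(z)$, which by \eqref{eq.def.angolo} is exactly $\vf(\delta_\lambda z)=\vf(z)$, i.e.\ $\vf(\lambda x,\lambda^\ell\xi)=\vf(x,\xi)$.

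Everything above is routine once \eqref{eq.sezione.equiv} is written down; the only points deserving care are the commutation \eqref{eq.comm.flow.dil} of the flow of the actions with the dilations — which is precisely the content of Lemma \ref{homoa}, the one place where the quasi-homogeneity of the actions and the $2\pi$-normalization of their flows are used — and the global existence of $\psi$ on $\Piz$, which holds because $\Piz$ is simply connected.
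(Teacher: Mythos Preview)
Your proof is correct and follows essentially the same construction as the paper: start from an arbitrary global angle map, build a dilation-equivariant section of the action fibration by choosing a zero-section on the unit sphere $\{\|a\|=1\}$ and extending via $\delta_\lambda$, and then shift the angles by their value on this section. Your $p(b)=\delta_{\|b\|^{1/(\ell+1)}}(q(b/\|b\|))$ is exactly the paper's $(\lambda x_0,\lambda^\ell\xi_0)$, and your shifted $\vf$ coincides with the paper's $\tilde\vphi$.

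The only noteworthy difference is in the final verification of quasi-homogeneity. You argue geometrically via the commutation \eqref{eq.comm.flow.dil} of the action flows with $\delta_\lambda$ (i.e.\ Lemma \ref{homoa}), which immediately forces $\vf\circ\delta_\lambda=\vf$. The paper instead computes directly that $\{a_i,\vphi_j\circ\delta_\mu\}=\delta_{ij}$ using the quasi-homogeneity of $\partial_x a_i$ and $\partial_\xi a_i$, deduces that $\vphi\circ\delta_\mu-\vphi$ is a function of the actions only, and then shows it vanishes by evaluating on the equivariant section. The two arguments are equivalent---the Poisson-bracket identity is the infinitesimal form of your flow commutation---but your route is a bit cleaner since it reuses Lemma \ref{homoa} rather than reproving its content.
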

\begin{proof}
	Let ${\vphi} = (\vphi_1, \vphi_2): \Pi^* \rightarrow \T^2$ be
        such that $(a, {\vphi})$ are global action angle
        coordinates. For any choice of $\bar a\in \Piz$ with $\|\bar
        a\|=1$,         let $(x_0, \xi_0) \in \Pi^*$  be such that
        $a(x_0,\xi_0)=\bar a$ and $\vphi(x_0, \xi_0) = 0$. Remark that it
        exists because the action $\phi_a^{\vf}$ on the level surfaces
        of $a$ is transitive. Furthermore $(x_0,\xi_0)$ is a function of $a$
        only.  
For $(x, \xi) \in \Pi^*$, define $\lambda \in \R^+$ and $(\widetilde
x,\widetilde \xi)$ by
	\begin{equation}\label{lambda}
	(x, \xi) = (\lambda \tilde{x}, \lambda^\ell \tilde{\xi})\,, \quad \textrm{and}\quad \|a(\widetilde{x}, \widetilde{\xi})\| = 1\,.
	\end{equation}
	Observe that with this definition $\lambda$ is a function of the actions $a$ only. Then this implies that the function $\tilde{\vphi} = (\tilde{\vphi}_1, \tilde{\vphi}_2): \Pi^* \rightarrow \T^2$ given by
	\begin{equation}\label{def.hom}
	\tilde{\vphi}(x, \xi) := \vphi(x, \xi) - \vphi(\lambda
        {x_0}, \lambda^{\ell} {\xi_0})
	\end{equation}
	still defines angle coordinates conjugated to the actions $a$
        on the set $\Pi^*$. This is due to the fact that $\vphi(\lambda
        {x_0},\lambda^\ell\xi_0)$ is a function of the actions only.
        Remark also that
        $$
\tilde \vf(\lambda x_0,\lambda^\ell \xi_0) =0\ ,\quad \forall
\lambda>0\ .
        $$
From now on we use only the angles $\tilde\vf$, so {\bf we omit the
  tildes}. 

        We are now going to prove that these angles are homogeneous
        functions of degree $0$ on $\Pi^*$.

         To this aim, we define for $\mu>0$ and $j = 1,2$
	$$
	\vphi_{\mu, j}(x, \xi):= \vphi_j(\mu x, \mu^{\ell}\xi)
	$$
	First of all, we observe that since $(a, \vphi)$ are
        canonically conjugated variables, one has
	$\displaystyle{\poisson{a_j}{\vphi_i} = \delta_{i, j}}$, so
        one has 
	$$
	\begin{aligned}
	\poisson{a_i}{\vphi_{\mu, j}}(x, \xi) = \mu^{\ell}\partial_{x} a_i (x, \xi) \cdot \partial_{\xi} \vphi_{j} (\mu x, \mu^{\ell}\xi) - \mu\partial_{\xi} a_i (x, \xi) \cdot \partial_{x} \vphi_{j} (\mu x, \mu^{\ell} \xi)\,.
	\end{aligned}
	$$
Since $a_i$ is quasi-homogeneous of degree $\ell+1,$ $\forall (x, \xi) \in \Pi^*$ one has
	$$
	\begin{gathered}
	\mu^{\ell}\partial_{x} a_i (x, \xi)  =  \partial_{x} a_i(\mu x, \mu^{\ell} \xi) \,, \\ \mu \partial_{\xi} a_i (x, \xi)  =  \partial_{\xi}a_i(\mu x, \mu^{\ell} \xi)\,,
	\end{gathered}
	$$
	one also obtains
	\begin{equation}\label{poisson}
	\begin{aligned}
	\poisson{a_i}{\vphi_{\mu, j}}(x, \xi)
	= \poisson{a_i}{\vphi_{ j}}(\mu x, \mu^{\ell}\xi) \equiv
        \delta_{i,j} \ .
	\end{aligned}
	\end{equation}
	Thus, using action angle coordinates to compute Poisson
        Brackets, one has
        $$
\left\{a_j;\vf_{\mu,j}\right\}\equiv\frac{\partial
  \vf_{\mu,j}}{\partial \vf_j} =1\ ,
$$ therefore there exist functions $f_{\mu, j}(a)$, depending on the
actions only, such that
	$$
	{\vphi}_{\mu, j} - {\vphi}_{j} = f_{\mu, j}(a)\,.
	$$
To prove that $f_{\mu,j}$ is identically zero, fix a value  $\bar a$
of $a$, corresponding to some point $(\bar x,\bar\xi)\in \Pi^*$ define
${\lambda}:=\|\bar a\|^{\frac{1}{\ell+1}}$,  then we have
\begin{align*}
f_{\mu, j}(a(\bar x,\bar\xi))&=f_{\mu, j}\left(\lambda^{\ell+1}a\left(\frac{\bar
  x}{ \lambda},\frac{\bar\xi}{\lambda^\ell}\right)\right)=f_{\mu,
  j}(\lambda^{\ell+1}{a({x_0},{\xi_0})})
\\
&=f_{\mu, j}\left(a\left({\lambda}
{{x_0}}, {\lambda^\ell} {{\xi_0}}\right)\right)= {\vphi}_{\mu, j}\left(
{\lambda} x_0,  {\lambda^\ell} \xi_0\right) - {\vphi}_{j} \left( {\lambda}
x_0, {\lambda^\ell} \xi_0\right)=0-0=0  .
\end{align*}
\end{proof}

\begin{corollary}
Let $\check{x}, \check{\xi}: \Pi^\circ \times \T^2\rightarrow  \Pi^* $
be the functions expressing the Cartesian coordinates in terms of the
action angle coordinates, namely s.t.
$$
(\check{x}, \check{\xi}) \circ (a, \vphi) = \textrm{Id}\,, \quad (a, \vphi) \circ (\check{x}, \check{\xi}) = \textrm{Id}\,.
$$
Then the following holds:
\begin{equation}\label{zuppa.di.cipolle}
	\check{x} (\lambda^{\ell+1}a, \vphi) = \lambda \check{x}(a, \vphi)\,, \quad \check{\xi} (\lambda^{\ell+1}a, \vphi) = \lambda^{\ell} \check{\xi}(a, \vphi) \quad \forall \lambda \in \R^+,\ (a, \vphi) \in \Pi^\circ \times \T^2\,.
\end{equation}
\end{corollary}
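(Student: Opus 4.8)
The plan is to deduce \eqref{zuppa.di.cipolle} directly from the homogeneity properties of the action--angle map just established, with essentially no computation. The two inputs are: the actions $a=(a_1,a_2)$ are quasi-homogeneous of degree $\ell+1$ (Lemma \ref{azioni}(3) for $a_1$, and trivially for the quadratic polynomial $a_2$), i.e. $a(\lambda x,\lambda^\ell\xi)=\lambda^{\ell+1}a(x,\xi)$ for all $\lambda>0$; and the normalized angles $\vphi$ are quasi-homogeneous of degree $0$ by Lemma \ref{angolo}, i.e. $\vphi(\lambda x,\lambda^\ell\xi)=\vphi(x,\xi)$. Here $\vphi$ denotes precisely the angles constructed in Lemma \ref{angolo} (with tildes dropped), which are the ones used to define $(\check x,\check\xi)$.

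First I would record that the dilation $(x,\xi)\mapsto(\lambda x,\lambda^\ell\xi)$ maps $\Pi^*$ bijectively onto itself for every $\lambda>0$: this follows from quasi-homogeneity of $a$ together with the fact that $\Piz$ is a cone, so that $a(x,\xi)\in\Piz$ if and only if $\lambda^{\ell+1}a(x,\xi)\in\Piz$. Likewise the map $(a,\vphi)\mapsto(\lambda^{\ell+1}a,\vphi)$ is a bijection of $\Piz\times\T^2$ onto itself. This makes both sides of \eqref{zuppa.di.cipolle} well defined.

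Then the proof is a one-line verification. I would fix $(a,\vphi)\in\Piz\times\T^2$, put $(x,\xi):=(\check x(a,\vphi),\check\xi(a,\vphi))\in\Pi^*$, so that by definition of $(\check x,\check\xi)$ one has $a(x,\xi)=a$ and $\vphi(x,\xi)=\vphi$, and then apply the two homogeneity relations to the point $(\lambda x,\lambda^\ell\xi)$:
$$
a(\lambda x,\lambda^\ell\xi)=\lambda^{\ell+1}a\ ,\qquad
\vphi(\lambda x,\lambda^\ell\xi)=\vphi\ .
$$
Since $(\check x,\check\xi)$ is by construction the inverse of $(a,\vphi)$, these identities say exactly that $\check x(\lambda^{\ell+1}a,\vphi)=\lambda x=\lambda\,\check x(a,\vphi)$ and $\check\xi(\lambda^{\ell+1}a,\vphi)=\lambda^\ell\xi=\lambda^\ell\,\check\xi(a,\vphi)$, which is \eqref{zuppa.di.cipolle}.

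I do not expect any genuine obstacle here; the only point requiring care is the bookkeeping of which angle variables are in play (the normalized ones from Lemma \ref{angolo}, not the raw ones), but that convention has already been fixed in that Lemma, so nothing further is needed.
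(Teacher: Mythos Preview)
Your proof is correct and is exactly the intended argument: the paper states this result as an unproved corollary of Lemma \ref{angolo}, and the natural one-line verification is precisely the inversion argument you give, using that $a$ is quasi-homogeneous of degree $\ell+1$ and the normalized angles $\vphi$ are quasi-homogeneous of degree $0$.
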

We now prove Lemma \ref{lem.a.dependence}:
\begin{proof}
By Lemma \ref{aaa.1}, there exists $\tilde{f} \in C^\infty(\Piz)$
such that $f = \tilde{f} \circ a$. Passing to action angle
variables on the set $\Pi^*$, one has
$$
\tilde{f}(a) ={f} (\check{x}(a, \vphi), \check{\xi}(a, \vphi)) 
$$
First we prove the estimates for $a\in\Piz$. Consider for example 
$$
\partial_{a_1} \tilde{f} = \partial_x {f} \cdot \partial_{a_1} \check x + \partial_\xi {f} \cdot \partial_{a_1} \check \xi 
$$
thus, using ${f} \in \Snoi^{m},$ and the estimates 
$$
\left|\partial_a^{\beta}\check x(a, \vphi)\right| \lesssim_{} \langle a \rangle^{\frac{1}{\ell + 1} - |\beta|}\,, \quad 	\left|\partial_a^{\beta}\check \xi(a, \vphi)\right| \lesssim_{} \langle a \rangle^{\frac{1}{\ell + 1} - |\beta|}\,.
$$
which follow from \eqref{zuppa.di.cipolle} we can estimate such a
quantity.  One gets for $a \in \Pi^\circ$
$$
| \partial_a \tilde{f}| \lesssim \langle a \rangle^{m-\delta_1
  -\frac{\ell}{\ell + 1}}  + \langle a \rangle^{m-\delta_2
  -\frac{1}{\ell + 1}}\simeq \langle a\rangle^{m-\varsigma}\,.
$$
Iterating and studying the other derivatives, one also has that $\forall \alpha \in \N^2$,
implies 
$$
| \partial_a^\alpha \tilde{f}| \lesssim \langle a
\rangle^{m-\varsigma|\alpha|}\, ,
$$
but just for $a \in \Pi^\circ$. To get a symbol defined on the whole
of $\R^2$, we consider again the cones $\cV$ and $\cC$ and the cutoff
function $\Psi$ supported in $\cV$ and equal to one in $\cC$, and just
consider the function
$$
f_c(a):=\tilde f(a)\Psi(a)(1-\chi(\left\|a\right\|))\ ,
$$
which has all the claimed properties. 
\end{proof}

%\bibliographystyle{alpha}
%\bibliography{biblio}

\def\cprime{$'$}
\begin{thebibliography}{HSVN07}

\bibitem[Bam96]{Bam96}
Dario Bambusi.
\newblock Exponential stability of breathers in {H}amiltonian networks of
  weakly coupled oscillators.
\newblock {\em Nonlinearity}, 9(2):433--457, 1996.

\bibitem[BBM11]{BBM11}
D.~Bambusi, M.~Berti, and E.~Magistrelli.
\newblock Degenerate {KAM} theory for partial differential equations.
\newblock {\em J. Differential Equations}, 250(8):3379--3397, 2011.

\bibitem[BF16]{BF}
Larry~M Bates and Francesco Fass{\`o}.
\newblock No monodromy in the champagne bottle, or singularities of a
  superintegrable system.
\newblock {\em Journal of Geometric Mechanics}, 8(4):375, 2016.

\bibitem[BF17]{BF18}
Dario Bambusi and Alessandra Fus\`e.
\newblock Nekhoroshev theorem for perturbations of the central motion.
\newblock {\em Regul. Chaotic Dyn.}, 22(1):18--26, 2017.

\bibitem[BFS18]{BFS18}
Dario Bambusi, Alessandra Fus\`e, and Marco Sansottera.
\newblock Exponential stability in the perturbed central force problem.
\newblock {\em Regul. Chaotic Dyn.}, 23(7-8):821--841, 2018.

\bibitem[BLM20a]{BLMunbdd}
Dario Bambusi, Beatrice Langella, and Riccardo Montalto.
\newblock Growth of {S}obolev norms for unbounded perturbations of the
  {L}aplacian on flat tori.
\newblock {\em Preprint}, arXiv:2012.02654, 2020.

\bibitem[BLM20b]{BLMnr}
Dario Bambusi, Beatrice Langella, and Riccardo Montalto.
\newblock On the spectrum of the {S}chr{\"o}dinger operator on ${T}^d$: a
  normal form approach.
\newblock {\em Communications in Partial Differential Equations}, 45:1--18,
  2020.

\bibitem[BLM20c]{BLMres}
Dario Bambusi, Beatrice Langella, and Riccardo Montalto.
\newblock Spectral asymptotics of all the eigenvalues of {S}chr\"odinger
  operators on flat tori.
\newblock {\em Preprint}, arXiv:2007.07865, 2020.

\bibitem[CdV80]{CdV2}
Yves Colin~de Verdi\`ere.
\newblock Spectre conjoint d'op\'{e}rateurs pseudo-diff\'{e}rentiels qui
  commutent. {II}. {L}e cas int\'{e}grable.
\newblock {\em Math. Z.}, 171(1):51--73, 1980.

\bibitem[Cha83a]{Cha83}
Anne-Marie Charbonnel.
\newblock Calcul fonctionnel a plusieurs variables pour des operateurs
  pseudodifferentiels dans {${R}^n$}.
\newblock {\em Isra{\"e}l Journal of Mathematics}, 45(1):69--89, 1983.

\bibitem[Cha83b]{Cha83T}
Anne-Marie Charbonnel.
\newblock Spectre conjoint d'op\'{e}rateurs pseudodiff\'{e}rentiels qui
  commutent.
\newblock {\em Ann. Fac. Sci. Toulouse Math. (5)}, 5(2):109--147, 1983.

\bibitem[Cha86]{chaspe}
Anne-Marie Charbonnel.
\newblock Localisation et developpment asymptotique des elements du spectre
  conjoint d'operateurs pseudodifferentiels qui commutent.
\newblock {\em Integral Equations and operator theory}, 9(4):502--536, 1986.

\bibitem[Dui80]{Dui80}
J.~J. Duistermaat.
\newblock On global action-angle coordinates.
\newblock {\em Comm. Pure Appl. Math.}, 33(6):687--706, 1980.

\bibitem[FK04]{Fej04}
Jacques F\'{e}joz and Laurent Kaczmarek.
\newblock Sur le th\'{e}or\`eme de {B}ertrand (d'apr\`es {M}ichael {H}erman).
\newblock {\em Ergodic Theory Dynam. Systems}, 24(5):1583--1589, 2004.

\bibitem[FKT90]{FKT}
Joel Feldman, Horst Kn\"{o}rrer, and Eugene Trubowitz.
\newblock The perturbatively stable spectrum of a periodic {S}chr\"{o}dinger
  operator.
\newblock {\em Invent. Math.}, 100(2):259--300, 1990.

\bibitem[FKT91]{FKT2}
Joel Feldman, Horst Kn\"{o}rrer, and Eugene Trubowitz.
\newblock Perturbatively unstable eigenvalues of a periodic {S}chr\"{o}dinger
  operator.
\newblock {\em Comment. Math. Helv.}, 66(4):557--579, 1991.

\bibitem[Fri90]{Fri}
Leonid Friedlander.
\newblock On the spectrum of the periodic problem for the {S}chr\"{o}dinger
  operator.
\newblock {\em Comm. Partial Differential Equations}, 15(11):1631--1647, 1990.

\bibitem[HR82a]{hero2}
B.~Helffer and D.~Robert.
\newblock Asymptotique des niveaux d'\'energie pour des hamiltoniens \`a un
  degr\'e de libert\'e.
\newblock {\em Duke Math. J.}, 49(4):853--868, 1982.

\bibitem[HR82b]{HR82}
B.~Helffer and D.~Robert.
\newblock Propri\'et\'es asymptotiques du spectre d'op\'erateurs
  pseudodiff\'erentiels sur {${\bf R}^{n}$}.
\newblock {\em Comm. Partial Differential Equations}, 7(7):795--882, 1982.

\bibitem[HSVN07]{HSV}
M.~Hitrik, J.~Sj{\"o}strand, and S.~Vu~Ngoc.
\newblock Diophantine tori and spectral asymptotics for nonselfadjoint
  operators.
\newblock {\em American journal of mathematics}, 129(1):105--182, 2007.

\bibitem[Kar96]{Kar96}
Yu.~E. Karpeshina.
\newblock Perturbation series for the {S}chr\"{o}dinger operator with a
  periodic potential near planes of diffraction.
\newblock {\em Comm. Anal. Geom.}, 4(3):339--413, 1996.

\bibitem[Par08]{Par08}
Leonid Parnovski.
\newblock Bethe-{S}ommerfeld conjecture.
\newblock {\em Ann. Henri Poincar\'{e}}, 9(3):457--508, 2008.

\bibitem[PS10]{PS10}
Leonid Parnovski and Alexander~V. Sobolev.
\newblock Bethe-{S}ommerfeld conjecture for periodic operators with strong
  perturbations.
\newblock {\em Invent. Math.}, 181(3):467--540, 2010.

\bibitem[PS12]{PS12}
Leonid Parnovski and Roman Shterenberg.
\newblock Complete asymptotic expansion of the integrated density of states of
  multidimensional almost-periodic {S}chr\"{o}dinger operators.
\newblock {\em Ann. of Math. (2)}, 176(2):1039--1096, 2012.

\bibitem[Roy07]{roy}
Nicolas Roy.
\newblock A semi-classical {K}. {A}. {M}. theorem.
\newblock {\em Comm. Partial Differential Equations}, 32(4-6):745--770, 2007.

\bibitem[R{\"{u}}s01]{Rus01}
H.~R{\"{u}}ssmann.
\newblock Invariant tori in non-degenerate nearly integrable {H}amiltonian
  systems.
\newblock {\em Regul. Chaotic Dyn.}, 6(2):119--204, 2001.

\bibitem[Vel15]{Vel}
Oktay Veliev.
\newblock {\em Multidimensional periodic {S}chr\"{o}dinger operator}, volume
  263 of {\em Springer Tracts in Modern Physics}.
\newblock Springer, Cham, 2015.
\newblock Perturbation theory and applications.

\end{thebibliography}

\def\cprime{$'$}

\end{document}